\newcommand{\appref}[1]{\hyperref[#1]{{Appendix~\ref*{#1}}}}
\theoremstyle{plain}
\newtheorem{theorem}{Theorem}[section]
\newtheorem{lemma}[theorem]{Lemma}
\newtheorem{proposition}[theorem]{Proposition}
\newtheorem{claim}[theorem]{Claim}
\newtheorem{corollary}[theorem]{Corollary}
\newtheorem{definition}[theorem]{Definition}
\newtheorem{observation}[theorem]{Observation}
\theoremstyle{definition}
\newtheorem{example}{Example}
\def\idty{{\leavevmode\mathbf{I}}} 
\def\Rl{{\mathbb R}}
\def\C{{\mathbb C}}     
\def\Nl{{\mathbb N}}     
\def\norm #1{\Vert #1\Vert}
\def\bra #1{\langle #1\vert}
\def\ket #1{\vert #1\rangle}
\def\ketbra #1#2{\vert #1\rangle \langle #2\vert}
\def\kettbra#1{\ketbra{#1}{#1}}
\def\tr{{\rm Tr}}
\newcommand\Scp[2]{\ensuremath{\, \langle #1 , #2 \rangle}}
\def\E{{\mathbb E}}
\newcommand*{\vphi}{\varphi}
\newcommand*{\half}{\frac{1}{2}}
\newcommand*{\cA}{\mathcal{A}}
\newcommand*{\cB}{\mathcal{B}}
\newcommand*{\cE}{\mathcal{E}}
\newcommand*{\cH}{\mathcal{H}}
\newcommand*{\cK}{\mathcal{K}}
\newcommand*{\cL}{\mathcal{L}}
\newcommand*{\cO}{\mathcal{O}}
\newcommand*{\cS}{\mathcal{S}}
\newcommand*{\cV}{\mathcal{V}}
\newcommand*{\cW}{\mathcal{W}}
\newcommand*{\cY}{\mathcal{Y}}
\newcommand*{\End}{\mathsf{End}}
\newcommand{\bc}{\begin{center}}
\newcommand{\ec}{\end{center}}
\def\Complex{\mathbb{C}}
\def\Z{\mathbb{Z}}
\def\01{\{0,1\}}
\newcommand{\floor}[1]{\lfloor{#1}\rfloor}
\newcommand{\eps}{\varepsilon}
\newcommand{\proj}[1]{|#1\rangle\langle#1|}
\newcommand{\rank}{\operatorname{rank}}
\renewcommand*{\Im}{\mathop{\mathsf{Im}}}
\renewcommand*{\Re}{\mathop{\mathsf{Re}}}
\newcommand*{\functionalstate}{\sigma}
\newcommand*{\zetap}{\zeta^\prime}
\newcommand*{\voalkzero}{{\mathsf{L}}_{k,0}} 
\newcommand*{\moduleklambda}{{\mathsf{L}}_{k,\lambda}} 
\newcommand*{\modulekflambda}[1]{\mathsf{L}_{k,#1}} 
\newcommand*{\hwgmodule}[1]{\mathsf{L}_{#1}} 
\newcommand*{\algoG}{{\bf\mathsf{H}}} 
\newcommand*{\anag}{{\bf\mathsf{h}}} 
\newcommand{\Y}{\mathsf{Y}}
\newcommand{\y}{\mathbf{y}}
\newcommand{\dimctwo}{C_{\cV}}
\newcommand{\iY}{\mathcal{Y}}
\newcommand{\iYtr}[1]{\mathcal{Y}^{[#1]}} 
\newcommand*{\cftY}{\mathbb{Y}} 
\newcommand{\iy}{\mathsf{y}}
\newcommand{\itype}[3]{\tiny \ensuremath{\begin{pmatrix}#1 \\ #2 \; #3\end{pmatrix}}}
\newcommand*{\pairL}{\left(}
\newcommand*{\pairR}{\right)}
\newcommand*{\torusq}{\mathsf{p}}
\newcommand*{\1}{\mathsf{1}}
\newcommand*{\wt}{\mathsf{wt}\ }
\newcommand*{\mpsA}{\mathsf{A}} 
\newcommand{\W}{\mathsf{W}}
\newcommand{\Wtrq}[2]{\mathsf{W}^{[#2]}_{#1}} 
\newcommand*{\g}{\mathfrak{g}}
\newcommand*{\ga}{\hat{\mathfrak{g}}}
\newcommand*{\Uga}{U(\hat{\mathfrak{g}})}
\newcommand*{\Ugam}{U(\hat{\mathfrak{g}}_{-})}
\newcommand*{\Ugap}{U(\hat{\mathfrak{g}}_{+})}
\newcommand*{\fa}{\mathfrak{a}}
\newcommand*{\fb}{\mathfrak{b}}
\newcommand*{\fc}{\mathfrak{c}}
\newcommand*{\fe}{\mathfrak{e}}
\newcommand*{\ff}{\mathfrak{f}}
\newcommand*{\fg}{\mathfrak{g}}
\newcommand*{\fh}{\mathfrak{h}}
\newcommand*{\Residuum}{\mathsf{Res}}
\newcommand*{\Shom}{S_{\textsf{hom}}}
\newcommand*{\cShom}{\mathcal{S}_{\textsf{hom}}}
\newcommand*{\Top}{{\mathsf{T}}} 
\newcommand*{\cftW}{\mathbb{W}} 
\newcommand*{\cftE}{\mathbb{D}} 
\newcommand*{\cftEcp}{\mathbb{E}} 
\newcommand*{\itw}[3]{\small \left[\begin{matrix}#1 \\ #2 \; #3\end{matrix}\right]}
\newcommand*{\bnd}{{\vartheta}} 
\newcommand*{\bndt}{{\vartheta^{\textsf{tr}}}} 
\newcommand*{\cftbnd}{{\Theta}} 
\newcommand*{\apperr}{{\mathsf{err}}} 
\newcommand*{\Endint}[2]{{\End(#1)\{\{#2\}\}}}
\begin{document}
\title{Matrix product approximations \\
to conformal field theories}

\author{Robert K\"onig}
\affil{Institute for Advanced Study \& Zentrum Mathematik, \\
Technische Universit\"at M\"unchen, 85748 Garching, Germany\thanks{robert.koenig@tum.de}}

\author{Volkher B.~Scholz}
\affil{Institute for Theoretical Physics,\\
 ETH Zurich, 8093 Z\"urich, Switzerland\thanks{scholz@phys.ethz.ch}}

\maketitle

\begin{abstract}
We establish rigorous error bounds  for approximating  correlation functions of conformal field theories (CFTs) by certain finite-dimensional tensor networks. For chiral CFTs, the approximation takes the form of a matrix product state. For full CFTs consisting of a chiral and an anti-chiral part, the approximation is given by a finitely correlated state.  We show that the bond dimension scales polynomially in the inverse of the approximation error and sub-exponentially in the ultraviolett cutoff.  We illustrate our findings using Wess-Zumino-Witten models, and show that
there is a one-to-one correspondence between group-covariant MPS and our approximation.
\end{abstract}

\newpage
\tableofcontents
\newpage

\section{Introduction\label{sec:intro}}

Quantum field theory is arguably one of the most versatile physical theories developed to date. Beyond its early and astounding successes in the modelling of fundamental interactions, its applications now span many different areas of physics across all scales, ranging from subatomic particles, to condensed matter, to cosmology. The language of quantum field theory provides a sophisticated, unifying conceptual framework for addressing a variety of questions of physical interest. It constitutes one of the  main pillars of modern physics.

Quantum field theory provides  significant insight into the mechanics of interacting quantum systems, yet calculations  often tend to be extremely tedious, or even intractable. One major obstacle is the fact that in many cases, methods for obtaining approximate answers are unknown. Even in settings where, e.g., systematic expansions exist, estimating the accuracy of
a computational scheme may be challenging or impossible. This difficulty of applying the variational principle to quantum field theories in order to find good approximate expressions was already noted by Feynman~\cite{feynman1987difficulties}.

Quantum field theory is also -- as a number of physical theories --  a rich source of inspiration and challenging problems in  mathematics. Indeed, putting  general quantum field theories on a firm axiomatic footing remains an important research topic. The special class of conformal field theories (CFTs)~\cite{francesco2012conformal} is an important exception in this regard: here the presence of conformal symmetries allows to provide a rigorous algebraic formulation. Fortunately, CFTs also turn out to be physically relevant, as they provide accurate descriptions of e.g.,~critical systems. Consequently, CFTs provide an ideal testbed for ideas related to general quantum field theories. In particular, it is natural to first investigate variational methods in the context of CFTs. This is the topic of this paper.

Contrary to the case of quantum field theories, variational methods  in non-relativistic quantum mechanics, as well as condensed matter physics, are well established: here approximation schemes with well-controlled error guarantees exist and are being applied successfully.  One prime example are tensor network contraction schemes. In one spatial dimension, an important member of this class is the \emph{density matrix renormalization group} method, in short DMRG~\cite{RevModPhys.77.259}. It varies over a certain class of Ansatz states, called matrix product states~\cite{Wolf:2007wt} (MPS) or finitely correlated states~\cite{FannesNachtergaeleWerner1994,fannesnachtergaelewerner1992} (FCS). This class of states has been successfully applied to variety of physical models, often yielding exceptionally good results. 
The suitability of such states for variational physics has been  rigorously explained by Hastings~\cite{Hastings:2007vw}, who showed that ground states of gapped Hamiltonians in one spatial dimensions can be arbitrarily well approximated by matrix product states in an efficient manner. His line of work culminated in~\cite{landau2015polynomial}, yielding a provably convergent and efficient algorithm to find the ground state to arbitrary accuracy.

In this paper, we argue that similar statements hold for
CFTs in two independent variables. To be precise, we show that correlation functions for surfaces of genus zero and one admit an approximate representation as a special class of matrix product states. We emphasize that our derivation provides rigorous error bounds for the validity of these approximations. Our results hold both for the chiral parts of such theories, as well as for the full theory consisting of both a chiral as well as an anti-chiral part. We illustrate our findings with examples from the family of Wess-Zumino-Witten models~\cite{wess1971consequences,witten1984non,novikov1981multivalued}. These are a natural family of CFTs which are rich enough to illustrate the main concepts, and, correspondingly, are often considered in the literature. For these examples, we find that the matrix product approximations originate from the special class of symmetric (`$G$-covariant') matrix product states. We provide an algorithm for constructing the approximation to the corresponding CFT. 

\subsubsection*{Prior work}
Our work is motivated by  a long line of previous results. 

First, Haegeman, Cirac, Osborne, Verschelde and Verstraete~\cite{2010PhRvL.104s0405V,2010PhRvL.105y1601H} and Osborne, Eisert and Verstraete~\cite{2010PhRvL.105z0401O} initiated a series of studies~\cite{2013PhRvB..88h5118H,2013PhRvL.110j0402H,jennings2015continuum} by constructing approximation schemes for certain quantum field theories. These can be seen as the continuum limit of tensor network schemes, and are thus aimed at addressing the difficulties that Feynman pointed out. However, so far, no error bounds exist for these approaches. While our approximations do not fall into this class, the fact that such continuum limits seem to work well in a variety of models is certainly one of the motivations for our studies. 

Second, several groups, including Nielsen
et al.~\cite{2013PhRvB..87p1112E,nielsen2014bosonic,nielsen2012laughlin},
Estienne et al.~\cite{Estienne:2013vf}, as well as Zaletel and Mong~\cite{zaletel2012exact}, examined the possibility of constructing matrix product states for quantum Hall systems from the field theoretic representation of the spatial electronic wave functions. Their central idea of suitably truncating the Hilbert spaces at a finite dimension is very much in our spirit. However, 
there are currently no error bounds for these methods and so far, only chiral correlation functions have been considered. This line of research is complemented by the work of Pollmann et al.~\cite{2009PhRvL.102y5701P}, Pirvu et al.~\cite{2012PhRvB..86g5117P} and Stojevic et al.~\cite{Stojevic:2014ul} on simulating critical quantum systems using matrix product states. As these systems are assumed to be described by CFTs, our results may be understood as an explanation of the empirical success of these studies. For a discussion of the  relationship of our work (which focuses on MPS) to the \emph{multi-scale renormalization Ansatz}, another method providing accurate descriptions of quantum critical systems introduced by Vidal~\cite{PhysRevLett.99.220405} (see also~\cite{PhysRevLett.102.180406,PhysRevB.79.144108}), we refer to our conclusions.

The challenge of simulating quantum field theories has, of course, been addressed on many levels. For example, it  has been envisioned as a potential field of application of a working quantum computer. Here we refer the reader to the work of Jordan et al.~\cite{jordan2012quantum}, where a quantum algorithm for computing relativistic scattering amplitudes in a quantum field theory with quartic interaction is presented. Such field theories are generally not conformal. We emphasize that contrary to~\cite{jordan2012quantum}, we are interested  in tensor networks which can be (ideally efficiently) contracted on a classical computer. Furthermore, our focus is on all correlation functions for a large class of CFTs.

From the mathematical side, our results rely strongly on the theory of vertex operator algebras (VOAs). These were introduced by Borcherds~\cite{Borcherds:1992bi} as well as Frenkel~\cite{frenkel1989vertex}, and further investigated in detail by Frenkel, Huang and Lepowsky~\cite{Frenkel:1993fv}.
Further key definitions and properties were established by
Huang~\cite{Huang:1995kz,Huang:2005gs}, Huang and Kong~\cite{HuangKong}, Zhu~\cite{Zhu:1996cp}, Dong~\cite{Dong:2014gx} and others. 
Of particular importance to this work are the contributions of Frenkel, Zhu and Huang. We point the interested reader to the books of Kac~\cite{Kac:1996}, Lepowsky and Li~\cite{LepowskyLi03}, as well as to the monograph~\cite{Frenkel:1993fv} for a thorough introduction to this theory.

\subsection{Conformal field theories (CFTs) in 1+1 dimensions\label{sec:cftintro}}
As any quantum field theory,  a CFT  is determined by its correlation functions. These  are physically interpreted as expectation values of products of basic observable quantities, or quantum fields. They depend continuously on certain parameters, specifying the degrees of freedom of the theory such as position or time. The correlation functions are postulated to transform in a simple manner under symmetry transformations of the theory. CFTs are special examples of quantum field theories possessing a much richer symmetry group than that of relativistic or non-relativistic quantum fields. We begin with a short discussion of relevant background material, focusing on CFTs in two variables. 

\subsubsection*{Historical development of CFTs}
We  point the interested reader to~\cite{2010JMP51a5210F} for an extensive review of the historical development of CFTs. Here we merely summarize key advances with regards to the concepts relevant to our work. We emphasize that we neither claim to nor attempt to provide a complete discussion of this vast subject.

CFTs in two variables (often called CFTs in $1+1$ dimensions) where axiomatized in pioneering works of Belavin, Polyakov and Zamolodchikov~\cite{BPZ84}, Fridan and Shenker~\cite{FRIEDAN:1987ix} as well as by Segal~\cite{segal1988definition}. The first three authors concentrated on analytic properties, starting from symmetry properties and the operator product expansion. The work of the last three authors concentrated on a geometric definition of CFTs. In the work~\cite{segal1988definition} of Segal, such theories were characterized as a functor between two-dimensional surfaces and certain types of trace-class operators on a Hilbert space carrying a unitary representation of the conformal symmetry. Among the first examples were \emph{minimal models}, 
which possess only conformal symmetries and no additional ones~\cite{BPZ84}. The second important class concerns Wess-Zumino-Witten or Wess-Zumino-Witten-Novikov models (WZW)~\cite{wess1971consequences,witten1984non,novikov1981multivalued}, which in addition to the conformal symmetry, also  possess an internal one given by a simple Lie group. Heuristically, it may be defined as a particle moving on the manifold given by the Lie group. Because of their prominence in the literature, and their suitability for this purpose, WZW models will serve as our prime example to illustrate our findings.

After this initial period, Moore and Seiberg~\cite{MooreSeiberg88,Moore:1989cm,MS89} as well as Felder et~al.~\cite{Felder:1989tk,Felder:1990tw} further advanced the understanding of these objects by showing that analytic properties of correlation functions imply many additional identities beyond those given by the axioms. These identities are called braiding and fusion relations and are constrained by the decomposition of the CFT into irreducible representations. These questions were also studied from an operator algebraic point of view by Fr\"ohlich and Gabbiani~\cite{Gabbiani:1993wb}, as well as follow-up work. The operator algebraic language was also applied to study WZW models by Wassermann~\cite{Wassermann:1998cs}, some of whose estimates we will make use of here.

From a purely mathematical side, CFTs were studied in terms of \emph{vertex operator algebras} (VOAs). This algebraic structure was introduced by Borcherds~\cite{Borcherds:1992bi}, as well as Frenkel et al.~\cite{frenkel1989vertex,Frenkel:1993fv}  in their study of the representations of the Monster group. This approach is very algebraic, and shares many similarities with Lie algebra theory. Apart from the Monster group, VOAs have also been used to study minimal models~\cite{Dong:1994um} and WZW theories~\cite{Frenkel:1992jt}. In a long series of contributions, Dong, Frenkel, Huang, Kong, Lepowsky, Zhu   and others
nailed down the properties of VOAs and showed that they are consistent with physically expected properties of a CFT.  This makes VOAs the up-to-date tool for rigorously studying CFTs. VOAs also appear in the axiomatization of CFTs by Gaberdiel and Goddard, which puts the emphasis entirely on the properties of correlation functions~\cite{Gaberdiel:2000br,Gaberdiel:2000tb}. 

Connections between the different approaches to CFTs only lately became clearer, thanks to the work of Carpi et al.~\cite{Carpi:2015fk}, as well as Dong and Lin~\cite{Dong:2014gx}.  In~\cite{Carpi:2015fk}, the authors provided a connection between VOAs and the algebraic quantum field theoretic picture. In~\cite{Dong:2014gx}, the focus was on VOAs and modules which are also Hilbert spaces, and where the Hilbert space structure is compatible with the VOA structure (see Section~\ref{sec:prelim} for precise definitions). Our work is restricted to this case. In the following, we give a short but high-level outline of the setup considered. The exact definitions can  be found in Section~\ref{sec:prelim} of this paper.

\subsubsection*{Setup}
We are concerned with CFTs in two variables. That is, the basic quantum fields~$\cftY(\vphi,z,\bar{z})$ depend on two complex parameters $(z,\bar{z})$. The argument~$\vphi$ serves as a label for the field in question and it is assumed that $\cftY$ is linear in it. Depending on the setting, the variables $(z,\bar{z})$ are elements of different subsets of the complex plane.

It is worth pausing for a moment to explain the physical interpretation of the complex parameters~$(z,\bar{z})$.  Here we distinguish between two settings where such CFTs arise: the first one is relativistic quantum field theory in (1+1)-dimensional Minkowski space. Here the parameters are usually the light-cone variables $(z,\bar{z})=(t-x,t+x)$ and are both real. The second setting of interest (arising in statistical mechanics and  most relevant to condensed matter physics) is the Euclidean setting, where a point $(x,y)\in\mathbb{R}^2$ is identified with the complex number~$z = x + i y$, and where we set $\bar{z} = z^*$ equal to the complex conjugate of~$z$.

The CFT is defined by its correlation functions
\begin{align}
  \left\langle \cftY(\vphi_1,z_1,\bar{z}_1) \cftY(\vphi_2,z_2,\bar{z}_2) \cdots \cftY(\vphi_n,z_n,\bar{z}_n ) \right\rangle \,,
\end{align}
where $\langle \cdot \rangle$ denotes the expectation value in a fixed state of interest (often the so-called `vacuum state').  In physical parlance, a term $\cftY(\vphi_j,z_j,\bar{z}_j)$ is an `insertion' of the field~$\vphi_j$ at the `insertion point' $(z_j,\bar{z}_j)$. The complex parameters $(z_1,\bar{z}_1,\ldots,z_n,\bar{z}_n)$  belong to a certain domain
of a two-dimensional surface parametrized by a subset of complex numbers. We will be concerned with surfaces of genus zero or one, that is, either the compactified complex plane $\C \cup \{\infty\}$ --- the Riemann sphere --- or the torus. The latter can be identified with a subset of the complex plane with periodic boundary conditions.  In this paper, we are mostly concerned with correlation functions evaluated on the real line, with insertion points $z_1,\ldots,z_n$, $\bar{z}_1,\ldots,\bar{z}_n$ separated by a minimal distance.  We note that using appropriate conformal transformations, most configurations of insertion points can be brought into this standard form.

The transition from $2D$-Minkowski or Euclidean space to a compactified two-di\-men\-sional surface is necessary to have a proper, globally defined notion of conformal invariance. Indeed, this is the defining property of a CFT: The correlation functions are invariant with respect to conformal reparametrizations of the insertion points. That is, let $w:\mathbb{C}\cup \{\infty\}\rightarrow\mathbb{C}\cup\{\infty\}$ be a holomorphic map. Then conformal invariance is defined as the existence of real numbers $h_i, \overline{h_i}$ such that 
\begin{align}
  & \left\langle \cftY(\vphi_1,z_1,\bar{z}_1) \cdots \cftY(\vphi_n,z_n,\bar{z}_n ) \right\rangle = 
\\
  &\quad\quad\prod_i \left.\left(\frac{dw}{dz}\right)^{h_i}\right|_{z_i}  \prod_i \left.\left(\frac{dw}{d\bar{z}}\right)^{\overline{h_i}}\right|_{\bar{z}_i}  \left\langle \cftY(\vphi_1,w(z_1),w(\bar{z}_1)) \cdots \cftY(\vphi_n,w(z_n),w(\bar{z}_n) ) \right\rangle\,. 
\end{align}
Stated differently, the correlations functions of certain fields are assumed to transform under conformal reparametrizations in a simple manner. Fields for which this assumptions holds are called \emph{primary}, and they will be of main interest to us. 

The underlying physical reasoning motivating this transformation property of correlation functions depends on the setting under consideration. In the relativistic case, this invariance property can be derived from the Wightman axioms (axiomatizing relativistic quantum field theories) by restricting to one 
spatial dimension and demanding the  existence of conformal symmetries.  We point the interested reader to the book of Kac~\cite{Kac:1996} or the paper by Furlan~et~al.~\cite{furlan1989two} for an explanation of this derivation. In the statistical mechanics (Euclidean) setting, a similar derivation exists, starting from the  Osterwalder-Schrader axioms (axiomatizing Euclidean field theories), see for example the papers by Felder et al.~\cite{Felder:1989tk,Felder:1990tw}.

It is customary to assume that the two variables $z$ and $\bar{z}$ can be decoupled, which can be proven for many CFTs of interest. More precisely, in this case it is sufficient to study basic fields~$\iY(\vphi,z)$ only depending on either $z$ or $\bar{z}$ and the corresponding correlation functions,
\begin{align}
  \langle \iY(\vphi_1,z_1) \cdots \iY(\vphi_n,z_n)\rangle \,.
\end{align}
Such correlations functions are called \emph{chiral} (if depending on~$z$) or \emph{anti-chiral} (if depending on~$\bar{z}$). Again, the same invariance under conformal reparametrizations is assumed. The decoupling of chiral and anti-chiral part is the starting point of a formal axiomatization of chiral CFTs (i.e., those describing chiral correlation functions). In this axiomatization, chiral CFTs are determined by a symmetry algebra $\cV$, which includes the conformal one but may be considerably larger. The exact definition can be found in Section~\ref{sec:prelim}, but from a general viewpoint, $\cV$ can be thought of as a complex vector space, together with a $z$-dependent multiplication rule. The symmetry algebra is called a {\em vertex operator algebra} (VOA), and will be defined below in detail. Correlation functions are specified by maps $\iY $ intertwining three irreducible representations of this symmetry algebra. That is, depending on the variable~$z$, $\iY$~maps an element $a\in A$  in the module~$A$ of~$\cV$ to an endomorphisms $\iY(a,z):B\rightarrow C$ from a module~$B$ into a third module~$C$. Composing these maps gives rise to correlation functions as before, evaluated at elements $a_1,\dots, a_n$ of modules for $\cV$. In terms of VOAs, a full CFT depending on both $z$ and $\bar{z}$ consists of two VOAs, the chiral as well as the anti-chiral part. The corresponding algebraic object is called a conformal full field algebra.  The dependence on both variables is recovered by considering the tensor product of the chiral VOA with the anti-chiral VOA. 

\subsection{Matrix product states and finitely correlated states}
Before stating our main result, we briefly recall basic facts  about matrix product and finitely correlated states. We would like to emphasize that while usually a \emph{state} of a quantum system is normalized, here we are relaxing this requirement due to the fact that correlation functions are a priori not normalized. For the purpose of this paper, a \emph{matrix product state} (MPS), or also a \emph{matrix product tensor network}, is a linear functional $\functionalstate_{MPS}$ on $(\C^d)^{\otimes n}$, defined in terms of $d+1$ many  $D \times D$-matrices~$X,\Gamma_1,\ldots \Gamma_d$ (with integers $d, n, D$), of the form
\begin{align}
  \functionalstate_{MPS}(\ket{k_1} \otimes \ket{k_2} \otimes \ldots \otimes \ket{k_n}) = \tr[\Gamma_{k_1} \Gamma_{k_2} \cdots \Gamma_{k_n} X] \,,
\end{align}
where $\ket{k}$, $k \in \{1,\ldots,d\}$ is an orthonormal basis of $\C^d$. The number $D$ is called the bond dimension. Similarly, a {\em finitely correlated state} (FCS) --- or more precisely a \emph{finitely correlated functional} --- is a functional $\functionalstate_{FCS}$ on $\mathsf{Mat}(\C^d)^{\otimes n}$, where $\mathsf{Mat}(\C^d)$ are the $d\times d$ matrices with complex entries, defined in terms of $d+2$ many $D \times D$-matrices $\rho,e,\Gamma_1,\ldots \Gamma_d$ by
\begin{align}\label{eq:introfcsdef}
  \functionalstate_{FCS}(\ket{k_1}\bra{l_1} \otimes \ket{k_2}\bra{l_2} \otimes \ldots \otimes \ket{k_n}\bra{l_n}) = \tr[\rho \Gamma_{k_{1}} \Gamma_{k_{2}} \cdots \Gamma_{k_n} e \Gamma_{l_n}^* \Gamma_{l_{n-1}}^* \cdots \Gamma_{l_1}^*]\ ,
\end{align}
where $\Gamma^*$ is the adjoint of $\Gamma$.  In our work, the spaces $\C^d$ or $M_d$ will be the linear span of primary vectors.
The function $\functionalstate_{MPS}$ will encode chiral correlation functions, whereas $\functionalstate_{FCS}$ encodes correlation functions of a full CFT. Intuitively, if $D\ll d^n$, then functionals of the form~$\functionalstate_{MPS}$ and~$\functionalstate_{FCS}$ have significantly reduced complexity compared to general functionals on~$(\C^d)^{\otimes n}$ or~$\mathsf{Mat}(\C^d)^{\otimes n}$, respectively. As stated earlier, matrix product and finitely correlated states are widely used in the analysis of quantum spin systems in one spatial dimension. As it turns out, they can also be used to provide finite-dimensional approximations to the correlation functions of CFTs.

\subsection{Results}

Before turning to the main statement of this work, let us summarize our main assumptions. We consider a CFT, either a chiral one or a full CFT consisting of both chiral and anti-chiral parts. The chiral CFT is defined in terms of a VOA $\cV$ and its modules, whereas the full CFT is defined in terms of a pair of VOAs. For the latter, we only consider theories where the chiral and anti-chiral VOAs are isomorphic --- such theories are called \emph{diagonal} in the physics literature~\cite{MS89}. In addition, we are only concerned with VOAs $\cV$ that have finitely many non-isomorphic irreducible modules. Such theories are called \emph{rational}. Moreover, we require the existence of a unique vacuum vector (the exact definition of this statement is found in Section~\ref{sec:prelim}). We need another technical assumption, which is called $C_2$-co-finiteness (to be defined in Section~\ref{sec:prelim}). We further restrict ourselves to the case of \emph{unitary} VOAs and modules, those possessing a scalar product turning them into a Hilbert space (see Section~\ref{sec:unitarity}).  While this seems to be an impressive list of assumptions, we stress that most examples of physical interests are known to satisfy these. In particular, minimal models as well as WZW models do so (see pointers to the literature below).

In a CFT specified by a VOA~$\cV$ satisfying these assumptions, correlation functions are defined in terms of  intertwining maps between modules of~$\cV$. As mentioned earlier, in this paper we are only concerned with correlation functions involving primary fields. This allows us to exploit the  transformation properties under conformal mappings. We point out, however, that more general correlation functions can be obtained from correlation functions involving primary fields by the so-called Ward identities, see e.g.,~\cite{francesco2012conformal}. 
\begin{theorem}[Main result, informal version]\label{thm:main1}
  Consider
\begin{enumerate}[(i)]
\item
a chiral CFT, specified by a VOA~$\cV$ satisfying  our assumptions, or
\item
a full CFT of diagonal kind, containing both chiral and anti-chiral parts, specified by a VOA~$\cV$ satisfying our assumptions. 
\end{enumerate}
Then the 
 genus-0 and genus-1 correlation functions, with equispaced (on $\mathbb{C}$, respectively the torus) insertions of primary fields, can be approximated arbitrarily well by
\begin{enumerate}[(i)]
\item
 matrix-product states in case (i).
\item
finitely correlated states in case (ii).
\end{enumerate}
In both cases, the approximating expression is a certain contraction of a corresponding tensor network. More precisely, in case~(i), it is the value~$\functionalstate_{MPS}(\ket{k_1}\otimes\cdots\otimes\ket{k_n})$ of $\functionalstate_{MPS}$  evaluated on a product input, whereas in case~(ii), it is given by an expression of the form
$\functionalstate_{FCS}(\proj{k_1}\otimes\cdots\otimes\proj{k_n})$, respectively a finite linear combination thereof. 
\end{theorem}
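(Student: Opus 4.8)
The plan is to recognise that, once conformal covariance has been used to bring the insertions into standard (equispaced) form, a genus-0 or genus-1 correlation function of primaries is \emph{already} a contraction of MPS/FCS type, with the graded module of $\cV$ playing the role of the (infinite-dimensional) bond space; the real content of the theorem is then to truncate that bond space to finite dimension with a controlled error. First I would write the chiral correlation function of primaries $a_1,\dots,a_n$ as a composition of intertwining operators, $\langle \iY(a_1,z_1)\cdots \iY(a_n,z_n)\rangle$, evaluated against fixed boundary vectors in the genus-0 case and as a graded trace $\tr\bigl(q^{L_0}\,\iY(a_1,z_1)\cdots\iY(a_n,z_n)\bigr)$ over a module in the genus-1 case. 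The grading of every module by the $L_0$-eigenvalue (the conformal weight) is central. For equispaced insertions I would use the dilation/translation covariance of the intertwiners to pull out a common propagator, rewriting each $\iY(a_j,z_j)$ as a fixed operator at a reference point conjugated by a fixed propagator of the form $q_0^{L_0}$, so that the correlation function becomes an alternating product of fixed intertwiner blocks and fixed propagators.

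Next I would define $\Gamma_{k}$ to be the intertwiner for the $k$-th primary composed with this propagator, acting on the direct sum of the finitely many irreducible modules (finite by rationality). On the torus the graded trace then \emph{is} a trace $\tr[\Gamma_{k_1}\cdots\Gamma_{k_n}X]$, with $X$ encoding the modular weight, matching $\functionalstate_{MPS}$; on the sphere the two boundary vectors furnish a rank-one $X$. For the full diagonal CFT I would factor each insertion $\cftY(\vphi_k,z_k,\bar z_k)$ into a holomorphic and an antiholomorphic piece; since the chiral and antichiral VOAs are isomorphic and unitary, the antiholomorphic intertwiners appear as the adjoints of the holomorphic ones in reversed order, which reproduces exactly the $\Gamma_{l}^{*}$ structure of $\functionalstate_{FCS}$. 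A diagonal input $\proj{k}$ pairs the same primary on both sides, while a general field insertion is a finite linear combination of such diagonal contributions, explaining the final clause of the statement.

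The heart of the proof, and the main obstacle, is the truncation. I would replace each module by its finite-dimensional subspace of states with conformal weight at most a cutoff $N$, obtain truncated matrices $\Gamma_k$ of finite bond dimension $D = D(N)$, and bound the resulting error in the correlation function. Two ingredients are needed. First, the high-weight tail of the graded trace must be summable: this is precisely what $C_2$-cofiniteness and rationality guarantee, via the convergence and modular covariance of the characters, so that $q_0^{L_0}$ is trace-class and its tail decays rapidly. Second, and harder, one needs quantitative bounds on the matrix elements of the intertwining operators between weight spaces, uniform enough to convert the spectral tail estimate into an estimate on $D$; these are genuine analytic bounds on intertwiners, which for the WZW examples can be extracted from Wassermann's work.

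Assembling these two inputs, the error from truncating at weight $N$ should decay, and inverting the relation between $N$ and the error would yield a bond dimension polynomial in the inverse approximation error and sub-exponential in the ultraviolet cutoff, as claimed in the abstract. I expect the genus-0 case to require extra care, since there is no thermal weight $q^{L_0}$ available to suppress high weights a priori: there the suppression must instead be read off from the insertion points lying strictly inside the domain of convergence of the correlation function, so that the operator playing the role of $q_0^{L_0}$ is again a strict contraction on high-weight states and the same tail estimate applies.
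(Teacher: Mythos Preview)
Your outline matches the paper's approach closely: rewrite the correlator as an infinite-bond MPS/FCS via the dilation covariance of intertwiners (the paper's ``scaled intertwiner'' $\W_q(a,z)=q^{L_0/2}\iY(q^{L_0/2}a,z)q^{L_0/2}$ and the transfer operator $\Top$), then truncate and control the error.

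One point deserves sharpening. You write that the needed analytic bounds on intertwiners ``for the WZW examples can be extracted from Wassermann's work,'' which suggests a case-by-case input. The paper's main technical insight is that no such model-specific input is required: for \emph{any} rational, $C_2$-cofinite, unitary VOA, the scaled intertwiner $\W_q$ is bounded (indeed Hilbert--Schmidt) simply because $\tr_B\bigl(\W_q(a,z)^*\W_q(a,z)\bigr)$ can be rewritten as a genus-$1$ two-point function, whose convergence is guaranteed by Huang's general results (the paper's Proposition~4.2). Wassermann's estimates appear only to give explicit constants in the WZW examples, not to establish boundedness. Without this observation your argument would apply only to models where energy bounds are known independently, rather than to the full class claimed in the theorem.

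A smaller technical remark: the paper's truncation is two-stage rather than a single weight cutoff. First one truncates the \emph{mode expansion} of each intertwiner (keeping only modes that change the level by at most $N$), obtaining operators $\W_q^{[N]}$ that still act on the infinite-dimensional module but with $\|\W_q-\W_q^{[N]}\|=O(q^{N/4})$; only afterwards does one project the trace onto levels $\le M$. The point is that a product of $n$ such truncated intertwiners applied to a vector of level $\le M$ lands in levels $\le M+nN$, so the whole computation lives in a single finite-dimensional space of dimension $d_B(M+nN)$. Your ``insert $P^{[N]}$ between every pair'' picture would also work but makes the telescoping error estimate less clean.
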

We emphasize that our results provide exact error bounds for the accuracy of these approximations. They are expressed in terms of the number~$n$ of insertion points as well as the minimal distance between the points. These error bounds imply that the bond dimension~$D$ scales polynomially in the inverse of the error in the approximation, as well as sub-exponentially in the inverse of the minimal distance between the insertion points (the so-called {\em ultraviolet cutoff}). In the case of a full CFT, we prove the existence of a transfer operator which determines the long-distance behavior of correlation functions. We illustrate our findings with examples based on WZW models, and also provide an algorithm to compute the matrices $X,\Gamma_1,\ldots, \Gamma_d$  for this case.

\subsection*{Outline}
Let us briefly summarize the structure of this contribution. Section~\ref{sec:prelim} provides a (short) introduction to MPS, FCS and VOAs. In Section~\ref{sec:corfunc}, we study correlation functions of chiral theories, and show how to express these in terms of our main technical tool, \emph{scaled intertwiners}. In  Section~\ref{sec:boundedint}, we show that these objects define bounded operators on the Hilbert space of the chiral theory. This is the main technical ingredient for Section~\ref{sec:approx}, where we present our main arguments providing our approximation estimates in the chiral case. Section~\ref{sec:fullcft} then  studies the case of a CFT consisting of both chiral as well as anti-chiral parts. We end with some open questions and remarks in Section~\ref{sec:outlook}. Some technical lemmas are deferred to Appendix~\ref{app:bounds}, while Appendix~\ref{app:algorithmwzw} provides a description of the algorithm for the computation of the approximation for WZW models.

\section{Preliminaries\label{sec:prelim}}
The purpose of this section is to introduce the necessary terminology as well as notation. Section~\ref{sec:finitelycorrelatedandmatrix} is devoted to finitely correlated and matrix product states. Section~\ref{sec:voasmodulesandintertwiners} contains the necessary definitions related to chiral CFTs. The discussion of the  definitions related to full CFTs is postponed to Section~\ref{sec:voafullfieldalgebras}.

\subsection{Finitely correlated states and matrix product states\label{sec:finitelycorrelatedandmatrix}}
This section is devoted to a brief review of basic definitions related to finitely correlated (FCS)  and matrix product states (MPS). We begin with a discussion of finitely correlated states for translation-invariant systems, and then proceed to formally introduce general finitely correlated states as well as matrix product states. The discussion of how CFT correlation functions of full/chiral CFTs can be written as/approximated by MPS/FCS will be deferred to subsequent sections.
\subsubsection{Finitely correlated states for translation-invariant systems}
Finitely correlated states, introduced by Fannes, Nachtergaele and Werner~\cite{fannesnachtergaelewerner1992,FannesNachtergaeleWerner1994},
 describe trans\-lation-invariant states on a one-dimensional
lattice~$\mathbb{Z}$ with associated Hilbert space~$\cH_{\mathbb{Z}}=\bigoplus_{x\in\mathbb{Z}}\cH_x$. We briefly review some of the relevant facts, following~\cite{Nachtergaelelectures}, to which we refer for more details.
 Here the Hilbert spaces~$\cH_x\cong\cH$
are isomorphic for different sites~$x$, and are
usually assumed to be finite-dimensional. In particular, the
 algebra~$\cA_x\cong \cA=\cB(\cH)$ of single-site observables
is the set $\mathsf{Mat}(\C^d)$ of $d\times d$-matrices. A ($C^*$-)finitely correlated state (FCS)
 with bond-system~$\cB$ (usually a direct sum of matrix algebras, but see the remarks on \cite[p. 7]{Nachtergaelelectures})
then is a triple~$(\cE,\rho,e)$,
where $\cE:\cA\otimes\cB\rightarrow\cB$ is completely positive, $e\in \cB$ is a positive element, and $\rho$ a positive linear functional on~$\cB$.
Setting $\E_A(B)=\cE(A\otimes B)$ for $A\in\cA$ and $B\in\cB$, these objects satisfy the two conditions
\begin{align}
\mathbb{E}_{\idty_\cA}(e)=e\qquad \textrm{ and }\quad  \rho(\mathbb{E}_{\idty_\cA}(B))=\rho(B)\quad\textrm{ for all }B\in\cB\ ,\label{eq:fcsconditions}
\end{align}
where $\idty_\cA$ is the identity element in the algebra $\cA$. In what follows, we will also use the notation $\idty_\cH$ for the identity operator on a Hilbert space $\cH$. In terms of these objects, the finitely correlated state~$\functionalstate_{FCS}$ then is defined by the expression
\begin{align}
\functionalstate_{FCS}(A_1\otimes\cdots\otimes A_n)&=\rho(\mathbb{E}_{A_1}\circ\cdots\circ\mathbb{E}_{A_n}(e))\ .\label{eq:fcsexpectationvalues}
\end{align}
for local observables $A_1,\ldots,A_n$. The expression~\eqref{eq:introfcsdef} can be recovered if the completely positive map $\cE$ is written in Kraus operator form. It can be shown~\cite{AHK78} that if $\cE$ is a unital map, i.e.,
\begin{align}
\cE(\idty_\cA\otimes\idty_{\cB})=\idty_{\cB}\ ,\label{eq:unitalitycedef}
\end{align} then there is always exists
a state~$\rho$ satisfying the second condition in~\eqref{eq:fcsconditions} -- taking $e=\mathsf{id}_{\cB}$ then defines an FCS.  An example
is the case where $\cB=\cB(B)$ for some Hilbert space~$B$, and
the map~$\cE$ is of the form
\begin{align}
\cE(A\otimes B) &=V^*(A\otimes B)V\ \label{eq:purefcsdef}
\end{align}
for a linear map $V:B\rightarrow\cH\otimes B$ which is an isometry, $V^*V = \idty_B$.
Such FCS are called {\em purely generated}.

\subsubsection{General finitely correlated states}
The definition of finitely correlated states can be adapted in a straightforward manner to cover possibly non-translation-invariant states. Here we introduce the corresponding definitions.

We remark that the unitality~\eqref{eq:unitalitycedef} of~$\cE$ is not a necessary condition: in fact, existence of a positive~$e\in\cB$ satisfying the first condition in~\eqref{eq:fcsconditions} is sufficient to guarantee existence of a suitable state~$\rho$. At any rate, conditions~\eqref{eq:fcsconditions} are only necessary to provide a translation-invariant state. Our focus is rather on the functional form of expression~\eqref{eq:fcsexpectationvalues} (and we do not need conditions~\eqref{eq:fcsconditions},~\eqref{eq:unitalitycedef} or the fact that the FCS is purely generated explicitly). Since we do not need the normalization condition, we sometimes also refer to functionals of the form~\eqref{eq:fcsexpectationvalues} as \emph{finitely correlated functionals}.

 More specifically, we  consider non-translation-invariant states on a  system consisting of $n$ sites. For bond systems $\cB^{(0)},\ldots,\cB^{(n)}$
and completely positive operators $\cE^{(j)}:\cA^{(j)}\otimes \cB^{(j)}\rightarrow\cB^{(j-1)}$,
a positive element~$e\in \cB^{(n)}$, and a linear functional~$\rho$ on $\cB^{(0)}$, we may define a linear functional~$\functionalstate_{FCS}$ by
\begin{align}
\functionalstate_{FCS}(A_1\otimes\cdots\otimes A_n)=\rho\left(\mathbb{E}^{(1)}_{A_1}\circ\cdots\circ\mathbb{E}^{(n)}_{A_n}(e)\right)\label{eq:fcsnontranslationinvariant}
\end{align} 
where $A_j\in\cA^{(j)}$ and $\mathbb{E}^{(j)}_{A_j}(B_j)=\cE^{(j)}(A_j\otimes B_j)$  for $j=1,\ldots,n$. Note that we do not impose
constraints on positivity in this definition. As before, in the case where $\cA^{(j)}=\cB(\cH^{(j)})$, $\cB^{(j)}=\cB(\cK^{(j)})$ for Hilbert spaces $\cH^{(j)}$, $\cK^{(j)}$, we call the FCS purely generated
if the maps~$\cE^{(j)}$ have the form~\eqref{eq:purefcsdef}
for linear maps~$V^{(j)}:\cK^{(j-1)}\rightarrow\cH^{(j)}\otimes\cK^{(j)}$.
\subsubsection{Matrix product states}
Matrix product states are obtained by considering a vector-analog of~\eqref{eq:fcsnontranslationinvariant}, as follows. Let $A^{(j)}, B^{(j)}$ be Hilbert spaces and $\cW^{(j)}:A^{(j)}\otimes B^{(j)}\rightarrow B^{(j-1)}$ be linear operators. Define $\mpsA^{(j)}_{a_j}(b_j)=\cW^{(j)}(a_j\otimes b_j)$ for $a_j\in A^{(j)}$ and $b_j\in B^{(j)}$.
Denoting by $\pairL\cdot,\cdot \pairR$ the pairing between $\cB^{(0)}$ and its dual space, we consider functionals on $\otimes_{j=1}^n A_j$ of the form
\begin{align}
\functionalstate_{MPS_0}(a_1\otimes\cdots\otimes a_n)&=\pairL (v^{(0)})',\mpsA^{(1)}_{a_1}\circ\cdots\cdots\circ \mpsA^{(n)}_{a_n}(v^{(n)})\pairR\label{eq:MPSstatex}
\end{align}
for $(v^{(0)})'\in (\cB^{(0)})'$, $v^{(n)}\in \cB^{(n)}$ as well as (in the case where $\cB^{(0)}=\cB^{(n)}$)
\begin{align}
\functionalstate_{MPS_1}(a_1\otimes\cdots\otimes a_n)&=\tr\left(\mpsA^{(1)}_{a_1}\circ\cdots\cdots\circ \mpsA^{(n)}_{a_n} X\right)\label{eq:genusoneMPS}
\end{align}
for suitable operators~$X$.

As a familiar special case of~\eqref{eq:genusoneMPS}, consider the case where
$A^{(j)}\cong \mathbb{C}^d$, $B^{(j)}\cong \mathbb{C}^{b}$ for all~$j$. Fixing an orthonormal basis $\{\ket{k}\}_{k=1}^d$ of $\cA^{(j)}$,
each operator~$\cW^{(j)}$ has the form
\begin{align*}
\cW^{(j)}(a_j\otimes b_j)&=\sum_{k=1}^d \langle k,a_j\rangle \mpsA^{(j)}_{k} b_j\ .
\end{align*}
for a family~$\{\mpsA^{(j)}_k\}^d_{k=1}$ of linear operators $\mpsA^{(j)}_k:\C^b \to \C^b$. Then~\eqref{eq:genusoneMPS} becomes
\begin{align}
\functionalstate_{MPS_1}(\ket{k_1}\otimes\cdots\otimes\ket{k_n})&=\tr\left(\mpsA_{k_1}^{(1)}\cdots\mpsA_{k_n}^{(n)}X\right)\label{eq:mpsdef}
\end{align}
for all $1\leq k_1,\ldots,k_n\leq d$. An element in $\C^{dn} = (\C^{d})^{\otimes n}$ corresponding to the linear form~\eqref{eq:mpsdef} on $(\C^d)^{\otimes n}$ is called a matrix-product state (MPS) with bond dimension equal to~$b$. Note that because we did not require a proper normalization, the term ``state'' should be considered literally. More precisely, we call functionals of the form~\eqref{eq:mpsdef} \emph{matrix product tensor networks}. 
\subsection{$G$-invariant MPS\label{sec:ginvariantmpsdescription}}
Of particular interest to us will be MPS/FCS with additional local symmetries.
Given  a linear map $\cW:\mathbb{C}^{d_3}\otimes \mathbb{C}^{d_2}\rightarrow\mathbb{C}^{d_1}$ and unitary representations
$U_j:G\rightarrow\mathsf{GL}(\mathbb{C}^{d_j})$, \(j=1,2,3\), of a compact Lie group $G$, we say that $\cW$ is a $G$-intertwining map if 
\begin{align}
  \cW \,(U_3(g) \otimes U_2(g)) = U_1(g) \,\cW\,,\quad \textrm{ for } g\in G\,.
\end{align}
 A $G$-invariant MPS is defined by operators $\cW^{(j)}:A^{(j)}\otimes B^{(j)}\rightarrow B^{(j-1)}$ that are $G$-intertwining with respect to fixed unitary representations of $G$ on the spaces $A^{(j)}$, $B^{(j)}$.

Consider the translationally invariant case, $\cW^{(j)} = \cW$ for all $j = 1,\ldots,n$. Here we assume $A^{(j)}\cong\mathbb{C}^d$, $B^{(j)}=\mathbb{C}^d$, and that we are given two unitary representations $U_1:G\to\mathsf{GL}(\C^d)$, $U_2:G\to\mathsf{GL}(\C^b)$ of some compact Lie group $G$.
The translation-invariant MPS is then given by a linear map $\cW$ satisfying
\begin{align}
  \cW \,(U_1(g) \otimes U_2(g)) = U_2(g) \,\cW\,,\quad \textrm{ for } g\in G\,.
\end{align}
It then follows from this intertwining property, the unitarity of the representation and Eq.~\eqref{eq:mpsdef} that the MPS possesses a global $G$-invariance in the sense that
\begin{align}
  \functionalstate_{MPS_1}(U_1(g)\ket{k_1}\otimes\cdots\otimes U_1(g)\ket{k_n}) = \functionalstate_{MPS_1}(\ket{k_1}\otimes\cdots\otimes\ket{k_n})\,,\quad \textrm{ for all } \,g\in G\,.
\end{align}
Similarily, if we have
\begin{align}
  \E_{U_1(g)^*A U_1(g)}(U_2(g)^* B U_2(g)) = U_2(g)^* \E_A(B) U_2(g) \,,\textrm{ for } g\in G\,
\end{align}
and the positive functional $\rho$ is invariant under the representation $U_2$, it follows that the corresponding FCS satisfies
\begin{align}
  \functionalstate_{FCS}(\textrm{ad}_{U_1(g)}(A_1)\otimes\cdots\otimes \textrm{ad}_{U_1(g)}(A_n)) = \functionalstate_{FCS}(A_1\otimes\cdots\otimes A_n)\,,
\end{align}
for all local observables $A_1,\ldots,A_n$, where $\textrm{ad}_{U_1(g)}(A) = U_1(g)^* A U_1(g)$. We call such functionals $G$-invariant MPS/FCS. Starting from the work by~\cite{FannesNachtergaeleWerner1994}, this class of MPS/FCS was extensively studied, see in particular~\cite{schuch2011classifying}. It provides a very efficient way to construct trial states for quantum spin chains with built-in physical symmetries.

In the following, we will show that genus-zero correlation functions
of chiral CFTs can be written in the form~\eqref{eq:MPSstatex}
for a functional $\functionalstate_{MPS_0}$,
whereas genus-one correlation functions of such CFTs take the MPS form~\eqref{eq:genusoneMPS} for a functional $\functionalstate_{MPS_1}$. Finally, genus-$0$ correlation functions of full CFTs are given by functionals $\functionalstate_{FCS}$ of the form~\eqref{eq:fcsnontranslationinvariant},
and genus-$1$ correlation functions will be certain linear combinations thereof. We will ilustrate our findings by WZW models, and the corresponding approximations will turn out to be closely related to $G$-invariant MPS/FCS.


\subsection{Vertex operator algebras, modules and intertwiners\label{sec:voasmodulesandintertwiners}}
We will work in the language of vertex operator algebras (VOAs), their modules and intertwiners. We will introduce these concepts only to the extent necessary for our purposes and refer to~\cite{Frenkel:1993fv} for details. For a discussion of how VOAs axiomatize~$2$-dimensional CFT, and in particular, the relationship to the Wightman axioms, see e.g.,~\cite{Kac:1996}.

\subsubsection{Vertex operator algebras (VOAs)}
A vertex operator algebra~$\cV$ is a tuple $(\cV,\Y,\1,\omega)$ consisting of an $\mathbb{N}_0$-graded vector space $\cV=\bigoplus_{n\in\mathbb{N}_0}\cV_n$, a linear map $\Y(\cdot,z):\cV\rightarrow\mathsf{End}(\cV)[[z,z^{-1}]]$ into the space of formal Laurent series with coefficients in $\mathsf{End}(\cV)$, and two distinguished vectors $\1\in\cV_0$ and $\omega\in\cV_2$. To state the conditions obeyed by these objects, the following terminology is convenient: The vector~$\1$ is called the {\em vacuum}, whereas $\omega$ is referred to as the {\em conformal} or {\em Virasoro vector}. Each space~$\cV_n$ is called a {\em weight space}. A vector $v\in\cV_n\subset \cV$ belonging to a weight space~$\cV_n$ is {\em homogeneous} of  {\em weight} (or {\em level}) $\wt v =n$.

By definition, the {\em vertex operator} $\Y(v,z)$ associated with a vector $v\in\cV$ can be written as
\begin{align}
\Y(v,z)=\sum_{n\in\mathbb{Z}}v_nz^{-n-1}\ , \label{eq:firstconventionmodeops}
\end{align} where $v_n\in \End(\cV)$ is referred to as a {\em mode operator}  of~$v$. For all $u,v\in \cV$, these satisfy
\begin{align}
v_n u&=0\qquad\textrm{ for } n \textrm{ sufficiently large.}
\end{align}
For a homogeneous vector~$v$, we use the notation
\begin{align}
\Y(v,z)=\sum_{n\in\mathbb{Z}} \y(v)_n z^{-n-\wt v}\label{eq:yvzweightdef}
\end{align}
instead of~\eqref{eq:firstconventionmodeops}, i.e., we index mode operators as~$\y(v)_n=v_{n+\wt v-1}$. This convention is motivated by Eq.~\eqref{eq:weightchangehomogeneous} below.
The vacuum vector satisfies
\begin{align}
\Y(\1,z)&=\mathsf{id}_{\cV}\ ,\label{eq:vacuumidentityoperatoraxiom}
\end{align}
where $\mathsf{id}_{\cV}$ is the identity operator on~$\cV$, and  the
 {\em creativity} property
\begin{align}
 \Y(u,z)\1=u+\sum_{n\in \mathbb{N}} \tilde{u}_n z^n\qquad\textrm{ for some }\tilde{u}_n\in\cV\ , \label{it:creativity}
\end{align}
 which is sometimes written $\lim_{z\rightarrow 0}\Y(u,z)\1=u$
 and colloquially known as the operator-state correspondence. We remark that in the latter limit, the formal indeterminate~$z$ is replaced by a complex number~$z\in\mathbb{C}$, a procedure we will discuss in more detail and use extensively below when considering correlation functions. However, in the present section, such substitutions are not necessary and every identity is to be understood as 
an identity between formal Laurent series.

  For the conformal vector~$\omega$, which
is homogeneous of weight~$\wt \omega=2$, the mode operators~$\y(\omega)_n$ are denoted by $L_n$, i.e.,
\begin{align}
\Y(\omega,z)&=\sum_{n\in\mathbb{Z}}L_nz^{-n-2}\ .
\end{align}
The operators~$\{L_n\}_{n\in\mathbb{Z}}$ are sometimes called the {\em Virasoro operators}. 

Every weight space $\cV_n$ is finite-dimensional, and the grading of~$\cV$ is given by the spectral decomposition of the operator  $L_0$: for every $n\in\mathbb{N}_0$,  $\cV_n$ is the eigenspace of $L_0$ with eigenvalue~$n$.   A homogeneous vector (or ``field'') $v\in\cV_n$ is {\em quasi-primary} if $L_1u=0$ and {\em primary}  if $L_nv=0$ for all $n>0$.

The operators $L_n$ satisfy the Virasoro algebra relations
\begin{align}
[L_n,L_m]=(n-m)L_{n+m}+\frac{c}{12}n(n^2-1)\delta_{n+m,0}\cdot \mathsf{id}_{\cV}\qquad\textrm{ for all }m,n\in\mathbb{N}_0\  ,\label{eq:virasoroalgebracond}
\end{align}
where the constant~$c$ is the central charge (sometimes denoted $\rank \cV$).

Using additional indeterminate variables, one can define  products of vertex operators in terms of formal series.
Interpreted as such, a product such as~$Y(u,z_1)Y(u,z_2)$ is an element of~$\End(\cV)[[z_1,z_1^{-1},z_2,z_2^{-1}]]$.  A VOA satisfies the following  {\em locality} or {\em weak commutativity} property with respect to such products:
  for all $u,v\in\cV$, there is a non-negative integer $k$ such that
\begin{align}
(z_1-z_2)^k[\Y(u,z_1),\Y(v,z_2)]=0\label{it:locality}\ .
\end{align}
In fact, this condition implies  the so-called {\em Jacobi identity},
which is often used instead as it is more explicit (see e.g.,~\cite[Section 1.4]{LepowskyLi03} for a discussion of the relationship between different definitions). For completeness, we include the latter:
it states that (see e.g.,~\cite[Eq. (1.2.14)]{Frenkel:1992jt})
\begin{align}\label{eq:jacobiidentity}
&\Residuum_{z_1-z_2}\left(\Y(\Y(u,z_1-z_2)v,z_2)(z_1-z_2)^m\iota_{z_2,z_1-z_2}(z_2+(z_1-z_2))^n\right)=\\
&\quad\Residuum_{z_1}\left(\Y(u,z_1)\Y(v,z_2)\iota_{z_1,z_2}(z_1-z_2)^mz_1^n\right)
-\Residuum_{z_1}\left(\Y(v,z_2)\Y(u,z_1)\iota_{z_2,z_1}(z_1-z_2)^mz_1^n\right)\,,
\end{align}
for all $m,n\in\mathbb{Z}$, $u,v\in\cV$. In this expression,
$\Residuum_z f(z)$  is the residue of~$f(z)$, i.e., the coefficient of $z^{-1}$ in $f(z)$, and $\iota_{z_1,z_2}f(z_1,z_2)$ is the series expansion of the function $f(z_1,z_2)$ in the domain $|z_1|>|z_2|$. In addition, a VOA has the {\em translation property}: for any $v\in\cV$, we have
\begin{align}
\frac{d}{dz}\Y(v,z)=\Y(L_{-1}v,z)\ , \label{it:translation}
\end{align}
where the lhs.~is to be understood as the Laurent series obtained by termwise differentiation. 
This concludes the definition of a VOA. Important consequences of these axioms are e.g., the associativity property
\begin{align}
(z_1+z_2)^k \Y(u,z_1+z_2)\Y(v,z_2)w&= (z_1+z_2)^k \Y(\Y(u,z_1)v,z_2)w\ \label{eq:associativitydef}
\end{align}
for large enough $k$, which can be seen as the VOA-version of the operator product expansion of fields, as well as the commutator formula
\begin{align}
[\Y(u,z_1),\Y(v,z_2)]&=\Y((\Y(u,z_1-z_2)-\Y(u,-z_2+z_1))v,z_2)\ .\label{eq:commutatorformula}
\end{align}
We also remark that the subalgebra generated by $\{L_0,L_1,L_{-1}\}$ generates an action of $SL(2,\mathbb{C})$  on the formal variable~$z$ by M\"obius transformations. This is a consequence of the Virasoro algebra relations~\eqref{eq:virasoroalgebracond} and the translation property~\eqref{it:translation}. Explicitly, we have
\begin{align}
q^{L_0} \Y(u,z) q^{-L_0}&=\Y(q^{L_0}u,qz)\label{eq:dilationintegrated}\\
q_\lambda^{L_{-1}} \Y(u,z) q_\lambda^{-L_{-1}}&=\Y(u,z+\lambda)\qquad\textrm{ where }q_\lambda=e^\lambda\\
q_\lambda^{L_1}\Y(u,z)q_\lambda^{-L_1}&=\Y\left(q^{L_1}_{\lambda(1-\lambda z)}(1-\lambda z)^{-2L_0} u,\frac{z}{1-\lambda z}\right)\  .\label{eq:l1integrated}
\end{align}
More generally, an element
\begin{align*}
A=\left(\begin{matrix}
a & b\\
c & d
\end{matrix}\right)\in SL(2,\mathbb{C})\ ,
\end{align*}
corresponding to the M\"obius transformation
\begin{align}
\gamma(z)&=\frac{az+b}{cz+d}\ ,
\end{align}
acts by
\begin{align}
  D_\gamma \Y(v,z)D_\gamma^{-1}&=\Y\left(\left(\frac{d\gamma}{dz}\right)^{L_0}\exp\left(\frac{\gamma''(z)}{2\gamma'(z)}L_1\right)v,\gamma(z)\right)\ ,\label{eq:generalmoebiustrsfaction}
\end{align}
and $\gamma\mapsto D_\gamma$ defines a representation of $SL(2,\mathbb{C})$ on $\cV$.  The maps~\eqref{eq:dilationintegrated}--\eqref{eq:l1integrated} are generated by the elements
\begin{align*}
\cL_{0}=\left(
\begin{matrix}
1/2 & 0\\
0 & -1/2
\end{matrix}\right)\qquad
\cL_{-1}=\left(
\begin{matrix}
0 & 1\\
0 & 0
\end{matrix}\right)\qquad
\cL_{1}=\left(
\begin{matrix}
0 & 0\\
1 & 0
\end{matrix}\right)\ ,
\end{align*}
of $\mathfrak{sl}(2,\mathbb{C})$. We also point out that
the vacuum is invariant under M\"obius transformations, i.e.,
\begin{align}
D_\gamma \1&=\1\qquad\textrm{ for all }\gamma\ ,\label{eq:vacuuminvariancemoeb}
\end{align}
since $L_0\1=0$ (as $\1\in\cV_0$), $L_1\1=0$ since by~\eqref{eq:weightchangehomogeneous}
\begin{align}
\wt(L_1\1)=\wt \1-1<0
\end{align}
hence $L_1\1\in \cV_{-1}=0$ by the assumption $\cV_n=0$ for $n<0$,
and $L_{-1}\1=0$. The latter is an immediate consequence of~\eqref{eq:vacuumidentityoperatoraxiom} and the translation property~\eqref{it:translation}. Infinitesimally, the  relationships~\eqref{eq:dilationintegrated}--\eqref{eq:l1integrated} read
\begin{align}
[L_0,\Y(v,z)]&=\Y(L_0v,z)+z\Y(L_{-1}v,z)\label{eq:hzerocommutator}\\
[L_{-1},\Y(v,z)]&=\Y(L_{-1}v,z)\label{eq:hmonecommutator}\\
[L_1,\Y(v,z)]&=\Y(L_1v,z)+2zY(L_0v,z)+z^2\Y(L_{-1}v,z)\ .\label{eq:honecommutator}
\end{align}
A consequence of~\eqref{eq:hzerocommutator}, which is particularly relevant for our purposes is the following. The mode operators associated with a homogeneous vector $w\in\cV$ map weight spaces to weight spaces, and change the weight according to
\begin{align}
\wt (\y(w)_n v)&=\wt v-n\ \label{eq:weightchangehomogeneous}
\end{align}
for any homogeneous vector $v\in\cV$.

We will require a few additional technical assumptions on the VOAs to derive our results. These are expressed by  the following definitions: A VOA~$\cV$ is called of {\em CFT-type} if the weight space $\cV_0=\mathbb{C}\1$ is one-dimensional, i.e., spanned by the vacuum vector.  It is {\em rational} if every admissible $\cV$-module (as defined below) is completely reducible, i.e., a direct sum of irreducible admissible~$\cV$-modules. It is $C_2$-co-finite if the space~$C_2 = \mathsf{span} \{u_{-2}v\ |\ u,v\in\cV\}$ has finite co-dimension in~$\cV$, \(\dim \cV/C_2 <\infty\). We will also assume that the weight spaces are finite-dimensional, i.e., $\dim \cV_n<\infty$. We defer the discussion of the additional property of unitarity of VOAs to Section~\ref{sec:unitarity}, where we also discuss some consequences. Many examples of physical interest satisfy all these conditions. In particular, a large class of minimal models as well as lattice models and WZW models are of CFT-type, rational, unitary, and $C_2$-co-finite (see~\cite{Dong:2014gx} and the references therein).

\begin{example}[\bf Wess-Zumino-Witten (WZW) models]
Our prime example in this paper will be CFTs  of Wess-Zumino-Witten type~\cite{wess1971consequences,witten1984non}, which are built upon a local symmetry action by a compact Lie group. However, the CFT is most easily understood in the differential picture, that is, from a Lie algebra viewpoint. 

Following a first rigorous treatment of these theories by Kanie and Tsuchiya~\cite{tsuchiya1987vertex} in the case of the Lie algebra $\mathfrak{sl}(2,\C )$, and the work by Wassermann~\cite{Wassermann:1998cs} for general elements of the A-series, the corresponding VOA structure, the modules and the relations between these have been very nicely constructed and studied by Frenkel and Zhu~\cite{Frenkel:1992jt}. We will follow their approach, and try to illustrate our statements for these physically relevant examples.

The basic building block is a simple Lie algebra $\g $ over the complex numbers (from now on, all Lie algebras are considered to be complex, unless otherwise stated), with normalized\footnote{We usually take the normalization to be $(\theta,\theta) = 2$, where $\theta$ is the maximal root of $\g $.} Killing form~$(\cdot,\cdot):\g \times \g \to \C $. We then consider its affinization, which is the Lie algebra $\ga = \g \otimes \C[t,t^{-1}] \oplus \mathsf{k}\C $, where $\mathsf{k}$ is an element of the center of $\ga$, and $\C[t,t^{-1}]$ is the algebra of Laurent polynomials in the formal variable~$t$. The Lie bracket ($ \fa,\fb \in \g $, $n,m \in \Z$) is
\begin{align}\label{eq:wzwcommutator}
  [\fa \otimes t^n,\fb \otimes t^m] = [\fa,\fb]\otimes t^{n+m}+ \delta_{n+m,0}\,n\,(\fa,\fb)\mathsf{k} \,.
\end{align}
In the following, we adopt the standard notation $\fa \otimes t^n = \fa(n)$, and consider the decomposition of $\ga$ into the subalgebras
\begin{align}
  \ga_{+} = \g \otimes \C[t]t\,,\quad \ga_{-} = \g \otimes \C[t^{-1}]t^{-1}\,,\quad \ga = \ga_{+} \oplus \ga_{-} \oplus \g \oplus \C \mathsf{k}\  .
\end{align}
Let $\Uga$ be the universal enveloping algebra of $\ga$. It follows that the quotient of~$\Uga$ by~$\ga_{+} $ and~$\mathsf{k} - k \idty_{\Uga}$, $k \in \mathbb{N}$ is a natural $\ga$-module. Its elements are of the form
\begin{align}
  \fa_1(-n_1)\fa_2(-n_2) \cdots \fa_l(-n_\ell)\idty_{\Uga} \,, \quad \fa_j(-n_j) \in \ga_{-}\,,\quad \textrm{ for } \,j=1,\ldots,\ell
\end{align}
and we define the vacuum vector by $\1 = \idty_{\Uga}$. The space has a natural grading defined by
\begin{align}
  \wt(\fa_1(-n_1)\fa_2(-n_2) \cdots \fa_\ell(-n_\ell) \1) = \sum_{j=1}^\ell n_j \,,\label{eq:wzwgradingdef}
\end{align}
Moreover, this module (which is isomorphic to $\Ugam$ as a vector space) has a maximal proper submodule. This allows us to again  take the quotient with respect to that submodule. The resulting $\ga$-module is denoted by~$\voalkzero$. In the following, we routinely identify elements in~$\voalkzero$ with corresponding representatives in~$\Ugam$.

As shown by Frenkel and Zhu~\cite{Frenkel:1992jt}, the space $\voalkzero$ has a VOA structure if~$-k$ is not equal to the dual Coxeter number $\mathsf{g}$ of~$\g $. We will assume this in the following. The Virasoro operators~$\{L_n\}_{n\in\mathbb{Z}}$ are given by the Sugawara-Segal construction  and are compatible with the grading~\eqref{eq:wzwgradingdef}, see~\cite{Frenkel:1992jt} or~\cite[Chapter 15.2]{francesco2012conformal}.  For $\fa \in \g \simeq \g \otimes t^{-1} $, the associated vertex operator is 
\begin{align}\label{eq:vertexopwzw}
  \Y(\fa,z) = \sum_{n \in \Z} \fa(n) \,z^{-n-1} \,.
\end{align}
Given an element of the form $v = \fa_1(-n_1) \fa_2(-n_2) \dots \fa_\ell(-n_\ell)\1 \in \voalkzero$, $\fa_j \in \g $,  the associated vertex operator is defined via the Jacobi identity, which can be used  to successively reduce the definition to elements in~$\g $, see~\cite{Frenkel:1992jt} for details. This definition is then linearly extended to all of~$\voalkzero$.  A more physical way of thinking of the  object~$\voalkzero$ is as the Fock space over the space~$\g \otimes \C[t^{-1}]$ of polynomials in $t^{-1}$ with values in the Lie algebra $\g $, modulo the relations imposed by the commutator rule~\eqref{eq:wzwcommutator}, and the identity playing the role of the vacuum. 

Frenkel and Zhu~\cite{Frenkel:1992jt} proved that, if we choose $k$ (which is often referred to as the {\em level}) to be a positive integer --- which we will do from this point onwards --- then the VOA~$\voalkzero$ is simple, $C_2$-co-finite and rational, that is, satisfies our basic assumptions and only possesses finitely many irreducible representations. This will be our prime example to illustrate our findings.   

\end{example}

\subsubsection{Modules\label{sec:introVOAmodules}}
A {\em module} of a VOA is a vector space carrying a structure satisfying almost all defining properties of a VOA, as well as certain compatibility properties with the VOA. For a VOA~$(\cV,\Y,\1,\omega)$, a module $(A,\Y_A)$ is again a graded vector space~$A=\bigoplus_{n\in\mathbb{N}_0}A_n$, together with a linear map
\begin{align}
  \Y_A(\cdot,z):\cV\rightarrow\End(A)[[z,z^{-1}]]\,,\quad\,\Y_A(v,z) = \sum_{n \in \Z} v_n^A z^{-n-1}\,,
\end{align}
where $v_n^A \in \End(A)$ is again called the \emph{mode operator} of $v \in \cV$. We call $A_0$ the \emph{top level}, and $A_n$ the $n$-th level of the module~$A$. (Here we follow the convention of~\cite{Frenkel:1992jt} and assume that the grading is over the non-negative integers~$\mathbb{N}_0=\mathbb{N}\cup\{0\}$. Such $\mathbb{N}_0$-gradable modules are also called {\em admissible}, see e.g.,~\cite[Section~6]{MasonTuite:2010}.) 

Homogeneous vectors and mode operators are  defined analogously for modules as for VOAs. Weights are defined as eigenvalues of $L_{A,0}$, which, in contrast to the case of VOAs do not need to be integers: they are of the form $\alpha+n$, where $\alpha\in I_A$ for some finite  set $I_A\subset\mathbb{C}$ and $n\in\mathbb{N}_0$. More precisely,  for every $n\in\mathbb{N}_0$, we have for all $a\in A_n$
\begin{align}
L_{A,0}a&=(\alpha+n)a\qquad\textrm{ for some } \alpha\in I_A\ .\label{eq:formofweightsmodule}
\end{align}
According to~\eqref{eq:formofweightsmodule}, we can refine the grading
of the module to
\begin{align*}
A&=\bigoplus_{\alpha\in I_A, n\in\mathbb{N}_0}A_{n,\alpha}\ \qquad\textrm{ where }\qquad A_n=\bigoplus_{\alpha\in I_A} A_{n,\alpha}
\end{align*}
and elements of $A_{n,\alpha}$ are weight vectors~$a$ with weight $\wt a=\alpha+n$.
In other words, the grading again coincides with the spectral decomposition, i.e., the eigenspaces of $L_{A,0}$, but the latter provides more detail. For an irreducible module $A=\bigoplus_{n\in\mathbb{N}_0}A_n$, the set ~$I_A=\{h\}$ consists of a single scalar (called the {\em conformal weight} or {\em highest weight} of $A$), and thus
\begin{align}
L_{A,0}|_{A_n}=(h+n)\mathsf{id}_{A_n}\qquad\textrm{ for all }n\in\mathbb{N}_0\ .\label{eq:irreduciblemodule}
\end{align}
For a homogeneous vector $v \in A_n$ we again use the notation
\begin{align}
  \Y_A(v,z) = \sum_{m \in \Z} \y_A(v)_m z^{-m-n}\,,
\end{align}
such that $\y_A(v)_m = v^A_{m+n-1}$ and $\wt(\y_A(v)_m w) = \wt w - m$, for any homogeneous vector $w \in A$.

As shown by Gaberdiel and Nietzke~\cite[Proposition 10 and subsequent comment]{GaberdielNeitzke}, as well as by Karel and Li~\cite{karel1999certain}, the weight spaces $A_n$ of an irreducible module of a VOA satisfying the $C_2$-co-finiteness condition are finite-dimensional. More precisely, their dimension is bounded by (see~\cite{buhl2008ordered} for a nice discussion of results of this kind)
\begin{align}
\dim A_n \leq (\dim A_0)\cdot P(n,\dimctwo)\,, \label{eq:dimweightspaces}
\end{align} 
where $\dimctwo = \dim \cV/C_2$, $C_2=\mathsf{span}\{u_{-2}v\ |\ u,v\in\cV\}$ as before, and $P(n,\dimctwo)$ is the number of  $\dimctwo$-component multi-partitions\footnote{A partition~$\mu$ of an integer $n$ is a set of integers $\mu = \{\mu^1,\ldots,\mu^l\}$ which sum up to $n$, $|\mu| = \sum_{i=1}^l \mu^i = n$. An $m$-component multi-partition of an integer $n$ is a set $\{\mu_1,\ldots,\mu_m\}$ of $m$ partitions such that $\sum_{i=1}^m |\mu_i| = n$.} of the integer $n$.  As this number will later determine the size of the bond dimension of our approximation, we give a bound on its growth behavior in Lemma~\ref{lem:app:boundpartitionfunction}. 

The vertex operators $\Y_A(u,z)$ (and the mode operators $L_{A,n}$ of $\Y_A(\omega,z)$) satisfy all axioms of a VOA with the exception of the creativity property~\eqref{it:creativity}.  Various consequences derived for VOAs continue to hold for modules, with suitable replacements when vertex operators are applied inside arguments: for example, the commutator formula~\eqref{eq:commutatorformula} becomes
\begin{align}
[\Y_A(u,z_1),\Y_A(v,z_2)]&=\Y_A((\Y(u,z_1-z_2)-\Y(u,-z_2+z_1))v,z_2)\ .\label{eq:commutatorformulamodules}
\end{align}
Following the literature, we will often suppress the index~$A$ when it is clear from the context.

\paragraph{Special modules.}
 Given a VOA $(\cV,\Y,\1,\omega)$, it is clear that $(\cV,\Y)$ is a module for $\cV$. This is called the {\em adjoint module}. Given a module $A=\bigoplus_{n\in\mathbb{N}_0}A_n$ for a VOA~$\cV$, the restricted dual space $A^\prime=\bigoplus_{n\in\mathbb{N}_0}A_n^\prime$  (i.e., the space of linear functionals on~$A$ vanishing on all but finitely many~$A_n$) can be given a  $\cV$-module structure as well~\cite{Frenkel:1992jt}. The module operator is defined by 
\begin{align}
  \Y_{A^\prime} \,:\,\cV \to \End(A^\prime)[[z,z^{-1}]]\,,\quad \Y_{A^\prime}(v,z)(a')(a) = a'(\Y_A(e^{L_1}(-z^{-2})^{L_0} v,z^{-1})a)\,,
\end{align}
for $v \in \cV$, $a' \in A^{\prime}$ and $a \in A$. This is called the \emph{contragredient} module.

\begin{example}[\bf WZW modules]\label{exmp:wzwmodules}
As before, fix a simple Lie algebra $\g $ and an integer~$k\in\mathbb{N}$. Let~$V$ be a module for $\g $. Then there is a canonical module $V_{k}$ for the affinization~$\ga$ extending $V $ by setting $\ga_{+} V = 0$, $\mathsf{k} = k \idty_{V}$ and defining
\begin{align}
  V_k = \Uga \otimes_{\Ugap \oplus \g \oplus k\idty} V \,.
\end{align}
(This notation indicates that~$V_k$ is the quotient of $\Uga\otimes V$ 
by~$\Ugap \oplus \g \oplus k\idty$.)
 Elements of this space are spanned by vectors of the form
\begin{align}
  \fa_1(-n_1)\fa_2(-n_2) \cdots \fa_\ell(-n_\ell) \vphi \,, \quad \fa_j(-n_j) \in \ga_{-}\,,\;\vphi \in V\, .
\end{align}
We can define a natural grading on $V_k$ by setting
\begin{align}
  \wt(\fa_1(-n_1)\fa_2(-n_2) \cdots \fa_\ell(-n_\ell) \vphi) = \sum_{j=1}^\ell n_j + h_\vphi\,,\label{eq:vkgradingdef}
\end{align}
where $h_\vphi\vphi = L_0 \vphi$, and the Virasoro operators are again given by the Sugawara-Segal construction and are compatible with the grading~\eqref{eq:vkgradingdef}. If $V$ is an irreducible $\g $-module of highest weight $\lambda$, the associated $\ga$-module~$V_k$ has a maximal proper submodule, and we denote the quotient of~$V_k$ by this submodule by~$\moduleklambda$. The space~$\moduleklambda$ turns out to be  an irreducible module for the VOA~$\voalkzero$ (for certain weights $\lambda$ as specified below). To avoid an accumulation of indices, we will denote the $\mathbb{N}_0$-grading of the module as
\begin{align*}
\moduleklambda=\bigoplus_{n\in\mathbb{N}_0}\moduleklambda(n)\ .
\end{align*}
Again, elements in $\moduleklambda(n)$ are said to belong to the level $n$, or the top level if $n=0$.
 
   Eq.~\eqref{eq:vertexopwzw}  defines a module vertex operator by identifying~$\fa(n)$ with its action on~$\moduleklambda$, i.e. $\y_{\moduleklambda}(\fa)_n = \fa(n)$. The action of module operators on elements composed of products of the building blocks $\fa(n)$ is defined via the Jacobi identity. The corresponding Fock space analogy is the same as before, but now the vacuum is replaced by the highest weight vector corresponding to $\lambda$. Accordingly, the top level~$\moduleklambda(0)$ of the $\voalkzero$-module $\moduleklambda$ (consisting of vectors with smallest eigenvalues of~$L_0$) is isomorphic to the $\g $-module~$V$. For the irreducible modules $\moduleklambda$, the action of the grading operator $L_0$ on the top level $\moduleklambda(0)$ is a multiple of the Casimir operator, more specifically, we have for~\cite[chapter 15]{francesco2012conformal}
\begin{align}
  h_\lambda := \frac{\Scp{\lambda}{\lambda+\rho}}{k+\mathsf{g}}\,,
\end{align}
that $h_\vphi = h_\lambda$ for $\vphi \in \moduleklambda(0)$, which, as discussed above, is an irreducible $\g$-module. Here, $\rho$ is the Weyl vector of $\g $ and $\mathsf{g}$ is its dual Coxeter number. It was shown in~\cite{Frenkel:1992jt} that the irreducible modules of~$\voalkzero$ are all derived from irreducible modules of~$\g $, and are hence of the form $\moduleklambda$, with the additional constraint that the weight $\lambda$ is integrable and satisfies $\lambda(\theta) \leq k$, where $\theta $ is  the maximal root of~$\g $.
\end{example}

\subsubsection{Unitary modules\label{sec:unitarity}}

A key tool in our analysis is the existence of certain positive definite  Hermitian forms. This is the assumption of unitarity: A VOA $\cV=(\cV,\Y,\1,\omega)$ is called {\em unitary} if there is an anti-linear involution $\phi:\cV\rightarrow\cV$ of $\cV$ with 
\begin{align}
\phi(\1)=\1\ ,\quad \phi(\omega)=\omega\ ,\quad\textrm{  and  }\quad \phi(v_nw)=\phi(v)_n\phi(w)\quad \textrm{ for all }v,w\in\cV\ ,\label{eq:automorphismdef}
\end{align} together with a positive definite Hermitian form $\langle\cdot,\cdot\rangle_{\cV}:\cV\times\cV\rightarrow\mathbb{C}$ 
which is $\mathbb{C}$-linear in the second argument and anti-$\mathbb{C}$-linear in the first argument
such that the invariance condition
\begin{align}\label{eq:unitarityvoadef}
  \langle w_1,\Y(e^{zL_1}(-z^{-2})^{L_0}v,z^{-1})w_2\rangle_\cV &=\langle\Y(\phi(v),z)w_1,w_2\rangle_\cV\  
\end{align}
holds for all $v,w_1,w_2\in\cV$.  Adopting the convention used in physics (but in contrast to~\cite{Dong:2014gx}), we assume that $\langle\cdot,\cdot\rangle_\cV$ is anti-linear in the first and linear in the second argument.  
The map $v\mapsto \phi(v)$ is called an anti-linear automorphism of the  VOA~$\cV$. (Note that we do not complex conjugate $z$, which is a formal variable.) In the following, we will also assume (as e.g., in~\cite[Theorem 3.3]{Dong:2014gx}) that
\begin{align}
\langle\phi(v_1),\phi(v_2)\rangle_\cV=\overline{\langle v_1,v_2\rangle_\cV}\qquad\textrm{ for all }v_1,v_2\in\cV\ .\label{eq:unitaryvertexopandinvolution}
\end{align}

Unitarity for modules can be defined in an analogous way (see~\cite{Dong:2014gx}). More precisely, consider a  unitary VOA~$\cV$ with anti-linear automorphism $\phi:\cV\rightarrow\cV$. A module $(A,\Y_A)$ of $\cV$ is called unitary if
there is a positive definite Hermitian form $\langle\cdot,\cdot\rangle_A:A\times A\rightarrow\mathbb{C}$ which is $\mathbb{C}$-linear in the second and $\mathbb{C}$-anti-linear in the first argument such that
\begin{align}\label{eq:unitaritymoduledefinition}
\langle a_1,\Y_A(e^{zL_1}(-z^{-2})^{L_0}v,z^{-1})a_2\rangle_A&=\langle \Y_A(\phi(v),z)a_1,a_2\rangle_A\ ,
\end{align}
for all $v\in\cV$ and $a_1,a_2\in A$. As for the case of VOAs, we assume in addition that there exists an anti-linear involution $\eta: A \to A$, which is compatible with  $\phi$ in the sense that
\begin{align}
  \eta(a_n b) = \phi(a)_n \eta(b)\,,\quad a \in \cV\,,\; b \in A\,\label{eq:involutionanbcompatibility}
\end{align}
as well as with the scalar product,
\begin{align}
  \Scp{\eta(a)}{\eta(b)} = \overline{\Scp{a}{b}}\,,\quad a,b \in A\,.
\end{align}
Note that the invariance property applied to the Virasoro vector $\omega$ implies that the adjoint operator of $L_n$ for $n \in \Z$ is $L_{-n}$,
\begin{align}
  \Scp{L_n a}{b} = \Scp{a}{L_{-n}b}\,,\quad a,b \in A\,.
\end{align}
Following Gaberdiel~\cite{Gaberdiel:2000tb}, it is also easy to see that the operator $L_0$ has to be positive for a unitary module, since its expectation value on the top level $A_0$ satisfies 
\begin{align}
  \Scp{a}{L_0 a} = \Scp{a}{L_1 L_{-1}a} = \Scp{L_{-1}a}{L_{-1}a}\,,\quad \textrm{for } a \in A_0\,,
\end{align}
which is necessarily positive. 

Due to the Frechet-Riesz representation theorem, we note that for unitary modules we have a linear vector space isomorphism $\tilde{\eta}$ from the unitary module to its restricted dual space $A^\prime=\bigoplus_{n\in\mathbb{N}_0}A_n^\prime$, given by the formula
\begin{align}\label{eq:isodualcomplconj}
  \tilde{\eta}(a)(b) = \Scp{\eta(a)}{b} \,.
\end{align}
 Put differently, 
\begin{align}
  A \times A \ni a_1 \times a_2 &\mapsto \pairL a_1,a_2\pairR_A:=\Scp{\eta_A(a_1)}{a_2}\label{eq:bilinearaformdefinition}
\end{align}
defines a bilinear form on $A$. Moreover, this bilinear form inherits the invariance property from Eq.~\eqref{eq:unitaritymoduledefinition}. That is,
we have
\begin{align}
\pairL \Y_A(v,z)a_1,a_2 \pairR_A&=
\pairL a_1,\Y_A(e^{zL_1}(-z^{-2})^{L_0}v,z^{-1})a_2 \pairR_A\ \textrm{ for all }a_1,a_2\in A\textrm{ and }v\in\cV\ . \label{eq:invariancepairingclaim}
\end{align}
For completeness, we give  the proof of this statement.
\begin{proof}
By definition, we have
\begin{align*}
\pairL \Y_A(v,z)a_1,a_2\pairR_A
&= \langle \eta_A(\Y_A(v,z)a_1),a_2\rangle\\
&=\overline{\langle \Y_A(v,z)a_1,\eta_A(a_2)\rangle}\\
&=\overline{\langle a_1,\Y_A(e^{zL_1}(-z^{-2})^{L_0}\phi(v),z^{-1})\eta_A(a_2)\rangle}\\
&=\langle \eta_A(a_1),\eta_A\left(\Y_A(e^{zL_1}(-z^{-2})^{L_0}\phi(v),z^{-1})\eta_A(a_2)\right)\rangle\\
&=\pairL a_1,\eta_A\left(\Y_A(e^{zL_1}(-z^{-2})^{L_0}\phi(v),z^{-1})\eta_A(a_2)\right)\pairR_A\ . 
\end{align*}
The claim then follows since
\begin{align*}
\eta_A\left(\Y_A(e^{zL_1}(-z^{-2})^{L_0}\phi(v),z^{-1})\eta_A(a_2)\right)&=\Y_A(e^{zL_1}(-z^{-2})^{L_0}v,z^{-1})a_2
\end{align*}
because of~\eqref{eq:involutionanbcompatibility}, the commutation relations between $L_0$ and $L_1$, and  the fact that $\phi(L_n)=L_n$ because $\omega$ is invariant under~$\phi$. 
\end{proof}

Unitary modules are in fact special pre-Hilbert spaces, as they are linear spaces equipped with a positive definite sesquilinear form $\Scp{\cdot}{\cdot}$. This form can be used to define a \emph{norm}, by setting $\norm{a}_A = \Scp{a}{a}^\half$, which turns the unitary module $A$ into a graded, normed space. We can then complete this space with respect to this norm, obtaining a Hilbert space, with inner product given by the unique extension of the sesquilinear form to the completion of $A$. In the following, we do not differentiate between unitary modules and their completion, as our arguments do not rely on this distinction. However, we will  frequently use facts related to Hilbert spaces, such as the Cauchy-Schwarz inequality, and thus find it more appealing to work with the completed spaces.

\begin{example}[\bf WZW modules are unitary]\label{exmp:wzwunitary}
Let again $\g $ be a complex simple Lie algebra, with associated VOA~$\voalkzero$. If we denote by $\{\fe_j,\ff_j,\fh_j\} $ the Chevalley generators of $\g $, then the associated Chevalley involution
\begin{align}
  \eta_0(\fe_j) = - \ff_j\,,\quad \eta_0(\ff_j) = - \fe_j\,,\quad \eta_0(\fh_j) = - \fh_j\,,
\end{align}
defines an involution on the real span of these generators. It can be extended to the associated real subspace (spanned by $\{\fe_j,\ff_j,\fh_j\}\otimes \{t^m\ | m\in\mathbb{Z}\}$) of~$\ga$ by setting 
\begin{align}
\eta_0(\fa\otimes t^n)=\eta_0(\fa)\otimes t^n\qquad\textrm{ and }\qquad 
\eta_0(\mathsf{k})=-\mathsf{k} 
\end{align}
or, succintly, $\eta_0(\fa(n))=\eta_0(\fa)(-n)$.  Moreover, $(-\eta_0)$ can be extended to an anti-linear anti-automorphism on the whole of $\ga$. This anti-automorphism~$\hat{\eta}$ satisfies the additional property that there exists a positive definite sesquilinear form~\cite[Chapter 11]{Kacbook} on $\Uga$ for which the adjoint of $\fa(n) \in \ga$ is given by $\hat{\eta}(\fa)(-n)$,
\begin{align}
  \Scp{\fa(n) v}{w} = \Scp{v}{\hat{\eta}(\fa)(-n) w} \,,\quad v,w \in \Uga\,,\;\fa \in \g\,,
\end{align}
Conversely, as shown by Dong and Lin~\cite{Dong:2014gx}, setting 
\begin{align}
  \phi(\fa_1(n_1) \cdots \fa_1(n_l)) = (-\hat{\eta}(\fa_1))(n_1) \cdots (-\hat{\eta}(\fa_l))(n_l)
\end{align}
defines an anti-linear involution on $\Uga$, which leaves the maximal proper submodule invariant and hence induces an anti-linear involution on~$\voalkzero$. The scalar product on $\voalkzero$ induced by the one on $\Uga$ inherits the adjoint relations for elements $\fa(n) \in \ga$, and moreover satisfies the required invariance property (Eq.~\eqref{eq:unitarityvoadef}) as well as $\Scp{\phi(v)}{\phi(w)} = \overline{\Scp{v}{w}}$. Thus, $\voalkzero$ is a unitary VOA. 

If $V_k = \Uga \otimes_{\ga_{+} \oplus \g \oplus k\idty} V$ is a $\ga$-module, Kac~\cite[Chapter 11]{Kacbook} shows that the map induced by $\hat{\eta}$ is again an anti-linear involution on the whole of $V_k$. Moreover, if $V_k$ stems from an irreducible highest weight representation with weight $\lambda$ satisfying $(\lambda,\theta) \leq k$, then there exists a positive definite Hermitian form such that
\begin{align}\label{eq:involutionvbgmodule}
  \Scp{\fa(n) \chi_1}{\chi_2} = \Scp{\chi_1}{\hat{\eta}(\fa)(-n) \chi_2} \,,\quad \chi_1, \chi_2 \in V_k\,,\;\fa \in \g\,.
\end{align}
Again following the ideas of Dong and Lin~\cite{Dong:2014gx}, we can show that the associated irreducible $\voalkzero$-module $\moduleklambda$ carries an anti-linear involution $\eta$ satisfying $\Scp{\eta(\chi_1)}{\eta(\chi_2)} = \overline{\Scp{\chi_1}{\chi_2}}$, for $\chi_1,\chi_2 \in \moduleklambda$ and the induced scalar product on  $\moduleklambda$ is invariant as required in~\eqref{eq:unitaritymoduledefinition}, and thus a unitary module. For a more operator algebraic discussion of the same structures, leading to the same conclusions, see~\cite{Wassermann:1998cs}.
\end{example}

\subsubsection{Intertwiners\label{sec:intertwiners}}
Let $A,B,C$ be modules of a rational VOA~$\cV$.
An {\em intertwining operator~$\iY$ of type $\itype{C}{A}{B}$}
is a family of linear maps~$\iY(\cdot,z)$ from $A$ to certain
Laurent-like series with coefficients in $\End(B,C)$, i.e., it associates to every $a\in A$ a series
\begin{align}
\iY(a,z)=\sum_{\tau \in I, m\in \mathbb{Z}}a_{\tau,m}z^{-\tau-m}\ ,\label{eq:intertwineroriginaldef}
\end{align}
where $I=I^C_{AB}=\{\tau_1,\ldots,\tau_d\}$ is a finite collection of  complex numbers (depending only on $A$, $B$ and $C$) and $a_{\tau,m}\in\End(B,C)$ for $\tau\in I$ and $m\in\mathbb{Z}$. For all $b\in B$, the mode operators satisfy
\begin{align*} 
a_{\tau,m}b=0\qquad\textrm{ for sufficiently large }m\  .
\end{align*}
Consider the case where $A$, $B$, $C$ are irreducible with highest weights $h_A$, $h_B$, $h_C$, respectively.
 It can be shown (see~\cite[Remark 5.4.4]{Frenkel:1993fv})
that in this case,  
\begin{align}
I^C_{AB}=\{h_A+h_B-h_C\}\label{eq:iabcirreducible}
\end{align}
consists of a single element. In particular,
up to a factor $z^{h_C-(h_A+h_B)}$, an intertwiner of type  $\itype{C}{A}{B}$
is given by a formal Laurent series. In the general case an intertwiner is thus a finite linear combination of Laurent-like series  which upon multiplication with a fixed power in the indeterminate $z$ are formal Laurent series. That is, we can write
\begin{align}
  \iY(a,z) = \sum_{\tau \in I} \iY_\tau(a,z)\,\quad \textrm{ and }\qquad \iY_\tau(a,z) \in z^{\tau} \End(B,C)[[z,z^{-1}]] \,.
\end{align}
In the following, we will use the notation $\Endint{B,C}{z,z^{-1}}$ for such a formal series and thus write $\iY(a,z) \in \Endint{B,C}{z,z^{-1}}$.
As in~\eqref{eq:yvzweightdef}, it will be convenient to use the notation
\begin{align}
\iY(a,z)=\sum_{\tau\in I, m\in \mathbb{Z}} \iy(a)_{\tau,m} z^{-m-\tau-n}\label{eq:modeexpansionintertwiner}
\end{align}
for homogeneous vectors $a\in A_{n}$. The operators $\iY(a,z)$
satisfy the translation property
\begin{align}
\frac{d}{dz}\iY(a,z)&=\iY(L_{A,-1}a,z)\qquad\textrm{ for all }a\in A\  .\label{eq:translationpropertyintertwiner}
\end{align}
The compatibility of an intertwiner with modules is expressed by a corresponding Jacobi identity: we have
\begin{align}\label{eq:jacobiidentity}
&\Residuum_{z_1-z_2}\left(\iY(\Y_A(v,z_1-z_2)a,z_2) (z_1-z_2)^m \iota_{z_2,z_1-z_2}((z_1-z_2)+z_2)^n)\right)=\\
&\qquad \Residuum_{z_1}\left(\Y_C(v,z_1)\iY(a,z_2)\iota_{z_1,z_2}(z_1-z_2)^mz_1^n\right)-
\Residuum_{z_1}\left(\iY(a,z_2)\Y_B(v,z_1)\iota_{z_2,z_1}(z_1-z_2)^mz_1^n\right)
\end{align}
for all $m,n\in\mathbb{Z}$, $v\in\cV$ and $a\in A$. A useful consequence results if we set $m=0$ and evaluate the residues for a homogeneous vector $v \in \cV$. We find by expansion of $((z_1-z_2)+z_2)^n$ that
\begin{align}\label{eq:commutationmoduleintertwiner}
  \y_C(v)_{n - \wt v + 1} \iY(a,z) - \iY(a,z) \y_B(v)_{n - \wt v + 1} = \sum_{j=0}^\infty \binom{n}{j} \,\iY(\y_A(v)_{j-\wt v + 1}a,z)\,z^{n-j} 
\end{align}
for $a \in A$. This identity will be used below in our algorithm for WZW models. As before, Eq.~\eqref{eq:translationpropertyintertwiner} and
the analog of~\eqref{eq:hzerocommutator} (which follows from Eq.~\eqref{eq:commutationmoduleintertwiner}) imply that the mode operators defined by~\eqref{eq:modeexpansionintertwiner} preserve homogeneity and satisfy for $a\in A_{n,\alpha}$ the identity
\begin{align}
\wt(\iy(a)_{\tau,m}b)=\wt b-m-\tau+\alpha\qquad\textrm{ for any }\tau\in I \textrm{ and }m\in \mathbb{Z}\ , \label{eq:weightchangehomogeneousintertwiner}
\end{align}
for any homogeneous vector~$b\in B$. 
Indeed, we have
\begin{align*}
L_{C,0} \iY(a,z)b&=(L_{C,0}\iY(a,z)-\iY(a,z)L_{B,0}) b+\iY(a,z)L_{B,0}b\\
&=(\iY(L_{A,0}a,z)+z\iY(L_{A,-1}a,z))b+\wt b\cdot \iY(a,z)b\\
&=(\wt a+z\frac{d}{dz}+\wt b)\iY(a,z)b\ .
\end{align*}
Inserting~\eqref{eq:modeexpansionintertwiner} and comparing coefficients gives the claim~\eqref{eq:weightchangehomogeneousintertwiner}. 

Eq.~\eqref{eq:weightchangehomogeneousintertwiner} has a simple consequence in terms of levels. Let $A,B,C$ be irreducible modules of heighest weights $h_A,h_B$ and $h_C$, respectively. Recall that a homogeneous vector $b\in B_n$ in the $n$-th level~$B_n$ of the module~$B$ has weight of the form $\wt b=h_B+n$. For such a vector, and~$a\in A_{n'}$ (with arbitrary $n'\in\mathbb{N}_0$), we get
with $\tau=h_A+h_B-h_C\in I^C_{AB}$ (cf.~\eqref{eq:iabcirreducible})
\begin{align*} 
  \wt(\iy(a)_{\tau,m}b)&=\wt b-m+h_A-\tau=h_A + h_B - (h_A+h_B-h_C) +(n-m) = h_C + (n-m)\,, 
\end{align*}
according to~\eqref{eq:weightchangehomogeneousintertwiner}. We see that~$\iy(a)_{\tau,m}b\in C_{n-m}$ belongs to the level~$n-m$ of the module~$C$. That is, we have for homogeneous vectors $a\in A$
\begin{align}
\iy(a)_{\tau,m}B_n\subset C_{n-m}\ .
\end{align}
In particular, the same inclusion holds for all $a\in A$, as it is independent of the weight of $a$.

Now consider three arbitrary unitary modules $A,B,C$, which  can be decomposed into finite direct sums of irreducible modules,
\begin{align}
  A = \bigoplus_{\alpha} A[\alpha]\,,\quad B = \bigoplus_\beta B[\beta]\,,\quad C=\bigoplus_\gamma C[\gamma] \ ,
\end{align}
with the sums including multiplicities. 
Let $Q_{D[\delta]}$ denote the projector which maps the module $D = \{A,B,C\}$ into the irreducible component $\delta = \{\alpha,\beta,\gamma\}$, e.g., $Q_{A[\alpha]} A = A[\alpha]$. If $\iY$ is  an intertwiner of type $\itype{C}{A}{B}$, then each expression $Q_{C[\gamma]} \iY(Q_{A[\alpha]} \cdot,z)Q_{B[\beta]}$ is an intertwiner between irreducible modules and hence the previous discussion applies to the modes $Q_{C[\gamma]} \iy(Q_{A[\alpha]} \cdot)_m Q_{B[\beta]}$. But since $\iY(\cdot,z) = \sum_{\alpha,\beta,\gamma} Q_{C[\gamma]} \iY(Q_{A[\alpha]} \cdot,z)Q_{B[\beta]}$, we find that 
\begin{align}
\iy(a)_{\tau,m}B_n\subset C_{n-m}\ .\label{eq:containementinequalitylevelsintertw}
\end{align}
holds for general modules as well.

\paragraph{Intertwiners between unitary modules.}

As our discussion focuses on  unitary modules, intertwiners between those deserve special attention. Equipped with positive definite Hermitian form, unitary modules are Hilbert spaces, and intertwiners are thus maps between different Hilbert spaces. A frequently used  operation on operators is the adjoint, and hence the question arises whether the adjoint of an intertwiner is again an intertwiner. The next lemma is the first step in providing a positive answer.

\begin{lemma}\label{lem:abccomplexconjug}
  Let $A,B,C$ be three unitary modules of a VOA, and denote by $\eta_A, \eta_B, \eta_C$ the corresponding anti-linear automorphisms. Let $\iY$ be an intertwiner of type
$\itype{C}{A}{B}$, and define
\begin{align*}
\bar{\iY}(a,z)b&=\eta_C(\iY(\eta_A(a),z)\eta_B(b))
\end{align*}
for all $a\in A$, $b\in B$ and $c\in C$. Then $\bar{\iY}$ is an intertwiner of type $\itype{C}{A}{B}$.
\end{lemma}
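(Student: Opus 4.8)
The plan is to verify that $\bar{\iY}$ satisfies the three defining properties of an intertwiner of type $\itype{C}{A}{B}$: the mode structure \eqref{eq:intertwineroriginaldef} with the finite index set $I=I^C_{AB}$, the translation property \eqref{eq:translationpropertyintertwiner}, and the Jacobi identity \eqref{eq:jacobiidentity}. The basic tool is the compatibility \eqref{eq:involutionanbcompatibility} of the module involutions with the VOA involution $\phi$, together with the fact that $\phi$ and each $\eta_D$ ($D\in\{A,B,C\}$) squares to the identity. I would first record the intertwining relation $\eta_D(\Y_D(v,z)d)=\Y_D(\phi(v),z)\eta_D(d)$, valid for $v\in\cV$ and $d\in D$: it follows by writing $\Y_D(v,z)d=\sum_n(v^D_n d)z^{-n-1}$, applying the anti-linear $\eta_D$ coefficientwise (the exponents are integers, so no conjugation of the formal variable is at issue), and invoking \eqref{eq:involutionanbcompatibility}.

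Next come the elementary properties. Writing $\iY(\eta_A(a),z)\eta_B(b)=\sum_{\tau\in I,\,m}c_{\tau,m}z^{-\tau-m}$ with $c_{\tau,m}\in C$ and applying $\eta_C$ coefficientwise, the modes of $\bar{\iY}$ are $\bar{a}_{\tau,m}=\eta_C\circ(\eta_A(a))_{\tau,m}\circ\eta_B:B\to C$; these are linear in $b$ (two anti-linear maps and one linear map compose to a linear map), and $a\mapsto\bar{\iY}(a,z)b$ is linear as well, since $\eta_A$ and $\eta_C$ are both anti-linear while $\iY$ is linear in its first slot. The truncation condition is inherited directly from that of $\iY$, and the index set $I$ is unchanged. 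Here I would emphasize the point that makes the coefficientwise action of $\eta_C$ harmless: every $\tau\in I$ is \emph{real}, because the conformal weights are eigenvalues of the positive operator $L_0$ on unitary modules and $\tau=h_A+h_B-h_C$ by \eqref{eq:iabcirreducible}. Reality of the exponents lets me commute the anti-linear $\eta_C$ past multiplication by $z^{-\tau-m}$ and past $\frac{d}{dz}$ without extra conjugations, which yields the translation property at once: $\frac{d}{dz}\bar{\iY}(a,z)b=\eta_C(\iY(L_{A,-1}\eta_A(a),z)\eta_B(b))=\bar{\iY}(L_{A,-1}a,z)b$, using \eqref{eq:translationpropertyintertwiner} and the special case $v=\omega$ (with $\phi(\omega)=\omega$) of the intertwining relation above, which gives $\eta_A L_{A,-1}=L_{A,-1}\eta_A$.

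The core of the proof is the Jacobi identity, which I would obtain by symmetry rather than direct computation. Take \eqref{eq:jacobiidentity} for $\iY$, substitute $v\mapsto\phi(v)$ and $a\mapsto\eta_A(a)$, evaluate both sides on $\eta_B(b)$, and apply $\eta_C$ to the entire identity. Every scalar occurring in \eqref{eq:jacobiidentity} is real --- the binomial coefficients from the $\iota$-expansions, the integer powers of $z_1$ and of $z_1-z_2$, and the real powers $z_2^{-\tau-m'}$ coming from the intertwiner --- so $\eta_C$ commutes with the residues and with all these multiplications and acts only on the $C$-valued coefficients. On the left-hand term I rewrite $\eta_A(\Y_A(\phi(v),z_1-z_2)\eta_A(a))=\Y_A(v,z_1-z_2)a$ using the intertwining relation together with $\phi^2=\id$ and $\eta_A^2=\id$, turning the expression into $\bar{\iY}(\Y_A(v,z_1-z_2)a,z_2)b$. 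On the right-hand side I move $\eta_C$ through $\Y_C(\phi(v),z_1)$ to produce $\Y_C(v,z_1)\bar{\iY}(a,z_2)b$, and through $\Y_B(\phi(v),z_1)\eta_B(b)$, using $\eta_B(\Y_B(\phi(v),z_1)\eta_B(b))=\Y_B(v,z_1)b$, to produce $\bar{\iY}(a,z_2)\Y_B(v,z_1)b$. The result is exactly \eqref{eq:jacobiidentity} for $\bar{\iY}$.

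I expect the only delicate point to be the anti-linear bookkeeping when $\eta_C$ is commuted past the formal residues and monomials; this rests entirely on the reality of the index set $I$ for unitary modules, after which everything reduces to formal applications of $\phi^2=\id$ and $\eta_D^2=\id$.
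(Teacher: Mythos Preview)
Your proposal is correct and follows essentially the same approach as the paper: verify the translation property via $\eta_A L_{A,-1}=L_{A,-1}\eta_A$ (from $\phi(\omega)=\omega$), and derive the Jacobi identity for $\bar{\iY}$ by substituting $v\mapsto\phi(v)$, $a\mapsto\eta_A(a)$, $b\mapsto\eta_B(b)$ in the Jacobi identity for $\iY$ and applying $\eta_C$ from the left. Your treatment is in fact more careful than the paper's on one point---you explicitly justify why the anti-linear $\eta_C$ commutes with the formal residues, differentiation, and multiplication by $z^{-\tau-m}$ (namely, because all exponents are real for unitary modules)---whereas the paper simply asserts that ``$z$ is a formal variable'' without addressing the anti-linearity.
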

\begin{proof}
  We first check the translation property of $\bar{\iY}$. For this purpose, note that $\eta_A(L_{-1} a) = L_{-1} \eta_A(a)$ since the involution $\phi$ leaves the Virasoro vector invariant, $\phi(\omega)=\omega$. This implies that
  \begin{align}
    \bar{\iY}(L_{-1}a,z) &= \eta_C \circ \iY(\eta_A(L_{-1}a),z)\circ \eta_B\nonumber\\
&= \eta_C \circ \iY(L_{-1}\eta_A(a),z)\circ\eta_B\nonumber\\
& = \eta_C \circ \frac{d}{dz}\iY(\eta_A(a),z)\circ\eta_B \nonumber\\
    &= \frac{d}{dz} \eta_C \circ \iY(\eta_A(a),z)\circ\eta_B\\
& = \frac{d}{dz}  \bar{\iY}(a,z)
  \end{align}
  since $z$ is a formal variable and  differentiation is defined as an operation on the formal Laurent series. Finally, we have to check the Jacobi identity~\eqref{eq:jacobiidentity}
for $\bar{\iY}$, 
\begin{align}\label{eq:jacobiconjugate}
&\Residuum_{z_1-z_2}\left(\bar{\iY}(\Y_A(v,z_1-z_2)a,z_2) (z_1-z_2)^m \iota_{z_2,z_1-z_2}((z_1-z_2)+z_2)^n)\right)=\\
&\qquad \Residuum_{z_1}\left(\Y_C(v,z_1)\bar{\iY}(a,z_2)\iota_{z_1,z_2}(z_1-z_2)^mz_1^n\right)-
\Residuum_{z_1}\left(\bar{\iY}(a,z_2)\Y_B(v,z_1)\iota_{z_2,z_1}(z_1-z_2)^mz_1^n\right)\ . 
\end{align}
But for $z=z_1-z_2$ we have
\begin{align*}
\bar{\iY}(\Y_A(v,z)a,z_2)
&=\eta_C\circ\iY (\eta_A(\Y_A(v,z)a),z_2)\circ \eta_B\\
&=\eta_C\circ\iY (\Y_A(\phi(v),z)\varphi_A(a),z_2)\circ\eta_B\ .
\end{align*}
Similarly, we find
\begin{align*}
\Y_C(v,z_1)\overline{\iY}(a,z_2)&=
\Y_C(v,z_1)\eta_C\circ\iY(\eta_A(a),z_2)\circ\eta_B\\
&=\eta_C\circ\Y_C(\phi(v),z_1) \iY(\eta_A(a),z_2)\circ\eta_B\,,
\end{align*}
as well as
\begin{align*}
\bar{\iY}(a,z_2)\Y_B(v,z_1)&=
\eta_C\circ \iY(\eta_A(a),z_2)\circ\eta_B\circ \Y_B(v,z_1)\\
&=\eta_C\circ \iY(\eta_A(a),z_2)\circ \Y_B(\phi(v),z_1)\circ\eta_B\,,
\end{align*}
Applying both sides to $b\in B$, we conclude that~\eqref{eq:jacobiconjugate}
is just the Jacobi identity~\eqref{eq:jacobiidentity} evaluated at $\bar{v}=\eta(v)$, $\bar{a}=\eta_A(a)$, $b=\eta_B(b)$, and $\eta_C$ applied from the left.
\end{proof}

The following lemma gives an expression for the adjoint operator of an intertwiner. 
\begin{lemma}\label{lem:adjointoperatorintertw}
  Let $A,B,C$ be as in Lemma~\ref{lem:abccomplexconjug} and let $\iY$ be an intertwiner of type $\itype{C}{A}{B}$. 
Then there is an intertwiner $\iY_1$ of type $\itype{B}{A}{C}$ such that 
\begin{align*}
\langle\iY(a,z)b,c\rangle &=\langle b,\iY_1(e^{zL_1}(-z^{-2})^{L_0}\eta_A(a),z^{-1})c\rangle
\end{align*}
for all $a\in A$, $b\in B$ and $c\in C$. 
\end{lemma}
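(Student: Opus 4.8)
The plan is to deduce Lemma~\ref{lem:adjointoperatorintertw} from the unitarity/invariance structure established earlier, using Lemma~\ref{lem:abccomplexconjug} as the engine that converts an intertwiner of one type into its ``conjugate.'' The key observation is that the invariance condition~\eqref{eq:unitaritymoduledefinition} already tells us how to move a vertex operator across a scalar product by replacing its argument $v$ with $e^{zL_1}(-z^{-2})^{L_0}\phi(v)$ and inverting $z\mapsto z^{-1}$. For intertwiners we expect an analogous ``adjunction,'' but where the roles of the modules $B$ and $C$ are swapped, producing an intertwiner of type $\itype{B}{A}{C}$ rather than $\itype{C}{A}{B}$. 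So the target operator $\iY_1$ should be built out of $\bar\iY$ (the conjugate intertwiner from Lemma~\ref{lem:abccomplexconjug}) composed with the weight-dependent substitution that appears in the module invariance formula.

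First I would make the ansatz $\iY_1(a,z) := \bar\iY(a,z)$ or a closely related object, and verify directly that the claimed identity holds. Concretely, starting from $\langle \iY(a,z)b,c\rangle$, I would use the defining relation $\bar\iY(a',z)b' = \eta_C(\iY(\eta_A(a'),z)\eta_B(b'))$ from Lemma~\ref{lem:abccomplexconjug} to rewrite $\iY(a,z)$ in terms of $\bar\iY$, exploiting that $\eta_A,\eta_B,\eta_C$ are anti-linear involutions satisfying $\Scp{\eta(x)}{\eta(y)}=\overline{\Scp{x}{y}}$. The scalar-product compatibility of the involutions turns the Hermitian inner product into its complex conjugate in a controlled way, and anti-linearity lets me pass the involutions onto the vectors $b$ and $c$. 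The goal of this manipulation is to reduce the statement to the invariance property~\eqref{eq:unitaritymoduledefinition} applied to the intertwiner $\bar\iY$ (which is again a genuine intertwiner by Lemma~\ref{lem:abccomplexconjug}), with the substitution $v\mapsto e^{zL_1}(-z^{-2})^{L_0}\eta_A(a)$ emerging naturally from the adjunction formula.

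The main technical step will be establishing the correct invariance identity for intertwiners between unitary modules analogous to~\eqref{eq:unitaritymoduledefinition}; that is, that
\begin{align*}
\langle \iY(a,z)b,c\rangle = \langle b, \iY_1(e^{zL_1}(-z^{-2})^{L_0}\eta_A(a),z^{-1})c\rangle
\end{align*}
for a suitable intertwiner $\iY_1$. For vertex operators acting within a single module this is the \emph{assumption} of unitarity; for intertwiners one must derive the corresponding statement, and this is where the uniqueness/rigidity of intertwiners of a given type enters. I would argue that both sides, viewed as functions of $a$, $b$, $c$, define sesquilinear pairings that transform identically under the action of the mode operators $\y_A(v)_j$, $L_{B,n}$, $L_{C,n}$, using the Jacobi identity~\eqref{eq:jacobiidentity} and the adjoint relations $\Scp{L_n x}{y}=\Scp{x}{L_{-n}y}$ valid on unitary modules. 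Matching the two sides on the top levels and propagating via the intertwining relations then pins down $\iY_1$ as an intertwiner of the asserted type $\itype{B}{A}{C}$.

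The hard part will be bookkeeping the $z$-dependence correctly, in particular verifying that the conjugation $z\mapsto z^{-1}$ together with the weight-twist $e^{zL_1}(-z^{-2})^{L_0}$ is exactly self-consistent — i.e., that applying the adjunction twice returns the original intertwiner (up to the expected identifications), which is the natural sanity check that the formula has the right form. A subtle point is that intertwiners carry the extra power $z^\tau$ with $\tau = h_A+h_B-h_C$ (cf.~\eqref{eq:iabcirreducible}) rather than integer powers, so under $z\mapsto z^{-1}$ these fractional exponents must recombine correctly across the swap $B\leftrightarrow C$; tracking that the exponent for $\iY_1$ becomes $h_A+h_C-h_B$ provides the consistency check. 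I expect no genuine obstruction beyond this careful exponent and substitution accounting, since the algebraic content is entirely furnished by Lemma~\ref{lem:abccomplexconjug} and the invariance axiom~\eqref{eq:unitaritymoduledefinition} for unitary modules.
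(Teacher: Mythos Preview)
Your overall strategy---use the involutions $\eta_A,\eta_B,\eta_C$ together with Lemma~\ref{lem:abccomplexconjug} to convert between Hermitian and bilinear pairings, and arrange for the weight-twist $e^{zL_1}(-z^{-2})^{L_0}$ with $z\mapsto z^{-1}$ to appear---matches the paper's argument. The paper indeed defines $\iY_1$ as the $\eta$-conjugate (in the sense of Lemma~\ref{lem:abccomplexconjug}) of an auxiliary intertwiner $\tilde\iY$ of type $\itype{B}{A}{C}$.

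The gap is in your ``main technical step.'' You propose to reduce to the invariance property~\eqref{eq:unitaritymoduledefinition}, but that identity is the unitarity \emph{axiom} for module operators $\Y_A(v,z)$ acting within a single module; it is not available for a general intertwiner $\iY$ of type $\itype{C}{A}{B}$ mapping between distinct modules. The adjunction formula for intertwiners with respect to an invariant \emph{bilinear} form,
\[
\pairL c,\iY(a,z)b\pairR_C=\pairL\tilde\iY(e^{zL_1}(-z^{-2})^{L_0}a,z^{-1})c,b\pairR_B,
\]
is a theorem, not an assumption; the paper imports it as \cite[Corollary~5.5.3]{Frenkel:1993fv}. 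Once this is in hand, the passage from the bilinear pairing $\pairL\cdot,\cdot\pairR$ to the Hermitian form $\langle\cdot,\cdot\rangle$ via $\eta_B,\eta_C$ is a short computation, and Lemma~\ref{lem:abccomplexconjug} certifies that $\iY_1(a,z)c:=\eta_B\tilde\iY(\eta_A(a),z)\eta_C(c)$ is again an intertwiner. Your suggestion to recover the adjunction by ``matching on top levels and propagating via the Jacobi identity'' is in principle how one proves the FHL corollary, but it is substantial work and not something that follows from~\eqref{eq:unitaritymoduledefinition} alone; without citing or reproving that result, the argument is incomplete.
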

\begin{proof}
The proof uses similar reasoning as the proof of~\eqref{eq:invariancepairingclaim}.  Let $\pairL\cdot,\cdot \pairR_B$, 
$\pairL\cdot,\cdot \pairR_C$,  be the invariant bilinear forms defined by $\eta_B$ and $\eta_C$, respectively (cf.~\eqref{eq:bilinearaformdefinition}). Because of the invariance property, we can apply~\cite[Corollary 5.5.3]{Frenkel:1993fv}, which states that  there is an intertwiner~$\tilde{\iY}$ of type $\itype{B}{A}{C}$ such that 
\begin{align*}
\pairL c,\iY(a,z)b\pairR_C&=\pairL\tilde{\iY}(e^{zL_1}(-z^{-2})^{L_0}a,z^{-1})c,b\pairR_B\ .
\end{align*}
Hence we find that
\begin{align*}
\langle \iY(a,z)b,c\rangle &=\overline{\langle c,\iY(a,z)b\rangle}\\
&=\overline{\pairL(\eta_C(c),\iY(a,z)b\pairR_C}\\
&=\overline{\pairL\tilde{\iY}(e^{zL_1}(-z^{-2})^{L_0}a,z^{-1})\eta_C(c),b\pairR_B}\\
&=\overline{\langle \eta_B\circ \tilde{\iY}(e^{zL_1}(-z^{-2})^{L_0}a,z^{-1})\eta_C(c),b\rangle}\\
&=\langle b,\eta_B\circ\tilde{\iY}(e^{zL_1}(-z^{-2})^{L_0}a,z^{-1})\eta_C(c)\rangle
\end{align*}
By Lemma~\ref{lem:abccomplexconjug},
\begin{align*}
\iY_1(a,z)c&=\eta_B \circ \tilde{\iY}(\eta_A(a),z)\eta_C(c)
 \end{align*} 
defines an intertwiner of type $\itype{C}{A}{B}$.
The claim then follows since
\begin{align*}
\eta_B\circ\tilde{\iY}(e^{zL_1}(-z^{-2})^{L_0}a,z^{-1})\eta_C(c)&=\iY_1(
\eta_A(e^{zL_1}(-z^{-2})^{L_0}a),z^{-1})c\ 
\end{align*}
and $\eta_A(L_na)=L_n\eta_A(a)$.
\end{proof}

\begin{example}[\bf WZW intertwiners]\label{exmp:wzwintertwiner}
As all  irreducible modules of the VOA~$\voalkzero$ are derived from irreducible highest weight modules of the Lie algebra $\g $, it is sufficient to consider intertwiners  between three of these modules. Let us fix three weights $\lambda_1, \lambda_2, \lambda_3$, and consider the corresponding irreducible highest weight $\g $-modules~$\hwgmodule{\lambda_1}, \hwgmodule{\lambda_2}, \hwgmodule{\lambda_3}$, as well as the derived irreducible modules~$\modulekflambda{\lambda_1}, \modulekflambda{\lambda_2}, \modulekflambda{\lambda_3}$ of the VOA~$\voalkzero$. 

We now consider intertwiners between these three modules. Given that all objects of the VOA are derived from the corresponding objects of the Lie algebra~$\g $ (with additional constraints coming from the level $k$), it is not surprising that intertwiners can similarly be defined in terms of corresponding objects associated with~$\g$. As shown by Frenkel and Zhu~\cite{Frenkel:1992jt}, any intertwiner between three irreducible modules of the VOA~$\voalkzero$ is determined by an intertwiner between the three corresponding irreducible $\g $-modules, again with an additional assumption involving the level $k$ (see~\cite[Corollary 3.2.1]{Frenkel:1992jt} as well as~\cite{tsuchiya1987vertex} for a more analytic approach in the case of $\g = \mathfrak{sl}(2,\C )$).  
An application of Zhu's theory~\cite{Zhu:1996cp,Frenkel:1992jt} provides a procedure for reconstructing  an intertwiner of the VOA~$\voalkzero$ from such a Lie algebra intertwiner. We provide an explanation of that argument in terms of an implementable algorithm in Appendix~\ref{app:algorithmwzw}.

Conversely, given an intertwiner 
$\iY$  of type $\itype{\modulekflambda{\lambda_1}}{\modulekflambda{\lambda_3}}{\modulekflambda{\lambda_2}}$
for modules of the VOA~$\voalkzero$, we can obtain the associated $\g $-module intertwiner    as follows:  if we choose to evaluate Eq.~\eqref{eq:commutationmoduleintertwiner} for the zero modes of the module operators and the intertwiner operator, we find that
\begin{align}\label{eq:wzwintertwinerprop}
  \fa \iy(\vphi_3)_{\tau,0} \vphi_2 = \iy(\fa \vphi_3)_{\tau,0} \vphi_2 + \iy(\vphi_3)_{\tau,0} \fa \vphi_2 \,, 
\end{align}
for $\vphi_2 \in \modulekflambda{\lambda_2}(0)$, $\vphi_3 \in \modulekflambda{\lambda_3}(0)$, where $\tau=h_{\lambda_3}+h_{\lambda_2}-h_{\lambda_1}$. Here, $\fa \in \g $ is identified with its image in the corresponding irreducible representation of $\g $. This implies that the operator $\iy(\cdot)_{\tau,0}$ restricted to the top levels of the modules $\modulekflambda{\lambda_i}$, $i=1,2,3$ is an intertwiner from the tensor product  $\g $-module defined by the weights $\lambda_2$, $\lambda_3$ to the irreducible $\g $-module corresponding to the weight $\lambda_1$. This follows since the top levels are irreducible~$\g$-modules. We refer to Proposition~\ref{prop:wzwintertwiner} below for more details. 

\end{example}

\section{Correlation functions via transfer operators\label{sec:corfunc}}
Having introduced the necessary terminology, 
we continue to argue that
correlation functions can be expressed exactly in term of an MPS.
 This is the central point of this section; it is also the basis for
the finite-dimensional approximations discussed in Section~\ref{sec:approx}. 

\subsection{Correlation functions for modules and intertwiners}\label{sec:correlationfunctions}

Up to this point,  variables denoted by $z$ or similar were interpreted as \emph{formal variables}, and expressions involving powers in it as \emph{formal Laurent series}. All identities held term-by-term in different powers of $z$ (or $z^{-1}$), but no statement concerning convergence were made. In order to make contact with physical quantities, more precisely correlation functions, the formal Laurent series expansions have to be evaluated to yield finite numbers. 
That is,  the Laurent series are reinterpreted as sums of operators with complex coefficients (after substituting complex numbers for the   indeterminates), and these sums need to be shown to converge.

Matrix elements of products of vertex operators, or more generally module operators and intertwiners, thus become functions on the Riemann sphere. These can be shown to satisfy physical axioms of CFT correlation functions (e.g., modular invariance on the torus). Conversely, VOAs may be constructed from correlation functions, see~\cite{Huang:2010fv}. In the following, we briefly sketch the general theory, then focus on the special case of equidistant points.

Recall the pairing $\pairL\cdot,\cdot\pairR$ between the restricted dual space~$\cV'$ and $\cV$. For a VOA~$\cV$, the genus-zero $n$-point (vacuum-to-vacuum) correlation function is defined as
\begin{align}
F^{(0)}_{\cV} ((v_1,z_1),\ldots,(v_n,z_n))&=\pairL \1',\Y(v_1,z_1)\cdots\Y(v_n,z_n)\1\pairR\ .
\end{align}
(More generally, one considers general matrix elements by substituting arbitrary elements for $\1'\in\cV'$ and $\1\in\cV$.) It can be shown that this expression is the series expansion of a rational function~$f(z_1,\ldots,z_n)$, converging on a suitable subdomain of~$\mathbb{C}$ (see \cite[Proposition~3.5.1]{Frenkel:1993fv}). Again,  the formal variables $z_1,\ldots,z_n$ are to be interpreted as complex numbers. For vacuum-to-vacuum correlation functions, the function $f(z_1,\ldots,z_n)$ takes the form
\begin{align*}
f(z_1,\ldots,z_n)&=\frac{g(z_1,\ldots,z_n)}{\prod_{j<k}(z_j-z_k)^{s_{jk}}}\ .
\end{align*}
for a polynomial $g(z_1,\ldots,z_n)$.  Similarly, we can define the genus-one $n$-point functions by
\begin{align}
F^{(1)}_{\torusq,\cV}((v_1,z_1),\ldots,(v_n,z_n))&= \tr_{\cV}\left(\Y(z_1^{L_0}v_1,z_1)\cdots
\Y(z_n^{L_0}v_n,z_n)
\torusq^{L_0-c/24}\right)
\end{align}
Here $0<\torusq<1$ is the `diameter' of the torus, and the trace is calculated on each finite-dimensional level and hence is well-defined. The introduction of the factors~$z_j^{L_0}$ in the arguments is a convention.

In his seminal work~\cite{Zhu:1996cp}, Zhu identified sufficient conditions for the existence of genus-$1$-correlation functions: assuming rationality and $C_2$-cofiniteness of the VOA, the function $F^{(1)}_{\torusq,\cV}$ converges  on the domain   (see~\cite[Section 4.1]{Zhu:1996cp})
\begin{align}
1>|z_1|>\cdots > |z_n|>\torusq\ ,\label{eq:torusconvergencedomain}
\end{align} 
and can be continued to an analytical function, possibly with poles at $z_i=1$, $z_i=z_j$, $z_i=0$. This gives $n$-point correlation functions on the torus~$\mathbb{C}\backslash\{0\}/z\sim \{z\torusq^k\}$ regarded as the punctured plane $\mathbb{C}\backslash\{0\}$ modulo the relations
\begin{align}
z=z\torusq^k\qquad\textrm{ for }k\in\mathbb{Z}\  . \label{eq:torusrelationq}
\end{align}

The latter reasoning was extended to the case of intertwiners by Huang~\cite{Huang:2005gs,Huang:2005gsb}. For the genus-$0$-case, consider
$\cV$-modules $A^{(i)}$, $B^{(i)}$, $i=1,\ldots,n$, $B^{(0)}$, and intertwiners $\iY_i$ of type $\itype{B^{(i-1)}}{A^{(i)}}{B^{(i)}}$. For any $(v^{(0)})'\in (B^{(0)})'$, $v^{(n)}\in B^{(n)}$, the  genus-$0$ correlation function
$F_{(v^{(0)})',v^{(n)}}^{(0)}=F^{(0)}_{\iY_1,\ldots,\iY_n,(v^{(0)})',v^{(n)}}$ is defined by
\begin{align} 
F_{(v^{(0)})',v^{(n)}}^{(0)}((a_1,z_1),\ldots,(a_n,z_n))&=\pairL (v^{(0)})',\iY_1(a_1,z_1)\cdots \iY_n(a_n,z_n)v^{(n)}\pairR\label{eq:definitiongenuszeronpoint}
\end{align}
for all $a_i\in A^{(i)}$, $i=1,\ldots,n$,
where $\pairL\cdot,\cdot\pairR$ is the canonical pairing between the restricted dual space $(B^{(0)})'$ and $B^{(0)}$. For unitary modules, this pairing can be written in terms of the scalar product, as discussed (cf.~\eqref{eq:isodualcomplconj}). We then write 
\begin{align} 
F_{v^{(0)},v^{(n)}}^{(0)}((a_1,z_1),\ldots,(a_n,z_n))&=\langle v^{(0)},\iY_1(a_1,z_1)\cdots \iY_n(a_n,z_n)v^{(n)}\rangle\label{eq:definitiongenuszeronpointspr}
\end{align}
for the corresponding correlation function, where now $v^{(0)}\in B^{(0)}$.  Huang~\cite{Huang:2005gs,Huang:2005gsb} has shown that the corresponding series is absolutely convergent in the domain
\begin{align}
|z_1|>|z_2|>\cdots >|z_n|\ .\label{eq:genuszerodomain}
\end{align}
Despite this general result, if we speak about genus-0 correlation functions in the following, we  are always considering the special case where $B^{(0)} = B^{(n)} = \cV$, the adjoint module, and $v^{(0)} = v^{(n)} = \1$, the vacuum element. 
 
Similarly, for $\cV$-modules $A^{(i)}$ and $B^{(i)}$, $i=1,\ldots,n$ with $B^{(0)}=B^{(n)}$ and intertwiners $\iY_i$ of type $\itype{B^{(i-1)}}{A^{(i)}}{B^{(i)}}$, we can define
the genus-$1$ correlation function $F_\torusq^{(1)}:=F^{(1)}_{\torusq,\iY_1,\ldots,\iY_n}$ by
\begin{align}
F^{(1)}_\torusq((a_1,z_1),\ldots,(a_n,z_n))&=\tr_{B^{(n)}}\left(\iY_1(z_1^{L_0}a_1,z_1)\cdots \iY_n(z_n^{L_0}a_n,z_n)\torusq^{L_0-c/24}\right)\label{eq:genusoneintertwinercorrelationfct}
\end{align}
for all $a_i\in A_i$.  Huang~\cite{Huang:2005gs,Huang:2005gsb} showed that~\eqref{eq:genusoneintertwinercorrelationfct} (respectively a certain geometrically transformed version thereof, see~\cite[Remark~3.5]{Huang:2005gsb}) gives an absolutely convergent power series on the domain~\eqref{eq:torusconvergencedomain}. His results  apply to rational, $C_2$-co-finite VOAs of CFT-type.  The genus-$1$-partition function without insertions,
\begin{align}\label{eq:partitionfunction}
Z_{B}(\torusq)=\tr_{B}(\torusq^{L_0-c/24})
\end{align}
is the {\em character} of the module~$B$. It converges for $0<\torusq<1$ (see e.g.,~\cite{GaberdielNeitzke}) for the VOAs and modules considered here.  Expression~\eqref{eq:partitionfunction} appears below as a multiplicative factor in our accuracy estimate for approximations to correlation functions on the torus.

We deliberately choose the same letter~$z$ for the formal variable and the corresponding complex number, in order to emphasize the similarity and to make the connection between the formal language of VOAs and physical quantities clear. However, we stress that the interpretation in terms of complex numbers only makes sense in terms of correlation functions, as it always involves a statement about convergence. In what follows, a variable $z$ appearing in an expression involving \emph{only} intertwining operators, which are not evaluated to give a number of physical significance, is interpreted as formal indeterminate. If it appears in a quantity of physical significance, such as a correlation function, it denotes a complex number which can be freely chosen within its domain of definition.

\subsection{Invariance properties of correlation functions of primary fields}

It follows from the commutators~\eqref{eq:hzerocommutator}-~\eqref{eq:honecommutator} and the corresponding exponentiated versions that the correlation functions inherit the invariance properties with respect to global conformal transformations. If the vector~$v$ is primary, i.e., if Virasoro operators with positive index map it to zero, $L_n v =0$, $n>0$, the relations~\eqref{eq:hzerocommutator}--\eqref{eq:honecommutator} generalize to higher order Virasoro elements,
\begin{align}
  [L_n,\Y(v,z)] = z^{n+1}\partial_z \Y(v,z) + (\wt v)\, (n+1)\,z^n \Y(v,z) \,.
\end{align}
These relations generalize to module operators and also intertwiners~\cite{Frenkel:1993fv}, if we again define primary elements in a module $A$ to be vectors $a \in A$ such that $L_n a=0$, $n>0$,
\begin{align}
  [L_n,\iY(a,z)] = z^{n+1}\partial_z \iY(a,z) + (\wt a) \,(n+1)\,z^n \iY(a,z) \,.
\end{align}

A basic assumption of CFT is that this infinitesimal symmetry lifts to a local action which gives rise to a change of variables. More precisely, the genus-0 and genus-1 correlation functions of intertwiners evaluated at homogeneous primary vectors $a_1, \ldots, a_n$ should be, up to an overall factor depending on the transformation as well as the conformal weight, invariant under conformal mappings. That is, for a conformal map $z\mapsto w(z)$ we have
\begin{align}
  F((a_1,w(z_1)),\ldots,(a_n,w(z_n))) = \prod_{i}^n \left(\left.\frac{dw}{dz}\right|_{z=z_i}\right)^{-\wt a_i} F((a_1,z_1),\ldots,(a_n,z_n)) \,,
\end{align} 
where \(F = F^{(0)}\) or \(F = F^{(1)}\). Here, we use the terms conformal and holomorphic interchangeably --- meaning a mapping specified by a holomorphic function.

We proceed to consider correlation functions evaluated on a set of $n$ complex coordinates $\zeta_1,\ldots, \zeta_n$ on the complex plane, with constant imaginary part, $\Im(\zeta_j) = \Im(\zeta_{j'}) = \theta$, for all $j,j' = 1,\ldots,n$ and equally spaced real part, $\Re(\zeta_j) = j \mathsf{d} + \mathsf{d}_0$. Here, $\mathsf{d}_0 \geq 0$, $\mathsf{d}>0$ are positive real numbers, the first one possibly being  zero. We call such a set of points \emph{equally spaced on a line}, $\mathsf{d}_0$ the offset, and $\mathsf{d}$ the minimal distance between the insertion points, or sometimes the \emph{ultraviolet cutoff}. The configuration of points is illustrated in Figure~\ref{fig:plane}. Applying the conformal invariance property to the conformal transformation
\begin{align}
  z&\mapsto w(z):=e^{-z}=e^{-x}e^{-i y}\,,\quad z = x + i y\,,\;x,y \in \Rl\,,\label{eq:cylindermap}
\end{align}
leads to the following reparametrization of correlation functions with insertion points given by the coordinates $\zeta_1,\ldots, \zeta_n$, since \(\zetap_j := w(\zeta_j) = z q^j\), for $0< q = e^{-\mathsf{d}} < 1$, $z = e^{-\mathsf{d}_0} e^{i \theta }$.

\begin{figure}[t]
  \centering
  \includegraphics[scale=1.1]{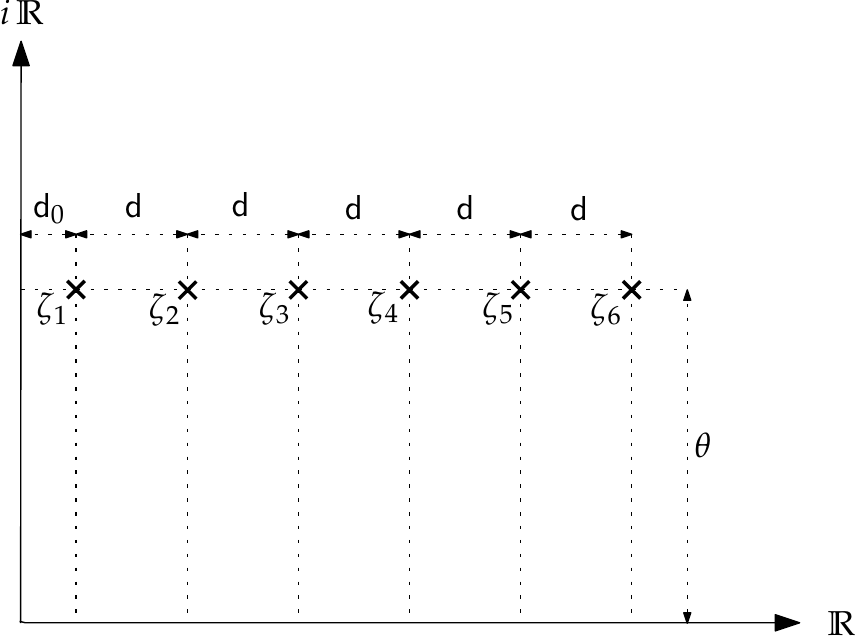}
  \caption{Configuration of insertion points on the plane before applying the conformal mapping. All points have a fixed imaginary part and equispaced real parts.\label{fig:plane}}
\end{figure}

\begin{observation}\label{obs:conformalmapcorrfuncplane}
  Let $A^{(1)}, \dots, A^{(n)}, B^{(0)},\dots,B^{(n)}$ be unitary modules of a VOA~$\cV$, with \(B^{(0)}=B^{(n)}=\cV\), the adjoint module. Assume that $S^{(j)} \subset A^{(j)}$ is a finite-dimensional linear subspace of primary vectors, $a_j \in S^{(j)}$ are homogeneous vectors therein and $\iY_1, \dots, \iY_n$ is a set of intertwiners, where $\iY_j$ is of type $\itype{B^{(j-1)}}{A^{(j)}}{B^{(j)}}$. Consider $n$ equally spaced points on a line $\zeta_1,\ldots,\zeta_n$ with offset $\mathsf{d}_0$ and minimal distance $\mathsf{d}$. Then the corresponding vacuum-to-vacuum correlation function can be written as (cf. Eq.~\eqref{eq:definitiongenuszeronpointspr})
  \begin{align}
    F^{(0)}((a_1,\zeta_1),\ldots,(a_n,\zeta_n))& = (-z)^{\sum_j \wt a_j} q^{\sum_j j \wt a_j}\,\Scp{\1}{\iY_1(a_1,\zetap_1) \cdots \iY_n(a_n,\zetap_n)\1 }\,,
  \end{align}
  for all $a_j \in S_j$ with the identifications \(\zetap_j = zq^j\) and $0< q = e^{-\mathsf{d}} < 1$, $z = e^{-\mathsf{d}_0} e^{i \theta }$.
\end{observation}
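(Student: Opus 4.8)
The plan is to obtain the claimed identity as a direct application of the conformal invariance property for correlation functions of primary fields to the specific exponential map $w(z)=e^{-z}$ of Eq.~\eqref{eq:cylindermap}. First I would recall the transformation law stated just above the observation, which for primary homogeneous vectors $a_1,\ldots,a_n$ expresses $F$ evaluated at the image points $w(z_j)$ in terms of $F$ at the points $z_j$ times the product of Jacobian factors $(dw/dz|_{z_j})^{-\wt a_j}$. Solving for the latter factor, this is equivalent to writing $F^{(0)}((a_1,\zeta_1),\ldots,(a_n,\zeta_n)) = \prod_j (dw/dz|_{\zeta_j})^{\wt a_j}\, F^{(0)}((a_1,w(\zeta_1)),\ldots,(a_n,w(\zeta_n)))$, which is exactly the shape we want, with the primed insertion points appearing on the right.

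Next I would compute the Jacobian factors for this particular map. Since $w(z)=e^{-z}$ we have $dw/dz = -e^{-z} = -w(z)$, so that $dw/dz|_{\zeta_j} = -w(\zeta_j) = -\zetap_j = -zq^j$ under the identifications $\zetap_j = w(\zeta_j) = zq^j$, $0<q=e^{-\mathsf{d}}<1$. Collecting the prefactor then gives $\prod_j (-zq^j)^{\wt a_j}$. Because $q$ is a positive real number, the factor $q^j$ is positive real, so (with a fixed branch of the non-integer exponent) the power splits consistently as $(-zq^j)^{\wt a_j} = (-z)^{\wt a_j} q^{j\wt a_j}$; taking the product over $j$ yields $(-z)^{\sum_j \wt a_j} q^{\sum_j j\wt a_j}$, precisely the prefactor in the statement. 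Finally I would identify the remaining factor $F^{(0)}((a_1,\zetap_1),\ldots,(a_n,\zetap_n))$ with the scalar-product expression via the definition~\eqref{eq:definitiongenuszeronpointspr} in the special case $B^{(0)}=B^{(n)}=\cV$, $v^{(0)}=v^{(n)}=\1$, under which it equals $\langle \1, \iY_1(a_1,\zetap_1)\cdots\iY_n(a_n,\zetap_n)\1\rangle$. Combining the three pieces gives the asserted formula.

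The step I expect to require the most care is justifying that the whole manipulation is legitimate at these particular insertion points, and this has two components. First, the series defining the correlation function on the right must converge absolutely: by Huang's result this holds on the radially ordered domain $|\zetap_1|>\cdots>|\zetap_n|$ of Eq.~\eqref{eq:genuszerodomain}, and since $|\zetap_j| = |z|\,q^j$ with $0<q<1$ one has $q>q^2>\cdots>q^n$ automatically, so the equispaced-on-a-line configuration is mapped by $w$ precisely into the convergence region. This is exactly why the map $e^{-z}$ is the right choice. Second, because the conformal weights $\wt a_j$ are in general non-integer, each factor $(dw/dz)^{\wt a_j}$ and the split into $(-z)^{\wt a_j}q^{j\wt a_j}$ involves a choice of branch of the complex power; I would fix a single branch throughout and check that the conformal-invariance law is applied with that same branch convention, so that no spurious phase is introduced. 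Once these two bookkeeping points are settled, the remainder is a direct substitution.
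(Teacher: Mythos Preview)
Your proposal is correct and follows exactly the approach the paper itself takes: the paper states the observation as a direct consequence of applying the conformal invariance law for primary fields to the exponential map $w(z)=e^{-z}$ of Eq.~\eqref{eq:cylindermap}, and you have simply filled in the details of that application (the Jacobian $dw/dz=-w(z)$, the resulting prefactor, and the identification with the scalar-product form via Eq.~\eqref{eq:definitiongenuszeronpointspr}). Your additional remarks on convergence in the radially ordered domain and on the choice of branch for non-integer powers are appropriate points of care that the paper leaves implicit.
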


This shows that we need to consider correlation functions defined at powers of a positive variable, which can be chosen to be smaller than one. This is a key technical step in our discussion: it allows us to construct operators out of intertwiners, which are bounded in norm. This in turn is a necessary step in our approximation argument, and the error bound also depends on this norm.

Given this preview we clearly want a similar statement for torus correlation functions. Recall that we parametrized the torus as the punctured plane modulo the relations $z = z\torusq^k$, $k \in \Z$, $\torusq$ being the diameter of the torus.  The conformal mapping defined by the principal branch of the complex logarithm $z \mapsto \log z$ then maps the torus to the twice periodic strip (or patch)
\begin{align}
  \mathfrak{T}_\torusq = \C \,/\,(\log(1/\torusq)\Z + 2\pi i \Z) \,,
\end{align}
which is periodic both in the real as well as in the imaginary part. Again starting from a set of equally spaced insertion points $\zeta_1,\ldots,\zeta_n$, $\zeta_j = \mathsf{d}_0 + \log(1/\torusq) - \mathsf{d} j + i \theta$ on a line, now assumed to lie \emph{within} $\mathfrak{T}_\torusq$, we see that they are the image of points $zq^j$ under the complex logarithm, $z = e^{\mathsf{d}_0} e^{i \theta}$, $q=e^{-\mathsf{d}}$. This configuration of points is illustrated in Figure~\ref{fig:torusvm}. The requirement that all points $\zeta_j$ do not lie on the boundary of $\mathfrak{T}_\torusq$ translates into a non-zero offset $0<\mathsf{d}_0 < \mathsf{d}$, hence $q<1/|z|<1$, as well as $\torusq \leq q^n$. Applying again the conformal invariance of correlation functions leads to the following genus-1 version of Observation~\ref{obs:conformalmapcorrfuncplane}.

\begin{figure}[t]
  \centering
  \subcaptionbox{Configuration of insertion points on the torus, here pictured as the annulus with periodic boundary conditions.\label{fig:torusann}}[.45\linewidth]{\includegraphics[scale=0.5]{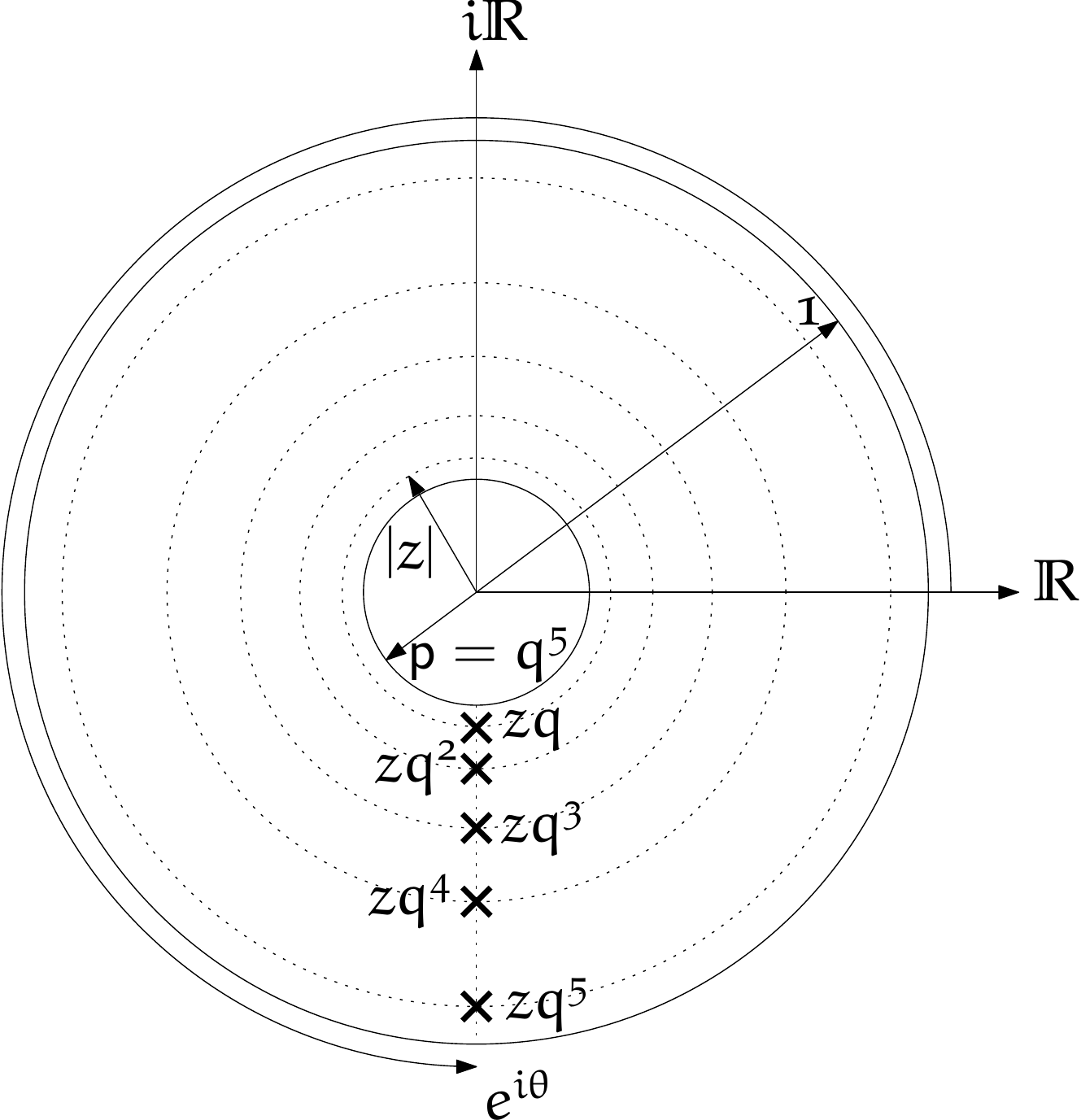}} \qquad%
  \subcaptionbox{Configuration of insertion points on the periodic strip.\label{fig:torusstrip}}[.45\linewidth]{\includegraphics[scale=0.5]{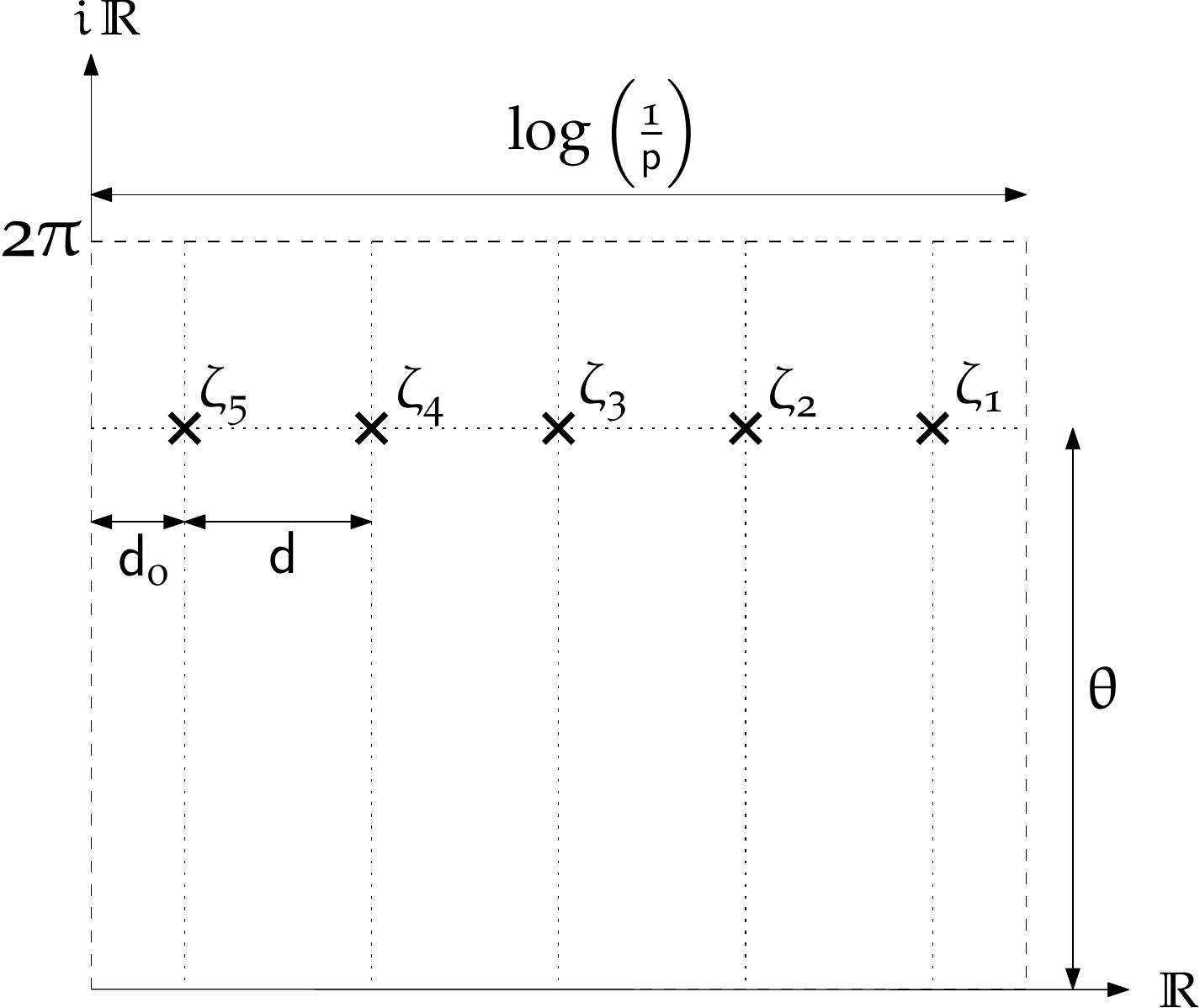}}%
    \caption{(a) Configuration of insertions on $\mathbb{C}$, $\torusq = q^5<zq^5<zq^{4}<\cdots<zq<1$, with the choice $n=5$ for illustration. (b) The image of (a) under the principal branch of the complex logarithm. Note that because the strip is periodic with  period $\log(1/\torusq)$, we can first take the logarithm and then translate the whole image by $-\log(1/\torusq) = -n\mathsf{d}$, the result of which is illustrated here.\label{fig:torusvm}}
\end{figure}

\begin{observation}\label{obs:conformalmapcorrfunctorus}
  Let $A^{(1)}, \dots, A^{(n)}, B^{(0)},\dots,B^{(n)}$ be unitary modules of a vertex operator algebra $\cV$, with \(B^{(0)} = B^{(n)}\). Assume that $S^{(j)} \subset A^{(j)}$ is a finite-dimensional linear subspace of primary vectors, $a_j \in S^{(j)}$ are homogeneous elements and $\iY_1, \dots, \iY_n$ is set of intertwiners, where $\iY_j$ is of type $\itype{B^{(j-1)}}{A^{(j)}}{B^{(j)}}$. Choose $n$ equispaced points $\zeta_1,\ldots,\zeta_n$ within the periodic strip of height $2\pi$ and length $\log(1/\torusq)$ as illustrated in Figure~\ref{fig:torusstrip}. The corresponding correlation function with insertions as these points is given by the torus correlation function
  \begin{align}
    F^{(1)}_{\torusq}((a_1,zq),\dots,(a_n,zq^n)) = z^{\sum_j \wt a_j} q^{\sum_j j \wt a_j}\, \tr_{B^{(n)}}\left(\iY_1(a_1,\zetap_1)\cdots \iY_n(a_n,\zetap_n)\torusq^{L_0-c/24}\right)
  \end{align}
  for all $a_j \in S_j$ under the identifications \(\zetap_j = zq^j\) and $z=e^{\mathsf{d}_0 + i \theta}$, $|z|>1$, $q=e^{-\mathsf{d}} < 1$.
\end{observation}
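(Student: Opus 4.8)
The plan is to mirror the argument for Observation~\ref{obs:conformalmapcorrfuncplane}, but now using the conformal map $z\mapsto\log z$ that carries the punctured-plane (torus) coordinates to the periodic strip $\mathfrak{T}_\torusq$. The essential point is that the defining trace formula~\eqref{eq:genusoneintertwinercorrelationfct} for $F^{(1)}_\torusq$ already carries the convention factors $z_j^{L_0}$, and extracting these factors by means of the homogeneity of the insertions produces exactly the claimed prefactor. Concretely, I would start from~\eqref{eq:genusoneintertwinercorrelationfct} evaluated at the arguments $z_j=zq^j$, confirm that these points lie in the domain of absolute convergence so that the substitution of complex numbers for the formal variables is legitimate, and then pull out the scalar factors.

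First I would check convergence. With $z=e^{\mathsf{d}_0+i\theta}$ and $q=e^{-\mathsf{d}}$, the hypotheses $0<\mathsf{d}_0<\mathsf{d}$ and $\torusq\le q^n$ give $|zq|=e^{\mathsf{d}_0-\mathsf{d}}<1$, strictly decreasing moduli $|zq|>|zq^2|>\cdots>|zq^n|$ since $q<1$, and $|zq^n|=e^{\mathsf{d}_0}q^n>q^n\ge\torusq$. Hence the points $\zetap_j=zq^j$ satisfy $1>|\zetap_1|>\cdots>|\zetap_n|>\torusq$, placing them in Huang's domain~\eqref{eq:torusconvergencedomain}; by his theorem the trace converges absolutely and defines an analytic function there, so the formal expression may be read as a genuine correlation function.

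The core computation is then a direct extraction of the convention factors. Substituting $z_j=zq^j$ into~\eqref{eq:genusoneintertwinercorrelationfct} and using that each $a_j$ is a homogeneous vector of weight $\wt a_j$, so that $L_0a_j=(\wt a_j)\,a_j$ and therefore $(zq^j)^{L_0}a_j=(zq^j)^{\wt a_j}a_j$, the scalars $(zq^j)^{\wt a_j}$ commute out of each intertwiner (by linearity of $\iY_j$ in its first argument) and out of the trace. Collecting them as $\prod_{j=1}^n(zq^j)^{\wt a_j}=z^{\sum_j\wt a_j}\,q^{\sum_j j\,\wt a_j}$ yields precisely the asserted prefactor and leaves the trace $\tr_{B^{(n)}}\!\left(\iY_1(a_1,\zetap_1)\cdots\iY_n(a_n,\zetap_n)\torusq^{L_0-c/24}\right)$, which is the right-hand side.

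The main obstacle is conceptual rather than computational: one must be satisfied that this algebraic rewriting is exactly the content of conformal invariance under $z\mapsto\log z$, i.e.\ that the trace formula with the $z_j^{L_0}$ factors computes the correlation function in the strip coordinates while the bare trace computes it in the torus coordinates, with the Jacobian of $\log$ accounting for the factor $z^{\sum_j\wt a_j}q^{\sum_j j\wt a_j}$. Primarity of the $a_j$ is what guarantees covariance with the simple weight factor $(dw/dz)^{-\wt a_j}$ and no anomalous corrections, and the convergence and analytic-continuation results of Zhu and Huang are what make the whole expression well defined; granted these inputs, the identity itself is immediate.
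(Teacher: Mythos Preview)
Your proposal is correct and follows essentially the same approach as the paper: the paper treats this as an observation following from the conformal transformation $z\mapsto\log z$ combined with the convention factors $z_j^{L_0}$ built into the definition~\eqref{eq:genusoneintertwinercorrelationfct}, and you have simply spelled out the algebraic extraction and the domain check~\eqref{eq:torusconvergencedomain} that the paper leaves implicit.
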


We see that in the genus-0 case, the parameter~$z$ can be chosen to be of modulus one, while in the genus-1 case, its modulus always has to be greater than one. In the following, the parameter $z$ appearing in the correlation functions is to be interpreted as characterizing the offset as well as the imaginary part of points on the line, as described in the previous observations. However, in view of the fact that correlation functions of VOAs are determined by setting a formal variable to some complex number, it is natural to consider the variable $z$ in the expression $\iY(a,\zeta_j) = \iY(a,zq^j)$ first as a formal variable, which is then set to a complex number if the intertwiner is evaluated within a correlation function. 

In order to motivate our further considerations, let us examine more closely the case of two-point correlation functions, that is, expressions involving two intertwiners. We note that due to the covariance property of intertwiners with respect to dilations we have for primary vectors \(a_1\), \(a_2\),
\begin{align}
  q^{\wt a_1} q^{2 \wt a_2} \iY(a_1,\zetap_1) \iY(a_2,\zetap_2) &=q^{\wt a_1} q^{2 \wt a_2} \iY(a_1,zq) \iY(a_2,zq^2) \\
  &= q^{L_0} \iY(a_1,z) q^{-L_0} q^{2L_0} \iY(a_2,z) q^{-2L_0} \notag\\
  &=  q^{L_0/2} \,q^{L_0/2} \iY(a_1,z) q^{L_0/2} q^{L_0/2} \iY(a_2,z) q^{L_0/2} q^{-5/2 \,L_0}\,.
\end{align}
However, the last power of the grading operator $L_0$ vanishes if applied to the vacuum, or can be absorbed with the power appearing in the expression of the torus correlation functions. We hence see that the formal operator $q^{L_0/2} \iY(a,z) q^{L_0/2}$ can be used to build up the correlation functions of interest. Moreover, we are mainly concerned with the situation where~$a$ lies in a linear subspace consisting of primary vectors. These considerations motivate the following definition. 

\begin{definition}[Scaled intertwiner]\label{def:scaledintertwiner}
  Let $\cV$ be a VOA, $A, B,C$ unitary modules of $\cV$, $S\subset A$ a linear subspace, $\iY$ a intertwiner operator of type $\itype{C}{A}{B}$. For $0<q<1$, we define the 
$q$-\emph{scaled intertwiner}~$\W_q$ of type $\itype{C}{A}{B}$, 
\begin{align}
\begin{matrix}
  \W_q(\cdot,z):&S & \rightarrow &  \Endint{B,C}{z,z^{-1}}\\
&a & \mapsto & \W_q(a,z)\, 
\end{matrix}
\end{align}
by
\begin{align}
\W_q(a,z) b :=
q^{L_0/2} \iY(q^{L_0/2}\,a,z)q^{L_0/2} b\label{eq:scaledintertwinerdef}
\end{align}
for $b \in B$.
\end{definition}

In the following, we often omit the exact dependence of $\W$ on $q$ or the subspace $S \subset A$ with the understanding that, whenever we write~$\W$, such a subspace~$S$ exists. In our applications, this subspace is a proper one. However, for most intermediate results this is not necessary, and~$S$ can be equal to the whole of $A$. The expression~$\W$ is defined for this $S$ and any $0<q<1$ as above.
\noindent
Scaled intertwiners with different scaling parameters \(q_1, q_2\) are related to each other as follows: for all $a\in S$ and $b\in B$, we have
\begin{align}
  \W_q(a,z)&=q_1^{L_0/2}\W_{q_2}(q_1^{L_0/2}a,z) q_1^{L_0/2}\qquad\textrm{ where }\qquad q=q_1q_2\ \label{eq:scaledintertwinerfactorization}
\end{align}
since
\begin{align*}
\W_{q_1q_2}(a,z)&=q_1^{L_0/2}q_2^{L_0/2}\iY(q_2^{L_0/2}q_1^{L_0/2}a,z)q_2^{L_0/2}q_1^{L_0/2}\\
&=q_1^{L_0/2}\W_{q_2}(q_1^{L_0/2}a,z)q_1^{L_0/2}\ 
\end{align*}

We see that scaling the original intertwiner makes its properties much nicer --- this will also be the key ingredient for our approximation result. As discussed, the composition of scaled intertwiners can be used to express correlation functions. The next definition and the following results make this intuitive statement clear.

\subsection{Reconstruction of correlation functions\label{sec:reconstructcorrfunct}}

\noindent With the definition of a scaled intertwiner at hand, we can show that genus-$0$ and genus-~$1$ correlation functions are reproduced exactly. For this purpose, we will introduce a certain operator~$\Top$ which we call the (formal) {\em transfer operator}, adopting the language used for MPS. Again, we emphasize that variables $z_j$ appearing in expressions involving intertwiners are to be interpreted as formal variables, which become complex valued only in the context of correlation functions.

\begin{definition}[Formal transfer operators]\label{def:transferoperators}
  Let $A^{(i)}$, $i=1,\ldots,n$ and $B^{(i)}$, $i=0,\ldots,n$ be unitary modules of a VOA~$\cV$. For $i=1,\ldots,n$, let $\iY_i$ a unitary intertwiner of type $\itype{B^{(i-1)}}{ A^{(i)}}{B^{(i)}}$, $S^{(i)}\subset A^{(i)}$ a subspace, and let 
\begin{align*}
  \W_i=\W_{i,q}(\cdot,z):S^{(i)} \rightarrow \Endint{B^{(i)},B^{(i-1)}}{z,z^{-1}}
\end{align*} 
be the associated scaled intertwiner for some $0<q<1$. For $a_i\in S^{(i)}$, $i=1,\ldots,n$, we define the formal transfer operator $\Top\in \Endint{B^{(n)},B^{(0)}}{z_1,z_1^{-1},\ldots,z_n,z_n^{-1}}$ by 
\begin{align}
  \Top&=\W_1(a_1,z_1) \circ \W_{2}(a_{2}, z_2 ) \circ \cdots \circ \W_n(a_n,z_n)  .\label{eq:operatorz1def}
\end{align}
We call this the {\em transfer operator with insertions $\{a_i\}_{i=1}^n$}. 
\end{definition}
For the genus-$1$-case, we will consider periodic boundary conditions, where $B^{(0)}=B^{(n)}=B$, such that the transfer operator is simply $\Top \in \Endint{B}{z_1,z_1^{-1},\ldots,z_n,z_n^{-1}}$. We will also occasionally specialize to translation-invariant systems, which are particularly natural for MPS: here $A^{(i)}=A$, \(S^{(i)}=S\), $B^{(i)}=B^{(0)}=B$, $\iY_i=\iY$ for $i=1,\ldots,n$ are all identical. We will argue below that the transfer operator encodes $n$-point correlation functions (both on the plane and the torus) for equidistant insertion points~$\{\zeta_j\}_{j=1}^n$ as expressed in Observations~\ref{obs:conformalmapcorrfuncplane} and~\ref{obs:conformalmapcorrfunctorus}. The parameter~$z$, which is chosen to be the same for each intertwiner, will determine the `offset' of the sequence of insertions.

\begin{lemma}\label{lem:trsfopexplicit}
Let $\Top$ be a formal transfer operator with insertions $\{a_i\}_{i=1}^n$ and parameter $q$ as in Definition~\ref{def:transferoperators}.  
Then we have
\begin{align}
\Top&=q^{-L_0/2}\left(\iY_1(\hat{a}_1,z_1 q) \iY_2(\hat{a}_2,z_2 q^2)\cdots \iY_n(\hat{a}_n,z_n q^n)\right)q^{(n+1/2)L_0}\ .\label{eq:zonexactexpanded}
\end{align}
as an identity for formal series in $\Endint{B^{(n)},B^{(0)}}{z_1,z_1^{-1},\ldots,z_n,z_n^{-1}}$, where
\begin{align}\label{eq:ajzjdef}
\hat{a}_j&=q^{(j+1/2)L_0}a_j \,.
\end{align}
\end{lemma}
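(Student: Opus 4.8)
The plan is to unfold the composition \eqref{eq:operatorz1def} into a single string of unscaled intertwiners separated by powers of $L_0$, and then sweep all grading operators to one side using the dilation covariance of intertwiners. First I would substitute the definition \eqref{eq:scaledintertwinerdef}, $\W_i(a_i,z_i)=q^{L_0/2}\iY_i(q^{L_0/2}a_i,z_i)q^{L_0/2}$, into $\Top=\W_1(a_1,z_1)\circ\cdots\circ\W_n(a_n,z_n)$. On each internal bond $B^{(i)}$ with $1\le i\le n-1$, the trailing factor $q^{L_0/2}$ of $\W_i$ and the leading factor of $\W_{i+1}$ both act on $B^{(i)}$ and merge into $q^{L_0}$, leaving
\begin{align}
\Top = q^{L_0/2}\,\iY_1(q^{L_0/2}a_1,z_1)\,q^{L_0}\,\iY_2(q^{L_0/2}a_2,z_2)\,q^{L_0}\cdots q^{L_0}\,\iY_n(q^{L_0/2}a_n,z_n)\,q^{L_0/2}
\end{align}
as a formal identity in $\Endint{B^{(n)},B^{(0)}}{z_1,z_1^{-1},\ldots,z_n,z_n^{-1}}$, where each occurrence of $L_0$ is understood as the grading operator on the module on which it acts.

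Second I would record the dilation covariance of an intertwiner, $q^{L_0}\iY(a,z)q^{-L_0}=\iY(q^{L_0}a,qz)$ for homogeneous $a$ (hence, by linearity, for all $a$). This is the integrated form of the intertwiner analog of \eqref{eq:hzerocommutator} and mirrors the VOA relation \eqref{eq:dilationintegrated}. It can be verified mode by mode: using the expansion \eqref{eq:modeexpansionintertwiner} together with the weight-change rule \eqref{eq:weightchangehomogeneousintertwiner}, conjugation by $q^{L_0}$ multiplies the mode $\iy(a)_{\tau,m}$ by the scalar $q^{\alpha-m-\tau}$ for $a\in A_{n,\alpha}$, which is exactly the effect of simultaneously replacing $z$ by $qz$ and $a$ by $q^{L_0}a$.

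Third I would insert telescoping conjugations. Writing each intertwiner as $q^{jL_0}\iY_j(\cdot,z_j)q^{-jL_0}$ and noting that adjacent factors $q^{-jL_0}q^{(j+1)L_0}$ on the bond $B^{(j)}$ collapse to $q^{L_0}$, one checks the bookkeeping identity
\begin{align}
\Top = q^{-L_0/2}\Bigl(\prod_{j=1}^n q^{jL_0}\,\iY_j(q^{L_0/2}a_j,z_j)\,q^{-jL_0}\Bigr)q^{(n+1/2)L_0},
\end{align}
where the prefactor $q^{-L_0/2}$ and the suffix $q^{(n+1/2)L_0}$ are fixed by matching the outer $q^{L_0/2}$ factors. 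Applying the dilation covariance to each conjugated factor yields $q^{jL_0}\iY_j(q^{L_0/2}a_j,z_j)q^{-jL_0}=\iY_j(q^{(j+1/2)L_0}a_j,q^{j}z_j)=\iY_j(\hat a_j,z_jq^{j})$, with $\hat a_j=q^{(j+1/2)L_0}a_j$ as in \eqref{eq:ajzjdef}; this is precisely \eqref{eq:zonexactexpanded}.

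The main obstacle I expect is organizational rather than conceptual: keeping careful track of the fact that each $L_0$ acts on a different module $B^{(i)}$, so that the merging of the $q^{L_0/2}$ factors across bonds and the telescoping of the conjugations are legitimate, and ensuring throughout that every manipulation is read as an identity of formal Laurent series in the $z_i$ (no convergence statement is used). The only substantive input is the dilation covariance of intertwiners established in the second step; once it is in place, the remainder is the telescoping algebra above.
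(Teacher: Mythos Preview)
Your proof is correct and follows essentially the same approach as the paper: both expand the product of scaled intertwiners, use the dilation covariance $q^{L_0}\iY(a,z)q^{-L_0}=\iY(q^{L_0}a,qz)$ as the only substantive input, and reorganize the accumulated powers of $q^{L_0}$. The paper carries this out via an explicit induction on the recursion $\Top_k=\W_k\circ\Top_{k+1}$, whereas you do the equivalent telescoping computation in one shot; the content is identical.
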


\begin{proof}
  The formal transfer operator $\Top$ is defined recursively by $\Top=\Top_1$, where 
\begin{align*}
  \Top_n(b)&=\W_n(a_n)b\\
  \Top_{k}(b)&=\W_{k}(a_{k}) \Top_{k+1}(b)\qquad\textrm{ for }k=n-1,\ldots,1\
\end{align*}
for $b\in B$. To show~\eqref{eq:zonexactexpanded} we first argue that for every $k=1,\ldots,n-1$, we have the relation
\begin{align}
q^{(k-1/2)L_0}\Top_k(b)&=\iY_k(\hat{a}_k,z_k q^k)q^{((k+1)-1/2)L_0}\Top_{k+1}(b)\ .\label{eq:recursionrelationforproof}
\end{align}
Indeed, this follows from
\begin{align*}
q^{(k-1/2)L_0}\Top_k(b)&=
q^{(k-1/2)L_0} \W_k(a_k)\circ \Top_{k+1}(b)\\
&=q^{(k-1/2)L_0}q^{L_0/2}\iY_k(q^{L_0/2}a_k,z)q^{L_0/2}\Top_{k+1}(b)\\
&=q^{kL_0}\iY_k(q^{L_0/2}a_k,z)q^{-kL_0}q^{(k+1/2)L_0}\Top_{k+1}(b)\\
&=\iY_k(q^{(k+1/2)L_0}a_k,q^kz)q^{(k+1/2)L_0}\Top_{k+1}(b)\ ,
\end{align*}
where we used the dilation property~\eqref{eq:dilationintegrated} applied to intertwiners in the last step. For later use, we also point out that the same reasoning gives
\begin{align}
q^{(n-1/2)L_0}\Top_n(q^{-nL_0}b)&=\iY_n(\hat{a}_n,z_n q^n) q^{L_0/2}b\ .\label{eq:endpointinduction}
\end{align}
Applying~\eqref{eq:recursionrelationforproof} inductively gives
\begin{align*}
q^{L_0/2}\Top_1(b)&=
\iY_1(\hat{a}_1,z_1q)\iY_2(\hat{a}_2,z_2q^2)\cdots \iY_{n-1}(\hat{a}_{n-1},z_{n-1}q^{n-1})q^{(n-1/2)L_0}\Top_n(b)
\end{align*}
and combining this with~\eqref{eq:endpointinduction}, we conclude that
\begin{align*}
q^{L_0/2}\Top_1(q^{-nL_0}b)&=\iY_1(\hat{a}_1,z_1 q)\iY_2(\hat{a}_2,z_2 q^2)\cdots \iY_n(\hat{a}_n,z_n q^n)q^{L_0/2}b\qquad\textrm{ for all }b\in B\  .
\end{align*}
This implies
\begin{align*}
\Top_1&=q^{-L_0/2} q^{L_0/2}\Top_1q^{-nL_0}q^{nL_0}\nonumber\\
&=q^{-L_0/2}\left(\iY_1(\hat{a}_1,z_1 q)\iY_2(\hat{a}_2,z_2 q^2)\cdots \iY_n(\hat{a}_n,z_n q^n)\right)q^{(n+1/2)L_0}\
\end{align*}
which is the claim~\eqref{eq:zonexactexpanded}. 
\end{proof}

We stress again that the previous identity is defined in terms of equality of power series. Taking matrix elements and replacing the indeterminates by complex numbers then gives rise to correlation functions, if the series converges for the given choice of complex variables. In the following, we will choose all variables equal to a single complex number $z$, as motivated by Observations~\ref{obs:conformalmapcorrfuncplane} and~\ref{obs:conformalmapcorrfunctorus}. In the genus-1 case, the modulus of this number \(z\) has to be bigger than one. With this choice, an immediate consequence of Lemma~\ref{lem:trsfopexplicit} is the fact that transfer operators encode correlation functions in the following sense.

\begin{corollary}[Exact reproduction of genus-$0$ correlation functions]\label{cor:corrfctexactzero}
  Let $\Top:B^{(n)}\rightarrow B^{(0)}$ be a transfer operator  as in Definition~\ref{def:transferoperators} with homogeneous insertions~$\{a_i\}_{i=1}^n$, parameter~$0<q<1$ and the indeterminates~$\{z_j\}_{j=1}^n$ replaced by a complex number $z\in \C \setminus \{0\}$. Let $v^{(0)}\in B^{(0)}$ and $v^{(n)}\in B^{(n)}$ be homogeneous elements. Then
\begin{align}
  \langle v^{(0)}, \Top v^{(n)}\rangle_{B^{(0)}}=q^{(n+1/2)(\wt v^{(n)})+(\wt v^{(0)})/2+\sum_{j=1}^n j\,\wt a_j}\cdot F^{(0)}_{v^{(0)},v^{(n)}}((\tilde{a}_1,\zetap_1),\ldots,(\tilde{a}_n,\zetap_n))\ ,
\end{align}
where $F^{(0)}_{v^{(0)},v^{(n)}}$ is the genus-$0$ correlation function (cf.~\eqref{eq:definitiongenuszeronpointspr}), and
\begin{align}
\tilde{a}_j=q^{L_0/2}a_j\qquad\textrm{ and }\qquad \zetap_j=z q^j\qquad\textrm{ for }j=1,\ldots,n\, .\label{eq:ajzjdef}
\end{align}
Moreover, the sequence of points $(\zetap_1,\ldots,\zetap_n)$ belongs to the domain~\eqref{eq:genuszerodomain}. 
\end{corollary}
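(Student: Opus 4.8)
The plan is to obtain the corollary as an essentially direct consequence of Lemma~\ref{lem:trsfopexplicit}, by pairing the explicit operator identity found there against the homogeneous vectors $v^{(0)}$ and $v^{(n)}$ and then substituting the single complex number $z$ for all the formal indeterminates $z_1,\dots,z_n$.

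First I would apply Lemma~\ref{lem:trsfopexplicit} with all $z_j$ identified with $z$, so that $\Top = q^{-L_0/2}\bigl(\iY_1(\hat a_1,zq)\cdots\iY_n(\hat a_n,zq^n)\bigr)q^{(n+1/2)L_0}$ with $\hat a_j = q^{(j+1/2)L_0}a_j$, and then compute $\langle v^{(0)},\Top v^{(n)}\rangle_{B^{(0)}}$. The two outer factors are disposed of using homogeneity together with the self-adjointness of $L_0$. The latter is exactly the consequence of unitarity recorded after Eq.~\eqref{eq:unitaritymoduledefinition}: applying the invariance property to the Virasoro vector $\omega$ gives $L_n^{*}=L_{-n}$, hence $L_0^{*}=L_0$, so for $q>0$ the operator $q^{-L_0/2}$ is self-adjoint and may be moved into the first slot. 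Since $v^{(0)}$ and $v^{(n)}$ are $L_0$-eigenvectors, $q^{-L_0/2}$ and $q^{(n+1/2)L_0}$ act on them as \emph{real} scalar powers of $q$ determined by $\wt v^{(0)}$ and $\wt v^{(n)}$; because these scalars are real, the anti-linearity of the first argument of $\langle\cdot,\cdot\rangle$ introduces no conjugation, and they simply factor out.

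Next I would rewrite the rescaled insertions $\hat a_j$ in terms of the insertions $\tilde a_j=q^{L_0/2}a_j$ that appear in the correlation function. Homogeneity of each $a_j$ gives $\hat a_j=q^{(j+1/2)\wt a_j}a_j$ and $\tilde a_j=q^{\wt a_j/2}a_j$, whence $\hat a_j=q^{\,j\wt a_j}\tilde a_j$; using linearity of $\iY_j$ in its first argument, this pulls out the scalar $q^{\sum_j j\wt a_j}$. Collecting all the real scalars produced in these two steps yields the prefactor claimed in the statement, and the surviving factor $\langle v^{(0)},\iY_1(\tilde a_1,zq)\cdots\iY_n(\tilde a_n,zq^n)v^{(n)}\rangle$ is, by the definition~\eqref{eq:definitiongenuszeronpointspr}, precisely $F^{(0)}_{v^{(0)},v^{(n)}}\bigl((\tilde a_1,\zetap_1),\dots,(\tilde a_n,\zetap_n)\bigr)$ with $\zetap_j=zq^j$.

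Finally, the only genuinely analytic point --- and the step that is not purely formal --- is the replacement of the indeterminates by the complex number $z$. This is licensed precisely by the ordering of the insertion points: since $0<q<1$ we have $q>q^2>\cdots>q^n$, so $|\zetap_1|>|\zetap_2|>\cdots>|\zetap_n|$, which is exactly the domain~\eqref{eq:genuszerodomain} on which Huang's theorem guarantees absolute convergence. This simultaneously proves the last assertion of the corollary and justifies reading the formal identity of Lemma~\ref{lem:trsfopexplicit} as a numerical one at the chosen $z$. I do not expect a serious obstacle: the corollary is bookkeeping on top of Lemma~\ref{lem:trsfopexplicit}, and the only points requiring care are tracking the real $L_0$-scalars correctly and confirming, via the convergence in~\eqref{eq:genuszerodomain}, that the termwise formal manipulations remain valid after substitution.
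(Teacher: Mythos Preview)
Your proposal is correct and follows essentially the same route as the paper's proof: both invoke Lemma~\ref{lem:trsfopexplicit}, move the outer $q^{\pm L_0/2}$ factors onto $v^{(0)}$ and $v^{(n)}$ via self-adjointness of $L_0$, extract the scalar powers from homogeneity of $v^{(0)},v^{(n)},a_j$, and then identify the remaining matrix element with $F^{(0)}_{v^{(0)},v^{(n)}}$ while checking the domain~\eqref{eq:genuszerodomain}. Your treatment is slightly more explicit about why the anti-linearity of the first slot causes no conjugation issues (the scalars are real), but otherwise the arguments coincide.
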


\begin{proof}
  According to~\eqref{eq:zonexactexpanded}, the expression~$\langle v^{(0)},\Top v^{(n)}\rangle$ has the form
\begin{align*}
  &\langle v^{(0)},\Top v^{(n)}\rangle_{B^{(0)}}= \\
  &\quad=\langle v^{(0)},q^{-L_0/2}\left(\iY_1(q^{(1+1/2)L_0}a_1,zq)\cdots \iY_n(q^{(n+1/2)L_0}a_n,zq^n)\right)q^{(n+1/2)L_0}v^{(n)}\rangle_{B^{(0)}}\\
  &\quad=\langle q^{L_0/2}v^{(0)},\left(\iY_1(q^{(1+1/2)L_0}a_1,zq)\cdots \iY_n(q^{(n+1/2)L_0}a_n,zq^n)\right)q^{(n+1/2)L_0}v^{(n)}\rangle_{B^{(0)}}\\
  &\quad=q^{(n+1/2)(\wt v^{(n)})+(\wt v^{(0)})/2+\sum_{j=1}^n j\,\wt a_j}\cdot \langle v^{(0)},\iY_1(q^{L_0/2}a_1,zq)\cdots \iY_n(q^{L_0/2}a_n,zq^n)v^{(n)}\rangle\ .
\end{align*}
where we used the definition of $\hat{a}_j$  and the insertion points~$z_j$, the fact that~$L_0$ is self-adjoint with respect to $\langle\cdot,\cdot\rangle_{B^{(0)}}$, as well as the assumption that all the vectors $\{a_i\}_{i=1}^n$ and $v^{(0)},v^{(n)}$ are homogeneous. Since we have $0<q<1$, it follows that $|zq|>|zq^2|>\ldots>|zq^n|>0$ and hence any matrix element of the transfer operator is a well-defined absolutely convergent power series (see Sect.~\ref{sec:correlationfunctions}). 
\end{proof}

Similarly, we obtain genus-$1$-correlation functions by taking the trace of the formal transfer operator~$\Top$ and again interpreting the formal variables as complex numbers. 

\begin{corollary}[Exact reproduction of genus-$1$ correlation functions]\label{cor:corrfctexactrepr}
  Assume periodic boundary conditions, i.e.,~$B^{(0)}=B^{(n)}=B$, and let $\Top:B\rightarrow B$ be a transfer operator with insertions $\{a_i\}_{i=1}^n$ as in Definition~\ref{def:transferoperators} and indeterminates replaced by a complex number $z$, with $|z|>1$ and  $0<q<1/|z|^2$. Then for any $0<r\leq 1$, the value $\tr_B \Top r^{L_0}$ is related to the genus-one correlation function
~$F^{(1)}_{\torusq}=F^{(1)}_{\torusq,\iY_1,\ldots,\iY_n}$ (Eq.~\eqref{eq:genusoneintertwinercorrelationfct}) by
\begin{align}
\tr_B \Top r^{L_0} &=\torusq^{c/24} F^{(1)}_{\torusq}\left((\tilde{a}_1,\zetap_1),\ldots,(\tilde{a}_n,\zetap_n)\right) \label{eq:partitionfunctiontorusregularized}
\end{align}
where
\begin{align}
\tilde{a}_j=q^{L_0/2}a_j\qquad\textrm{ and }\qquad \zetap_j=z q^j\qquad\textrm{ for }j=1,\ldots,n\ ,
\torusq&=rq^{n}\ .\label{eq:ajzjdef}
\end{align}
In particular, under the identification~\eqref{eq:torusrelationq}, the points $\{z_j\}_{j=1}^n$ are equidistant along one fundamental cycle of a torus of diameter~$\torusq=q^n$. The sequence of points $(\zetap_1,\ldots,\zetap_n,\torusq)$ belongs to the domain~\eqref{eq:torusconvergencedomain}, as can easily be verified.
\end{corollary}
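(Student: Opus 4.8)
The plan is to follow the same strategy used in Corollary~\ref{cor:corrfctexactzero}, but now taking a trace with an additional regularizing factor $r^{L_0}$ rather than a vacuum-to-vacuum matrix element. First I would invoke Lemma~\ref{lem:trsfopexplicit} to write the transfer operator in the expanded form~\eqref{eq:zonexactexpanded}, namely $\Top = q^{-L_0/2}\bigl(\iY_1(\hat{a}_1,z q)\cdots \iY_n(\hat{a}_n,z q^n)\bigr)q^{(n+1/2)L_0}$ with $\hat{a}_j = q^{(j+1/2)L_0}a_j$, after setting every formal indeterminate $z_j$ equal to the single complex number $z$. Inserting this into $\tr_B \Top r^{L_0}$, the leftmost factor $q^{-L_0/2}$ can be moved to the right under the trace by cyclicity, where it combines with $q^{(n+1/2)L_0}$ and $r^{L_0}$.

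The key computation is the bookkeeping of the $L_0$-powers. Using cyclicity of the trace, $\tr_B\bigl(q^{-L_0/2}M\, q^{(n+1/2)L_0}r^{L_0}\bigr) = \tr_B\bigl(M\, q^{(n+1/2)L_0}r^{L_0}q^{-L_0/2}\bigr) = \tr_B\bigl(M\, (rq^n)^{L_0}\bigr)$, where $M = \iY_1(\hat{a}_1,zq)\cdots \iY_n(\hat{a}_n,zq^n)$. Recognizing $\torusq = rq^n$ as prescribed in~\eqref{eq:ajzjdef}, the exponent becomes exactly $\torusq^{L_0}$. Next I would extract the scalar prefactors hidden inside the arguments $\hat{a}_j = q^{(j+1/2)L_0}a_j$: since each $a_j$ is homogeneous of weight $\wt a_j$, we have $q^{(j+1/2)L_0}a_j = q^{(j+1/2)\wt a_j}\,a_j$, so that $\iY_j(\hat{a}_j,zq^j) = q^{(j+1/2)\wt a_j}\,\iY_j(a_j,zq^j)$. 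Collecting these yields the factor $\prod_j q^{(j+1/2)\wt a_j} = q^{\sum_j j\wt a_j}\cdot q^{\tfrac12\sum_j \wt a_j}$. To match the target expression, which carries $\tilde{a}_j = q^{L_0/2}a_j$ inside the intertwiners, I would rewrite $\iY_j(a_j,\zetap_j) = q^{-\wt a_j/2}\iY_j(\tilde{a}_j,\zetap_j)$, absorbing the $q^{\tfrac12\sum_j\wt a_j}$ half-integer part; what remains is precisely the advertised prefactor together with a factor $\torusq^{-c/24}$ coming from rewriting $\torusq^{L_0} = \torusq^{c/24}\torusq^{L_0-c/24}$, which reconstructs the standard genus-one correlation function $F^{(1)}_{\torusq}$ as defined in~\eqref{eq:genusoneintertwinercorrelationfct}.

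The remaining step is a convergence and domain check. I would verify that under the identifications $\zetap_j = zq^j$, $\torusq = rq^n$, with $|z|>1$, $0<q<1/|z|^2$, and $0<r\leq 1$, the tuple $(\zetap_1,\ldots,\zetap_n,\torusq)$ lies in Zhu's convergence domain~\eqref{eq:torusconvergencedomain}, i.e., $1>|\zetap_1|>\cdots>|\zetap_n|>\torusq$. The chain of strict inequalities $|\zetap_j| = |z|q^j > |z|q^{j+1} = |\zetap_{j+1}|$ is immediate from $0<q<1$, the top inequality $1>|\zetap_1|=|z|q$ follows from $q<1/|z|^2<1/|z|$, and the bottom inequality $|\zetap_n| = |z|q^n > rq^n = \torusq$ follows from $|z|>1\geq r$. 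With the tuple in the convergence domain, Huang's results (quoted in Section~\ref{sec:correlationfunctions}) guarantee that the trace defines an absolutely convergent series, so substituting complex numbers for the formal indeterminates is legitimate and the identity~\eqref{eq:partitionfunctiontorusregularized} holds.

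The main obstacle I anticipate is purely one of careful exponent accounting: the half-integer shifts of $L_0$ in the definition of the scaled intertwiner interact with the regularizer $r^{L_0}$ and the central-charge term $\torusq^{-c/24}$, so one must track the $+1/2$ contributions and the difference between $\hat{a}_j$ and $\tilde{a}_j$ consistently to land on exactly the stated prefactor $z^{\sum_j\wt a_j}q^{\sum_j j\wt a_j}$ with the correct $\torusq^{c/24}$ on the right-hand side. There is no conceptual difficulty beyond this, since cyclicity of the trace on each finite-dimensional $L_0$-eigenspace is unproblematic and the convergence is handed to us by the cited theorems.
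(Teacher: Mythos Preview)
Your approach is exactly the paper's: invoke Lemma~\ref{lem:trsfopexplicit}, take the trace of $\Top r^{L_0}$ using the expanded form~\eqref{eq:zonexactexpanded}, collapse the $L_0$-powers via cyclicity to produce $\torusq^{L_0}=(rq^n)^{L_0}$, and check the domain condition~\eqref{eq:torusconvergencedomain}. The paper's proof is two sentences and leaves all the bookkeeping implicit; you have simply written that bookkeeping out.

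One small correction to your target: there is \emph{no} explicit prefactor $z^{\sum_j\wt a_j}q^{\sum_j j\wt a_j}$ in the identity~\eqref{eq:partitionfunctiontorusregularized}. The definition~\eqref{eq:genusoneintertwinercorrelationfct} of $F^{(1)}_\torusq$ already carries the factors $z_j^{L_0}$ inside each intertwiner argument, so when you substitute $(\tilde{a}_j,\zetap_j)$ you get $\iY_j(\zetap_j^{L_0}\tilde{a}_j,\zetap_j)$ directly, and the $q^{jL_0}$ pieces of $\zetap_j^{L_0}$ combine with $\tilde{a}_j=q^{L_0/2}a_j$ to reproduce $\hat{a}_j=q^{(j+1/2)L_0}a_j$ without pulling out scalar eigenvalues. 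Thus no homogeneity assumption on the $a_j$ is needed, and the only scalar that appears is $\torusq^{c/24}$ from splitting $\torusq^{L_0}=\torusq^{c/24}\torusq^{L_0-c/24}$. Your instinct that the exponent accounting is the only delicate point is correct; just aim for the operator-level identity rather than extracting weights.
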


\noindent 
The parameter~$r$ is introduced here  may appear somewhat arbitrary at this point: indeed, setting $r=1$, we recover the configuration~\eqref{eq:ajzjdef}  of equidistant insertions on the torus. However, it will have the effect of  ``regularizing'' the expression when we consider truncated intertwiners in the next section.

\begin{proof}
Clearly, the points~$(z_1,\ldots,z_n,\torusq)$ lie in the domain~\eqref{eq:torusconvergencedomain}, hence the rhs.~of  \eqref{eq:partitionfunctiontorusregularized} is well-defined.
The claim follows by taking the trace of the product $\Top r^{L_0}$ using expression~\eqref{eq:zonexactexpanded}.
\end{proof}


\subsection{Proof strategy\label{sec:proofstrategy}}

In Section~\ref{sec:reconstructcorrfunct}, we showed that correlation functions of interest can be expressed in terms of a transfer operator $\Top$. This object is itself a composition of scaled intertwiners~$\W$. The scaling will be essential for us, as it ensures that $\W$ is a bounded operator with respect to the operator norm induced by the scalar product, for finite-dimensional subspaces $S \subset A$. That is, for \(a \in S \subset A\), we will have
\begin{align}
  \norm{\W_q(a,z)} \leq \bnd(q,z)\, \norm{a}_A\,,
\end{align}
for some function \(\bnd(q,z)\), bounded for the parameter regime in \((q,z)\) we are interested in. Here, $\norm{a}_A$ denotes the norm in the unitary module $A$ inherited from the scalar product, and analogously,
\begin{align}
  \norm{\W_q(a,z)} = \sup\{\,|\Scp{c}{\W_q(a,z)b}| \,:\,c \in C\,,\,\norm{c}_C\leq 1\,,\;b \in B\,,\,\norm{b}_B \leq 1\,\}
\end{align}
denotes the operator norm of the linear and densely defined mapping \(\W_q(a,z)\). Of course, this statement only makes sense if the formal variable~\(z\) is replaced by a complex number. This will be the case throughout the next section if norm expression are present. We will choose the same complex number for all scaled intertwiners, although other choices are in principle possible. The complex parameter $z$ is always assumed to be non-zero, and to be bigger than one if appearing in genus-1 expressions. Applying the fact that scaled intertwiners have bounded norms recursively leads to a norm bound for $\Top$.

Section~\ref{sec:approx} starts with the observation that as a first step towards our approximation statement, we have to ensure that the image of a finite-dimensional subspace under the action of a scaled intertwiner is again contained in a fixed finite-dimensional space. However, this is not the case, since an intertwiner consists of terms allowing for arbitrary changes of the weight. The natural idea here is to truncate the Hilbert space with respect to the weight decomposition. This motivates the definition of a \emph{scaled truncated} intertwiner $\W_q^{[N]}$, where $N \in \Nl$ denotes the truncation parameter. This object will have the feature of changing the weight of a vector by at most $N$. The obvious question is how it compares to its original version, and we show that it fulfills
\begin{align}
  \norm{\W_q(a,z) - \W_q^{[N]}(a,z)} \leq \norm{a}_A \,\apperr(q,z) q^{N/4}\,,
\end{align}
where $\apperr(q,z)$ is a function independent of the truncation parameter $N$, and the statement is again with respect to the operator norm. This shows that for large enough $N$, we can safely replace $\W_q$ by its truncated version $\W_q^{[N]}$. Applying this argument recursively leads to an equivalent statement for the transfer operator $\Top$. 

In a last step, we have to ensure that we only have to apply the truncated transfer operator --- obtained by replacing the scaled intertwiners in its definition by their truncated versions --- to fixed finite-dimensional subspaces, so that the image is again a fixed finite-dimensional space which then can be chosen as the bond Hilbert space. In the case of genus-0 vacuum-to-vacuum correlation functions, this is immediate, since the transfer operators is applied to a fixed vector, the vacuum. In the genus-1 case, however, we also need to truncate the trace. After this is achieved, all expressions can be converted to an MPS picture. As in previous sections, we illustrate our findings with WZW models.

\section{Bounded intertwiners\label{sec:boundedint}}

As explained in Section~\ref{sec:proofstrategy}, 
we will argue that a scaled intertwiner defines a bounded operator if the formal variable is replaced by a non-zero complex number. We start with a motivating example, which shows that a special kind of intertwiner for WZW models is bounded, namely module operators. Although we will not use this fact later, the proof idea as well as the result serves as an illustration for the following arguments.  

\subsection{Motivation: energy bounds for WZW models}
Consider the WZW-type VOA~$\voalkzero$ and let us fix some $\fa\in\g\subset \voalkzero$. (Recall that elements of the Lie algebra $\g\cong\g\otimes t^{-1}$ are identified with vectors of the first level of $\voalkzero$ (see Eq.~\eqref{eq:wzwgradingdef}). 

Now consider the construction of  the module $\modulekflambda{\lambda}$ for WZW-type VOAs $\voalkzero$ as explained in Example~\ref{exmp:wzwmodules}, and the (module) mode operators $\fa(n)$ associated with $\fa$. These are defined by $ \Y_{\modulekflambda{\lambda}}(\fa,z)=\sum_{n\in\mathbb{Z}}\fa(n)z^{-n-1}$ -- here we slightly abuse notation by using the same lettter. From
the construction of the module, the mode operators $\{\fa(n)\}_{n\in\mathbb{Z}}$ also satisfy the commutator rule~\eqref{eq:wzwcommutator}. By the results in~\cite{zbMATH00008298} (see also~\cite{Wassermann:1998cs}), this implies that the  mode operators $\fa(n)$ associated with the unitary modules $\modulekflambda{\lambda}$ of the vertex operator algebra $\voalkzero$ in question satisfy \emph{linear energy bounds}. That is, the operators $\fa(n)$, $\fa \in \g$ satisfy the bound
\begin{align}
  \norm{\fa(n)\chi}_{\modulekflambda{\lambda}} \leq c\cdot (\hat{\eta}(\fa),\fa)^\half \,|n+1|\cdot \norm{(L_0 + \idty) \chi}_{\modulekflambda{\lambda}} \label{eq:faenergybound}
\end{align}
for any $\chi \in \modulekflambda{\lambda}$, where $c > 0$. This implies various continuity statements, for example that the \emph{scaled} module vertex operator for  $\fa\in\g$ and $z\in \C$ 
\begin{align}
  q^{L_0/2} \Y(\fa,z) q^{L_0/2} 
\end{align}
is bounded (and -- more fundamentally, the sum defining it converges in the norm topology).
\begin{proof}
We can decompose an arbitrary element $\chi\in \modulekflambda{\lambda}$ into weight spaces as
\begin{align*}
\chi&=\sum_{m\in\mathbb{N}_0}c_m\chi_m\qquad\textrm{ where }c_m\in\mathbb{C}\textrm{ and }\chi_m\in \modulekflambda{\lambda}(m)\ .
\end{align*}
Without loss of generality, we may assume that $\|\chi_m\|_{\modulekflambda{\lambda}}=1$ for all $m\in\mathbb{N}_0$. This implies that $\|\chi\|_{\modulekflambda{\lambda}}=(\sum_{m} |c_m|^2)^{1/2}$. In particular, of any operator $O$ on the module~$\modulekflambda{\lambda}$ and any $\chi\in \modulekflambda{\lambda}$, we get by the Cauchy-Schwarz inequality
\begin{align*}
\|O\chi\|_{\modulekflambda{\lambda}}\leq \sum_m |c_m|\cdot\|\chi_m\|_{\modulekflambda{\lambda}}\leq \|\chi\|_{\modulekflambda{\lambda}}\cdot \left(\sum_m \|O\chi_m\|^2\right)^{1/2}\ .
 \end{align*} 
 In particular, this implies that the operator norm of $O$ is bounded by
 \begin{align}
 \|O\|^2\leq \sum_{m\in\mathbb{N}_0} \sup_{\substack{\chi\in \modulekflambda{\lambda}(m)\\
  \|\chi\|_{\modulekflambda{\lambda}}\leq 1}} \|O\chi\|^2\ .\label{eq:upperboundoperatornormO}
 \end{align}
We now apply this to bound the operator norm of the  scaled intertwiner~$O=q^{L_0/2} \Y(\fa,z) q^{L_0/2}$. Observe that
for $\chi\in \modulekflambda{\lambda}(m)$, we have
\begin{align*}
q^{L_0/2} \Y(\fa,z) q^{L_0/2}\chi&=q^{(h_\lambda+m)/2}q^{L_0/2}\Y(\fa,z)\chi\\
&=q^{(h_\lambda+m)/2} q^{L_0/2}\sum_{n\in\mathbb{Z}}\fa(n)z^{-n-1}\chi\\
&=q^{h_\lambda+m}\sum_{n\in\mathbb{Z}, n\leq m}q^{-n/2}z^{-n-1}\fa(n)\chi
\end{align*}
where we used the fact that for $\chi\in \modulekflambda{\lambda}(m)$, we have  
\begin{align}
\fa(n)\chi\begin{cases}
\in \modulekflambda{\lambda}(m-n)\qquad &\textrm{  for }n\leq m\textrm{ and }\\
0 &\textrm{otherwise.}
\end{cases}\label{eq:fanannihilation}
\end{align}
In particular, these vectors are orthogonal, and we get
\begin{align*}
\|q^{L_0/2} \Y(\fa,z) q^{L_0/2}\chi\|^2_{\modulekflambda{\lambda}}&=q^{2(h_\lambda+m)}\sum_{n\in\mathbb{Z},n\leq m} q^{-n} |z|^{-2n-2} \|\fa(n)\chi\|^2_{\modulekflambda{\lambda}}\qquad\textrm{ for }\chi\in \modulekflambda{\lambda}(m)\ .
\end{align*}
Hence we obtain the operator norm bound (cf.~\eqref{eq:upperboundoperatornormO})
\begin{align*}
\|q^{L_0/2} \Y(\fa,z) q^{L_0/2}\|^2&\leq
q^{2h_\lambda}\sum_{m\in\mathbb{N}_0}q^{2m}
\sup_{\substack{\chi\in \modulekflambda{\lambda}(m)\\
  \|\chi\|_{\modulekflambda{\lambda}}\leq 1}} \sum_{n\in\mathbb{Z}, n\leq m} q^{-n} |z|^{-2n-2} \|\fa(n)\chi\|^2_{\modulekflambda{\lambda}}\\
  &\leq
  q^{2h_\lambda}\sum_{m\in\mathbb{N}_0}\sum_{n\in\mathbb{Z}, n\leq m}q^{2m-n}
 |z|^{-2n-2} c \cdot (\hat{\eta}(\fa),\fa) \,|n+1|^{2} |m+h_\lambda+1|^2\ 
\end{align*}
where we inserted the energy bound~\eqref{eq:faenergybound}. 
These sums can be bounded, yielding
\begin{align}
\|q^{L_0/2} \Y(\fa,z) q^{L_0/2}\|&\leq \norm{\fa}_{\modulekflambda{\lambda}}\, \bnd(q,z) 
\end{align}
where \(\bnd(q,z)\) is finite as long as \(0<q<\min\{|z|^2,|z|^{-2}\}\) \footnote{Evaluating the sums we arrive at 
  \begin{align}
    \ldots &\leq c^2 \left(\frac{q^{h_\lambda}}{|z|}\right)^{2}\,\norm{\fa}_{\modulekflambda{\lambda}}^2\, \left(\sum_{m\geq 0} \left(\frac{q}{|z|^2}\right)^m [(1+2h_\lambda)(m+1)^{5} + h_\lambda^2]\right. +\\
      &\qquad\qquad\qquad\qquad\qquad\left.\sum_{m\geq 0} q^{2m} [(1+2h_\lambda)(m+1)^2 + h_\lambda^2] \sum_{n \geq 0} (|z|^2\,q)^n (n+1)^{2} \right)\\
    &\leq c^2 \left(\frac{q^{h_\lambda}}{|z|}\right)^{2}\,\norm{\fa}_{\modulekflambda{\lambda}}^2\,\left[(1+2h_\lambda)\,\left(5! \frac{|z|^2}{q} \,\left[\log\left(\frac{|z|^2}{q}\right)\right]^{-6} + \frac{1}{2|z|^2 q^3} \left[\log\left(\frac{1}{q}\right)\,\log\left(\frac{1}{|z|^2 q}\right)\right]^{-3}\right) \right.+\\
      &\qquad\qquad\qquad\qquad\qquad\left. h_\lambda^2\left(\frac{1}{1-|z|^{-2}q} + \frac{1}{2|z|^2 q^3(1-q^2)(-\log(|z|^2q))^3}\right)\right] 
  \end{align}
  where we used Lemma~\ref{lem:estimategammasum} multiple times in the last bound.} and where we used that \((\hat{\eta}(\fa),\fa) = \Scp{\fa(-1)\1}{\fa(-1)\1}_{\modulekflambda{\lambda}} = \norm{\fa}_{\modulekflambda{\lambda}}^2\).   
\end{proof}
The conclusion that scaled module vertex operators are bounded
can in fact be extended to general elements $v\in \voalkzero$ (instead of merely $\fa\in \g$). This can be shown by identical arguments starting from energy bounds on the modes of the module operator~$\Y(v,z)$. The latter have the form
\begin{align}
  \norm{\y(v)_n \chi}_{\modulekflambda{\lambda}} \leq C_v |n+1|^{r_v} \norm{(L_0 + \idty)^{s_v}\chi}_{\modulekflambda{\lambda}}\,,
\end{align}
where $C_v, r_v, s_v$ are constants only depending on $v \in \voalkzero$. Following the discussion in~\cite[Section 6]{Carpi:2015fk}, such energy bounds for modes~$\y(v)_n$ of module operators~$\Y(v,z)$, for $v\in \voalkzero$ can be derived from those for the operators $\fa(n)$, $\fa\in\g$.
\subsection{$S$-Boundedness and implications}
Motivated by the energy bounds described in the context of WZW models, and their implication that they turn a ``scaled'' version of the module operator into a bounded operator, we now generalize these definitions and establish corresponding results. For this purpose, we  first introduce a certain form on the tensor product of two modules. We note that a similar object was already studied by Felder et. al.~\cite{Felder:1990tw}, where the boundedness was however \emph{assumed}. The principal idea is to use an intertwiner to construct a new Hilbert space out of the algebraic tensor product of two modules. Using the existence of genus-0 correlation functions, this then shows that for non-zero values of $z$ such that $0<q<1/|z|^2$, the scaled intertwiner at value $z$ is a densely defined operator. Next, we use the existence of genus-1 correlation functions to show that this operator is actually bounded.

\begin{lemma}\label{lem:scalarproductvertexop}
  Let $\cV$ be a rational and $C_2$-cofinite VOA, let $A,B,C$ be unitary modules of $\cV$, and let $\iY$ be an intertwiner operator of type $\itype{C}{A}{B}$. Let $z\in\mathbb{C}\backslash\{0\}$ be arbitrary and  $0<q< 1/|z|^2$. For $a_1,a_2\in A$ and $b_1,b_2\in B$, define
  \begin{align}
    \Scp{a_1 \otimes b_1}{a_2 \otimes b_2}_{\iY,q,z} := \Scp{q^{L_0/2}\,\iY(q^{L_0/2}a_1,z)q^{L_0/2}b_1}{q^{L_0/2}\,\iY(q^{L_0/2}a_2,z)q^{L_0/2}b_2}_C\label{eq:innerproductdefvertexop}
  \end{align}
Then the map $(a_1\otimes b_1,a_2\otimes b_2)\mapsto     \Scp{a_1 \otimes b_1}{a_2 \otimes b_2}_{\iY,q,z}$
  can be extended to a sesquilinear, densely defined and positive semi-definite form on $A\otimes B$.
\end{lemma}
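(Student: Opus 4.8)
The plan is to exhibit the asserted form as the pull-back of the inner product of (the completion of) $C$ along a single linear map out of the algebraic tensor product. Concretely, I would define
\[
\Phi\colon A\otimes B\longrightarrow \overline{C},\qquad \Phi(a\otimes b):=\W_q(a,z)b=q^{L_0/2}\,\iY(q^{L_0/2}a,z)\,q^{L_0/2}b,
\]
extended linearly; since $(a,b)\mapsto \W_q(a,z)b$ is bilinear, this is well defined on $A\otimes B$. The form in question is then $\Scp{X}{Y}_{\iY,q,z}:=\Scp{\Phi X}{\Phi Y}_{\overline C}$, which reduces to~\eqref{eq:innerproductdefvertexop} on elementary tensors. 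Granting that $\Phi$ really lands in the Hilbert space $\overline{C}$, sesquilinearity is inherited from $\Scp{\cdot}{\cdot}_{\overline C}$ together with the linearity of $\Phi$; positive semi-definiteness is the identity $\Scp{X}{X}_{\iY,q,z}=\norm{\Phi X}_{\overline C}^2\ge 0$; and the form is densely defined because $A\otimes B$ is dense in the Hilbert-space tensor product of $\overline A$ and $\overline B$. Thus the entire content of the lemma is the well-definedness of $\Phi$, that is, the norm-convergence of the vector $\W_q(a,z)b$ in $\overline{C}$.

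To establish this I would first simplify the scaled intertwiner by the dilation property~\eqref{eq:dilationintegrated} for intertwiners (already used in the proof of Lemma~\ref{lem:trsfopexplicit}): writing $a'=q^{L_0}a$, $b'=q^{L_0}b$ and $w=q^{1/2}z$, one finds
\[
\W_q(a,z)b=q^{L_0/2}\,\iY(q^{L_0/2}a,z)\,q^{L_0/2}b=\iY(a',w)\,b'.
\]
By linearity it suffices to treat homogeneous $a,b$, so that $a',b'$ are homogeneous as well. I would then decompose $\iY(a',w)b'=\sum_m d_m$ into its $L_0$-weight components in $C$. By the weight-shift identity~\eqref{eq:weightchangehomogeneousintertwiner} (cf.~\eqref{eq:containementinequalitylevelsintertw}) each mode $\iy(a')_{\tau,m}$ shifts the weight by an amount determined by $m+\tau$ and simultaneously carries the power $w^{-m-\tau-n'}$ governed by the same quantity; hence distinct weight components are mutually orthogonal (as $L_0$ is self-adjoint) and $\norm{\iY(a',w)b'}_C^2=\sum_m\norm{d_m}_C^2$, which moreover depends on $w$ only through $|w|$, i.e.\ $\norm{\iY(a',w)b'}_C^2=\norm{\iY(a',|w|)b'}_C^2$.

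The key point is now to show that this nonnegative series converges when $0<q<1/|z|^2$. Evaluating at the \emph{real} point $s:=|w|=q^{1/2}|z|$ sidesteps all complex-conjugation subtleties, and the adjoint formula of Lemma~\ref{lem:adjointoperatorintertw} (applied at the real argument $s$ with fixed vector $c=\iY(a',s)b'$) identifies
\[
\norm{\iY(a',|w|)b'}_C^2=\Scp{b'}{\iY_1\bigl(e^{sL_1}(-s^{-2})^{L_0}\eta_A(a'),\,s^{-1}\bigr)\,\iY(a',s)b'}_B,
\]
where $\iY_1$ is the intertwiner of type $\itype{B}{A}{C}$ supplied by that lemma. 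The right-hand side is exactly a genus-$0$ two-point correlation function with insertions at $s^{-1}$ and $s$; by Huang's convergence theorem it is given by an absolutely convergent series precisely on the domain~\eqref{eq:genuszerodomain}, i.e.\ whenever $s^{-1}>s$, which is $|w|<1$, equivalently $q<1/|z|^2$. Matching this convergent series term-by-term with the nonnegative series $\sum_m\norm{d_m}_C^2$ then forces the latter to converge, so $\iY(a',w)b'$ is Cauchy and defines a vector in $\overline C$; by linearity the same holds for arbitrary $a,b$, establishing that $\Phi$ is a well-defined linear map into $\overline C$.

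With $\Phi$ in hand, the three claimed properties follow as indicated above, the finiteness of the off-diagonal values $\Scp{\W_q(a_1,z)b_1}{\W_q(a_2,z)b_2}_C$ being guaranteed by Cauchy--Schwarz in $\overline C$. I expect the only genuinely delicate step to be this convergence argument, and specifically the justification that the absolutely convergent power series defining the genus-$0$ two-point function may be regrouped into $\sum_m\norm{d_m}_C^2$. The reduction to a real argument $s=|w|$ is what makes that bookkeeping clean, since it removes the mismatch between the conjugate-linearity of $\Scp{\cdot}{\cdot}_C$ and the holomorphic $z$-dependence built into Lemma~\ref{lem:adjointoperatorintertw}.
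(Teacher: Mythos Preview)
Your proposal is correct and rests on the same core mechanism as the paper's proof: both invoke the adjoint intertwiner of Lemma~\ref{lem:adjointoperatorintertw} to rewrite the expression as a genus-$0$ two-point correlation function and then appeal to Huang's absolute convergence result on the domain~\eqref{eq:genuszerodomain}. The organization differs in two useful ways. First, your framing via the linear map $\Phi$ into $\overline{C}$ makes sesquilinearity and positive semi-definiteness automatic as properties of a pulled-back inner product, whereas the paper verifies positive semi-definiteness separately via a Gram-matrix argument. Second, you establish convergence of the \emph{diagonal} $\norm{\Phi(a\otimes b)}^2$ first (at the real point $s=|w|$, with insertions at $s^{-1}$ and $s$) and then obtain the off-diagonal values by Cauchy--Schwarz; the paper instead computes the off-diagonal matrix element directly, arriving at a two-point function with insertions at $z^{-1}$ and $qz$. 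Both routes lead to the same convergence condition $q<1/|z|^2$, and the term-by-term identification you flag as delicate is indeed harmless in your diagonal setup precisely because the series $\sum_m\norm{d_m}^2$ has nonnegative terms.
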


\begin{proof}
Let us first verify that the expression~\eqref{eq:innerproductdefvertexop}
is well-defined, i.e.,  gives a finite value for any $z\in\mathbb{C}\backslash\{0\}$ and $q$ satisfying $0<q< \frac{1}{|z|^2}$. To do so, we rewrite
it as a genus-$0$-correlation function. 
 
Take $a_1,a_2\in A$ and $b\in B$ arbitrary. Then we have
\begin{align}\label{eq:homogenright}
    &\Scp{a_1 \otimes b_1}{a_2 \otimes b_2}_{\iY,q,z} =
 \langle\iY(q^{L_0/2}a_1,z)q^{L_0/2}b_1,q^{L_0}\iY(q^{L_0/2}a_2,z)q^{L_0/2}b_2\rangle_C\\
&\quad\quad=\langle q^{L_0/2} b_1,\iY_1(e^{zL_1}(-z^{-2})^{L_0}\eta_A(q^{L_0/2}a_1),z^{-1})q^{L_0}\iY(q^{L_0/2}a_2,z)q^{L_0/2}b_2\rangle_C \\
&\quad\quad=\langle q^{L_0/2} b_1,\iY_1(\tilde{a}_1,z^{-1})q^{L_0}\iY(q^{-L_0}\tilde{a}_2,z)q^{L_0/2}b_2\rangle_C
\ .
\end{align}
 In first identity, we used the fact that $L_0$ is self-adjoint with respect to $\langle \cdot,\cdot\rangle_C$, and in the second step, we used the intertwiner~$\iY_1$
introduced in Lemma~\ref{lem:adjointoperatorintertw}. Finally, in the third identity, we introduced the abbreviations
\begin{align*}
\tilde{a}_1:=e^{zL_1}(-z^{-2})^{L_0}\eta_A(q^{L_0/2}a_1)\qquad \tilde{a}_2:=q^{3L_0/2}\tilde{a}_2
\end{align*}
for convenience. 
Using the action of $L_0$ on mode operators, we have
\begin{align*}
q^{L_0}\iY(q^{-L_0}\tilde{a}_2,z)&=\iY(\tilde{a}_2,qz)q^{L_0}\ .
\end{align*}
This means that
\begin{align*}
  \Scp{a_1 \otimes b_1}{a_2 \otimes b_2}_{\iY,q,z} &=
 \langle\iY(q^{L_0/2}a_1,z)q^{L_0/2}b_1,q^{L_0}\iY(q^{L_0/2}a_2,z)q^{L_0/2}b_2\rangle_C\nonumber\\
&=\langle q^{L_0/2}b_1,\iY(\tilde{a}_1,z^{-1})\iY(\tilde{a}_2,qz)q^{3L_0/2}b_2\rangle_C\\
&=F^{(0)}_{v,w}\left((\tilde{a}_1,z_1),(\tilde{a}_2,z_2)\right)
\end{align*}
has the form of a genus-$0$ two-point correlation function  (cf.~\eqref{eq:definitiongenuszeronpointspr})
with 
$v:=q^{L_0/2}b_1$, $w:=q^{3L_0/2}b_2$, and insertions of~$\tilde{a}_1$, as well as~$\tilde{a}_2$ at
\begin{align*}
z_1 &:=z^{-1}\qquad\textrm{ and }\qquad z_2:=qz\ .
\end{align*}
By assumption on $z$ and $q$, we have $|z^{-1}|>|qz|$, hence $(z_1,z_2)$ lie in the domain~\eqref{eq:genuszerodomain}.
Since we assumed that the VOA~$\cV$
is rational and $C_2$-co-finite,
 it follows 
 from the results of Huang~\cite{Huang:2005gs} (see Section~\ref{sec:correlationfunctions})
  that the expression
  \begin{align}
    \langle\iY(q^{L_0/2}a_1,z)q^{L_0/2}b_1,q^{L_0}\iY(q^{L_0/2}a_2,z)q^{L_0/2}b_2\rangle_B
  \end{align}
  is finite for any $a_1,a_2\in A$ and $b\in B$. Extending the definition linearly to finite sums of the form $\sum_i a_i \otimes b_i \in A \otimes B$, and using the fact that the latter are dense in $A \otimes B$, it follows that $\Scp{\cdot}{\cdot}_{\iY,q,z}$ is indeed a sesquilinear, densely defined and positive semi-definite form on $A\otimes B$.
  
Let $\{a_\alpha\}_{\alpha}\subset A$
and  $\{b_\beta\}_{\beta}\subset B$    be finite families of  elements in~$A$ and $B$, respectively. To show that $\langle\cdot,\cdot\rangle_{\iY,q,z}$ is positive semi-definite, it suffices to check that for any such families, the matrix
\begin{align}
\left\{\quad \langle q^{L_0/2}\iY(q^{L_0/2}a_{\alpha_1},z)q^{L_0/2}b_{\beta_1},q^{L_0/2}\iY(q^{L_0/2}a_{\alpha_2},z)q^{L_0/2}b_{\beta_2}\rangle_C\quad \right\}_{(\alpha_1,\beta_1),(\alpha_2,\beta_2)}\label{eq:matrixApositiv}
\end{align}
is positive semi-definite. But this is the Gram matrix (with entries given by inner products) associated with the family of vectors
$\{q^{L_0/2}\iY(q^{L_0/2}a_{\alpha}z)q^{L_0/2}b_{\beta}\}_{(\alpha,\beta)}$, hence the claim follows.
\end{proof}

The following property of the sesquilinear form  constructed in Lemma~\ref{lem:scalarproductvertexop} constitutes our main technical step.

\begin{definition}\label{def:boundedintertwiner}
  Let  $\cV$ be a VOA, $A, B, C$ unitary modules of $\cV$,
  and let $S\subset A$ be a linear subspace in the module~$A$.
  Let $\iY$ be an intertwiner operator of type $\itype{C}{A}{B}$ and $ \Scp{\cdot}{\cdot}_{\iY,q,z}$ the sesquilinear form constructed in Lemma~\ref{lem:scalarproductvertexop}.  We call $\iY$ $S$-\emph{bounded} if for any 
$z\in\mathbb{C}\backslash\{0\}$ and 
$0<q<\min\{|z|^2,1/|z|^2\}$, there is a constant $\bnd(q,z)<\infty$ such that
  \begin{align}
    \Scp{a \otimes b}{a \otimes b}_{\iY,q,z} \leq \bnd(q,z)^2 \norm{a}_A^2 \norm{b}_B^2 \
  \end{align}
  for all $a\in S$ and $b\in B$. In this definition, 
  $\langle\cdot,\cdot\rangle_{\cY,q,z}$ is the form defined by Lemma~\ref{lem:scalarproductvertexop}, whereas  $\|\cdot\|_A$ and $\|\cdot\|_B$ are the norms induced by the non-degenerate forms on $A$ and~$B$, respectively.
\end{definition}
 We will verify that a large class of VOAs have the property that
for an appropriate subspace~$S$, all intertwiners   are $S$-bounded.
 We are especially interested in the case where the subspace~$S$ is
the  space of primary vectors, or the space of vectors of highest weight. We can use the known existence results for genus-$1$-correlation functions to establish the following:

\begin{proposition}\label{prop:Sboundedfinitedimensional}
  Let  $\cV$ be a rational and $C_2$-co-finite VOA, $A,B,C$ unitary modules of $\cV$ and $S\subset A$  a finite-dimensional subspace of $A$. Let  $\iY$ be an intertwiner operator of type~$\itype{C}{A}{B}$. Then~$\iY$ is~$S$-bounded.
\end{proposition}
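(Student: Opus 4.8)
The plan is to read $S$-boundedness as the statement that the scaled intertwiner $\W_q(a,z)$, regarded as a densely defined operator $B\to C$, has operator norm at most $\bnd(q,z)\,\norm{a}_A$ uniformly for $a\in S$. Lemma~\ref{lem:scalarproductvertexop} already guarantees that $\W_q(a,z)b$ is a vector of finite norm for every $a\in A$, $b\in B$, so $\W_q(a,z)$ is a genuine densely defined linear map. I would bound its operator norm by its Hilbert--Schmidt norm,
\[
\norm{\W_q(a,z)}^2\ \leq\ \tr_B\!\big(\W_q(a,z)^*\,\W_q(a,z)\big)\,=:\,g(a)\,,
\]
and then use that $g$ is a positive semidefinite Hermitian form in $a$ (since $\W_q(\cdot,z)$ is linear in $a$ and the trace is linear). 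Once $g(a)<\infty$ is known for each homogeneous $a$, finite-dimensionality of $S$ yields the uniform constant: fixing a basis $a_1,\dots,a_k$ of $S$, the Gram matrix $G=(g(a_i,a_j))_{i,j}$ is a finite positive matrix, and $g(a)\leq \norm{G}\,\norm{a}_A^2$ for all $a\in S$, so $\bnd(q,z)^2:=\norm{G}$ suffices.

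It therefore remains to prove $g(a)=\tr_B(\W_q(a,z)^*\W_q(a,z))<\infty$. Using that $L_0$ is self-adjoint together with cyclicity of the trace,
\[
g(a)=\tr_B\!\big(\iY(q^{L_0/2}a,z)^*\,q^{L_0}\,\iY(q^{L_0/2}a,z)\,q^{L_0}\big)\,.
\]
I would then substitute the adjoint formula of Lemma~\ref{lem:adjointoperatorintertw}, which replaces $\iY(q^{L_0/2}a,z)^*$ by an intertwiner $\iY_1$ of type $\itype{B}{A}{C}$ evaluated at $z^{-1}$ (with argument $e^{zL_1}(-z^{-2})^{L_0}\eta_A(q^{L_0/2}a)$), and use the dilation property~\eqref{eq:dilationintegrated} to move the internal factor $q^{L_0}$ through the intertwiners. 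Crucially, by cyclicity I am also free to conjugate the entire trace by an arbitrary $s^{L_0}$, $s>0$, which rescales both insertion points by $s$ while leaving the $q$-factors untouched. After this bookkeeping the trace takes the standard genus-$1$ two-point form~\eqref{eq:genusoneintertwinercorrelationfct} (up to the explicit constant $\torusq^{c/24}$, as in Corollary~\ref{cor:corrfctexactrepr}), with diameter $\torusq=q^2$ and insertion points of modulus $s|z|^{-1}$ and $qs|z|$.

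The heart of the argument---and the step I expect to be the main obstacle---is to certify that these points lie in Huang's domain of absolute convergence~\eqref{eq:torusconvergencedomain}, i.e.\ that $1>s|z|^{-1}>qs|z|>q^2$. The first inequality forces $s<|z|$, the third forces $s>q/|z|$, and the middle inequality is equivalent to $q<1/|z|^2$; a scaling $s$ with $q/|z|<s<|z|$ exists precisely when $q<|z|^2$. Thus the two constraints $q<1/|z|^2$ and $q<|z|^2$ that define the domain $0<q<\min\{|z|^2,1/|z|^2\}$ of Definition~\ref{def:boundedintertwiner} are exactly what is needed to place both points into the convergence domain, whereupon the results recalled in Section~\ref{sec:correlationfunctions} (valid for rational, $C_2$-cofinite $\cV$) give $g(a)<\infty$. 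As a cross-check, and perhaps more transparently, I would instead expand $\tr_B(\W_q(a,z)^*\W_q(a,z))$ directly over a homogeneous orthonormal basis of $B$, using the level-shift inclusion~\eqref{eq:containementinequalitylevelsintertw}, $\iy(a)_{\tau,m}B_n\subset C_{n-m}$, and the polynomial growth of the weight-space dimensions: there the geometric factor $(q|z|^2)^{-m}$ controls the sum over raising modes ($q<1/|z|^2$) while $q^{2n}$ controls the sum over levels. The remaining work is purely bookkeeping---tracking the $L_0$- and $L_1$-dependent prefactors produced by the adjoint and dilation identities, and, if convenient, reducing to irreducible $A,B,C$ so that the index set $I$ is a single element.
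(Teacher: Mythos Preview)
Your proposal is correct and follows essentially the same route as the paper: bound the operator norm of $\W_q(a,z)$ by its Hilbert--Schmidt norm, rewrite $\tr_B(\W_q(a,z)^*\W_q(a,z))$ via Lemma~\ref{lem:adjointoperatorintertw} and the dilation identity as a genus-$1$ two-point function with $\torusq=q^2$, verify that the insertion points lie in Huang's domain~\eqref{eq:torusconvergencedomain} precisely when $0<q<\min\{|z|^2,1/|z|^2\}$, and then pass to a uniform constant on $S$ by finite-dimensionality. The paper makes the specific choice corresponding to your $s=q^{1/2}$ (yielding $|z_1|=q^{1/2}/|z|$, $|z_2|=q^{3/2}|z|$) rather than keeping $s$ free, and uses a Cauchy--Schwarz/basis argument in place of your Gram-matrix bound, but these are cosmetic differences; your alternative mode-expansion cross-check is not in the paper.
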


\begin{proof}
Fix $a\in S \subset A$ and  $b\in B$ arbitrary (not necessarily homogeneous). By definition of the sesquilinear form~$\langle\cdot,\cdot\rangle_{\iY,q,z}$  we have
\begin{align}
    \Scp{a \otimes b}{a \otimes b}_{\iY,q,z}&=
\langle q^{L_0/2}\iY(q^{L_0/2}a,z)q^{L_0/2}b,q^{L_0/2}\cY(q^{L_0/2}a,z)q^{L_0/2}b\rangle_C\nonumber\\
&\leq  \|q^{L_0/2}\iY(q^{L_0/2}a,z)q^{L_0/2}\|^2\cdot \|b\|^2_B\ ,\label{eq:sesquilinearformbound}
\end{align}
where $\|\cdot\|$ is the operator norm.  Using an orthonormal basis~$\{a_j\}^{\dim S}_{j=1}$ of $S$, we get
\begin{align}
\|q^{L_0/2}\iY(q^{L_0/2}a,z)q^{L_0/2}\|\leq \sqrt{\dim S}\cdot \|a\|_A\cdot \max_{1\leq j\leq \dim S} \|q^{L_0/2}\iY(q^{L_0/2}a_j,z)q^{L_0/2}\|\ .\label{eq:operatornormbounda}
\end{align}
Combining~\eqref{eq:sesquilinearformbound} and~\eqref{eq:operatornormbounda}, we conclude that it suffices to show that each operator
\begin{align*} 
H(a_j):=q^{L_0/2}\iY(q^{L_0/2}a_j,z)q^{L_0/2}
\end{align*} is bounded. In fact, each of these operators is Hilbert-Schmidt, and this holds not just for basis elements, but for any $a\in S$: we have
using an orthonormal  basis~$\{b_k\}_k$ of~$B$  
\begin{align}
\tr_B(H(a)^*H(a))&=\sum_k \langle q^{L_0/2}\iY(q^{L_0/2}a,z)q^{L_0/2}b_k,q^{L_0/2}\iY(q^{L_0/2}a,z)q^{L_0/2}b_k\rangle_C\nonumber\\
&=\sum_k \langle \iY(q^{L_0/2}a,z)q^{L_0/2}b_k,q^{L_0}\iY(q^{L_0/2}a,z)q^{L_0/2}b_k\rangle_C\nonumber\\
&=\sum_k \langle q^{L_0/2}b_k,\iY_1(e^{zL_1}(-z^{-2})^{L_0}\eta_A(q^{L_0/2}a),z^{-1})q^{L_0}\iY(q^{L_0/2}a,z)q^{L_0/2}b_k \rangle_B\nonumber\\
&=
\sum_k \langle b_k,q^{L_0/2}\iY_1(e^{zL_1}(-z^{-2})^{L_0}\eta_A(q^{L_0/2}a),z^{-1})q^{L_0}\iY(q^{L_0/2}a,z)q^{L_0/2}b_k \rangle_B\nonumber\label{eq:xaxawrittenout}
\end{align}
where we used the fact that $L_0$ is self-adjoint with respect to $\langle\cdot,\cdot\rangle_C$ in the first and last step, and $\iY_1$ is the intertwiner introduced in  Lemma~\ref{lem:adjointoperatorintertw}.
Defining
\begin{align}
&a_1:=q^{L_0/2}e^{zL_1}(-z^{-2})^{L_0}\eta_A(q^{L_0/2}a)\,,
&a_2:=q^{2L_0}a\,,
\end{align}
 we get
\begin{align*}
q^{L_0/2}&\iY_1(e^{zL_1}(-z^{-2})^{L_0}\eta_A(q^{L_0/2}a),z^{-1})q^{L_0}\iY(q^{L_0/2}a,z)q^{L_0/2}\\
&=
q^{L_0/2}\iY_1(q^{-L_0/2}a_1,z^{-1})q^{L_0}\iY(q^{-3L_0/2}a_2,z)q^{L_0/2}
=
\iY_1(a_1,q^{1/2}z^{-1})\iY(a_2,q^{3/2}z)q^{2L_0}\ ,
\end{align*}
where we used the action of $L_0$ on the intertwiners (cf.~\eqref{eq:dilationintegrated}). 
Setting
  \begin{align*}
  z_1:=q^{1/2}z^{-1}\qquad z_2:=q^{3/2}z\qquad\torusq=q^2\ ,
    \end{align*}
    we conclude that
\begin{align}
  \tr_B(H(a)^*H(a))&=
  \tr_B\left(\iY_1(a_1,z_1)\iY(a_2,z_2)\torusq^{L_0}\right)\nonumber\\
  &=\torusq^{c/24}F^{(1)}_{\torusq,\iY_1,\iY}\left((z_1^{-L_0}a_1,z_1),(z_2^{-L_0}a_2,z_2)\right)
\end{align}
is proportional to a genus-$1$ two-point function.
By the assumption that the VOA~$\cV$
is rational and $C_2$-co-finite, this is finite 
since the triple $(z_1,z_2,\torusq)$ satisfies
$1>|z_1|>|z_2|>\torusq$
for any $0<q<\min\{|z|^2,1/|z|^2\}$
(see Section~\ref{sec:correlationfunctions} and Ref.~\cite{Huang:2005gs}).
\end{proof}

The notion of bounded intertwiners gives rise to a family of bounded operators, mapping the Hilbert subspace~$S$ into the bounded operators between the Hilbert spaces~$B$ and $C$. By inspection, we see that these are exactly given by scaled intertwiners~\eqref{eq:scaledintertwinerdef}, with the indeterminate replaced by a complex number. In fact, due to the special structure, these operators are also of trace class.

\begin{corollary}\label{cor:boundednessscaledintertwiner}
  Let $\cV$ be a rational and $C_2$-co-finite VOA, $A,B,C$ unitary modules of $\cV$, $S\subset A$ a subspace of $A$, \(\iY\) an \(S\)-bounded intertwiner of type~$\itype{C}{A}{B}$ and $z\in\mathbb{C}\backslash\{0\}$, $0<q<\min \{|z|^2,1/|z|^2\}$ be arbitrary. Then the associated scaled intertwiner $\W$ with the formal variable replaced by the complex number \(z\) is a bounded operator, with operator norm bounded by
  \begin{align}\label{eq:Sboundednessscaledintertwiner}
    \norm{\W_q(a,z)} \leq \bnd(q,z) \norm{a}_A\, \,,\quad \textrm{for all } \,a \in S\,,
  \end{align}
  where  $\|\cdot\|_A$ is the norm induced by the non-degenerate form on $A$. Furthermore, the operator $\W_q(a,z)$ is also of trace-class.
\end{corollary}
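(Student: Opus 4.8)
I would prove the two assertions in turn. The operator-norm bound is almost a restatement of the hypothesis. By the definition of the scaled intertwiner, $\W_q(a,z)b = q^{L_0/2}\iY(q^{L_0/2}a,z)q^{L_0/2}b$ for every $b\in B$, so that
\[
  \norm{\W_q(a,z)b}_C^2 = \Scp{a\otimes b}{a\otimes b}_{\iY,q,z} \leq \bnd(q,z)^2\,\norm{a}_A^2\,\norm{b}_B^2 ,
\]
where the first equality is merely the definition of the form $\Scp{\cdot}{\cdot}_{\iY,q,z}$ from Lemma~\ref{lem:scalarproductvertexop} and the inequality is the $S$-boundedness of $\iY$ applied to $a\in S$. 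Taking the supremum over $b$ with $\norm{b}_B\leq 1$ yields $\norm{\W_q(a,z)}\leq\bnd(q,z)\norm{a}_A$ and shows that $\W_q(a,z)$ extends to a bounded operator on the completion of $B$.

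For the trace-class statement, the plan is to exhibit $\W_q(a,z)$ as a product of the form (trace-class)$\,\cdot\,$(bounded)$\,\cdot\,$(trace-class) by means of the scaling relation~\eqref{eq:scaledintertwinerfactorization}. First I would record that for every $0<r<1$ the operator $r^{L_0/2}$ is of trace class on $B$: since $\cV$ is $C_2$-cofinite, $L_0$ is a positive operator with finite-dimensional eigenspaces $B_n$ of eigenvalue $h_B+n$, and $\tr_B r^{L_0/2} = \sum_{n}(\dim B_n)\,r^{(h_B+n)/2}$ converges by the sub-exponential growth of $\dim B_n$ guaranteed by~\eqref{eq:dimweightspaces} --- this is exactly the convergence of the character.

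Because $q<M:=\min\{|z|^2,1/|z|^2\}$, the interval $(q,M)$ is non-empty; I would pick $q_2\in(q,M)$ and set $q_1:=q/q_2$, so that $0<q_1<1$ and $q=q_1q_2$. The identity~\eqref{eq:scaledintertwinerfactorization}, which is algebraic and hence valid as an operator identity once $z$ is a fixed complex number and all factors converge, then reads
\[
  \W_q(a,z) = q_1^{L_0/2}\,\W_{q_2}(q_1^{L_0/2}a,z)\,q_1^{L_0/2}.
\]
The two outer factors are trace class by the previous paragraph. For the middle factor I would argue boundedness through the first part of the corollary: the vector $q_1^{L_0/2}a$ is again an element of $A$ (as $a$ has only finitely many nonzero homogeneous components) and spans a one-dimensional, hence finite-dimensional, subspace $S'\subset A$; by Proposition~\ref{prop:Sboundedfinitedimensional} the intertwiner $\iY$ is $S'$-bounded, so, since $q_2\in(0,M)$, the operator $\W_{q_2}(q_1^{L_0/2}a,z)$ is bounded. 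As the trace-class operators form a two-sided ideal inside the bounded operators, the product above is of trace class, which is the claim.

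The one genuinely delicate point is this middle factor. Since $q_1^{L_0/2}a$ need not belong to the original subspace $S$, I cannot invoke the hypothesis of $S$-boundedness directly; the remedy is to pass to the finite-dimensional subspace spanned by $q_1^{L_0/2}a$ and apply Proposition~\ref{prop:Sboundedfinitedimensional} there (which is exactly where rationality and $C_2$-cofiniteness re-enter). The remaining ingredients --- the parameter bookkeeping ensuring $q_1<1$ and $q_2<M$, the trace-class property of $r^{L_0/2}$, and the ideal property of the trace-class operators --- are routine.
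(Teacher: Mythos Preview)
Your proof is correct and follows essentially the same route as the paper: the boundedness is an immediate unpacking of the definition of $\Scp{\cdot}{\cdot}_{\iY,q,z}$, and the trace-class property comes from the factorization identity~\eqref{eq:scaledintertwinerfactorization} together with the fact that $r^{L_0}$ is trace class for $0<r<1$ and the ideal property of the trace class.

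The only differences are in the bookkeeping for the second part. The paper simply takes $q_1=q_2=\sqrt{q}$ and writes the trace-norm estimate as $\|\W_q(a,z)\|_1\leq \|q^{L_0/4}\W_{\sqrt{q}}(q^{L_0/4}a,z)\|\cdot\|q^{L_0/4}\|_1$, invoking the first assertion directly for the middle factor. You are in fact more careful on two points the paper glosses over: you choose $q_2\in(q,\min\{|z|^2,|z|^{-2}\})$ so that the boundedness hypothesis is genuinely available at parameter $q_2$, and you explicitly note that $q_1^{L_0/2}a$ need not lie in $S$, remedying this by passing to the one-dimensional span and invoking Proposition~\ref{prop:Sboundedfinitedimensional}. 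Both refinements are sound; the underlying mechanism is identical. (As a minor simplification, only one of the two outer $q_1^{L_0/2}$ factors needs to be trace class for the argument to go through.)
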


In the following, we refer to this operator as an \(S\)-bounded scaled intertwiner of type \(\itype{C}{A}{B}\), with the understanding that \(A,B,C\) are unitary modules and \(S\subset A\) is a linear subspace.

\begin{proof}
  Eq.~\eqref{eq:Sboundednessscaledintertwiner} follows immediately from the definitions since for \(b \in B\) arbitrary (cf. Lemma~\ref{lem:scalarproductvertexop} and Definition~\ref{def:boundedintertwiner})
\begin{align*}
\|\W_q(a,z)b\|_C^2&=\langle a\otimes b,a\otimes b\rangle_{\iY,q,z}\ .
\end{align*}
For the second assertion, note that by using~\eqref{eq:scaledintertwinerfactorization} with
$q_1=q^{1/2}$, $q_2=q^{1/2}$, we have
\begin{align*}
  \W_q(a,z)b&=q^{L_0/4} \W_{q^{1/2}}(q^{L_0/4}a,z) q^{L_0/4}b
\end{align*}
for all $a\in S$ and $b\in B$.  The H\"older inequality for operators on a Hilbert space implies that the product of a bounded operator and one of trace-class stays in the trace-class, and we find
  \begin{align}
    \norm{\W_q(a,z)}_1 \leq \norm{q^{L_0/4} \W_{q^{1/2}}(q^{L_0/4} a,z)} \cdot \norm{q^{L_0/4}}_1 < \infty\,,
  \end{align}
  since for any $0<q<1$, the operator $q^{L_0}$ is of trace-class, compare to Eq.~\eqref{eq:partitionfunction}. Here, we also used the first assertion as well as the fact that the operator norm of \(q^{L_0/4}\) is bounded by one, since the spectrum of \(L_0\) is positive and we have \(0<q<1\).
\end{proof}

Since we defined the formal transfer operator $\Top$ as the composition of several scaled intertwiners, it follows from the fact that the operator norm is sub-multiplicative that it itself defines a bounded operator.

\begin{lemma}\label{lem:boundedtransferop}
  Let $\Top$ be a transfer operator as in Definition~\ref{def:transferoperators} composed of \(S^{(i)}\)-bounded intertwiners, insertions $a_i \in S^{(i)}$, \(i = 1,\ldots,n\), and the formal variables replaced by a complex number $z\in\mathbb{C}\backslash\{0\}$ satisfying $0<q<\min \{|z|^2,1/|z|^2\}$.   
Then the operator norm of $\Top$ is bounded by
\begin{align}
  \|\Top\|\leq \prod_{j=1}^n \bnd_j(q,z) \|a_j\|_{A^{(j)}}\ .\label{eq:zonenormboundv}
\end{align}
\end{lemma}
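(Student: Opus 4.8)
The plan is to recognize this as an immediate consequence of two facts already in hand: the boundedness of each individual scaled intertwiner (Corollary~\ref{cor:boundednessscaledintertwiner}) and the submultiplicativity of the operator norm under composition. By Definition~\ref{def:transferoperators}, once the formal variables are all set equal to the single complex number $z$, the transfer operator factors as the composition
\begin{align*}
  \Top = \W_1(a_1,z)\circ \W_2(a_2,z)\circ\cdots\circ\W_n(a_n,z)\,,
\end{align*}
where each $\W_i=\W_{i,q}$ is the scaled intertwiner of type $\itype{B^{(i-1)}}{A^{(i)}}{B^{(i)}}$ associated with the $S^{(i)}$-bounded intertwiner $\iY_i$ and insertion $a_i\in S^{(i)}$.

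First I would invoke Corollary~\ref{cor:boundednessscaledintertwiner}, which applies to each factor because the hypotheses of the present lemma (rational, $C_2$-co-finite $\cV$; unitary modules; $S^{(i)}$-bounded intertwiners; and $z\in\mathbb{C}\setminus\{0\}$ with $0<q<\min\{|z|^2,1/|z|^2\}$) are exactly those required there. This yields, for each $i=1,\ldots,n$, that $\W_i(a_i,z):B^{(i)}\to B^{(i-1)}$ is a genuine bounded operator on the relevant Hilbert spaces with
\begin{align*}
  \norm{\W_i(a_i,z)} \leq \bnd_i(q,z)\,\norm{a_i}_{A^{(i)}}\,.
\end{align*}
In particular each factor is bounded (indeed trace-class), so the composition is well-defined as a bounded operator and the ranges and domains match up along the chain $B^{(n)}\to B^{(n-1)}\to\cdots\to B^{(0)}$.

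The final step is purely formal: since the operator norm satisfies $\norm{ST}\leq\norm{S}\,\norm{T}$, an $n$-fold application gives
\begin{align*}
  \norm{\Top} \leq \prod_{j=1}^n \norm{\W_j(a_j,z)} \leq \prod_{j=1}^n \bnd_j(q,z)\,\norm{a_j}_{A^{(j)}}\,,
\end{align*}
which is precisely~\eqref{eq:zonenormboundv}. There is no real obstacle in this argument; all the substance resides in Proposition~\ref{prop:Sboundedfinitedimensional} and Corollary~\ref{cor:boundednessscaledintertwiner}, which establish that the scaling renders each intertwiner bounded (via its reinterpretation as a genus-$0$ two-point function and a genus-$1$ trace that converge by Huang's results). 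The only point worth a remark is ensuring that the composition is taken in the correct order so that the module types compose consistently, but this is guaranteed by the definition of $\Top$.
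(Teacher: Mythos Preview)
Your proof is correct and is essentially the same as the paper's: both invoke the bound $\norm{\W_i(a_i,z)}\leq\bnd_i(q,z)\norm{a_i}_{A^{(i)}}$ from Corollary~\ref{cor:boundednessscaledintertwiner} and then use submultiplicativity of the operator norm, the paper merely phrasing this as a recursion $\Top_k=\W_k(a_k,z)\circ\Top_{k+1}$ with an inductive application of~\eqref{eq:Sboundednessscaledintertwiner}.
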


\begin{proof}
The operator~\eqref{eq:operatorz1def} is defined recursively by $\Top=\Top_1$, where 
\begin{align*}
  \Top_n&=\W_n(a_n,z)\\
  \Top_{k}&=\W_{k}(a_{k},z) \circ \Top_{k+1}\qquad\textrm{ for }k=n-1,\ldots,1\
\end{align*}
for $b\in B$.  Inductively using~\eqref{eq:Sboundednessscaledintertwiner}, this immediately implies that  
\begin{align*}
  \|\Top_1\|^2 \leq  \prod_{j=1}^n \bnd^2_j(q,z) \|a_j\|^2_{A^{(j)}} \,,
\end{align*}
which is the claim~\eqref{eq:zonenormboundv}. 
\end{proof}
  
We close this section by again examining WZW models, but now intertwiners instead of module operators. These VOAs are rational and unitary, and hence the corresponding scaled intertwiners are, for suitable choices of \(S\), bounded by the results of this section. However, the exact analytic form of the parameter $\bnd(q,z)$ is in general unclear. Nevertheless, for certain choices of the Lie algebra and the irreducible modules, explicit bounds on the boundedness parameter $\bnd(q,z)$ can be obtained.

\begin{example}[\bf WZW intertwiners are \(S\)-bounded]\label{exmp:wzwbounded}
  Let us again fix a simple complex Lie algebra $\g $, which for concreteness is now assumed to be an element of the A series, hence $\g = \mathfrak{sl}(d,\C)$. This choice leads to models known in the physics literature as $\mathfrak{su}(2)$ at level~$k$, where the level is given by the choice of the value of the central extension. Let us now fix three irreducible highest weight modules of \(\g \) corresponding to the weights $\lambda_1$, $\lambda_2$, $\lambda_3$. As explained in Examples~\ref{exmp:wzwmodules} and~\ref{exmp:wzwintertwiner}, this also uniquely determines three irreducible modules \(\modulekflambda{\lambda_1},\modulekflambda{\lambda_2},\modulekflambda{\lambda_3}\) of the VOA  $\voalkzero$ as well as an intertwiner $\iY$ of type $\itype{\modulekflambda{\lambda_1}}{\modulekflambda{\lambda_3}}{\modulekflambda{\lambda_2}}$. We now choose the subspace $S \subset \modulekflambda{\lambda_3}$ equal to the top level $\modulekflambda{\lambda_3}(0)$ of $\modulekflambda{\lambda_3}$. As discussed, $S = \modulekflambda{\lambda_3}(0)$ is an irreducible $\mathfrak{sl}(d,\C )$ module. We now make the additional assumption that $S$ is the \emph{defining representation}, hence $S = \C^d$. For this choice, the work~\cite{Wassermann:1998cs} of Wasserman implies an explicit analytic bound on the boundedness parameter $\bnd(q,z)$ which we summarize as a corollary.

\begin{corollary}[of~\cite{Wassermann:1998cs}]
  Let $\g = \mathfrak{sl}(d,\C)$, $\lambda_1$, $\lambda_2$, $\lambda_3$ three heighest weights, with $\lambda_3$ corresponding to the defining representation on $\C^d$. Let  $\iY$ be an intertwiner of the VOA $\voalkzero$ of type $\itype{\modulekflambda{\lambda_1}}{\modulekflambda{\lambda_3}}{\modulekflambda{\lambda_2}}$, and let $\W_q$ be the associated scaled intertwiner for the subspace $S = \modulekflambda{\lambda_3}(0) \simeq \C^d$, the top level of the module $\modulekflambda{\lambda_3}$. Then we have for $0<q<1/|z|^2$, $|z|>1$,
  \begin{align}
    \norm{\W_q(\vphi,z)} \,\leq \, \norm{\vphi}_{\modulekflambda{\lambda_3}} \cdot q^{\half(h_{\lambda_1} + h_{\lambda_2} + h_{\lambda_3})} \cdot |z|^{-\tau}  \cdot \sqrt{\frac{1}{(1-q^2)(1-|z|^2 q)} + \frac{q}{(1-q)^2}} \,,\quad \vphi \in \C^d\,,
  \end{align}
  where \(\tau= h_{\lambda_2} + h_{\lambda_3} - h_{\lambda_1}\) and $\norm{\cdot}$ denotes the operator norm.
\end{corollary}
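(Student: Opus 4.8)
The plan is to reduce the operator norm to an explicit geometric sum over the modes of the primary field, and then feed in Wassermann's uniform bound on those modes as the only external input. First, since $\vphi$ lies in the top level $\modulekflambda{\lambda_3}(0)$, on which $L_0$ acts as the scalar $h_{\lambda_3}$, the defining formula~\eqref{eq:scaledintertwinerdef} collapses to
\begin{align*}
\W_q(\vphi,z) = q^{L_0/2}\,\iY(q^{L_0/2}\vphi,z)\,q^{L_0/2} = q^{h_{\lambda_3}/2}\,q^{L_0/2}\,\iY(\vphi,z)\,q^{L_0/2}\,,
\end{align*}
so it suffices to bound $\norm{q^{L_0/2}\iY(\vphi,z)q^{L_0/2}}$. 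Because the three modules are irreducible, the index set of $\iY$ is the singleton $\{\tau\}$ with $\tau = h_{\lambda_2}+h_{\lambda_3}-h_{\lambda_1}$ (cf.~\eqref{eq:iabcirreducible}), and by~\eqref{eq:modeexpansionintertwiner} we may write $\iY(\vphi,z) = \sum_{n\in\Z}\iy(\vphi)_{\tau,n}z^{-n-\tau}$, where, by the level-shift property~\eqref{eq:containementinequalitylevelsintertw}, each mode maps $\modulekflambda{\lambda_2}(m)$ into $\modulekflambda{\lambda_1}(m-n)$ and hence vanishes unless $n\le m$.

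Next I would run the level-decomposition argument already used for module operators, i.e.~the operator-norm bound~\eqref{eq:upperboundoperatornormO}. For a normalized homogeneous $\chi\in\modulekflambda{\lambda_2}(m)$ the vectors $\iy(\vphi)_{\tau,n}\chi$ lie in distinct levels of $\modulekflambda{\lambda_1}$ and are therefore mutually orthogonal, which gives
\begin{align*}
\norm{q^{L_0/2}\iY(\vphi,z)q^{L_0/2}\chi}^2 = q^{h_{\lambda_1}+h_{\lambda_2}}\,|z|^{-2\tau}\sum_{n\le m}q^{2m-n}|z|^{-2n}\,\norm{\iy(\vphi)_{\tau,n}\chi}^2\,.
\end{align*}
The key input, and the point where Wassermann's analysis of the $SU(d)$ vector-representation primary fields enters, is the \emph{uniform} mode bound $\norm{\iy(\vphi)_{\tau,n}\chi}\le\norm{\vphi}_{\modulekflambda{\lambda_3}}\norm{\chi}_{\modulekflambda{\lambda_2}}$ for every $n$ — the defining-representation analogue of the CAR estimate that makes free-fermion modes contractions, with no $|n+1|$-type growth. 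Using $\sup_{\chi}\sum_n c_n\norm{\iy(\vphi)_{\tau,n}\chi}^2\le\sum_n c_n\sup_\chi\norm{\iy(\vphi)_{\tau,n}\chi}^2$ for $c_n\ge 0$ and summing over $m$ via~\eqref{eq:upperboundoperatornormO} then yields
\begin{align*}
\norm{q^{L_0/2}\iY(\vphi,z)q^{L_0/2}}^2 \le q^{h_{\lambda_1}+h_{\lambda_2}}\,|z|^{-2\tau}\,\norm{\vphi}_{\modulekflambda{\lambda_3}}^2\sum_{m\ge 0}\sum_{n\le m}q^{2m-n}|z|^{-2n}\,.
\end{align*}

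Finally I would split the double sum exactly along the creation/annihilation dichotomy, which is what produces the two summands in the stated bound. For the creation modes $n=-p\le 0$ one gets $\sum_{m\ge0}q^{2m}\sum_{p\ge0}(q|z|^2)^p = \tfrac{1}{(1-q^2)(1-|z|^2q)}$, where convergence of the inner series is precisely the hypothesis $q<1/|z|^2$; for the annihilation modes $1\le n\le m$ one bounds $|z|^{-2n}\le 1$ (using $|z|>1$) and evaluates $\sum_{m\ge1}\sum_{n=1}^m q^{2m-n}=\tfrac{q}{(1+q)(1-q)^2}\le\tfrac{q}{(1-q)^2}$. Adding the two pieces, taking the square root, and restoring the prefactor $q^{h_{\lambda_3}/2}$ gives exactly the claimed inequality. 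The routine part is the geometric bookkeeping; the main obstacle is the second paragraph's input: extracting the clean uniform mode bound $\norm{\iy(\vphi)_{\tau,n}}\le\norm{\vphi}$ from Wassermann's operator-algebraic results~\cite{Wassermann:1998cs} (phrased via loop-group representations and smeared fields/four-point functions) and checking that his normalization of the vector primary field matches the VOA normalization used here, so that the conformal-weight powers and the factor $|z|^{-\tau}$ come out as written.
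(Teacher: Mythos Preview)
Your proposal is correct and follows essentially the same approach as the paper's proof. The paper is marginally more explicit about the Wassermann input: it records that the intertwiner modes admit a fermionic realization $\iy(\vphi)_{\tau,n} = P\,\mathsf{a}[\vphi_n]\,Q$ with $P,Q$ projections and $\mathsf{a}$ a CAR creation operator, from which the uniform bound $\norm{\iy(\vphi)_{\tau,n}}\le\norm{\vphi}$ follows immediately; this is exactly the ``CAR-type'' input you correctly flag as the non-routine step. The remaining geometric bookkeeping matches, with only a cosmetic difference in the annihilation piece (the paper uses the cruder intermediate bound $\sum_{n=1}^m q^{2m-n}\le m q^m$ before summing over $m$, whereas you evaluate the double sum exactly and then relax).
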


\begin{proof}
Due to the results of Wassermann~\cite[Section 25]{Wassermann:1998cs} the modes of any intertwiner between irreducible modules can be written as ($h$ a real number depending only on the three modules)
\begin{align}
  \iy(\vphi)_{\tau,n} = P \mathsf{a}[\vphi_n] Q\,,
\end{align}
where $\mathsf{a}[\cdot]$ is a fermionic creation operator with domain $S \otimes \C[t,t^{-1}]$, $P,Q$ are projections, and $\vphi_n = \vphi \otimes  t^n$. The detailed construction is not important for us, and we refer the interested reader to Wassermann's work, but what is important for us is that fermionic creation operators satisfy the estimate
\begin{align}\label{eq:normcreationopwzwint}
  \norm{\mathsf{a}[\vphi_n]} \leq \norm{\vphi}_{\modulekflambda{\lambda_3}}\,.
\end{align}
Since projectors do not increase the norm, it follows that bounds on the norm of the modes $\mathsf{a}[\vphi_n]$ imply norm bounds for the modes of the intertwiner. Indeed, considering an arbitrary homogeneous element $\chi \in \modulekflambda{\lambda_2}(m)$ belonging to the $m$-th level and of unit norm, we have --- since weight spaces of different weights are orthonormal --- that 
\begin{align}
  \norm{q^{L_0/2} \iY(\vphi,z) q^{L_0/2} \chi}_{\modulekflambda{\lambda_1}}^2 &= \sum_{n \in Z, n\leq m} q^{2m-n} |z|^{-2n-2\tau} \norm{P \mathsf{a}[\vphi_n] Q}^2 \notag\\
  &\leq \norm{\vphi}_{\modulekflambda{\lambda_3}}^2 \,q^{h_{\lambda_1}+h_{\lambda_2}}\,|z|^{-2\tau}\, \left( \sum_{n\geq 0} q^{2m+n} |z|^{2n} +  m q^{m}\right)\,,
\end{align}
where we used the norm estimate~\eqref{eq:normcreationopwzwint} and then separated the sum over $n$ into negative and positive terms. Note that we have~$n\leq m$, since weights in unitary modules have to be positive (cf.~\eqref{eq:fanannihilation}). We also applied Eq.~\eqref{eq:iabcirreducible}, since all three modules are irreducible. We are left with taking the sum over $m \in \Nl$ to get the final estimate, which leads to
\begin{align}
  \norm{q^{L_0/2} \iY(q^{L_0/2}\vphi,z) q^{L_0/2}\chi}_{\modulekflambda{\lambda_1}}^2 \leq \norm{\vphi}_{\modulekflambda{\lambda_3}}^2 \,q^{h_{\lambda_1}+h_{\lambda_2}+h_{\lambda_3}}\,|z|^{-2\tau}\, \left(\frac{1}{(1-q^2)(1-|z|^2 q)} + \frac{q}{(1-q)^2}\right) \,.
\end{align}
\end{proof}

The case of a general irreducible module $\modulekflambda{\lambda_3}$ with top level corresponding to an arbitrary highest weight representation of $\mathfrak{sl}(d,\C )$ can be handled as well, see~\cite{Wassermann:1995fk} for an exposition. This construction in the case of Spin groups is very nicely presented in the thesis of Laredo~\cite{Laredo}. However, representations other than the defining one generally lead to more complicated norm bounds.
\end{example}

\section{Approximating CFT correlation functions \label{sec:approx}}

In Section~\ref{sec:boundedint}, we showed that scaled intertwiners are bounded operators if the formal variable is replaced by a complex number. The next step in our considerations is the construction of a \emph{truncated} version of these operators, which only changes the weight by a finite number. This then ensures that the image of a finite direct sum of weight spaces under this operator is still contained in a fixed finite-dimensional space, which is obtained by truncating in the weight basis. This is the first main ingredient for our MPS construction.

\subsection{Truncated scaled intertwiners}

Recall the mode expansion~\eqref{eq:modeexpansionintertwiner} of an intertwiner
$\iY(a,z) = \sum_{\tau\in I, m\in \mathbb{Z}} \iy(a)_{\tau,m} z^{-m-\tau-n}$ for  a homogeneous vector~$a\in A_{n,\alpha}\subset A_n$.
According to~\eqref{eq:weightchangehomogeneousintertwiner},  the operator $\iy(a)_{\tau,m}$ changes the  weight of a vector by~$-\tau-m+\alpha$, for any $\tau\in I$ and $m\in\mathbb{Z}$.  This motivates the following definition:

\begin{definition}[Truncated intertwiner]\label{def:truncatedvertex}
  Let $A, B,C$ be  modules of a VOA~$\cV$ and~$\iY$ an intertwiner of type $\itype{C}{A}{B}$. Let $N>0$. The {\em truncated intertwiner}~$\iYtr{N}: A \to \Endint{B,C}{z,z^{-1}}$ is defined as
\begin{align*}
\iYtr{N}(a,z)&=\sum_{\substack{\tau\in I, m\in \mathbb{Z}\\
|m|\leq N}} \iy(a)_{\tau,m} z^{-m-\tau-n}
\end{align*}
for any homogeneous vector $a\in A_n$, and then linearly extended to the whole module~$A$.
\end{definition}
We usually choose $N$ to be a positive integer. We sometimes refer to it as the \emph{truncation level} or \emph{truncation parameter}. It is worth explicitly writing out what  condition~\eqref{eq:weightchangehomogeneousintertwiner} implies for truncated intertwiners. Indeed, this is the key to bounding the bond dimension of the resulting MPS (see Section~\ref{sec:finitebonddimensionerror}).
A truncated intertwiner $\iY^{[N]}(a,z)$ does not change the level by more than $N$ in the sense that 
\begin{align}
  \iY^{[N]}(a,z) B_{n}\subset\bigoplus_{\substack{m\in\mathbb{Z}\\|m|\leq N}} C_{n-m}\{\{z,z^{-1}\}\}
\qquad\textrm{ for all }n\in\mathbb{N}_0\ .\label{eq:containmenttruncatedintertwiner}
\end{align}
This is an immediate consequence of the definition and Eq.~\eqref{eq:containementinequalitylevelsintertw}. Analogously, we can introduce a similar notion of truncated scaled intertwiners. However, before doing so, we need to introduce the following notion for a finite-dimensional subspace $S\subset A$ of a module~$A$. We define 
\begin{align}\label{eq:defShom}
  \Shom:=\bigoplus_{n\in\mathbb{N}_0, \alpha\in I_A} (S\cap A_{n,\alpha})\,, 
\end{align}
where the sum is over all weights occurring in the module $A$. The main motivation for this definition is to ensure that $\Shom$ is spanned by homogeneous vectors, which is not a priori true for any finite-dimensional subspace $S\subset A$ (take for example a single, non-homogeneous vector). In the following, all our statements apply to $\Shom$, that is finite-dimensional subspaces which are spanned by homogeneous vectors. Note however that starting from any finite-dimensional $S\subset A$, we can suitably enlarge $S$ by homogeneous vectors to get a finite-dimensional $\tilde{S} \subset A$ with $S \subset \tilde{S} = \tilde{S}_{hom}$.

\begin{definition}[Scaled truncated intertwiner]\label{def:truncatedintertwiner}Let $A,B,C$ be unitary modules of a VOA~$\cV$. For a linear subspace~$S\subset A$, let $\Shom$ be defined by Eq.~\eqref{eq:defShom}. Let $\iY$ be an~$S$-bounded unitary intertwiner operator of type $\itype{C}{A}{B}$, and let $\W$ be the associated \(S\)-bounded scaled intertwiner of type \(\itype{C}{A}{B}\) (cf.~Definition~\ref{def:scaledintertwiner}).  Fix a truncation level $N>0$ and $0<q<1$. Define 
\begin{align*}
\begin{matrix}
\Wtrq{q}{N}:&\Shom &\rightarrow  &\Endint{B,C}{z,z^{-1}}\\
&a & \mapsto  &\Wtrq{q}{N}(a,z)
\end{matrix}
\end{align*}
  by
\begin{align*}
\Wtrq{q}{N}(a,z)b:=
q^{L_0/2} \iYtr{N}(q^{L_0/2}a,z)q^{L_0/2} b
\end{align*}
for homogeneous $a\in S$ and arbitrary $b\in B$, and extend this linearly to all of  $\Shom$. Then we call the family $\{\Wtrq{q}{N}:\Shom  \rightarrow  \Endint{B,C}{z,z^{-1}}\}_{0<q<1}$  the {\em $N$-th level truncation of~$\W$}, or simply a {\em truncated scaled intertwiner}.
\end{definition}
Of course, truncated scaled intertwiners also 
have a ``controlled'' behavior on the levels in the sense that
\begin{align}
  \W_q^{[N]}(a,z) B_{n}\subset\bigoplus_{\substack{m\in\mathbb{Z}\\|m|\leq N}}C_{n-m}\{\{z,z^{-1}\}\}
\qquad\textrm{ for all }n\in\mathbb{N}_0\ ,\label{eq:containmenttruncatedscaledintertwiner}
\end{align}
as follows immediately from~\eqref{eq:containmenttruncatedintertwiner} and the fact that~$L_0$ preserves levels. Starting from this formal power series, we can get a well-defined operator by replacing the formal variable by a complex number (the fact that it is well-defined follows since it is a finite sum of operators times a power of a complex variable). Observe that for $z\in\mathbb{C}$, the operator~$\Wtrq{q}{N}(a,z)$ acts on the infinite-dimensional module~$B$. This raises the question whether it is bounded, which is answered by the following lemma. 

\begin{lemma}\label{lem:boundednessparametertruncated}
  Let  $\Wtrq{q}{N}$ be the $N$-th level truncation of an $S$-bounded scaled intertwiner~$\W_q$ of type $\itype{C}{A}{B}$.
Let~$\bnd$ be the boundedness parameter of $\W_q$ introduced in Definition~\ref{def:boundedintertwiner}, $z\in\mathbb{C}\backslash\{0\}$, $0<q<\min\{|z|^2,1/|z|^2\}$ and $0<q_1,q_2<1$ satisfying $q_1\cdot q_2 =q$ be given. Then $\Wtrq{q}{N}$ is $S$-bounded with 
\begin{align}
  \norm{\Wtrq{q}{N}(a,z)} \leq \norm{a}_A \cdot \frac{\sqrt{|I_B|}\cdot \bnd(q_2,z)}{\sqrt{1-q_1}} \,,\quad \textrm{for all } \,a \in S\,.
\end{align}
Specifically, setting $q_1=q_2=\sqrt{q}$ we get
\begin{align}\label{eq:Sboundedtruncation}
  \norm{\Wtrq{q}{N}(a,z)} \leq \norm{a}_A \cdot \frac{\sqrt{|I_B|}\cdot\bnd(\sqrt{q},z)}{\sqrt{1-\sqrt{q}}} \,,\quad \textrm{for all } \,a \in S\,.
\end{align}
The bound is independent of the truncation parameter $N$, and in fact holds in the case $N=\infty$, that is, for the scaled intertwiner $\W_q$ itself.
\end{lemma}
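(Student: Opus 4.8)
The plan is to reduce the estimate for the truncated operator to the $S$-boundedness of the \emph{untruncated} scaled intertwiner, by splitting the scaling parameter. First I would note that the factorization~\eqref{eq:scaledintertwinerfactorization} holds verbatim for the truncated operators, since $\iYtr{N}$ merely restricts the range of admissible modes and commutes with the splitting $q^{L_0/2}=q_1^{L_0/2}q_2^{L_0/2}$; this yields
\begin{align*}
\Wtrq{q}{N}(a,z)=q_1^{L_0/2}\,\Wtrq{q_2}{N}(q_1^{L_0/2}a,z)\,q_1^{L_0/2}\,,\qquad q=q_1q_2\,.
\end{align*}
The purpose of the splitting is that the inner parameter $q_2$ will be absorbed into the $S$-bounded operator $\W_{q_2}$ (via~\eqref{eq:Sboundednessscaledintertwiner}), while the outer factors $q_1^{L_0/2}$ supply the damping needed for summability. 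To exploit this I would invoke the operator-norm estimate~\eqref{eq:upperboundoperatornormO}, now applied with $B$ decomposed into its mutually orthogonal refined weight spaces $B_{n,\alpha}$ ($\alpha\in I_B$, $n\in\mathbb{N}_0$), so that $\norm{\Wtrq{q}{N}(a,z)}^2\le\sum_{\alpha\in I_B,\,n\ge 0}\sup_{\chi\in B_{n,\alpha},\,\norm{\chi}_B\le 1}\norm{\Wtrq{q}{N}(a,z)\chi}^2$.

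Next I would estimate each term. For a unit vector $\chi\in B_{n,\alpha}$ one has $q_1^{L_0/2}\chi=q_1^{(\alpha+n)/2}\chi$, whence $\Wtrq{q}{N}(a,z)\chi=q_1^{(\alpha+n)/2}\,q_1^{L_0/2}u$ with $u:=\Wtrq{q_2}{N}(q_1^{L_0/2}a,z)\chi$. By the level-containment property~\eqref{eq:containmenttruncatedscaledintertwiner}, $u$ lies in the orthogonal sum of the spaces $C_{n-m,\gamma}$ with $|m|\le N$, each of weight $\gamma+n-m\ge 0$; since $0<q_1<1$ the outer damping satisfies $q_1^{(\gamma+n-m)/2}\le 1$, so orthogonality of the $C_{n-m,\gamma}$ gives $\norm{q_1^{L_0/2}u}_C\le\norm{u}_C$. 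The crucial point is to bound $\norm{u}_C$ \emph{independently of} $N$: because truncation only discards terms of an orthogonal level decomposition, $\norm{\Wtrq{q_2}{N}(v,z)\chi}_C\le\norm{\W_{q_2}(v,z)\chi}_C$ holds for every $N$ (with equality at $N=\infty$), and then $S$-boundedness of $\W_{q_2}$ together with $\norm{q_1^{L_0/2}a}_A\le\norm{a}_A$ gives $\norm{u}_C\le\bnd(q_2,z)\,\norm{a}_A$. Collecting these estimates yields $\norm{\Wtrq{q}{N}(a,z)\chi}_C^2\le q_1^{\alpha+n}\,\bnd(q_2,z)^2\norm{a}_A^2$.

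Finally I would sum over the weight labels. Summing over $n\ge 0$ produces the geometric series $\sum_{n\ge 0}q_1^{n}=(1-q_1)^{-1}$, while summing the harmless factor $q_1^{\alpha}\le 1$ (weights are positive in unitary modules) over the finite offset set $I_B$ contributes at most $|I_B|$, so that $\norm{\Wtrq{q}{N}(a,z)}^2\le \tfrac{|I_B|}{1-q_1}\,\bnd(q_2,z)^2\norm{a}_A^2$, which is the claim; the choice $q_1=q_2=\sqrt q$ then gives~\eqref{eq:Sboundedtruncation}, and letting $N\to\infty$ recovers $\W_q$ itself. I expect the main obstacle to be exactly this $N$-independence: a naive triangle inequality over the $2N+1$ surviving level-shifts $|m|\le N$ would introduce a factor growing with $N$. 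It is avoided by never decomposing $u$ into its individual level components, but instead using orthogonality of the output levels to compare the truncated norm directly with the untruncated one, whose finiteness is already guaranteed by Proposition~\ref{prop:Sboundedfinitedimensional}.
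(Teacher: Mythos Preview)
Your proof is correct and follows essentially the same approach as the paper: both use orthogonality of weight spaces to conclude $\norm{\Wtrq{q}{N}(a,z)\chi}_C\le\norm{\W_q(a,z)\chi}_C$ for homogeneous $\chi$, then split $q=q_1q_2$, absorb $q_2$ into the $S$-boundedness of $\W_{q_2}$, and sum the remaining $q_1^{\alpha+n}$ over $(n,\alpha)$ to produce the geometric series and the factor $|I_B|$. The only cosmetic difference is that the paper first bounds the truncated by the untruncated operator and then factors, whereas you first factor the truncated operator and then compare; and the paper writes out the Cauchy--Schwarz step explicitly rather than invoking the packaged estimate~\eqref{eq:upperboundoperatornormO}.
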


\begin{proof}
  Let $b\in B_n$ be a homogeneous vector at level $n$. Since the weights spaces are orthogonal and $q^{L_0/2}$ does not change the weight, we find for the square of the norm
  \begin{align}
    \norm{\Wtrq{q}{N}(a,z)b}_C^2 = \sum_{\substack{\tau\in I, m\in \mathbb{Z}\\|m|\leq N}} \norm{q^{L_0/2} \iy(a)_{\tau,m} q^{L_0/2} b z^{-m-\tau-n}}_C^2 \,.
  \end{align}
  Each term $\norm{q^{L_0/2} \iy(a)_{\tau,m} q^{L_0/2} b z^{-m-\tau-n}}_C^2$ for $m \in \Z$ is positive, and hence we make the expression only larger if we drop the restriction $|m|\leq N$. But then
  \begin{align}\label{eq:boundtruncWbyW}
    \norm{\Wtrq{q}{N}(a,z)b}_C^2 \leq \sum_{\tau\in I, m\in \mathbb{Z}} \norm{q^{L_0/2} \iy(a)_{\tau,m} q^{L_0/2} b z^{-m-\tau-n}}_C^2 = \norm{\W_q(a,z)b}_C^2 \,,
  \end{align}
  by the same arguments as before. 

 Let us now consider a general~$b\in B$, and decompose it into homogeneous elements belonging to fixed levels as 
\begin{align*}
b&=\sum_{\substack{n\in\mathbb{N}_0\\
\beta\in I_B}} \mu_{n,\beta} b_{n,\beta}\qquad\textrm{ where }\mu_{n,\beta}\in\mathbb{C}\textrm{ and } b_{n,\beta}\in B_{n,\beta}\ ,\quad \|b_{n,\beta}\|_B=1\ ,
\end{align*}
implying $\|b\|_B^2=\sum_{\substack{n\in\mathbb{N}_0\\
\beta\in I_B}}|\mu_{n,\beta}|^2$. By~\eqref{eq:scaledintertwinerfactorization} we have for $a\in A$, and $q = q_1 \cdot q_2$ that
  \begin{align}
    \W_q(a,z)b = q^{L_0/2}_1 \, \W_{q_2}(q_1^{L_0/2}a,z) q^{L_0/2}_1 b\ .
  \end{align}
  Combining these observations with~\eqref{eq:boundtruncWbyW} we find using the Cauchy-Schwarz inequality
  \begin{align}
    \norm{\Wtrq{q}{N}(a,z)b}_C &\leq \sum_{\substack{n \in \Nl_0\\
\beta\in I_B}} |\mu_{n,\beta}| \cdot \norm{\Wtrq{q}{N}(a,z)b_{n,\beta}}_C\\
& \leq \norm{b}_B \cdot \left(\sum_{\substack{n\in \Nl_0\\
\beta\in I_B}} \norm{q_1^{L_0/2} \W_{q_2}(q_{1}^{L_0/2}a,z) q_1^{L_0/2} b_{n,\beta}}_C^2\right)^{\half}\,.
  \end{align}
  Applying the fact that $\W_{q_2}$ is a scaled intertwiner and hence bounded by Corollary~\ref{cor:boundednessscaledintertwiner}, as well as using $q_1^{L_0/2}b_{n,\beta} = q_1^{\beta/2}q_1^{n/2} b_{n,\beta}$ and $\norm{b_{n,\beta}}_B =1$ leads to
  \begin{align}
    \norm{\Wtrq{q}{N}(a,z)b}_C \leq \bnd(q_2,z) \norm{b}_B  \norm{a}_A\cdot \left(\sum_{\substack{n \in \Nl_0\\
\beta\in I_B}}  q_1^{\beta+n} \right)^\half \leq \sqrt{|I_B|}\cdot \norm{b}_B  \norm{a}_A\,\frac{\bnd(q_2,z)}{\sqrt{1-q_1}}\,,
  \end{align}
  since $q_1^\beta \leq 1$  (since $0<q_1<1$ and $I_B\subset \mathbb{R}_{\geq 0}$ for unitary modules) and the operator norm of~$q_1^{L_0/2}$ is bounded by one.
\end{proof}

Our main motivation for the introduction of  truncated scaled intertwiners is the fact that these operators do not increase the weight too much, and hence the image of a finite direct sum of weight spaces is still of that form, with adjusted parameters. As we will see below, this allows us to approximately restrict to a finite bond dimension when considering genus-$0$ or genus-$1$ correlation functions. The next step towards this goal is to establish an explicit error bound on the approximation of $\W_q$ by $\W_q^{[N]}$ (see  Theorem~\ref{thm:truncationestimate} below). The following lemma will be needed for this purpose. It formalizes the fact that elements of high weight are scaled down by operating with $q^{L_0}$.

\begin{lemma}\label{lem:weightspaceintermediateproj}
Consider an irreducible unitary module $A$, and let $P_{[n-N,n+N]}$ be the weight space projection onto \begin{align*}
\bigoplus_{n-N\leq m\leq n+N}A_m\ .
\end{align*} 
For $0<q<1$,  $N\in\mathbb{N}$, we have
\begin{align*}
\sum_{n\in \mathbb{N}_0}q^n\big\|(\mathbf{I}-P_{[n-N,n+N]})q^{L_0/2}\big\|^2\leq \,\,q^{N}\,\frac{3}{(1-q)^2}\,.
\end{align*}
\end{lemma}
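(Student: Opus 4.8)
The plan is to exploit the fact that, for an irreducible unitary module, every operator in sight is diagonal with respect to the weight decomposition $A=\bigoplus_{m\in\mathbb{N}_0}A_m$, so that the operator norm collapses to a supremum of scalar eigenvalues and the whole estimate reduces to summing two geometric series.

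First I would record the spectral picture. By~\eqref{eq:irreduciblemodule}, $L_0$ acts as the single scalar $h+m$ on the level $A_m$, and unitarity forces $L_0\geq 0$, hence $h\geq 0$. Thus $q^{L_0/2}$ is diagonal with eigenvalue $q^{(h+m)/2}$ on $A_m$, while $\mathbf{I}-P_{[n-N,n+N]}$ annihilates exactly the blocks with $n-N\leq m\leq n+N$ and acts as the identity on the rest. The composite $(\mathbf{I}-P_{[n-N,n+N]})q^{L_0/2}$ is therefore block-diagonal, equal to $q^{(h+m)/2}\,\mathrm{id}_{A_m}$ on the surviving blocks and $0$ on the projected-out ones, so its operator norm is the supremum of $q^{(h+m)/2}$ over the surviving indices $m$.

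The key observation is that $m\mapsto q^{(h+m)/2}$ is strictly decreasing (because $0<q<1$), so the supremum is attained at the \emph{smallest} surviving weight. The only care needed is the case distinction forced by the constraint $m\in\mathbb{N}_0$: for $n\leq N$ the lower part $\{m<n-N\}$ of the complement is empty, so the smallest surviving weight is $n+N+1$ and the norm equals $q^{(h+n+N+1)/2}$; for $n>N$ the complement re-contains the top level $A_0$, so the smallest surviving weight is $0$ and the norm equals $q^{h/2}\leq 1$. Splitting the sum accordingly and evaluating the two resulting geometric series gives
\begin{align*}
\sum_{n\in\mathbb{N}_0}q^n\big\|(\mathbf{I}-P_{[n-N,n+N]})q^{L_0/2}\big\|^2
&=\sum_{n=0}^{N}q^{n}\,q^{h+n+N+1}+\sum_{n>N}q^{n}\,q^{h}\\
&\leq q^{N+1}\Big(\frac{1}{1-q^{2}}+\frac{1}{1-q}\Big),
\end{align*}
where I used $q^{h}\leq 1$ and bounded the finite and infinite geometric sums from above. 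Since $q\leq 1$ and $1-q^{2}=(1-q)(1+q)\geq (1-q)^{2}$, each of the two terms is at most $q^{N}(1-q)^{-2}$, yielding $q^{N}\,\tfrac{2}{(1-q)^{2}}\leq q^{N}\,\tfrac{3}{(1-q)^{2}}$, as claimed (the factor $3$ leaving some slack).

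The main difficulty here is not conceptual but organizational: correctly pinning down the smallest surviving weight in each of the two regimes $n\leq N$ and $n>N$ — in particular noticing that for large $n$ the top level $A_0$ re-enters the complement and \emph{caps} the norm at $q^{h/2}\leq 1$ rather than letting it keep decaying — and then checking that the geometric sums genuinely produce the advertised $q^{N}$ prefactor.
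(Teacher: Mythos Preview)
Your proof is correct and follows essentially the same approach as the paper's: exploit the diagonal structure of $(\mathbf{I}-P_{[n-N,n+N]})q^{L_0/2}$ with respect to the level decomposition, split into the regimes $n\leq N$ and $n>N$, and sum the resulting geometric series. Your execution is in fact slightly sharper than the paper's, since you compute the operator norm directly as the supremum of the diagonal eigenvalues, whereas the paper bounds it via an arbitrary unit vector $a=\sum_m a_m$ using $\|a_m\|\leq 1$ for each $m$ separately, which introduces an extra factor of $\tfrac{1}{1-q}$ in the intermediate estimate and accounts for the constant $3$ rather than your $2$.
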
 

\begin{proof}
Consider an arbitrary element $a\in A$ of unit norm $\|a\|_A=1$ decomposed as
\begin{align*}
a&=\sum_{m\in \mathbb{N}_0}a_m\qquad\textrm{ where }a_m\in A_m\ .
\end{align*}
Then (since $L_0$ is compatible with the grading), we have 
\begin{align*}
(\mathbf{I}-P_{[n-N,n+N]})q^{L_0/2}a&=\sum_{\substack{m\in\mathbb{N}_0\\
m<n-N}}q^{L_0/2}a_m+\sum_{\substack{m\in\mathbb{N}_0\\
m>n+N}}q^{L_0/2}a_m
\end{align*}
and thus
\begin{align*}
\big\|(\mathbf{I}-P_{[n-N,n+N]})q^{L_0/2}a\big\|^2_A&=
 \sum_{\substack{m\in\mathbb{N}_0\\
m<n-N}}\|q^{L_0/2}a_m\|^2_A+\sum_{\substack{m\in\mathbb{N}_0\\
m>n+N}}\|q^{L_0/2}a_m\|^2_A\,.
\end{align*}
Here we used the fact that the projector $(\mathbf{I}-P_{[n-N,n+N]})$ leaves the grading invariant as well as the assumption of irreducibility to conclude that elements \(a_m \in A_m\), \(a_{m^\prime} \in A_{m^\prime}\) have different weights for \(m \neq m^\prime\) and hence are orthogonal. These weights are then of the form \(h+m\), \(h\geq0\) and \(m \in \Nl_0\). Using that $\|a_m\|_A\leq  \|a\|_A=1$, we have 
\begin{align*}
\sum_{\substack{m\in\mathbb{N}_0\\
m<n-N}}\|q^{L_0/2}a_m\|^2_A&=\begin{cases}
\sum_{m=0}^{n-N-1}\|q^{L_0/2}a_m\|^2_A\leq q^{h} \cdot \sum_{m=0}^{n-N-1}q^{m}
\qquad & \textrm{ if }n> N\\
0 & \textrm{otherwise}
\end{cases}\ ,
\end{align*}
that is, with~$\sum_{m=0}^{n-N-1}q^{m}\leq\frac{1}{1-q}$ 
\begin{align*}
\sum_{\substack{m\in\mathbb{N}_0\\
m<n-N}}\|q^{L_0/2}a_m\|^2_A&\leq \begin{cases}
\frac{q^h}{1-q}\qquad & \textrm{ if }n> N\\
0 & \textrm{otherwise}
\end{cases}\,.
\end{align*}
Similarly, we can bound
\begin{align*}
\sum_{\substack{m\in\mathbb{N}_0\\
m>n+N}}\|q^{L_0/2}a_m\|^2_A&\leq q^h \sum_{\substack{m\in\mathbb{N}_0\\m>n+N}} q^{m}\leq q^h \sum_{m=n+N+1}^\infty q^{m}\ .
\end{align*} 
But since we have $\sum_{m=n+N+1}^\infty q^{m}= q^{n+N+1}\sum_{m=0}^\infty q^{m}=\frac{q^{n+N+1}}{1-q}$, we obtain
\begin{align*}
  \big\|(\mathbf{I}-P_{[n-N,n+N]})q^{L_0/2}a\big\|^2_A &\leq \,\frac{q^h}{1-q}\,
  \begin{cases} 
    (1+q^{n+N+1}) \, \qquad&\textrm{ if }n> N\\
    q^{n+N+1} \, &\textrm{ otherwise}\ .
  \end{cases}
\end{align*}
From this expression, we get the bounds
\begin{align*}
\sum_{n\in\mathbb{N}_0}q^n\big\|(\mathbf{I}-P_{[n-N,n+N]})q^{L_0/2}a\big\|_A^2
&\leq \,\frac{q^h}{1-q} \,\left[ \sum_{n> N} (1+q^{n+N+1}) q^n + \sum_{n=0}^{N-1} q^{2n+N+1}\right] \nonumber\\
&= \,\frac{q^h}{1-q} \,\left[\frac{q^{N+1}}{1-q} + \frac{q^{3(N+1)}}{1-q} + q^{N+1} \frac{1-q^{2N}}{1-q^2}\right]\\
&\leq\,q^{N+1}\,\frac{3}{(1-q)^2}
\end{align*}
where we used that \(q^h<1\) since $q<1$. The claim follows since $a\in A$ with $\|a\|_A=1$ was arbitrary. 
\end{proof}

Our central result is the following bound on the approximation error (in operator norm). 

\begin{theorem}\label{thm:truncationestimate}
Let  $\Wtrq{q}{N}$ be the $N$-th level truncation of a scaled intertwiner~$\W_q$ of type $\itype{C}{A}{B}$.
Let~$\bnd$ be the boundedness parameter of $\W_q$ introduced in Definition~\ref{def:boundedintertwiner}.  Fix
$z\in\mathbb{C}\backslash\{0\}$,
$0<q<\min\{|z|^2,1/|z|^2\}$ and $0<q_1,q_2<1$ satisfying $q_1\cdot q_2 =q$.
Then there is a constant $\kappa>0$, depending only on the three modules $A,B,C$ such that we have
\begin{align}\label{eq:normboundtrct}
   \|\W_{q}(a,z) - \Wtrq{q}{N}(a,z)\| \leq \norm{a}_A \cdot\kappa\cdot \bnd(q_2,z) \,q_1^{N/2}\,\frac{1}{1-q_1}
\end{align}
for all $a\in \Shom$ and $b\in B$. In particular, setting $q_1=q_2=\sqrt{q}$, we get the bound 
\begin{align}
\|\W_{q}(a,z) - \Wtrq{q}{N}(a,z)\| &\leq 
\|a\|_A\cdot \kappa\cdot\bnd(\sqrt{q},z) \,q^{N/4}\,\frac{1}{1-\sqrt{q}} \,.\label{eq:truncationwwnbound}
\end{align}
\end{theorem}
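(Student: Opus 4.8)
The plan is to reduce the operator-norm difference to a level-weighted sum of projection norms that is controlled by Lemma~\ref{lem:weightspaceintermediateproj}, while the $S$-boundedness of a scaled intertwiner supplies the remaining multiplicative factor. The starting point is the observation that truncation acts as a level projection. Fix a homogeneous $a\in\Shom$ and a homogeneous $b\in B_{n,\alpha}$ at level $n$. By Definition~\ref{def:truncatedvertex} together with the level-containment $\iy(a)_{\tau,m}B_n\subset C_{n-m}$ from~\eqref{eq:containementinequalitylevelsintertw}, the modes retained by $\iYtr{N}$ are exactly those shifting the $C$-level into the window $[n-N,n+N]$; since $q^{L_0/2}$ preserves levels (cf.~\eqref{eq:containmenttruncatedscaledintertwiner}), this yields the exact identity
\begin{align*}
\left(\W_q(a,z)-\Wtrq{q}{N}(a,z)\right)b = \left(\mathbf{I}-P_{[n-N,n+N]}\right)\W_q(a,z)b\,,
\end{align*}
where $P_{[n-N,n+N]}$ denotes the projection onto the $C$-levels in $[n-N,n+N]$. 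Thus the truncation error is precisely the part of the image that falls outside a width-$2N$ window about the input level.

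Next I would split $q=q_1q_2$ and invoke the factorization~\eqref{eq:scaledintertwinerfactorization}, writing $\W_q(a,z)b = q_1^{L_0/2}\W_{q_2}(q_1^{L_0/2}a,z)q_1^{L_0/2}b$. For $b\in B_{n,\alpha}$ the inner factor reduces to a scalar, $q_1^{L_0/2}b = q_1^{(\alpha+n)/2}b$, so combining with the identity above and keeping the operator $(\mathbf{I}-P_{[n-N,n+N]})q_1^{L_0/2}$ intact (this is the object to which Lemma~\ref{lem:weightspaceintermediateproj} applies), I would estimate
\begin{align*}
\left\|\left(\W_q-\Wtrq{q}{N}\right)(a,z)b\right\|_C \le q_1^{(\alpha+n)/2}\,\left\|(\mathbf{I}-P_{[n-N,n+N]})q_1^{L_0/2}\right\|\cdot\bnd(q_2,z)\,\norm{a}_A\,,
\end{align*}
using the $S$-boundedness of $\W_{q_2}$ (Corollary~\ref{cor:boundednessscaledintertwiner}), the bound $\norm{q_1^{L_0/2}a}_A\le\norm{a}_A$, and $\norm{b}_B=1$.

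For a general $b=\sum_{n,\alpha}\mu_{n,\alpha}b_{n,\alpha}$ I would apply the triangle inequality and then Cauchy--Schwarz over the pairs $(n,\alpha)$, which factorizes the bound as $\bnd(q_2,z)\norm{a}_A\norm{b}_B$ times $\bigl(\sum_\alpha q_1^\alpha\bigr)^{1/2}\bigl(\sum_n q_1^n\|(\mathbf{I}-P_{[n-N,n+N]})q_1^{L_0/2}\|^2\bigr)^{1/2}$. The first sum is at most $|I_B|$ since $q_1^\alpha\le1$. For the second, I would decompose $C$ into its finitely many irreducible summands: $(\mathbf{I}-P_{[n-N,n+N]})q_1^{L_0/2}$ is block-diagonal across them, its norm being the maximum over blocks, so Lemma~\ref{lem:weightspaceintermediateproj} (applied to each irreducible block) gives $\sum_n q_1^n\|\cdots\|^2\le(\text{number of summands of }C)\cdot q_1^{N}\tfrac{3}{(1-q_1)^2}$. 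Absorbing $\sqrt{3\,|I_B|\,(\text{number of summands of }C)}$ into a constant $\kappa$ depending only on $A,B,C$ then yields~\eqref{eq:normboundtrct}, and the choice $q_1=q_2=\sqrt q$ gives~\eqref{eq:truncationwwnbound}.

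I expect the main difficulty to be organizational rather than a single sharp estimate. The projection window $[n-N,n+N]$ moves with the input level $n$, so the reduction to Lemma~\ref{lem:weightspaceintermediateproj} only works after one first disposes of homogeneous inputs and only then reassembles the general case via Cauchy--Schwarz; attempting to bound $\|(\W_q-\Wtrq{q}{N})(a,z)\|$ directly would entangle windows at different levels. The second subtlety is that Lemma~\ref{lem:weightspaceintermediateproj} is stated for irreducible modules, whereas $C$ need not be irreducible; handling this by summing over irreducible components is precisely what produces the module-dependent constant $\kappa$. A minor point worth flagging is that $\bnd(q_2,z)$ is finite only for $q_2<\min\{|z|^2,1/|z|^2\}$, so the bound is informative exactly in that regime, exactly as in Lemma~\ref{lem:boundednessparametertruncated}.
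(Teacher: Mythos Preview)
Your proposal is correct and follows essentially the same route as the paper: establish the projection identity $\Wtrq{q}{N}(a,z)b_n=P_{[n-N,n+N]}\W_q(a,z)b_n$ for homogeneous inputs, factor $q=q_1q_2$ via~\eqref{eq:scaledintertwinerfactorization}, bound the $\W_{q_2}$ factor by $S$-boundedness, pass to general $b$ by Cauchy--Schwarz, and invoke Lemma~\ref{lem:weightspaceintermediateproj} for the weighted projection sum. The only cosmetic difference is bookkeeping for the non-irreducible case: the paper first proves the irreducible case completely and then projects all three modules onto irreducible components (getting $\kappa=\sqrt{3}\,|I|^{3/2}$), whereas you fold the $B$-decomposition into the $(n,\alpha)$ weight sum and decompose only $C$ to apply Lemma~\ref{lem:weightspaceintermediateproj}; both produce a module-dependent constant absorbed into~$\kappa$.
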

\noindent Before presenting the proof, let us briefly discuss the bound. The function
\begin{align}
  \apperr(q,z):=\frac{\kappa\cdot\bnd(\sqrt{q},z)}{1-\sqrt{q}}
\end{align}
is finite for all $0<q<1$ as well as independent of the truncation parameter $N$. Hence we get an exponentially fast convergence of the truncated scaled intertwiner to its non-truncated version with respect to the operator norm as we increase the truncation parameter. Recalling from Observations~\ref{obs:conformalmapcorrfuncplane} and~\ref{obs:conformalmapcorrfunctorus} that $q=e^{-\mathsf{d}}$, with $\mathsf{d}$ being the minimal distance between insertion points (or the ultraviolett cutoff), we see that \(\mathsf{d}\) determines the speed of the exponential convergence.

\begin{proof} 
  We first restrict to the case where $A,B,C$ are irreducible modules. As we will argue below, the general case follows since any module can be decomposed into finitely many irreducible ones, as the VOA is rational. We first establish a relationship (see~\eqref{eq:wqnprojckt} below) between
$W_q(a,z)b$ and $W_q^{[N]}(a,z)b$: we will show that, for homogeneous $a\in \Shom$ and $b\in B_n$ belonging to a fixed level~$n$, the latter is obtained by applying a weight projection~$P_{[n-N,n+N]}$ to the former. 
To do so, we  use the definition of scaled intertwiners.
For homogeneous $a\in S\cap A_r$ at level $r$ we have 
$q^{L_0/2}a\in S \cap A_r$ and thus, writing out the definition of scaled intertwiners
\begin{align*}
\W_q(a,z)b&=\sum_{m\in \mathbb{Z}}q^{L_0/2}\iy(q^{L_0/2}a)_{\tau,m} z^{-m-\tau-r}q^{L_0/2}b\,,
\end{align*}
where $\tau=h_A+h_B-h_C$ is a fixed real number depending on the irreducible modules $A,B,C$, see Sect.~\ref{sec:intertwiners}. Each term in the sum has the form
\begin{align*}
z^{-m-\tau-r} q^{L_0/2}\iy(q^{L_0/2}a)_{\tau,m}q^{L_0/2}b&=q^{(\wt a)/2}z^{-m-\tau-r} q^{L_0/2}\iy(a)_{\tau,m}q^{L_0/2}b\ ,
\end{align*}
that is, we have
\begin{align}
\W_q(a,z)b&=q^{(\wt a)/2} \sum_{m\in \mathbb{Z}}  q^{L_0/2}\iy(a)_{\tau,m}q^{L_0/2}b \,z^{-m-\tau-r}\\
\W^{[N]}_q(a,z)b&=q^{(\wt a)/2} \sum_{\substack{m\in \mathbb{Z}\\
|m|\leq N}}q^{L_0/2}\iy(a)_{\tau,m}q^{L_0/2}b \,z^{-m-\tau-r}
\label{eq:partialsumvbq}
\end{align}
where we applied the same reasoning to~$\W^{[N]}_q(a,z)b$. Since $L_0$ leaves the levels invariant, we have
\begin{align*}
 q^{L_0/2}\iy(a)_{\tau,m}q^{L_0/2}B_n\subset B_{n-m}\qquad\textrm{ for all }n\in\mathbb{N}_0\textrm{ and }m\in \mathbb{Z}\ ,
\end{align*}
according to~\eqref{eq:containementinequalitylevelsintertw} with the convention that \(B_{n^\prime} = 0\) for \(n^\prime<0\).
 In particular, this means that for every $b_n\in B_n$, we have 
\begin{align*}
P_{[n-N,n+N]} q^{L_0/2}\iy(a)_{\tau,m}q^{L_0/2}b_n=\begin{cases}
q^{L_0/2}\iy(a)_{\tau,m}q^{L_0/2}b_n\qquad& \textrm{ if } |m|\leq N\\
0 & \textrm{ otherwise}\ .
\end{cases}
\end{align*}
Thus applying $P_{[n-N,n+N]}$ to~\eqref{eq:partialsumvbq} gives the identity
\begin{align}
  W_q^{[N]}(a,z) b_n=P_{[n-N,n+N]}\W_q(a,z) b_n\qquad\textrm{ for all }b_n\in B_n\ , \label{eq:wqnprojckt}
\end{align}
where $n\in\mathbb{N}_0$ is arbitrary. Since this identity is independent of the level $r$ of $a$, and the scaled intertwiner as well as its truncated version are linear in $a$, it extends to the whole of $\Shom$. 

Consider now a general $b\in B$, and decompose it into elements belonging to fixed levels as 
\begin{align*}
b&=\sum_{n\in\mathbb{N}_0} \mu_n b_n\qquad\textrm{ where }\mu_n\in\mathbb{C}\textrm{ and } b_n\in B_n\ ,\quad \|b_n\|_B=1\ ,
\end{align*}
that is, $\|b\|_B^2=\sum_{n\in\mathbb{N}_0}|\mu_n|^2$. 
Then~\eqref{eq:wqnprojckt} gives
\begin{align}
\big\|\W_q(a,z)b-\W_q^{[N]}(a,z)b\big\|_C &=
\left\|\sum_{n\in\mathbb{N}_0}\mu_n (\mathbf{I}-P_{[n-N,n+N]})\W_q(a,z)b_n\right\|_C\nonumber\\
&\leq\sum_{n\in\mathbb{N}_0}|\mu_n|\cdot \big\|(\mathbf{I}-P_{[n-N,n+N]})\W_q(a,z)b_n\big\|_C\nonumber\\
&\leq \|b\|_B \cdot \left(\sum_{n\in\mathbb{N}_0}
\big\|(\mathbf{I}-P_{[n-N,n+N]})\W_q(a,z) b_n\big\|_C^2\right)^{1/2} \label{eq:normdifferenceboundxx}
\end{align}
where we used the Cauchy-Schwarz inequality in the last step. 
For $a\in A$, we have   for $q = q_1 \cdot q_2$  that
with~\eqref{eq:scaledintertwinerfactorization}
\begin{align*}
(\mathbf{I}-P_{[n-N,n+N]})\W_q(a,z)b_n
&=(\mathbf{I}-P_{[n-N,n+N]})q^{L_0/2}_1 \, \W_{q_2}(q_1^{L_0/2}a,z) q^{L_0/2}_1 b_n\ .
\end{align*}
With the norm bound $\|Xv\|_C\leq \|X\|\cdot \|v\|_C$, applied to $X=(\mathbf{I}-P_{[n-N,n+N]})q^{L_0/2}_1 $ and $v=\W_{q_2}(q_1^{L_0/2}a,z) q^{L_0/2}_1 b_n$, we  get
\begin{align}
  &\big\|(\mathbf{I}-P_{[n-N,n+N]})\W_{q_2}(a,z)b_n\big\|_C\leq 
\big\|(\mathbf{I}-P_{[n-N,n+N]})q_1^{L_0/2}\big\|\cdot \big\|\W_{q_2}(q_1^{L_0/2}a,z) q^{L_0/2}_1 b_n\big\|_C
\end{align}
The second norm factor can be upper bounded using the fact that $\W_{q_2}$ is a scaled intertwiner, which is bounded. This gives
\begin{align}
  \norm{(\mathbf{I}-P_{[n-N,n+N]})\W_{q_2}(q_1^{L_0/2} a,z)b_n}^2_C \leq \bnd(q_2,z)^2 \norm{a}_A^2 \,q_1^n \big\|(\mathbf{I}-P_{[n-N,n+N]})q_1^{L_0/2}\big\|^2
\end{align}
where we used that $\norm{q_1^{L_0/2}}$ is bounded by \(1\) as well as that $q^{L_0/2}_1 b_n = q_1^{(h_B+n)/2} b_n$ and $\norm{b_n}_B=1$. Evaluating the sum over $n \in \Nl_0$ as needed by~\eqref{eq:normdifferenceboundxx} and using Lemma~\ref{lem:weightspaceintermediateproj} gives 
\begin{align}
  \big\|\W_q(a,z)b-\W_q^{[N]}(a,z)b\big\|_C \leq  \norm{a}_A \norm{b}_B \,\bnd(q_2,z) \,q_1^{N/2}\,\frac{\sqrt{3}\,q_1^{h_B/2}}{1-q_1} \,,
\end{align}
for irreducible modules $A,B,C$. Since \(q_1<1\), this proves the claim for irreducible modules of highest weights \(h_A, h_B, h_C\). 

Consider now three arbitrary modules, which we again denote by $A,B,C$. Each module can be decomposed into a finite direct sum of irreducible ones,
\begin{align}
  A = \bigoplus_{\alpha} A[\alpha]\,,\quad B = \bigoplus_\beta B[\beta]\,,\quad C=\bigoplus_\gamma C[\gamma] \,.
\end{align}
Let $|I|$ denote the maximal number of irreducible modules appearing in these decompositions. This is well-defined since the VOA is rational. Denoting by $Q_{D[\delta]}$ the projector onto the irreducible component, $D \in \{A,B,C\}$, $\delta \in \{\alpha,\beta,\gamma\} $, we find since the irreducible modules are orthogonal
\begin{align}
  &\big\|\W_q(a,z)b-\W_q^{[N]}(a,z)b\big\|^2_C = \sum_\gamma \big\| \sum_{\alpha,\beta} Q_{C[\gamma]} \left(\W_q(Q_{A[\alpha]}a,z)-\W_q^{[N]}(Q_{A[\alpha]}a,z)\right)\,Q_{B[\beta]}b\big\|^2_C \nonumber\\
  &\qquad\quad\leq \sum_\gamma \left(\sum_{\alpha,\beta} \big\|\left(Q_{C[\gamma]}\W_q(Q_{A[\alpha]}a,z)Q_{B[\beta]} -Q_{C[\gamma]}\W_q^{[N]}(Q_{A[\alpha]}a,z)Q_{B[\beta]} \right)\,Q_{B[\beta]}b\big\|_C \right)^2\,.
\end{align}
Now each of the operators $Q_{C[\gamma]}\W_q(Q_{A[\alpha]}a,z)Q_{B[\beta]}$ (resp. $Q_{C[\gamma]}\W_q^{[N]}(Q_{A[\alpha]}a,z)Q_{B[\beta]}$) is a scaled intertwiner between irreducible modules (resp. its truncated version) and hence our previous bound applies. Moreover, we have
\begin{align}
  \norm{Q_{C[\gamma]}\W_q(Q_{A[\alpha]}a,z)Q_{B[\beta]}} \leq \norm{\W_q(Q_{A[\alpha]}a,z)} \leq \bnd(q,z) \norm{Q_{A[\alpha]}a}_A \leq \bnd(q,z) \norm{a}_A \,,
\end{align}
since projectors have operator norm equal to one. Inserting this and using Cauchy-Schwarz once for the sum over the indices $\alpha,\beta$ leads to 
\begin{align}
  \big\|\W_q(a,z)b-\W_q^{[N]}(a,z)b\big\|^2_C \leq 3 |I|^2 \sum_{\alpha,\beta,\gamma} \norm{Q_{A,\alpha}a}_A^2 \norm{Q_{B,\beta}b}_B^2  \,\bnd(q_2,z)^2\,q_1^{N}\,\frac{q_1^{h_{\gamma}}}{(1-q_1)^2} \,.
\end{align}
But since $q_1^{h_{\gamma}} \leq 1$ and $\sum_\alpha \norm{Q_{A,\alpha} a}_A^2 = \norm{a}_A^2$ and similarily for $b$, the result follows with $\kappa = \sqrt{3}\,|I|^{3/2}$. 
\end{proof}

\subsection{Approximating correlation functions}

In this section, we use the notion of bounded intertwiners to establish error bounds on the approximation of correlation functions.  We proceed in two steps: in Section~\ref{sec:errorboundstruncintertwiner}, we establish bounds on the approximation accuracy when replacing intertwiners by their truncated versions in the definition of the transfer operator $\Top$. In Section~\ref{sec:finitebonddimensionerror}, we additionally project onto a finite-dimensional subspace to obtain bounds for the approximation by MPS (or finitely correlated states) with finite bond dimension. 

As we are only interested in approximation statements in this section, we only consider the case where the formal variables are replaced by complex numbers, or more precisely a single one, denoted again by $z$ and assumed to satisfy $\min\{|z|^2,|z|^{-2}\} > q >0$. Of course, we could also first define the truncated transfer operator as a formal polynomial, from which an operator is then obtained by fixing the values of the indeterminates. However, we felt that this approach is of limited added value.

\subsubsection{Truncated transfer operators and error bounds \label{sec:errorboundstruncintertwiner}}
In analogy to Definition~\ref{def:transferoperators}, we can define transfer operators which are truncated at some level~$N$. The corresponding definition is the following. 
\begin{definition}[Truncated transfer operators]\label{def:transferoperatorstruncated}
  Let $A^{(i)}$, $i=1,\ldots,n$ and $B^{(i)}$, $i=0,\ldots,n$ be unitary modules of a VOA~$\cV$. Fix some $z\in \C \setminus \{0\}$ and some $0<q<\min\{|z|^2,|z|^{-2}\}$. For $i=1,\ldots,n$, let $\iY_i$ an $S^{(i)}$-bounded intertwiner of  type $\itype{B^{(i-1)}}{A^{(i)}}{B^{(i)}}$, and let 
\begin{align*}
  \W_i^{[N]}=\W^{[N]}_{i,q}(\cdot,z):S^{(i)}\rightarrow \End(B^{(i)},B^{(i-1)})
\end{align*} be the associated truncated scaled intertwiner.  For $a_i\in \Shom^{(i)}$, $i=1,\ldots,n$, define $\Top^{[N]}:B^{(n)}\rightarrow B^{(0)}$ by
\begin{align}
  \Top^{[N]}&=\W^{[N]}_1(a_1,z) \circ \W^{[N]}_{2}(a_{2},z) \circ \cdots \circ  \W^{[N]}_n(a_n,z)\,. \label{eq:operatorz1def}
\end{align}
Then $\Top^{[N]}$ is called the {\em transfer operator with insertions $\{a_i\}_{i=1}^n$, truncated at level $N$} or simply the {\em truncated transfer operator}. 
\end{definition}
We stress that $\Top^{[N]}$ is by itself not an operator acting on  finite-dimensional spaces; further steps will be needed to arrive at a finite-dimensional MPS. The main question we address in this section is how the truncated operator $\Top^{[N]}$ can be interpreted as an approximation to the transfer operator~$\Top$ introduced in Definition~\ref{def:transferoperators}. The following is the key technical result. 

\begin{lemma}[Approximation by truncated transfer operators]\label{lem:normboundtransferop}
  Let $\Top$ be the transfer operator with insertions $a_i \in \Shom^{(i)}$, \(i=1,\ldots,n\)  and parameters $(q,z)$ as in Lemma~\ref{lem:boundedtransferop}, and let $\Top^{[N]}$ be the associated truncated transfer operator as in Definition~\ref{def:transferoperatorstruncated}. Then we have
\begin{align*}
\|\Top-\Top^{[N]}\|\leq  
q^{N/4} \cdot \left[n\cdot \kappa\cdot \left(\max_{1\leq j \leq n}\frac{\sqrt{|I_{B^{(j)}}|}\cdot\bnd_j(\sqrt{q},z)}{1-\sqrt{q}}\right)^n\right] \cdot \left(\prod_{i=1}^n \|a_i\|_{A_i}\right) \,,
\end{align*}
where $\|\cdot\|$ is the operator norm. 
\end{lemma}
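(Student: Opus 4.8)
The plan is to treat $\Top$ and $\Top^{[N]}$ as two ordered products of $n$ operators acting between the bond modules $B^{(i)}$, and to bound their difference by a standard telescoping argument. Writing $T_i=\W_i(a_i,z)$ and $T_i^{[N]}=\W_i^{[N]}(a_i,z)$ for the $i$-th scaled, respectively scaled truncated, intertwiner evaluated at the common complex number $z$, I would start from the identity
\begin{align}
\Top-\Top^{[N]} = \sum_{k=1}^n T_1^{[N]}\circ\cdots\circ T_{k-1}^{[N]}\circ\bigl(T_k-T_k^{[N]}\bigr)\circ T_{k+1}\circ\cdots\circ T_n\,,
\end{align}
which rewrites the difference of the two products as a sum of $n$ terms, each containing exactly one factor of the form $T_k-T_k^{[N]}$, with untruncated factors to its right and truncated factors to its left.

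Applying the triangle inequality together with submultiplicativity of the operator norm under composition then gives
\begin{align}
\|\Top-\Top^{[N]}\|\leq \sum_{k=1}^n\Bigl(\prod_{i<k}\|T_i^{[N]}\|\Bigr)\,\bigl\|T_k-T_k^{[N]}\bigr\|\,\Bigl(\prod_{i>k}\|T_i\|\Bigr)\,.
\end{align}
Each of the three kinds of factor is now controlled by a result already established. The norm of the single difference factor is bounded by Theorem~\ref{thm:truncationestimate} in the symmetric form~\eqref{eq:truncationwwnbound}, which supplies the decisive power $q^{N/4}$ together with the constant $\kappa$; while the norms of both the truncated factors $T_i^{[N]}$ and the untruncated factors $T_i$ are bounded by Lemma~\ref{lem:boundednessparametertruncated}, using crucially that the estimate~\eqref{eq:Sboundedtruncation} is stated to hold also in the case $N=\infty$, i.e.\ for $\W_q$ itself. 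Thus every surviving factor carries a bound of the shape $\|a_i\|_{A^{(i)}}\cdot\sqrt{|I_{B^{(i)}}|}\,\bnd_i(\sqrt q,z)/\sqrt{1-\sqrt q}$.

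It then remains to coalesce these per-factor estimates into the uniform expression in the statement. I would pull out the common product $\prod_i\|a_i\|_{A^{(i)}}$ and the prefactor $\kappa\,q^{N/4}$, and bound each remaining coefficient by the maximum $M:=\max_{1\le j\le n}\sqrt{|I_{B^{(j)}}|}\,\bnd_j(\sqrt q,z)/(1-\sqrt q)$. Two elementary observations handle the bookkeeping: since $0<\sqrt q<1$ one has $1-\sqrt q\leq\sqrt{1-\sqrt q}$, so the denominator $\sqrt{1-\sqrt q}$ appearing in Lemma~\ref{lem:boundednessparametertruncated} may be replaced by the smaller $1-\sqrt q$; and since each weight-offset set $I_{B^{(k)}}$ is nonempty, $\sqrt{|I_{B^{(k)}}|}\geq 1$, so the residual coefficient $\bnd_k(\sqrt q,z)/(1-\sqrt q)$ of the difference factor is itself at most $M$. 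Consequently each of the $n$ summands is bounded by $\kappa\,q^{N/4}\,\bigl(\prod_i\|a_i\|_{A^{(i)}}\bigr)\,M^{n}$, and summing the $n$ terms yields exactly the claimed bound.

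The argument is essentially bookkeeping once the three input estimates are in place, so there is no genuine obstacle; the only point requiring care is to confirm that the norm bound of Lemma~\ref{lem:boundednessparametertruncated} legitimately applies to the untruncated factors $T_i=\W_i(a_i,z)$ as well, through its $N=\infty$ clause, so that a single uniform per-factor bound can be invoked regardless of whether a given factor in the telescoping sum is truncated or not.
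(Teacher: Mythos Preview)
Your proof is correct and follows essentially the same approach as the paper's: a telescoping decomposition of the product, Theorem~\ref{thm:truncationestimate} for the single difference factor, Lemma~\ref{lem:boundednessparametertruncated} (including its $N=\infty$ clause) for the remaining factors, and the elementary inequalities $1-\sqrt q\le\sqrt{1-\sqrt q}$ and $|I_{B^{(k)}}|\ge 1$ to pass to the uniform maximum. The only cosmetic difference is the orientation of your telescoping sum (truncated factors on the left, untruncated on the right) versus the paper's (untruncated on the left, truncated on the right), which is immaterial.
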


Before we proceed to the proof, let us discuss the dependence of the bound on the parameters. We see that the expression in the square brackets
\begin{align}
  \Delta(n,q,z) := n\cdot \kappa\cdot \left(\max_{1\leq j \leq n}\frac{\sqrt{|I_{B^{(j)}}|}\cdot\bnd_j(\sqrt{q},z)}{1-\sqrt{q}}\right)^n\label{eq:deltanqzdefa}
\end{align}
is independent of the truncation parameter $N$, and thus constant for a fixed number of insertions~$n$ and any $0<q<1$. Hence we  get exponentially fast convergence (in the truncation parameter $N$) of the truncated transfer operator to the original one. The speed of convergence is again governed by the value of~$q$, or equivalently, by the minimal distance between insertion points on the plane or the periodic strip. 

\begin{proof}
Defining
\begin{align*}
  \Omega^{(\ell)}_n&=\hat{\W}^{(\ell)}_n(a_n,z)\in \End(B^{(n)},B^{(n-1)})\\
  \Omega^{(\ell)}_k&=\hat{\W}^{(\ell)}_k(a_k,z) \circ \Omega_{n+1}^{(\ell)}\in \End(B^{(k)},B^{(k-1)})\qquad\textrm{ for } k=n-1,\ldots,1\ ,
\end{align*}
where
\begin{align*}
\hat{\W}^{(\ell)}_n&=\begin{cases}
\W_n \qquad &\textrm{ for }n\leq \ell\\
\W_n^{(N)} &\textrm{ for }n>\ell\ ,
\end{cases}
\end{align*}
we have $\Omega^{(0)}_1=\Top^{[N]}$ and $\Omega^{(n)}_1=\Top$ (and more generally, $\Omega^{(\ell)}_1$ contains $\ell$~non-truncated intertwiners and $n-\ell$~truncated ones).
Using the telescoping sum, we get the upper bound
\begin{align}
\|\Omega_1^{(n)}-\Omega_1^{(0)}\|&\leq \sum_{\ell=1}^{n}\|\Omega_1^{(\ell)}-\Omega_1^{(\ell-1)}\|\label{eq:telescopingsum}
\end{align}
For notational simplicity, let us first introduce the following set of abbreviations
\begin{align}
  \apperr_j(q,z) := \frac{\kappa\,\bnd_j(\sqrt{q},z)}{1-\sqrt{q}} \,,\qquad
  \bndt_j(q,z) := \frac{\sqrt{|I_{B^{(j)}}|}\cdot\bnd_j(\sqrt{q},z)}{\sqrt{1-\sqrt{q}}}\,,
\end{align}
where $\bnd_j(q,z)$ is the boundedness parameter of the $j$-th scaled intertwiner in the definition of the transfer operator $\Top$. We proceed to bound the differences $\|\Omega_1^{(\ell)}-\Omega_1^{(\ell-1)}\|$. For $\ell=1$, we have
\begin{align*}
  \Omega_1^{(1)}-\Omega_1^{(0)}&=\W_1(a_1,z) \circ \W_2^{[N]}(a_2,z) \circ \cdots \circ \W_n^{(N)}(a_n,z)\\
  &\quad-\W^{[N]}_1(a_1,z)\circ \W_2^{[N]}(a_2,z)\circ \cdots \circ \W_n^{(N)}(a_n,z)\ ,
\end{align*}
hence applying
Theorem~\ref{thm:truncationestimate} and Lemma~\ref{lem:boundednessparametertruncated} (the latter $n-1$-times) yields
\begin{align}
  \|\Omega_1^{(1)}-\Omega_1^{(0)}\|&\leq \|a_1\|_{A_1}\cdot \|\W_2^{[N]}(a_2,z) \circ \W_3^{(N)}(a_3,z) \circ \cdots \circ \W_n^{(N)}(a_n,z)\|\cdot \apperr_1(q,z) q^{N/4} \nonumber\\
  &\leq \|a_1\|_{A^{(1)}}\cdot \|a_2\|_{A^{(2)}}\cdot \|\W_3^{(N)}(a_3,z) \circ \cdots \circ \W_n^{(N)}(a_n,z)\|\cdot  \apperr_1(q,z) q^{N/4} \bndt_2(q,z) \nonumber\\
&\leq 
\left(\prod_{i=1}^n \|a_i\|_{A^{(i)}}\right)\cdot \apperr_1(q,z) q^{N/4} \prod_{j=2}^n \bndt_j(q,z) \ .\label{eq:trigfirst}
\end{align}
Similar reasoning for $\ell=2$ gives
\begin{align}
\|\Omega_1^{(2)}-\Omega_1^{(1)}\|&\leq 
\left(\prod_{i=1}^n \|a_i\|_{A^{(i)}}\right)\cdot  
 \apperr_2(q,z) q^{N/4} \bndt_1(q,z) \prod_{j=3}^n \bndt_j(q,z)\ .\label{eq:trigsecond}
\end{align}
More generally, for $\ell\geq 3$, we have
\begin{align*}
  \Omega_1^{(\ell)}-\Omega_1^{(\ell-1)}&=\W_1(a_1,z) \circ \W_2(a_2,z) \circ \cdots\circ \W_{\ell-2}(a_{\ell-2},z)\delta\qquad\textrm{ where }\\
  \delta&=(\W_{\ell-1}(a_\ell,z)-\W_{\ell-1}^{[N]}(a_\ell,z))\eta\\
  \eta&=\W_{\ell}^{(N)}(a_\ell,z) \circ \W_{\ell+1}^{[N]}(a_{\ell+1},z) \circ \cdots \W_{n}^{(N)}(a_n,z) \ .
\end{align*}
Theorem~\ref{thm:truncationestimate} and Lemma~\ref{lem:boundednessparametertruncated} then yield
\begin{align}\label{eq:trigthird}
\|\Omega_1^{(\ell)}-\Omega_1^{(\ell-1)}\|&\leq\left(\prod_{i=1}^n \|a_i\|_{A_i}\right)\cdot  
\apperr_{l-1}(q,z) q^{N/4} \prod_{j=1}^{\ell-2} \bnd_j(q,z) \prod_{j=\ell}^n \bndt_j(q,z) \nonumber\\
&=q^{N/4}\,\left(\prod_{i=1}^n \|a_i\|_{A_i}\right) \frac{\kappa\,\bnd_{l-1}(\sqrt{q},z)}{1-\sqrt{q}}  \prod_{j=1}^{\ell-2} \bnd_j(q,z) \prod_{j=\ell}^n \frac{\sqrt{|I_{B^{(j)}}|}\cdot\bnd_j(\sqrt{q},z)}{\sqrt{1-\sqrt{q}}}
\end{align}
for $\ell\geq 3$. We then have for $0<q<1$ that $\sqrt{1-\sqrt{q}} \geq 1-\sqrt{q}$, as well as $\bnd_j(q,z) \leq \frac{\bnd_j(\sqrt{q},z)}{\sqrt{1-\sqrt{q}}}$ due to the last assertion of Lemma~\ref{lem:boundednessparametertruncated}. Combining these facts with~\eqref{eq:trigfirst},~\eqref{eq:trigsecond} and~\eqref{eq:trigthird} with~\eqref{eq:telescopingsum} gives 
\begin{align*}
\|\Top-\Top^{[N]}\|\leq  \left(\prod_{i=1}^n \|a_i\|_{A^{(i)}}\right)\cdot  
q^{N/4} \,n\,\kappa\, \prod_{j=1}^n \frac{\sqrt{|I_{B^{(j)}}|}\cdot\bnd_j(\sqrt{q},z)}{1-\sqrt{q}}\,.
\end{align*}
\end{proof}

An immediate consequence of Corollaries~\ref{cor:corrfctexactzero} and~\ref{cor:corrfctexactrepr} together with Lemma~\ref{lem:normboundtransferop} are the following estimates of the correlation functions in terms of truncated intertwiners. These are the ``truncated'' counterparts to those results: indeed, in the limit $N\rightarrow\infty$ of no truncation, the exact statement is reproduced. In order to simplify the expressions encountered, we henceforth assume that the insertion vectors~$\{a_i\}_{i=1}^n$ appearing in the definition of the transfer operator as well as its truncated version are of unit norm, that is, 
\begin{align}
  \norm{a_i}_{A^{(i)}} = 1 \quad \textrm{ for all } i=1,\ldots,n\,.
\end{align}
All of our approximation statements are now subject to this assumption. The general case may be obtained by multiplying the error by the product $\prod^n_{i=1} \norm{a_i}_{A^{(i)}}$. 

\begin{corollary}[Approximate reproduction of genus-$0$ correlation functions]\label{cor:approxcorfuncplane}
  For $a_i\in \Shom^{(i)}$, $i=1,\ldots,n$, let $\Top^{[N]}:B^{(n)}\rightarrow B^{(0)}$
be the truncated transfer operator with normalized insertions~$\{a_i\}_{i=1}^n$ and parameters $(q,z)$ (cf.~Definition~\ref{def:transferoperatorstruncated}).
Suppose $v^{(0)} \in B^{(0)}_{\ell_0}$ is at level~$\ell_0$ and 
$v^{(n)} \in B^{(n)}_{\ell_n}$ is at level~$\ell_n$ and both vectors are of unit norm. Then the genus-$0$ correlation function $F^{(0)}_{v^{(0)},v^{(n)}}$ is approximated as 
\begin{align*}
  \big|\langle v^{(0)},\Top^{[N]}v^{(n)}\rangle- q^{(n+1/2)\ell_n+\ell_0/2+\sum_{j=1}^n j\,\wt a_j}\cdot F^{(0)}_{v^{(0)},v^{(n)}}((\tilde{a}_1,\zetap_1),\ldots,(\tilde{a}_n,\zetap_n))\big| \leq q^{N/4} \,\Delta(n,q,z)\,,  
\end{align*}
where
\begin{align}
\tilde{a}_j=q^{L_0/2}a_j\qquad\textrm{ and }\qquad \zetap_j=z q^j\qquad\textrm{ for }j=1,\ldots,n\, .
\end{align}
\end{corollary}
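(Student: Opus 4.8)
The plan is to obtain the bound as a direct assembly of two results already established: the exact reproduction of genus-$0$ correlation functions (Corollary~\ref{cor:corrfctexactzero}) and the operator-norm approximation estimate for transfer operators (Lemma~\ref{lem:normboundtransferop}). No new analytic input is required; the entire content lies in matching the prefactors and then estimating a single matrix element.

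First I would apply Corollary~\ref{cor:corrfctexactzero} to the \emph{non}-truncated transfer operator $\Top$ built from the same insertions $\{a_i\}_{i=1}^n$, the same parameters $(q,z)$, and the same boundary vectors $v^{(0)},v^{(n)}$. Since we are in the genus-$0$ setting, the boundary modules are the adjoint module, $B^{(0)}=B^{(n)}=\cV$, which has highest weight zero; hence a homogeneous vector at level $\ell$ has weight exactly $\ell$, so that $\wt v^{(n)}=\ell_n$ and $\wt v^{(0)}=\ell_0$. Consequently the prefactor $q^{(n+1/2)(\wt v^{(n)})+(\wt v^{(0)})/2+\sum_j j\,\wt a_j}$ appearing in Corollary~\ref{cor:corrfctexactzero} coincides with the prefactor $q^{(n+1/2)\ell_n+\ell_0/2+\sum_j j\,\wt a_j}$ in the present statement, and the corollary identifies the subtracted term as $\langle v^{(0)},\Top v^{(n)}\rangle$. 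Thus the quantity to be bounded equals $\big|\langle v^{(0)},(\Top^{[N]}-\Top)v^{(n)}\rangle\big|$.

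Second I would estimate this matrix element by the Cauchy--Schwarz inequality together with the definition of the operator norm,
\[
\big|\langle v^{(0)},(\Top^{[N]}-\Top)v^{(n)}\rangle\big|\leq \|v^{(0)}\|_{B^{(0)}}\cdot\|\Top-\Top^{[N]}\|\cdot\|v^{(n)}\|_{B^{(n)}}=\|\Top-\Top^{[N]}\|,
\]
using that both boundary vectors are normalized. Finally I would invoke Lemma~\ref{lem:normboundtransferop}: since the insertions are normalized, $\prod_{i=1}^n\|a_i\|_{A^{(i)}}=1$, and the lemma gives $\|\Top-\Top^{[N]}\|\leq q^{N/4}\,\Delta(n,q,z)$ with $\Delta(n,q,z)$ as in~\eqref{eq:deltanqzdefa}. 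Chaining the three displays yields the claimed inequality.

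The only point requiring care---rather than a genuine obstacle---is bookkeeping: one must verify that the hypotheses of both invoked results hold here, namely $0<q<\min\{|z|^2,|z|^{-2}\}$, the $S^{(i)}$-boundedness of each intertwiner $\iY_i$, and the homogeneity of both the insertions $a_i\in\Shom^{(i)}$ and the boundary vectors $v^{(0)},v^{(n)}$ (the latter being needed so that $L_0$ acts by a scalar and the prefactor is well defined). All of these are part of the standing assumptions, so the argument is purely a combination step; the substantive estimate has already been carried out in Lemma~\ref{lem:normboundtransferop}, and in the limit $N\to\infty$ one recovers exactly Corollary~\ref{cor:corrfctexactzero}.
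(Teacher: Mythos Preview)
Your proposal is correct and follows essentially the same approach as the paper's own proof: identify the subtracted term with $\langle v^{(0)},\Top v^{(n)}\rangle$ via Corollary~\ref{cor:corrfctexactzero}, bound the resulting matrix element by $\|\Top-\Top^{[N]}\|$, and apply Lemma~\ref{lem:normboundtransferop}. If anything you are slightly more careful than the paper in explaining why the level exponents $\ell_0,\ell_n$ agree with the weight exponents $\wt v^{(0)},\wt v^{(n)}$ from Corollary~\ref{cor:corrfctexactzero} (via the genus-$0$ convention $B^{(0)}=B^{(n)}=\cV$).
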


\noindent 
Note that the dependence on $N$ is exponential. However, the bound turns out to be only interesting for $N$ large compared to the difference $|\ell_0-\ell_n|$ between the levels: as we will show below in~\eqref{eq:finiteprojcvdf}, the expression $\Scp{v^{(0)}}{\Top^{[N]}v^{(n)}}$ vanishes if $|\ell_0-\ell_n|>n N$.

\begin{proof}
  Inserting the expression of Corollary~\ref{cor:corrfctexactzero} for the correlation function in terms of the transfer operator $\Top$, we find
  \begin{align}
    &\left|\langle v^{(0)},\Top^{[N]}v^{(n)}\rangle - q^{\left[(n+1/2)\ell_n+\ell_0/2+\sum_{j=1}^n j\,\wt a_j\right]}F^{(0)}_{v^{(0)},v^{(n)}}((a_1,\zetap_1),\ldots,(a_n,\zetap_n)) \right| \nonumber\\
    &\quad= \left|  \Scp{v^{(0)}}{\left(\Top^{[N]} - \Top\right) v^{(n)}} \right| \leq \norm{\Top^{[N]} - \Top} \norm{v^{(0)}}_{B^{(0)}} \norm{v^{(n)}}_{B^{(n)}} \,.
  \end{align}
  Inserting the bound from Lemma~\ref{lem:normboundtransferop} finishes the proof.
\end{proof}

\begin{corollary}[Approximate reproduction of genus-$1$ correlation functions]\label{cor:correlationfunctionsapprox}
Assume periodic boundary conditions $B^{(0)}=B^{(n)}=B$. Fix normalized $a_i\in \Shom^{(i)}$, $i=1,\ldots,n$ and  let $\Top^{[N]}:B^{(n)}\rightarrow B^{(0)}$
be the truncated transfer operator with insertions~$\{a_i\}_{i=1}^n$ and parameters $(q,z)$ (cf.~Definition~\ref{def:transferoperatorstruncated}).
Let $0<r<1$, and let 
\begin{align}
\tilde{a}_j=q^{L_0/2}a_j\qquad\textrm{ and }\qquad \zetap_j=zq^j\qquad\textrm{ for }j=1,\ldots,n\ ,
\torusq&=r q^{n}\ .\label{eq:ajzjdef}
\end{align} Then
the genus-$1$ correlation function (cf.~Eq.~\eqref{eq:genusoneintertwinercorrelationfct})
\begin{align}
F^{(1)}_{\torusq}\left((\tilde{a}_1,\zetap_1),\ldots,(\tilde{a}_n\zetap_n)\right)\label{eq:genusonecorrelationfunctionapproximationexpression}
\end{align}
 is approximated as
\begin{align}\label{eq:correlationfunctionapproxexpr}
  \big|\tr_B\Top^{[N]}r^{L_0}-\torusq^{c/24} F^{(1)}_{\torusq}\left((\tilde{a}_1,z_1),\ldots,(\tilde{a}_n,z_n)\right) \big|\leq q^{N/4} \,\Delta(n,q,z)\cdot  r^{c/24} Z_B(r)
\end{align}
where $Z_B(r)$ is the partition function (or character) of the module $B$,
\begin{align}
Z_B(r) = \tr_B[r^{L_0-c/24}]\ .\label{eq:errortermlambdadef}
\end{align}
\end{corollary}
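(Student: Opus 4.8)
The plan is to reduce the genus-$1$ approximation statement to the operator-norm bound on the transfer operator established in Lemma~\ref{lem:normboundtransferop}, exactly as was done in the genus-$0$ case (Corollary~\ref{cor:approxcorfuncplane}), but with the scalar-product pairing replaced by a weighted trace. First I would invoke Corollary~\ref{cor:corrfctexactrepr}, which gives the exact identity
\begin{align*}
\tr_B\Top\, r^{L_0} = \torusq^{c/24}\,F^{(1)}_{\torusq}\left((\tilde a_1,\zetap_1),\ldots,(\tilde a_n,\zetap_n)\right)\,,
\end{align*}
valid under the stated parameter constraints $|z|>1$, $0<q<1/|z|^2$, with $\torusq=rq^n$. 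Subtracting this from the truncated quantity $\tr_B\Top^{[N]}r^{L_0}$ reduces the left-hand side of~\eqref{eq:correlationfunctionapproxexpr} to
\begin{align*}
\big|\tr_B\left[(\Top^{[N]}-\Top)\,r^{L_0}\right]\big|\,.
\end{align*}

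Next I would bound this trace by splitting it into the operator norm of the difference $\Top^{[N]}-\Top$ times the trace norm of the regularizing factor $r^{L_0}$. The key tool is the trace-Hölder inequality $|\tr(XY)|\leq \|X\|\cdot\|Y\|_1$, applied with $X=\Top^{[N]}-\Top$ and $Y=r^{L_0}$. Here the role of the parameter $r<1$ becomes clear: since $0<r<1$ and the spectrum of $L_0$ is nonnegative (unitary modules have $L_0\geq 0$), the operator $r^{L_0}$ is of trace class, with
\begin{align*}
\|r^{L_0}\|_1 = \tr_B r^{L_0} = r^{c/24}\,Z_B(r)\,,
\end{align*}
by the definition~\eqref{eq:errortermlambdadef} of the character. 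Combining this with the operator-norm estimate from Lemma~\ref{lem:normboundtransferop}, together with the assumption $\norm{a_i}_{A^{(i)}}=1$, yields precisely the factor $q^{N/4}\,\Delta(n,q,z)\cdot r^{c/24}Z_B(r)$ claimed on the right-hand side.

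The main subtlety — and the reason the factor $r^{L_0}$ is introduced rather than using $r=1$ directly — is that $\Top^{[N]}-\Top$ is a bounded operator but not itself trace class in a controlled way, so one cannot simply take its trace without a regularizer. The factor $r^{L_0}$ with $r<1$ supplies the required damping at high weights, converting the bounded-operator bound into a finite trace bound; the convergence of $Z_B(r)$ for $0<r<1$ (guaranteed for the rational, $C_2$-cofinite, CFT-type VOAs considered, as noted after~\eqref{eq:partitionfunction}) ensures the right-hand side is finite. One should verify that all the parameter hypotheses of Corollary~\ref{cor:corrfctexactrepr} and Lemma~\ref{lem:normboundtransferop} are simultaneously met, namely $z\in\C\setminus\{0\}$ with $|z|>1$ and $0<q<\min\{|z|^2,|z|^{-2}\}=|z|^{-2}$, which is compatible with the stated domain. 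I expect the whole argument to be short: once the exact reproduction corollary and the transfer-operator norm bound are in hand, the genus-$1$ estimate is a one-line application of Hölder's inequality for Schatten norms, with the character $Z_B(r)$ packaging the trace-class norm of the regularizer.
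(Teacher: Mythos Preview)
Your proposal is correct and matches the paper's proof essentially line for line: invoke Corollary~\ref{cor:corrfctexactrepr} to rewrite the left-hand side as $|\tr_B(\Top^{[N]}-\Top)r^{L_0}|$, apply the Schatten--H\"older inequality with the positivity of $r^{L_0}$ to get $\|\Top^{[N]}-\Top\|\cdot\tr_B r^{L_0}$, and finish with Lemma~\ref{lem:normboundtransferop} and the identification $\tr_B r^{L_0}=r^{c/24}Z_B(r)$. Your additional remarks on why the regularizer $r<1$ is needed are accurate and go slightly beyond what the paper spells out.
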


Consider for example the (typical) translation-invariant case,
where the modules $A^{(j)}=A$, $B^{(j)}=B$ and intertwiners
$\iY_j=\iY$, $j=1,\ldots,n$, are all identical.
Then the error bound  becomes
\begin{align}
  \big|\tr_B\Top^{[N]}r^{L_0}-\torusq^{c/24}F \big|\leq q^{N/4} \,n\,\kappa\,\left[\frac{\sqrt{|I_B|}\cdot\bnd(\sqrt{q},z)}{1-\sqrt{q}}\right]^n \cdot  r^{c/24} Z_B(r)
\end{align}
The expression in the square brackets should be regarded as normalization factor determining the overall scale of the
approximation error. In contrast, the error $q^{N/4} \,n\,\kappa$ decays exponentially with~$N$. Finally, the third factor in~\eqref{eq:correlationfunctionapproxexpr} only depends on the regularization parameter~$r$ and should be considered as a constant. To summarize, Eq.~\eqref{eq:correlationfunctionapproxexpr} implies that  the error is exponentially small in the truncation level~$N$  independently of the number~$n$  of insertions.    

\begin{proof}
  The proof of the equivalent result in the genus-$1$ case follows again by using Lemma~\ref{lem:normboundtransferop}, as well as by using the H\"older inequality for the Schatten norms. Explicitly, denoting the correlation function by
  \begin{align}
    F = F^{(1)}_{\torusq}\left((\tilde{a}_1,\zetap_1),\ldots,(\tilde{a}_n,\zetap_n)\right) \,,
  \end{align}
  we have by Corollary~\ref{cor:corrfctexactrepr}
  \begin{align}
    |\tr_B\Top^{[N]}r^{L_0}-\torusq^{c/24} F| = |\tr_B\left(\Top^{[N]} - \Top\right) r^{L_0}| \leq \norm{\Top^{[N]} - \Top} \,\tr_B r^{L_0}\,,
  \end{align}
  since the operator $r^{L_0}$ is positive. Inserting the bound from Lemma~\ref{lem:normboundtransferop} as before completes the argument.
\end{proof}

\subsubsection{Approximation using finite bond dimension\label{sec:finitebonddimensionerror}}
Up to this point, we have  merely truncated and scaled the intertwiner~$\iY$ in such a way that it does not change the level in one application by more that~$N$ in absolute value, see Eq.~\eqref{eq:containmenttruncatedscaledintertwiner}. However, this does not mean  that the spaces involved in the evaluation of e.g., approximate correlation functions (as in Lemma~\ref{cor:correlationfunctionsapprox}) are finite-dimensional. To obtain a representation by an MPS with finite bond dimension, we generally have to further restrict the truncation space (unless we are considering genus-$0$ vacuum-to-vacuum correlation functions, as discussed below).

More precisely, the trace in expression~\eqref{eq:correlationfunctionapproxexpr}
is taken over the infinite-dimensional module $B^{(0)}=B^{(n)}=B$.
 To obtain an approximation using a finite bond dimension, consider the $\mathbb{N}_0$-grading $B=\bigoplus_{n\in \mathbb{N}_0}B_n$ and the projection~$P^{[M]}$ onto $\bigoplus_{n\leq M} B_n$. That is, the operator~$P^{[M]}$ projects onto the highest weights (up to and including the level~$M$).  Here $M\in\mathbb{N}$ is an additional truncation parameter which ultimately determines the bond dimension of the resulting MPS (see Lemma~\ref{lem:projectedcorrelationfct} below) according to
\begin{align}
d_B(M):=\dim P^{[M]} B&=\sum_{0\leq n\leq M} \dim B_n\ \qquad\textrm{ for }M\geq 0\ .\label{eq:dbmdef}
\end{align}
To distinguish it from the `truncation level'~$N$ used in the definition of truncated intertwiners, let us refer to $M$~simply as the {\em cutoff} or {\em cutoff parameter}.
 
We are ultimately interested in the approximation error when approximating correlation functions by MPS with finite bond dimension.   For this purpose,  we need a refinement of Corollary~\ref{cor:correlationfunctionsapprox}, which provides a bound on the accuracy when approximating a correlation function by a truncated trace~$\tr_B P_B^{[M]} \Top^{[N]}r^{L_0}$. In Section~\ref{sec:recoveringthemps}, we will argue that this expression is given by matrix elements of an MPS. 

\begin{lemma}\label{lem:projectedcorrelationfct}
Assume periodic boundary conditions $B^{(0)}=B^{(n)}=B$. Fix normalized $a_i\in \Shom^{(i)}$, $i=1,\ldots,n$ and  let $\Top^{[N]}:B^{(n)}\rightarrow B^{(0)}$
be the truncated transfer operator with insertions~$\{a_i\}_{i=1}^n$ and parameters $(q,z)$ (cf.~Definition~\ref{def:transferoperatorstruncated}).
For $M>0$, let $P_B^{[M]}$ be the projection onto~$\bigoplus_{0\leq m\leq M}B_m$.  Let $F$ be the genus-one correlation function defined by~\eqref{eq:genusonecorrelationfunctionapproximationexpression}.
Then for any $0<r<1$, we have
\begin{align}
\big|
\tr_BP_B^{[M]}\Top^{[N]}r^{L_0}-\torusq^{c/24}F \big|\leq   \left(n\,\kappa\,q^{N/4} \, r^{c/24}Z_B(r) +  r^{M/2} r^{c/12}Z_B(\sqrt{r}) \right)\\
\qquad \cdot \left(\max_{1\leq j \leq n}\frac{\sqrt{|I_{B^{(j)}}|}\bnd_j(\sqrt{q},z)}{1-\sqrt{q}}\right)^n
\end{align}
where $Z_B(r)$ is the character of the module~$B$ (cf.~\eqref{eq:partitionfunction}). 
\end{lemma}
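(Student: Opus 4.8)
The plan is to compare $\tr_B P_B^{[M]}\Top^{[N]}r^{L_0}$ with $\torusq^{c/24}F$ by inserting the intermediate quantity $\tr_B\Top^{[N]}r^{L_0}$ and applying the triangle inequality,
\begin{align*}
\big|\tr_B P_B^{[M]}\Top^{[N]}r^{L_0}-\torusq^{c/24}F\big|
&\le \big|\tr_B(\mathbf{I}-P_B^{[M]})\Top^{[N]}r^{L_0}\big|
+\big|\tr_B\Top^{[N]}r^{L_0}-\torusq^{c/24}F\big|\,.
\end{align*}
The second summand is exactly the quantity estimated in Corollary~\ref{cor:correlationfunctionsapprox}, so it is bounded by $q^{N/4}\,\Delta(n,q,z)\,r^{c/24}Z_B(r)$ with $\Delta(n,q,z)=n\kappa\big(\max_j \sqrt{|I_{B^{(j)}}|}\,\bnd_j(\sqrt q,z)/(1-\sqrt q)\big)^n$; this reproduces the first term of the claimed bound. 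It remains to control the \emph{tail} term $\big|\tr_B(\mathbf{I}-P_B^{[M]})\Top^{[N]}r^{L_0}\big|$, which is precisely the error introduced by restricting the trace to the finite-dimensional cutoff space $P_B^{[M]}B$.

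For the tail, first I would bound the operator norm of $\Top^{[N]}$. Since $\Top^{[N]}$ is a composition of $n$ truncated scaled intertwiners with normalized insertions, submultiplicativity of the operator norm together with Lemma~\ref{lem:boundednessparametertruncated} gives $\|\Top^{[N]}\|\le \big(\max_{1\le j\le n}\sqrt{|I_{B^{(j)}}|}\,\bnd_j(\sqrt q,z)/(1-\sqrt q)\big)^n$, using $\sqrt{1-\sqrt q}\ge 1-\sqrt q$ for $0<q<1$; this produces the global factor $(\max_j\cdots)^n$ pulled out front. Next, since $r^{L_0}$ is trace-class for $0<r<1$ (cf.~\eqref{eq:partitionfunction}) and commutes with $P_B^{[M]}$, Hölder's inequality for Schatten norms yields
\begin{align*}
\big|\tr_B(\mathbf{I}-P_B^{[M]})\Top^{[N]}r^{L_0}\big|
\le \|\Top^{[N]}\|\cdot \tr_B\big[(\mathbf{I}-P_B^{[M]})r^{L_0}\big]\,.
\end{align*}
The remaining trace is estimated by splitting $r^{L_0}=r^{L_0/2}r^{L_0/2}$: on the range of $\mathbf{I}-P_B^{[M]}$ all weights exceed $M$, so one factor obeys $r^{L_0/2}\le r^{M/2}$ there, while dropping the projector on the other factor leaves the convergent trace $\tr_B r^{L_0/2}$. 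Identifying $\tr_B r^{L_0/2}$, up to the central-charge normalization arising from the $-c/24$ in $Z_B(\sqrt r)=\tr_B(\sqrt r)^{L_0-c/24}$, with the character $Z_B(\sqrt r)$ produces the second error term $r^{M/2}r^{c/12}Z_B(\sqrt r)$. Collecting both summands and factoring out the common $(\max_j\cdots)^n$ gives the stated inequality.

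The main obstacle is this tail estimate, and more precisely its reliance on $0<r<1$: this is exactly the role of the \emph{regularizing} parameter $r$ introduced after Corollary~\ref{cor:corrfctexactrepr}. For $r=1$ the operator $r^{L_0}$ is no longer trace-class, the truncated trace $\tr_B P_B^{[M]}\Top^{[N]}$ would fail to converge as $M\to\infty$, and no decay in the cutoff $M$ could be extracted. The delicate point is the bookkeeping of the central charge: the decay $r^{M/2}$ coming from the high-weight projection must be cleanly separated from the convergent residual trace before the latter is recognized as $Z_B(\sqrt r)$, and the normalization relating $\tr_B r^{L_0/2}$ to $Z_B(\sqrt r)$ must be tracked to land on the factor $r^{c/12}$. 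Everything else — the triangle inequality, submultiplicativity of the operator norm, Hölder's inequality, and the geometric-series estimates — is already packaged in Lemma~\ref{lem:boundednessparametertruncated} and Corollary~\ref{cor:correlationfunctionsapprox}, so the argument reduces to assembling these ingredients.
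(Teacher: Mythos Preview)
Your proposal is correct and follows essentially the same route as the paper: the same triangle-inequality split through $\tr_B\Top^{[N]}r^{L_0}$, the same invocation of Corollary~\ref{cor:correlationfunctionsapprox} for the second term, and for the tail term the same H\"older bound combined with $\|\Top^{[N]}\|\le\prod_j \bndt_j(q,z)$ (via Lemma~\ref{lem:boundednessparametertruncated} and $\sqrt{1-\sqrt q}\ge 1-\sqrt q$) and the same $r^{L_0}=r^{L_0/2}r^{L_0/2}$ splitting to extract $r^{M/2}$ and identify the residual trace with $r^{c/12}Z_B(\sqrt r)$.
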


Here we should again think of the multiplicative factor involving the boundedness parameters~$\vartheta_j$ as a normalization factor, as it determines the overall scale of the correlation function. The partition function appearing in both terms plays the same role. Hence this yields an exponentially small (in the truncation parameters $N,M$) error estimate  when approximating a correlation function by the expression~$\tr_BP_B^{[M]}\Top^{[N]}r^{L_0}$. It remains to connect this expression to a finite-dimensional MPS. In the next section, we show that for a certain bond dimension~$D$ and a certain choice of  matrices~$\{\mpsA_a\}_a\in\mathsf{Mat}(\mathbb{C}^D)$ defining an MPS, the quantity~$\tr_BP_B^{[M]}\Top^{[N]}r^{L_0}$ is equal to the  sum of certain matrix elements of the MPS.

\begin{proof}
Recall from Corollary~\ref{cor:corrfctexactrepr} that
$\tr \Top r^{L_0}=\torusq^{c/24}F$ is the exact correlation function.
Using the triangle inequality, we thus have
\begin{align*}
\big|\tr_BP_B^{[M]}\Top^{[N]}r^{L_0}-\torusq^{c/24}F \big|&\leq \big|\tr_BP_B^{[M]}\Top^{[N]}r^{L_0} - \tr_B \Top^{[N]}r^{L_0}\big| + \big|\tr_B \Top^{[N]}r^{L_0} - \torusq^{c/24}F \big|\\
&=|\tr_B\left(\idty-P_B^{[M]}\right)\Top^{[N]}r^{L_0}\big| + \big|\tr_B \Top^{[N]}r^{L_0} - \torusq^{c/24}F \big|\\
&\leq |\tr_B\left(\idty-P_B^{[M]}\right)r^{L_0}\big| \norm{\Top^{[N]}} + q^{N/4} \,\Delta(n,q,z)\cdot  r^{c/24} Z_B(r)\,,
\end{align*}
where we used the result of Corollary~\ref{cor:correlationfunctionsapprox} in the last step as well as the H\"older inequality for Schatten norms. We proceed in two steps: First, following exactly the same arguments of Lemma~\ref{lem:boundedtransferop} and using Lemma~\ref{lem:boundednessparametertruncated}, we find
\begin{align}
  \norm{\Top^{[N]}} \leq \prod_{j=1}^n \bndt_j(q,z) \,, 
\end{align}
since all vectors $a_i$ are supposed to be normalized. Second, since $\idty - P_B^{[M]}$ is the projector onto all weight subspaces of weight lower bounded by $M$, we have
\begin{align}
  \tr_B \left(\idty - P_B^{[M]}\right) r^{L_0} = \sum_{n\geq M} \dim(B_n) r^{n} \leq r^{M/2} \sum_{n\geq0} \dim(B_n) r^{n/2} = r^{(h_B+M)/2} \tr_B r^{L_0/2}\,.
\end{align}
The claim follows since $\tr(r^{L_0/2})=r^{c/12}Z_B(\sqrt{r})$, \(r^{h_B/2}<1\), and by inserting the definition of $\bndt_j(q,z)$ as well as using $\sqrt{1-\sqrt{q}} \geq 1-\sqrt{q}$. 
\end{proof}

\subsection{Structure of approximation and the bond dimension\label{sec:bondimension}}
In the following, we combine our results and give the proof of Theorem~\ref{thm:main1}. On a structural level, we first have to connect matrix elements of the truncated transfer operator to an MPS. In Section~\ref{sec:propsofapprox}, we will then analyze the bond dimension of the resulting MPS.

\subsubsection{Recovering the MPS\label{sec:recoveringthemps}}

To relate expressions such as $\langle v^{(0)},\Top^{[N]}v^{(n)}\rangle$ and~$\tr_BP_B^{[M]}\Top^{[N]}r^{L_0}$  to a finite-dimensional MPS, we 
simply use Eq.~\eqref{eq:containmenttruncatedscaledintertwiner}, i.e., the fact that application of a truncated scaled intertwiner $\W_q^{[N]}$ does not change the level of a vector by more than $N$ in absolute value. As an immediate consequence,
any sequence of  truncated intertwiners~$\W^{[N]}_i(a_i,z)$ applied to vector of bounded weight is still contained in a finite-dimensional subspace.  This fact is made use of  in the proof of the following result, which gives an upper bound on the bond dimension.
 (Recall that $d_B(M)$ is 
the dimension of the subspace obtained by keeping all levels up to and including~$M$ in the module~$B$, see~\eqref{eq:dbmdef}). 
We will also show (as mentioned earlier) that 
the expression $\langle v^{(0)},\Top^{[N]}v^{(n)}\rangle$ vanishes unless $N$ is large compared to the difference between the levels of $v^{(0)}$ and $v^{(n)}$.

Recall from Eq.~\eqref{eq:dbmdef} that $d_B(M)$ denotes the total dimension of all levels up to and including~$B_M$ in the module~$B$.
\begin{lemma}[Genus-$0$ MPS]\label{lem:mpsformapproxgenuszero}
  For $a_i\in \Shom^{(i)}$, $i=1,\ldots,n$, and \(z\in \C \setminus \{0\}\), \(0<q<\min\{|z|^2,|z|^{-2}\}\), let $\Top^{[N]}:B^{(n)}\rightarrow B^{(0)}$ 
be the truncated transfer operator defined in Lemma~\ref{lem:normboundtransferop}.
Suppose $v^{(0)} \in B^{(0)}_{\ell^{(0)}}$ is at level~$\ell^{(0)}$ and 
$v^{(n)} \in B^{(n)}_{\ell^{(n)}}$ is at level~$\ell^{(n)}$. 
Then
\begin{align}
\Scp{v^{(0)}}{\Top^{[N]}v^{(n)}}&=0\qquad\textrm{ if }\qquad |\ell^{(0)}-\ell^{(n)}|>n N\ . \label{eq:finiteprojcvdf}
\end{align}
Fix a cutoff $M>0$. Then there exist operators $\tilde{\mpsA}_{a_1}, \tilde{\mpsA}_{a_2}, \dots, \tilde{\mpsA}_{a_n}$ on $\C^D$, linearly depending on the vectors~$a_j$, with 
\begin{align*}
D&=\max_{0\leq j\leq n }d_{B^{(j)}}(M+nN)\ ,
\end{align*} 
and linear embeddings
\begin{align*}
\begin{matrix}
\bigoplus_{0\leq m\leq M} B^{(n)}_m&\rightarrow  &\mathbb{C}^{D^{(n)}}\subset \mathbb{C}^D\\
v^{(n)} & \mapsto &\tilde{v}^{(n)}
\end{matrix}\qquad\textrm{ and }\qquad 
\begin{matrix}
\bigoplus_{0\leq m\leq M} B^{(0)}_m&\rightarrow & \mathbb{C}^{D^{(0)}}\subset \mathbb{C}^D\\
v^{(0)} & \mapsto& \tilde{v}^{(0)}
\end{matrix}
\end{align*}
with $D^{(n)}= d_{B^{(n)}}(M)$, $D^{(0)}= d_{B^{(0)}}(M)$, such that
\begin{align}
\Scp{v^{(0)}}{\Top^{[N]}v^{(n)}}_{B^{(0)}}&=\langle \tilde{v}^{(0)},\tilde{\mpsA}_{a_1}\cdots \tilde{\mpsA}_{a_n}\tilde{v}^{(n)}\rangle_{\mathbb{C}^D}\ .
\end{align}
for all $v^{(0)}\in\bigoplus_{0\leq m\leq M}B^{(0)}_m$  and $v^{(n)}\in\bigoplus_{0\leq m\leq M}B^{(n)}_m$. 
\end{lemma}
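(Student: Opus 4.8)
The plan is to exploit the single structural fact established in \eqref{eq:containmenttruncatedscaledintertwiner}: each truncated scaled intertwiner $\W^{[N]}_j(a_j,z)$ shifts the level of a homogeneous vector by at most $N$ in absolute value. Iterating this through the composition $\Top^{[N]}=\W^{[N]}_1(a_1,z)\circ\cdots\circ\W^{[N]}_n(a_n,z)$ immediately controls which levels can appear. For the vanishing statement I would start from $v^{(n)}\in B^{(n)}_{\ell^{(n)}}$ and observe that $\Top^{[N]}v^{(n)}$ is then supported on levels in the band $[\ell^{(n)}-nN,\ell^{(n)}+nN]$. The decomposition $B^{(0)}=\bigoplus_m B^{(0)}_m$ into levels is orthogonal — one may take an orthogonal decomposition of the unitary module into irreducibles, distinct summands being mutually orthogonal and distinct levels within a single irreducible carrying distinct $L_0$-eigenvalues — so the pairing $\Scp{v^{(0)}}{\Top^{[N]}v^{(n)}}$ with $v^{(0)}\in B^{(0)}_{\ell^{(0)}}$ can be nonzero only if $\ell^{(0)}$ lies in that band, i.e.\ $|\ell^{(0)}-\ell^{(n)}|\le nN$. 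This is exactly \eqref{eq:finiteprojcvdf}.

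For the MPS representation I would track levels along the partial products. Applying $\W^{[N]}_n,\dots,\W^{[N]}_{j+1}$ to a vector $v^{(n)}$ supported on levels $[0,M]$ produces an intermediate vector $w_j\in B^{(j)}$ supported on levels $[0,M+(n-j)N]$; in particular every intermediate vector, including $w_0=\Top^{[N]}v^{(n)}$, stays within levels $[0,M+nN]$. This is the observation that lets the whole computation be carried out losslessly inside finite-dimensional truncations. Concretely I would set $H^{(j)}:=\bigoplus_{0\le m\le M+nN}B^{(j)}_m$, so that $\dim H^{(j)}=d_{B^{(j)}}(M+nN)\le D$, and fix isometric embeddings $\iota_j:H^{(j)}\hookrightarrow\C^D$ (extending an orthonormal basis and padding with zeros), chosen so that the low-level subspace $\bigoplus_{0\le m\le M}B^{(j)}_m$ is sent into a fixed coordinate subspace $\C^{d_{B^{(j)}}(M)}$. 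Writing $P^{(j)}$ for the orthogonal projection of $B^{(j)}$ onto $H^{(j)}$, I would define
\begin{align*}
\tilde{\mpsA}_{a_j}:=\iota_{j-1}\,P^{(j-1)}\,\W^{[N]}_j(a_j,z)\,\iota_j^{*}\,,
\end{align*}
a $D\times D$ matrix depending linearly on $a_j$ because $\W^{[N]}_j(\cdot,z)$ is linear. The boundary embeddings are $\tilde v^{(n)}=\iota_n(v^{(n)})$ and $\tilde v^{(0)}=\iota_0(v^{(0)})$, which land in $\C^{D^{(n)}}$ and $\C^{D^{(0)}}$ with $D^{(n)}=d_{B^{(n)}}(M)$, $D^{(0)}=d_{B^{(0)}}(M)$ by the choice of basis.

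The exact identity then follows from a one-line induction establishing $\tilde{\mpsA}_{a_j}\cdots\tilde{\mpsA}_{a_n}\tilde v^{(n)}=\iota_{j-1}(w_{j-1})$: each application unfolds as $\iota_{j-1}P^{(j-1)}\W^{[N]}_j(a_j,z)\iota_j^{*}\iota_j(w_j)=\iota_{j-1}P^{(j-1)}w_{j-1}$, where $\iota_j^{*}\iota_j=\mathrm{id}_{H^{(j)}}$ and, crucially, $P^{(j-1)}$ acts as the identity on $w_{j-1}$ since $w_{j-1}$ already lives in levels $[0,M+nN]$ (this uses $j\ge1$). Taking $j=1$ gives $\tilde{\mpsA}_{a_1}\cdots\tilde{\mpsA}_{a_n}\tilde v^{(n)}=\iota_0(\Top^{[N]}v^{(n)})$, and pairing against $\tilde v^{(0)}=\iota_0(v^{(0)})$ while using that $\iota_0$ preserves the inner product on $H^{(0)}$ recovers $\Scp{v^{(0)}}{\Top^{[N]}v^{(n)}}_{B^{(0)}}$. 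The one point demanding genuine care — and the main obstacle — is precisely the losslessness of the intermediate projections, namely that no level above $M+nN$ is ever produced along the specific product being computed; granting this, the remainder is bookkeeping of dimensions and the isometry property of the $\iota_j$.
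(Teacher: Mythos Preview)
Your proposal is correct and follows essentially the same approach as the paper: both use the level-shift bound \eqref{eq:containmenttruncatedscaledintertwiner} iteratively to confine all intermediate vectors to $\bigoplus_{0\le m\le M+nN}B^{(j)}_m$, then project/embed into $\C^D$. Your treatment is in fact a bit more explicit than the paper's, spelling out the isometric embeddings $\iota_j$ and the inductive step $\iota_j^*\iota_j=\mathrm{id}_{H^{(j)}}$ where the paper simply writes $\tilde{\mpsA}_{a_j}=P^{[M+nN]}_{B^{(j-1)}}\mpsA_{a_j}P^{[M+nN]}_{B^{(j)}}$ and remarks that the relevant spaces embed in $\C^D$.
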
 
\noindent 

Note that Lemma~\ref{lem:mpsformapproxgenuszero}
merely expresses the fact that $\Scp{v^{(0)}}{\Top^{[N]}v^{(n)}}$ can be related to a finite-dimensional MPS; we defer the discussion of the 
relationship between the approximation accuracy of correlation functions and the bond dimension to  Section~\ref{sec:bondimension}. 

We also point out that in the special case of vacuum-to-vacuum correlation functions, we will set $v^{(0)}=v^{(n)}=\1$ and can thus take $M=0$.
\begin{proof}
The expression $\Top^{[N]}v^{(n)}$  is defined recursively with $\Top^{[N]}v^{(n)}=\Top^{[N]}_1v^{(n)}$ as 
\begin{align*}
  \Top^{[N]}_n v^{(n)}&=\W^{[N]}_n(a_n,z)(v^{(n)})\\
  \Top^{[N]}_{k}v^{(n)}&=\W^{[N]}_{k}(a_{k},z)\circ \Top^{[N]}_{k+1}(v^{(n)})\qquad\textrm{ for }k=n-1,\ldots,1\ .
\end{align*}
In particular, if we define operators
\begin{align*}
\mpsA_{a_j} :  B^{(j)}&\rightarrow B^{(j-1)}\\
b &\mapsto \W^{[N]}_{j}(a_j,z)(b) 
\end{align*}
we have
\begin{align}
\Top^{[N]}v^{(n)} &=\mpsA_{a_1}\cdots \mpsA_{a_{n}} v^{(n)}\ .\label{eq:zonenv1}
\end{align}
Furthermore, by~\eqref{eq:containmenttruncatedscaledintertwiner}, we have the containment
\begin{align}
\mpsA_{a_j}B^{(j)}_m \subset \bigoplus_{|m'-m|\leq N}B^{(j-1)}_{m'}\ \label{eq:bjmbprimetwosided}
\end{align}
 for any level $m\in\mathbb{N}$. In particular, if $v^{(n)}\in B^{(n)}_{\ell^{(n)}}$, then 
we conclude that 
\begin{align*}
\Top^{[N]}v^{(n)}&\in \bigoplus_{|m'-\ell^{(n)}|\leq nN}B^{(0)}_{m'}\ .
\end{align*}
by iteratively applying~\eqref{eq:bjmbprimetwosided} to~\eqref{eq:zonenv1}.
Since different levels are orthogonal, Eq.~\eqref{eq:finiteprojcvdf} follows immediately. Eq.~\eqref{eq:bjmbprimetwosided} also implies
\begin{align}
\mpsA_{a_j}\bigoplus_{0\leq m\leq M}B^{(j)}_m \subset \bigoplus_{0\leq m\leq M+N}B^{(j-1)}_{m}\ \label{eq:bjmbprime}
\end{align}
  Iteratively applying~\eqref{eq:bjmbprime}, we conclude that for $v^{(n)}\in \bigoplus_{0\leq m\leq M}B^{(n)}_{m}=P_{B^{(n)}}^{[M]}B^{(n)}$ and all $r=1,\ldots,n$, we have 
\begin{align}
\mpsA_{a_{r}}\mpsA_{a_{r+1}}\cdots \mpsA_{a_n}v^{(n)}\in \bigoplus_{0\leq m\leq M+(n-r+1)N} B^{(r-1)}_{m}\subset \bigoplus_{0\leq m\leq M+nN}B_{m}^{(r-1)}=P^{[M+nN]}_{B^{(r-1)}}B^{(r-1)}\ .\label{eq:productaaj}
\end{align}
For any
\begin{align*}
v^{(n)}\in \bigoplus_{0\leq m\leq M} B_m^{(n)}\qquad\textrm{ and }\qquad v^{(0)}\in \bigoplus_{0\leq m\leq M} B_m^{(0)}\ ,
\end{align*}
projecting the vectors, i.e., setting 
$\tilde{v}^{(0)}=P^{(0)}_{B^{(0)}}v^{(0)}$ and $\tilde{v}^{(n)}=P^{(n)}_{B^{(n)}}v^{(n)}$
leaves them invariant, i.e., 
\begin{align}
v^{(0)}=\tilde{v}^{(0)}\qquad\textrm{ and }\qquad \tilde{v}^{(n)}=\tilde{v}^{(n)}\ .\label{eq:projectedvectorsinvariance}
\end{align}
Combining~\eqref{eq:zonenv1},~\eqref{eq:productaaj} and~\eqref{eq:projectedvectorsinvariance} therefore gives 
\begin{align*}
\langle v^{(0)},\Top^{[N]}v^{(n)}\rangle&=\langle\tilde{v}^{(0)},\tilde{\mpsA}_{a_1}\cdots \tilde{\mpsA}_{a_n} \tilde{v}^{(n)}\rangle\ ,
\end{align*} 
where 
\begin{align*}
  \tilde{\mpsA}_{a_j}&=P^{[M+nN]}_{B^{(j-1)}}\mpsA_{a_j}P^{[M+nN]}_{B^{(j)}}
\end{align*}
The claim follows since all  operators and vectors involved in the expression~\eqref{eq:pbmprod} are
supported on spaces of the form $\bigoplus_{0\leq m\leq M+nN}B^{(j)}_m$. These spaces can be embedded in~$\mathbb{C}^D$ with $D=\max_{0\leq j\leq n} d_{B^{(j)}}(M+nN)$, their maximal dimension. 
\end{proof}
Note that in the second half of this proof, we only made use of~\eqref{eq:bjmbprime} instead of the stronger two-sided bound~\eqref{eq:bjmbprimetwosided}. This is motivated by the fact that we are mostly interested in correlation functions associated with vectors of low weight, as for example for vacuum-to-vacuum correlation functions. Other correlation functions may be extracted using the Ward identities~\cite{francesco2012conformal}.

To make a similar statement about genus-$1$-correlation functions, we again introduce a cutoff by projecting onto the subspace of levels less than $M$ of the module $B$.  As before (see Section~\ref{sec:finitebonddimensionerror}), we denote the corresponding projection by~$P_B^{[M]}$. The argument proceeds by reduction to the genus-$0$-case.

\begin{lemma}[Genus-$1$-MPS]\label{lem:mpsformapproxtorus}
  Let $A^{(i)}$ and $B^{(i)}$, $i=1,\ldots,n$ be unitary modules of a VOA~$\cV$, with $B^{(0)}=B^{(n)}=:B$. Let $S^{(i)}\subset A^{(i)}$ and  $a_i\in \Shom^{(i)}$ be as in  Lemma~\ref{lem:normboundtransferop}. Let  $\Top^{[N]}:B\rightarrow B$ be the associated truncated transfer operator. Then there exist operators $\tilde{\mpsA}_{a_1}, \tilde{\mpsA}_{a_2}, \dots, \tilde{\mpsA}_{a_n}$ on $\C^D$, linearly depending on the vectors $a_j$, with $D = \max_{1\leq j\leq N}d_{B^{(j)}}(M+ nN)$,  
and an operator~$\tilde{\mathsf{X}}$ on $\mathbb{C}^D$ of rank~$D^{(0)}=d_B(M)\leq D$ such that 
  \begin{align}\label{eq:lemmpsformapprox}
    \tr_BP_B^{[M]}\Top^{[N]}r^{L_0} = \tr_{\mathbb{C}^D}\left[\tilde{\mpsA}_{a_1} \tilde{\mpsA}_{a_2} \cdots \tilde{\mpsA}_{a_n} \tilde{\mathsf{X}}\right] \,.
  \end{align}
\end{lemma}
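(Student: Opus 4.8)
The plan is to reduce the genus-$1$ statement to the genus-$0$ construction of Lemma~\ref{lem:mpsformapproxgenuszero} by converting the weighted, projected trace into a finite matrix product closed up by a single finite-rank operator. First I would use cyclicity of the trace to rewrite
\begin{align*}
\tr_B P_B^{[M]}\Top^{[N]}r^{L_0} = \tr_B\left(\Top^{[N]} X\right)\,,\qquad X := r^{L_0}P_B^{[M]} = P_B^{[M]}r^{L_0}\,,
\end{align*}
which is legitimate since $\Top^{[N]}$ is bounded (Lemma~\ref{lem:boundedtransferop}) and $r^{L_0}$ is of trace class for $0<r<1$ (cf.~\eqref{eq:partitionfunction}), so the whole product is trace class. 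The operator $X$ commutes with the $L_0$-grading, is supported exactly on $\bigoplus_{0\leq m\leq M}B_m$, and is positive with eigenvalue $r^{h_B+m}>0$ on each such level; hence $X$ has rank exactly $d_B(M)=D^{(0)}$. This $X$ will be the closing operator $\tilde{\mathsf{X}}$.

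Next I would localize everything to a finite-dimensional space. Writing $\Top^{[N]}=\mpsA_{a_1}\circ\cdots\circ\mpsA_{a_n}$ with $\mpsA_{a_j}=\W^{[N]}_j(a_j,z)$ exactly as in the proof of Lemma~\ref{lem:mpsformapproxgenuszero}, the containment~\eqref{eq:containmenttruncatedscaledintertwiner} shows that each $\mpsA_{a_j}$ raises the level by at most $N$. Since $X$ first projects onto levels $\leq M$, the product $\Top^{[N]}X$ has both input and output supported on $W:=\bigoplus_{0\leq m\leq M+nN}B_m$, so that $\Top^{[N]}X=P_W\Top^{[N]}X P_W$ with $P_W=P_B^{[M+nN]}$. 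Consequently
\begin{align*}
\tr_B\left(\Top^{[N]}X\right)=\tr_W\left(\left.\left(\Top^{[N]}X\right)\right|_W\right)\,,
\end{align*}
reducing the computation to a trace over a finite-dimensional space.

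The key step is then to factor $(\Top^{[N]}X)|_W$ into the matrix product. Exactly as in~\eqref{eq:productaaj}, iterating the level bound shows that after applying $\mpsA_{a_r}\cdots\mpsA_{a_n}X$ the result lies in levels $\leq M+(n-r+1)N\subseteq W$, so inserting the projection $P_W$ between consecutive factors acts as the identity and is harmless. This yields
\begin{align*}
\left.\left(\Top^{[N]}X\right)\right|_W=\tilde{\mpsA}_{a_1}\tilde{\mpsA}_{a_2}\cdots\tilde{\mpsA}_{a_n}\tilde{\mathsf{X}}\,,
\end{align*}
with $\tilde{\mpsA}_{a_j}=P_W\mpsA_{a_j}P_W$ and $\tilde{\mathsf{X}}=P_W X P_W=X|_W$. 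Finally I would embed $W$ into $\mathbb{C}^D$ with $D=\max_{0\leq j\leq n}d_{B^{(j)}}(M+nN)$, extending all matrices by zero, which leaves the trace invariant and gives~\eqref{eq:lemmpsformapprox}; the rank of $\tilde{\mathsf{X}}$ equals $d_B(M)=D^{(0)}$ by the first step, and $D^{(0)}\leq D$ since $d_B(M)\leq d_B(M+nN)$.

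The main obstacle — really the only delicate point — is the bookkeeping of the two distinct truncation parameters: the level-shift bound $N$ controls how far $\Top^{[N]}$ spreads weight, while the cutoff $M$ fixes the support of the closing operator $X$. One has to verify that the single common bound $M+nN$ is respected simultaneously by the output of $\Top^{[N]}X$ \emph{and} by every intermediate partial product $\mpsA_{a_r}\cdots\mpsA_{a_n}X$, so that each inserted $P_W$ really acts trivially and the telescoping into a finite matrix product is exact rather than merely approximate. Once this is confirmed the remaining manipulations are routine.
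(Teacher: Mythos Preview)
Your proposal is correct and follows essentially the same approach as the paper: both define $\tilde{\mathsf{X}}=P_B^{[M]}r^{L_0}P_B^{[M]}$, use the level-containment~\eqref{eq:containmenttruncatedscaledintertwiner} to confine every partial product to levels $\leq M+nN$, and then project each factor. The paper expands the trace in a homogeneous orthonormal basis of $P_B^{[M]}B$ and invokes Lemma~\ref{lem:mpsformapproxgenuszero} on each diagonal matrix element before reassembling, whereas you argue directly with cyclicity and operator factorization; these are the same argument in different dress.
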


\noindent 
We remark that the dependence of the bond dimension~$D$ on $n$ can be slightly improved from $n$ to $\lfloor \frac{n}{2}\rfloor$ using the fact that we are taking the trace, though this does not affect our conclusions in a significant way. 

\begin{proof}
Choose an orthonormal basis $\{b_i\}_{i=1}^{d_B(M)}$ of the space $P_B^{[M]}B=\bigoplus_{0\leq m\leq M }B_{m}$ consisting of homogeneous vectors. 
Then 
\begin{align}
  \tr_BP_B^{[M]}\Top^{[N]}r^{L_0} = \sum_{i=1}^{d_B(M)} r^{\wt b_i} \langle b_i, \Top^{[N]}b_i\rangle\label{eq:firstidentitypbm}
\end{align} 
By Lemma~\ref{lem:mpsformapproxgenuszero}, there are operators 
$\mpsA_{a_j}:B^{(j)}\rightarrow B^{(j-1)}$ for $j=1,\ldots,n$ 
such that the projected operators $\tilde{\mpsA}_{a_j}=P^{[M+nM]}_{B^{(j-1)}}\mpsA_{a_j}P^{[M+nM]}_{B^{(j)}}$ satisfy
\begin{align}
 \langle P^{[M]}_{B^{(0)}}b_i, \tilde{\mpsA}_{a_1}\cdots\tilde{\mpsA}_{a_n}P^{[M]}_{B^{(n)}}b_i\rangle&=\langle b_i, \Top^{[N]}b_i\rangle\qquad\textrm{ for all }i=1,\ldots,d_B(M)\ .\label{eq:secondidentityvm}
\end{align}
Observe that because $B^{(0)}=B^{(n)}=B$ in the genus-$1$-case, and since $\{b_i\}_i$ is a basis of $\bigoplus_{0\leq m\leq M}B_m$, we have $P^{[M]}_{B^{(n)}}b_i=P^{[M]}_{B^{(0)}}b_i=b_i$. Because $L_0$ is compatible with the grading, we further have $r^{L_0}P_B^{[M]}b_i=r^{\wt b_i}b_i$. Defining 
\begin{align*}
\tilde{\mathsf{X}}&=P^{[M]}_{B}r^{L_0}P^{[M]}_{B}\qquad\textrm{ and }\qquad \tilde{b}_i=P_B^{[M]}b_i\ ,
\end{align*}
we hence get
\begin{align}
\tr_B P_B^{[M]}\Top^{[N]}r^{L_0}&=\sum_{i=1}^{d_B(M)}\langle \tilde{b}_i,\tilde{\mpsA}_{a_1}\cdots \tilde{\mpsA}_{a_n}\tilde{X}\tilde{b}_i\rangle\label{eq:pbmprod}
\end{align}
 by combining~\eqref{eq:firstidentitypbm} and~\eqref{eq:secondidentityvm}. 
The claim follows since all  operators and vectors involved in the expression~\eqref{eq:pbmprod} are
supported on spaces of the form $\bigoplus_{0\leq m\leq M+nN}B^{(j)}_m$ which can all be embedded in $\mathbb{C}^D$. 
\end{proof} 

Let us now for simplicity assume that $d=\dim S^{(j)}$ is identical for all $1\leq j \leq n$, and let $\{\ket{k}\}_{1\leq k \leq d}$ denote an orthonormal basis of $\C^d$. Since the operators $\tilde{\mpsA}_{a_j}$ depend linearly on the vectors $a_j$, we can define a linear form on $(\C^d)^{\otimes n}$ as
\begin{align}\label{eq:mpsformapprox}
  \functionalstate_{MPS_1} : \ket{k_1} \otimes \ket{k_2} \otimes \cdots \otimes \ket{k_n} \mapsto \tr_{\C^D}\left[\tilde{\mpsA}_{k_1} \tilde{\mpsA}_{k_2} \cdots \tilde{\mpsA}_{k_n} \tilde{\mathsf{X}}\right] \,,
\end{align}
where we denoted $\tilde{\mpsA}_{k_2} = \tilde{\mpsA}_{\ket{k_2}}$. We can thus finally collect the pieces of provide a proof of our first main result.

\begin{proof}[Proof of Theorem~\ref{thm:main1}]
In the genus-0 case, we use Lemma~\ref{lem:mpsformapproxgenuszero}
(in the form~\eqref{eq:mpsformapprox}) 
to write the vacuum expectation value of the truncated transfer operator as
 the matrix element of an MPS (cf.~\eqref{eq:MPSstatex}), that is,
\begin{align*}
\functionalstate_{MPS_0}(\ket{k_1}\otimes\cdots\otimes\ket{k_n})&=
 \langle \1, \Top^{[N]}\1\rangle_{B}\ .
\end{align*}
According to Corollary~\ref{cor:approxcorfuncplane}, we have (since the level of the vacuum~$\1$ is $\ell_0=\ell_n=0$) 
\begin{align}
\big| \langle \1, \Top^{[N]}\1\rangle_{B}-q^{\sum_{j=1}^n j \wt a_j}\cdot F^{(0)}((\tilde{a}_1,\zeta_1),\ldots,(\tilde{a}_n,\zeta_n))\big| &\leq q^{N/4} \,\Delta(n,q,z)\\
&=:\epsilon_{MPS_0}(n,q,z,N)
\end{align}
This means that for primary fields $\{a_j\}_{j=1}^n$, the MPS approximates 
a (scalar multiple)  of the genus-$0$ $n$-point vacuum-to-vacuum correlation  function (see Observation~\ref{obs:conformalmapcorrfuncplane}) with error  
(cf.~\eqref{eq:deltanqzdefa} for the definition of $\Delta(n,q,z)$)
\begin{align}
\epsilon_{MPS_0}(n,q,z,N)=q^{N/4} \cdot  n\cdot \kappa\cdot \left(\max_{1\leq j \leq n}\frac{\sqrt{|I_{B^{(j)}}|}\cdot \bnd_j(\sqrt{q},z)}{1-\sqrt{q}}\right)^n\quad\textrm{ where } z=e^{-d_0}e^{i\theta}\ , q=e^{-\mathsf{d}}\ .\label{eq:mpserror}
\end{align}
Consider now the  genus-$1$ case. Here we have to additionally cut off the Hilbert space dimension at some level~$M\in\mathbb{N}$.
Comparing the rhs.~of Eq.~\eqref{eq:mpsformapprox} with Eq.~\eqref{eq:mpsdef}, we see that they agree. Correspondingly, we have (with Lemma~\ref{lem:mpsformapproxtorus}) an MPS $\functionalstate_{MPS_1}$ such that
\begin{align}
  \functionalstate_{MPS_1}(\ket{k_1}\otimes\ket{k_2}\otimes\cdots\otimes\ket{k_n})&=\tr_BP_B^{[M]}\Top^{[N]}r^{L_0}\ .
\end{align}
But according to Lemma~\ref{lem:projectedcorrelationfct}, the latter expression approximates a correlation function (see Observation~\ref{obs:conformalmapcorrfunctorus}) 
\begin{align*}
\big|\tr_BP_B^{[M]}\Top^{[N]}r^{L_0}
-\torusq^{c/24}F^{(1)}_{\torusq}\left((\tilde{a}_1,\zeta_1),\ldots,(\tilde{a}_n,\zeta_n)\right)\big| \leq \epsilon_{MPS_1}(n,q,z,r,N,M)
\end{align*}
with error 
\begin{align}
\epsilon_{MPS_1}(n,q,z,r,N,M)=\left(n\,\kappa\,q^{N/4} \, r^{c/24}Z_B(r) +  r^{M/2} r^{c/12}Z_B(\sqrt{r}) \right)\\\
\qquad\cdot\left(\max_{1\leq j \leq n}\frac{\sqrt{|I_{B^{(j)}}|}\cdot\bnd_j(\sqrt{q},z)}{1-\sqrt{q}}\right)^n\ .\label{eq:mpsoneerror}
\end{align}
Thus the MPS~$\functionalstate_{MPS_1}$ approximates the correlation genus-$1$-correlation function, as claimed. 
\end{proof}

\subsubsection{Scaling of the required bond dimension}\label{sec:propsofapprox}
Having constructed an MPS for approximating CFT correlation functions, we will next discuss  the relationship between the bond dimension, the approximation accuracy, and parameters of the CFT.  Recall that the finite-dimensional MPS approximates
the exact correlation function (with prefactors) with error 
$\epsilon_{MPS_0}$ (cf.~\eqref{eq:mpserror}) in the genus-$0$ case and with error~$\epsilon_{MPS_1}$ (cf.~\eqref{eq:mpsoneerror}) in the genus-$1$ case. 
Both quantities depend on
\begin{enumerate}[(i)]
\item\label{it:numinsertions}
the number~$n$ of insertions. (In the case of a full CFT, the number~$n$  can be replaced by the number of non-trivial insertions, i.e., those not equal to the vacuum vector -- see the discussion in Section~\ref{sec:finitelycorrfull}.)
\item\label{it:minimumdistancerl}
the minimal distance~$\mathsf{d}$ between insertion points (via $q=e^{-\mathsf{d}}$)  and the offset~$\mathsf{d}_0$ (see Fig.~\ref{fig:torusvm}).
(Alternatively, we may use the length~$L$ of the interval considered: according to Observation~\ref{obs:conformalmapcorrfuncplane}, the number of insertions is equal to~$n=L/\mathsf{d}$, where $\mathsf{d}$ is again the minimal distance between two insertion points.) The quantity $\mathsf{d}$ (respectively $1/\mathsf{d}$) may be identified as the ultraviolett (UV) cutoff; the continuum limit corresponds to~$\mathsf{d}\rightarrow 0$.
 In addition, in the genus-$1$ case, we have a regularization parameter $0< r\leq 1$ (which determines the diameter of the torus via $\mathsf{p}=re^{-n\mathsf{d}}$). In the limit $\mathsf{d}\rightarrow 0$, it is natural to set this to an exponentially small value in~$\mathsf{d}$. For concreteness, we will choose $r=q^{1/2}=e^{-\mathsf{d}/2}$. 
\item\label{it:boundednessparamtheta}
the CFT under consideration, in terms of the boundedness parameters~$\vartheta_j$,
\item
the truncation parameter~$N\in\mathbb{N}_0$, and, in the genus-$1$-case, the cutoff parameter~$M\in\mathbb{N}_0$. We will treat both on the same footing by setting $M=0$ for genus-$0$ vacuum-to-vacuum correlation functions  (see remark after Lemma~\ref{lem:mpsformapproxgenuszero}).
\label{it:replaceditem}
\end{enumerate}
The quantities~\eqref{it:numinsertions} and~\eqref{it:minimumdistancerl} determine the correlation function under consideration.  In a first step, we will substitute~\eqref{it:replaceditem}
by
\begin{enumerate}[(i')] \setcounter{enumi}{3}
\item\label{it:dimensioncft}
the bond dimension~$D$ of the MPS, determined by 
\begin{align*}
D= d_{B}(M+nN)\ .
\end{align*}
Here we assume translation-invariance for simplicity, i.e., the modules $B^{(j)}=B$ are all identical for $j=1,\ldots,n$. 
According to Lemma~\ref{lem:mpsformapproxgenuszero} and~\ref{lem:mpsformapproxtorus}, this bond dimension is 
sufficient to yield an MPS with error~$\epsilon_{MPS}$. 
\end{enumerate}
We stress that~(\ref{it:dimensioncft}') depends on the dimension of the weight spaces, which in turn is determined by the CFT under consideration. Indeed, the following analysis shows that~(\ref{it:dimensioncft}') is the most significant dependence on the CFT (rather than~\eqref{it:boundednessparamtheta}). In the following, we will use a simple upper bound on the growth of the weight spaces of modules (i.e., the function $d_B(\cdot)$). Its dependence on the VOA~$\cV$ in question is captured  by the quantity
\begin{align}\label{eq:dimctwodefv}
\dimctwo = \dim \cV/C_2 \,,\qquad\,C_2=\mathsf{span}\{u_{-2}v\ |\ u,v\in\cV\}\ .
\end{align} 
We then have the following statements:
\begin{corollary}
\begin{enumerate}[(i)]
\item\label{it:firstclaimbonddim}
The constructed MPS  requires a bond dimension~$D$ scaling polynomially in~$1/\eps$, where $\eps>0$ is the approximation accuracy (i.e., $\eps_{MPS}\leq \eps$). 
The degree of the polynomial depends linearly on each of the parameters~$\dimctwo$, the number of insertions~$n$, and the inverse distance~$1/d$ between insertions. That is, there are constants~$\Omega_1,\Omega_2>0$, such that a bond dimension of 
\begin{align*}
D\approx \Omega_1 \cdot\left(1/\epsilon\right)^{\Omega_2\,\dimctwo \,\frac{n}{d}}\ 
\end{align*}
is sufficient to yield accuracy~$\eps$. 
\item\label{it:secondclaimbonddim}
For a fixed approximation accuracy~$\eps>0$, the required bond dimension~$D$ grows sub-exponentially with the number~$n$ of insertions (or the number of non-trivial insertions in the full CFT case, see Section~\ref{sec:finitelycorrfull}), and the rate of growth is determined by~$\dimctwo$.
More precisely, there is a constant~$\Omega=\Omega(\eps)>0$ such that 
\begin{align}
D\approx \Omega \cdot e^{\frac{\pi}{3}\sqrt{\dimctwo n}}\label{eq:logarithmicbonddimbnd}
\end{align}
is sufficient to yield the desired accuracy. 
\end{enumerate}
\noindent In particular, for a fixed interval length~$L$, the dependence of the bond dimension on the UV cutoff~$1/\mathsf{d}=n/L$ is sub-exponential. 
\end{corollary}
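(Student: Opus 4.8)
The plan is to combine the explicit approximation errors $\epsilon_{MPS_0}$ and $\epsilon_{MPS_1}$ from~\eqref{eq:mpserror} and~\eqref{eq:mpsoneerror} with the identification of the bond dimension $D = d_B(M+nN)$ furnished by Lemmas~\ref{lem:mpsformapproxgenuszero} and~\ref{lem:mpsformapproxtorus}, and then to control the growth of $d_B(\cdot)$ through the weight-space bound~\eqref{eq:dimweightspaces}. Throughout I treat the factor $\bigl(\max_j \sqrt{|I_{B^{(j)}}|}\,\bnd_j(\sqrt q,z)/(1-\sqrt q)\bigr)^n$ as an overall normalization fixing the scale of the correlation function, so that ``accuracy $\eps$'' refers to the \emph{relative} error $q^{N/4}\,n\,\kappa$ (and, in genus $1$, the additional $r^{M/2}r^{c/12}Z_B(\sqrt r)$ contribution), exactly as in the discussion following Corollary~\ref{cor:correlationfunctionsapprox}. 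For definiteness I assume translation invariance, $B^{(j)}=B$, so that a single function $d_B$ governs the estimate.

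The first step is to invert the error bounds. Writing $q=e^{-\mathsf{d}}$, the requirement $\epsilon_{MPS}\le\eps$ becomes, after taking logarithms, $\tfrac{\mathsf{d}}{4}N \ge \log(1/\eps) + \log(n\kappa) + O(1)$, so it suffices to choose a truncation level $N=\Theta\!\bigl(\tfrac1{\mathsf{d}}(\log(1/\eps)+\log n)\bigr)$; in the genus-$1$ case the choice $r=q^{1/2}$ together with the term $r^{M/2}$ forces, in the same manner, a cutoff $M=\Theta\!\bigl(\tfrac1{\mathsf{d}}\log(1/\eps)\bigr)$, the character factors $Z_B(r), Z_B(\sqrt r)$ being absorbed into the slowly varying prefactor. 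Substituting into $K:=M+nN$ (cf.~\eqref{eq:dbmdef}) then gives $K=\Theta\!\bigl(\tfrac{n}{\mathsf{d}}(\log(1/\eps)+\log n)\bigr)$, the single weight-level cutoff that controls the bond dimension via $D=d_B(K)$.

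The second step bounds $d_B(K)$. Summing~\eqref{eq:dimweightspaces} over levels gives $d_B(K)=\sum_{0\le m\le K}\dim B_m \le (\dim B_0)(K+1)\,P(K,\dimctwo)$, so everything reduces to the growth of the number $P(K,\dimctwo)$ of $\dimctwo$-component multipartitions, which is the content of Lemma~\ref{lem:app:boundpartitionfunction}. For part (i) I would use the crude exponential bound $P(K,\dimctwo)\le e^{\alpha\dimctwo K}$ for a universal constant $\alpha$; combined with $K=\Theta(\tfrac{n}{\mathsf{d}}\log(1/\eps))$ this yields $\log D \lesssim \alpha\,\dimctwo\,\tfrac{n}{\mathsf{d}}\log(1/\eps)+O(\log(1/\eps))$, i.e.\ $D\approx \Omega_1(1/\eps)^{\Omega_2\,\dimctwo\,n/\mathsf{d}}$, a polynomial in $1/\eps$ whose degree is linear in each of $\dimctwo$, $n$ and $1/\mathsf{d}$. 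For part (ii) I would instead invoke the sharp Hardy--Ramanujan-type asymptotics $P(K,\dimctwo)\le \mathrm{poly}(K)\,\exp\!\bigl(\pi\sqrt{2\dimctwo K/3}\bigr)$; fixing $\eps$ so that $K=\Theta(n)$ up to logarithmic factors gives $D\lesssim \exp\!\bigl(c\sqrt{\dimctwo n}\bigr)$, sub-exponential in $n$ with rate set by $\dimctwo$, matching~\eqref{eq:logarithmicbonddimbnd}, and the final ``in particular'' statement follows by substituting $n=L/\mathsf{d}$.

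The main obstacle is the sharp multipartition estimate used in part (ii): while the crude exponential bound needed for part (i) is elementary, the genuinely sub-exponential rate $\exp(\pi\sqrt{2\dimctwo K/3})$ — and in particular the correct dependence of the exponent on $\dimctwo$ — requires analyzing the generating function $\prod_{j\ge1}(1-x^j)^{-\dimctwo}$ by a saddle-point (Meinardus-type) argument, which is exactly what Lemma~\ref{lem:app:boundpartitionfunction} supplies. A secondary subtlety I would flag is the bookkeeping in the inversion step: one must verify that the normalization factor $\bigl(\max_j\sqrt{|I_{B^{(j)}}|}\,\bnd_j(\sqrt q,z)/(1-\sqrt q)\bigr)^n$ may legitimately be peeled off as the scale of the correlation function, so that $N$ needs to grow only logarithmically (and not linearly) in $n$; a linear growth of $N$ in $n$ would make $K$ scale like $n^2$ and destroy the sub-exponential conclusion.
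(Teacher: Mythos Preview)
Your proposal is essentially correct and follows the same approach as the paper: invert the error bounds~\eqref{eq:mpserror},~\eqref{eq:mpsoneerror} to fix $N$ and $M$, set $K=M+nN$, and bound $D=d_B(K)$ via~\eqref{eq:dimweightspaces} and the multipartition estimate of Lemma~\ref{lem:app:boundpartitionfunction}. Two small corrections: first, the paper uses the single sub-exponential bound $P(K,\dimctwo)<e^{2\pi\sqrt{\dimctwo K/6}}$ for \emph{both} parts (the polynomial statement in~(i) is then obtained by a further crude overestimate), so your separate ``crude exponential'' bound for part~(i) is unnecessary; second, what you flag as the main obstacle is not one --- Lemma~\ref{lem:app:boundpartitionfunction} is not a Meinardus-type saddle-point argument but an elementary Siegel-style optimization of $P(K,\dimctwo)<q^{-K}g_{\dimctwo}(q)$ over $0<q<1$, using only the bound $\log g_{\dimctwo}(q)\le \dimctwo\,\tfrac{q}{1-q}\tfrac{\pi^2}{6}$.
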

Remarkably, the dependence of the required bond dimension on the VOA~$\cV$ is reduced to a single parameter~$\dimctwo$. In some sense, this quantity determines the difficulty of approximation; because of its direct link to the bond dimension, it appears to represent a measure of the amount of correlations and/or entanglement encoded in correlation functions.  Following Gaberdiel and Neitzke~\cite[p. 324]{GaberdielNeitzke}, we can interpret the parameter~$\dimctwo$ roughly as the number of degrees of freedom of our theory if counted using free fermions for comparison. This matches the assumption that a theory with more degrees of freedom should require a larger bond dimension. 

If the quantity~$\dimctwo$ could be replaced by the central charge~$c$  of the corresponding representation of the Virasoro algebra acting on the module $B$, these bounds would be closer to what may be  conjectured based on results found in the literature: a bound similar  to~\eqref{eq:logarithmicbonddimbnd} would imply that  the bond dimension scales as the density of states of the CFT (with the UV cutoff $1/\mathsf{d}$ being the corresponding energy). The latter  is given by the famous formula of Cardy in the context of BTZ black holes~\cite{cardy1986operator}. Indeed, this formula was also employed by Zaletel and Mong~\cite{zaletel2012exact} for a more heuristic argument to get an estimate on the bond dimension in the context of quantum Hall physics.  Unfortunately, though, only partial results are known that connect  $\dimctwo$ to  the central charge~$c$. Gaberdiel and Neitzke~\cite{GaberdielNeitzke} have shown that it is an upper bound on the effective central charge, 
\begin{align}
  c - 24 \min (\wt b \,:\,b \in B) \leq \frac{\dimctwo}{2} \,. 
\end{align}
However, we do not know how tight this bound is, nor can we explain appearance of the factor of one half. 

Let us also mention that intuitively, we could also expect that the scaling of the bond dimension is related to the well-known entropy formula by Holzhey et.~al. and Calabrese and Cardy~\cite{Holzhey1994443,CalabreseCardy04}. This states that the von Neumann entropy of the reduced density matrix of a line of lenght \(L\) of a full conformal field theory should scale as \(c/3 \log L\), with \(c\) the central charge of the theory. However, as noted in~\cite{Schuchetalentropymps2008}, a bound on the von Neumann entropy is usually not sufficient to get a rigorous estimate on the required bond dimension. Moreover, here our aim is not to approximate the path integral description of the reduced density matrix, but rather to get approximate expressions for correlation functions. 

\begin{proof}
 Let us first consider the bond dimension~$D$ of the MPS. Eq.~\eqref{eq:dimweightspaces} implies the bound
\begin{align}
  D = d_{B_j}(M+ nN) \leq (\dim B_0)\cdot \sum_{1\leq l \leq M + nN} P(l,\dimctwo) \,,
\end{align}
where $\dim B_0$ is  the dimension of the top level of the module~$B$. Inserting the (pretty rough) upper bound on the multi-partition function $P(n,\dimctwo)$ obtained in Lemma~\ref{lem:app:boundpartitionfunction} in the appendix, we can further bound this by
\begin{align}
  D \leq (\dim B_0) (M+nN) e^{2\pi \sqrt{\frac{\dimctwo}{6} (M+nN)}}\, .\label{eq:dupperboundsubex}
\end{align}
The bound~\eqref{eq:dupperboundsubex}  implies a sub-exponential growth of the bond dimension with the truncation parameters $M,N$. However, the approximation errors~$\eps_{MPS_0},\eps_{MPS_1}$  (cf.~\eqref{eq:mpserror},~\eqref{eq:mpsoneerror}) scale  exponentially in these parameters.
This amounts to saying that the bond dimension scales at most polynomially within the approximation error.

More precisely, recall from~\eqref{eq:mpserror} that $\eps_{MPS_0}$ is proportional to $q^{N/4}=e^{-\mathsf{d}N/4}$, and thus decays exponentially
with a rate determined by the minimal distance~$d$ between the insertions, see Fig.~\ref{fig:plane}~(b). For the genus-$1$ correlation function, we choose the cutoff level to be $M=N$, and the regularization parameter $r=q^{1/2}=e^{-\mathsf{d}/2}$.  Then both in the genus-$0$ and genus-$1$-case, we have 
\begin{align}
  \eps_{MPS} \leq (\textrm{constant})\cdot e^{-\mathsf{d} N/4}\, .\label{eq:truncationpfixedeps}
\end{align}
In particular, combining this with~\eqref{eq:dupperboundsubex}, we find that for a fixed approximation guarantee~$\eps$, a bond dimension of  roughly
\begin{align}
  D \lessapprox (\textrm{constant})\, \frac{n+1}{\mathsf{d}} \,\log\left(\frac{1}{\eps}\right) e^{2\pi \sqrt{\dimctwo \frac{2(n+1)}{3\mathsf{d}}  \log\left(\frac{1}{\eps}\right)}} \,\lessapprox\, (\textrm{constant})\,\left(\frac{1}{\eps}\right)^{\dimctwo\,\frac{4 \pi  (n+1)}{3\mathsf{d}}} \,.
\end{align}
is sufficient (for a fixed number of insertions~$n$). This is the claim~\eqref{it:firstclaimbonddim}. Similarly, for a fixed approximation accuracy~$\epsilon>0$, Eq.~\eqref{eq:truncationpfixedeps} fixes~$N$ and thus the bond dimension~$D$ by~\eqref{eq:dupperboundsubex}. The claim~\eqref{it:secondclaimbonddim} then follows (neglecting logarithmic terms).
\end{proof}

\subsection{Intertwiners and $G$-invariant MPS for WZW models\label{sec:intertwinginvariant}}
We end this section with a discussion of the MPS approximation in the case of our main example, the WZW models. We are interested in the case where~$S$
is the subspace of top-level vectors in an irreducible module. The following statement characterizes the degrees of freedom that 
scaled intertwiners have in this case. It also applies to truncated scaled intertwiners, and thus characterizes MPS for WZW-models. 

\begin{proposition}\label{prop:wzwintertwiner}
Consider a WZW model with internal symmetry group $G$ (a compact Lie group), specified by the Lie algebra~$\g$. For $j=1,2,3$, fix integral dominant weights~$\lambda_j$ such that~$\lambda_j(\theta) \leq k$.
 Let $\modulekflambda{\lambda_j}$ be the corresponding $\voalkzero$-module, and recall  (cf.~Section~\ref{sec:introVOAmodules}) that the top-level~$V_j=\modulekflambda{\lambda_j}(0)$ is an irreducible $\g$-module. Let  $U_j:G\rightarrow\mathsf{GL}(V_j)$  be the associated unitary representation of~$G$. 
Let   $S=\modulekflambda{\lambda_3}(0)$ be the top level in the module $\modulekflambda{\lambda_3}$. Then there is a one-to-one correspondence between
\begin{enumerate}[(i)]
\item
$G$-intertwining maps $\cW:V_3\otimes V_2\rightarrow V_1$
\item
$S$-bounded intertwiners $\W(\cdot,z):S\rightarrow\mathsf{End}(\modulekflambda{\lambda_2},\modulekflambda{\lambda_1})\{\{z,z^{-1}\}\}$.
\end{enumerate}
More precisely, the correspondence is given by the identity
\begin{align}
\cW(\varphi_3\otimes\varphi_2)=z^{\tau}q^{-\frac{1}{2}\sum_{j=1}^3 h_{\lambda_j}}\cdot P_1^{[0]}\W_q(\varphi_3,z)P_2^{[0]}\varphi_2\label{eq:correspondencerestriction}
\end{align}
for  $\varphi_2\in V_2=\modulekflambda{\lambda_2}(0)$ and $\varphi_3\in S=V_3=\modulekflambda{\lambda_3}(0)$, where $\tau=h_{\lambda_3}+h_{\lambda_2}-h_{\lambda_1}$. 
\end{proposition}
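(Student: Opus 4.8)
The plan is to show that the normalization chosen in \eqref{eq:correspondencerestriction} is designed precisely so that its right-hand side collapses to the zero mode $\iy(\vphi_3)_{\tau,0}$ of the underlying intertwiner restricted to the top levels; once this is established, the asserted bijection is nothing but the Frenkel--Zhu identification of VOA intertwiners with $\g$-module intertwiners, packaged through the scaled intertwiner.

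First I would evaluate the right-hand side of \eqref{eq:correspondencerestriction} explicitly. Since $\vphi_3\in\modulekflambda{\lambda_3}(0)$ and $\vphi_2\in\modulekflambda{\lambda_2}(0)$ are top-level vectors, $q^{L_0/2}\vphi_3=q^{h_{\lambda_3}/2}\vphi_3$ and $q^{L_0/2}\vphi_2=q^{h_{\lambda_2}/2}\vphi_2$, so by linearity of $\iY$ in its first argument and Definition~\ref{def:scaledintertwiner},
\begin{align*}
\W_q(\vphi_3,z)P_2^{[0]}\vphi_2=q^{(h_{\lambda_3}+h_{\lambda_2})/2}\,q^{L_0/2}\,\iY(\vphi_3,z)\vphi_2 .
\end{align*}
Writing out the mode expansion $\iY(\vphi_3,z)\vphi_2=\sum_{m\in\Z}\iy(\vphi_3)_{\tau,m}\vphi_2\,z^{-m-\tau}$ and invoking the level-shift rule \eqref{eq:containementinequalitylevelsintertw}, which here reads $\iy(\vphi_3)_{\tau,m}\modulekflambda{\lambda_2}(0)\subset\modulekflambda{\lambda_1}(-m)$, the top-level projection $P_1^{[0]}$ kills every term except $m=0$. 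Hence $P_1^{[0]}\iY(\vphi_3,z)\vphi_2=\iy(\vphi_3)_{\tau,0}\vphi_2\,z^{-\tau}$ lies in $V_1$, on which $q^{L_0/2}$ acts as $q^{h_{\lambda_1}/2}$. Collecting the three powers of $q$ and the factor $z^{-\tau}$ against the prefactor $z^{\tau}q^{-\frac12\sum_{j=1}^3 h_{\lambda_j}}$, all dependence on $z$ and $q$ cancels and one obtains the clean identity $\cW(\vphi_3\otimes\vphi_2)=\iy(\vphi_3)_{\tau,0}\vphi_2$.

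This already shows that $\cW$ is well defined, independent of $z$ and $q$, and takes values in $V_1$. To see that it is a $G$-intertwining map, I would appeal to \eqref{eq:wzwintertwinerprop} of Example~\ref{exmp:wzwintertwiner}, namely $\fa\,\iy(\vphi_3)_{\tau,0}\vphi_2=\iy(\fa\vphi_3)_{\tau,0}\vphi_2+\iy(\vphi_3)_{\tau,0}\fa\vphi_2$ for $\fa\in\g$; this is exactly the assertion that $\cW$ intertwines the $\g$-action on the tensor-product module $V_3\otimes V_2$ with that on $V_1$. Since the unitary representations $U_j$ integrate the $\g$-actions on the irreducible top levels $V_j$, the infinitesimal intertwining property lifts to $\cW\,(U_3(g)\otimes U_2(g))=U_1(g)\,\cW$ for all $g\in G$, giving a map of type $(i)$.

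It remains to establish the bijection. The key external input is the Frenkel--Zhu correspondence (Corollary 3.2.1 of~\cite{Frenkel:1992jt}, resting on Zhu's theory~\cite{Zhu:1996cp}): under the level constraints $\lambda_j(\theta)\le k$ assumed here, the assignment $\iY\mapsto\iy(\cdot)_{\tau,0}|_{V_3\otimes V_2}$ is a linear isomorphism from the space of VOA intertwiners of type $\itype{\modulekflambda{\lambda_1}}{\modulekflambda{\lambda_3}}{\modulekflambda{\lambda_2}}$ onto the space of $\g$-module intertwiners $V_3\otimes V_2\to V_1$. Injectivity reflects that an intertwiner between irreducible modules is reconstructed from its top-level zero mode via the Jacobi identity, and surjectivity is the lifting statement of Frenkel--Zhu. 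By the computation above this isomorphism is precisely the map sending $\W_q$ to $\cW$, so I would only need to note that the scaled-intertwiner side carries no additional data: for fixed $q$ the passage $\iY\leftrightarrow\W_q$ is a bijection (undo the $q^{L_0/2}$ scalings), and since $S=\modulekflambda{\lambda_3}(0)$ is finite-dimensional, Proposition~\ref{prop:Sboundedfinitedimensional} guarantees that every such intertwiner is automatically $S$-bounded. Thus $(i)$ and $(ii)$ are in one-to-one correspondence, with explicit inverse given by \eqref{eq:correspondencerestriction}. The main obstacle is exactly this last step: the bijection hinges entirely on the Frenkel--Zhu/Zhu reconstruction theorem, which is what ensures that restricting to the top-level zero mode loses no information and that every $\g$-intertwiner lifts; everything else is routine bookkeeping with the grading, where care is needed only to match the normalization $z^{\tau}q^{-\frac12\sum_{j=1}^3 h_{\lambda_j}}$ and to verify that $\lambda_j(\theta)\le k$ are precisely the conditions under which the lifting holds.
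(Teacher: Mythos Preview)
Your proposal is correct and follows essentially the same approach as the paper: you reduce the right-hand side of \eqref{eq:correspondencerestriction} to the zero mode $\iy(\vphi_3)_{\tau,0}\vphi_2$, verify $\g$-intertwining via \eqref{eq:wzwintertwinerprop}, and exponentiate to $G$. The only difference is in the converse direction: the paper supplies an explicit reconstruction algorithm in Appendix~\ref{app:algorithmwzw} (which it acknowledges is an application of Zhu's and Frenkel--Zhu's theory), whereas you invoke the Frenkel--Zhu correspondence directly; both are valid, and your shortcut is exactly the abstract statement the paper's algorithm makes constructive.
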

Observe that the expression on the rhs.~of Eq.~\eqref{eq:correspondencerestriction} can be written in terms of the truncated scaled intertwiner~$\W_q^{[N]}$ with truncation parameter $N=0$ if the vectors belong to the top levels, i.e.,   
\begin{align}
P_1^{[0]}\W_q(\varphi_3,z)P_2^{[0]}\varphi_2=\W_q^{[0]}(\varphi_3,z)\varphi_2 \quad \textrm{ for }\varphi_j\in \modulekflambda{\lambda_j}(0)\ . \label{eq:ponezerowqvarphi}
\end{align}   In other words, a scaled intertwiner~$\W_q^{[N]}$ truncated at $N=0$ determines a $G$-intertwining map and vice versa. As discussed in Section~\ref{sec:ginvariantmpsdescription}, the latter describe $G$-invariant MPS. That is, part of the approximation procedure for WZW models involves the construction of a group covariant MPS. Conversely, for every such MPS --- or rather for every group-covariant isometry associated with it --- we can uniquely construct an intertwiner for a WZW model. In Appendix~\ref{app:algorithmwzw}, we describe a constructive algorithm for this purpose.

We remark that in fact, it is possible to extend Proposition~\ref{prop:wzwintertwiner}: a $G$-intertwining map~$\cW:V_3\otimes V_2\rightarrow V_1$ actually determines the whole intertwiner $\iY$ (defined beyond the subspace~$S$) of type $\itype{\modulekflambda{\lambda_1}}{\modulekflambda{\lambda_3}}{\modulekflambda{\lambda_2}}$. This follows by extending the algorithm in Appendix~\ref{app:algorithmwzw} along the lines of~\cite{Huang:1992kw}. Here we restrict to $S$-bounded scaled intertwiners for concreteness and since this is our main object of interest.

The proof of Proposition~\ref{prop:wzwintertwiner} involves standard arguments: the constructive algorithm is an explicit application of Zhu's theory~\cite{Zhu:1996cp,Frenkel:1992jt}, and can also be seen as the generalization of Huang's arguments for minimal models~\cite{Huang:1992kw}. It is based on the observation that the matrix elements of the zero-mode fixes all other matrix elements by the local symmetry properties as defined by the affine Lie algebra~$\ga$. The argument is presented in Appendix~\ref{app:algorithmwzw}. 
For completeness, here we argue one direction of Proposition~\ref{prop:wzwintertwiner}: every scaled intertwiner~$\W$ defines a $G$-intertwining map~$\cW$ by restriction 
of the zero mode to the top levels.

\begin{proof}
According to Proposition~\ref{prop:Sboundedfinitedimensional}, an intertwiner $\iY$ of type $\itype{\modulekflambda{\lambda_1}}{\modulekflambda{\lambda_3}}{\modulekflambda{\lambda_2}}$  is $S$-bounded, and we obtain a scaled $S$-bounded intertwiner $\W(\cdot,z)$.  It will be convenient to work with the intertwiner $\iY$ instead of its scaled version. Since scaling amounts to the introduction of additional factors when considering homogeneous vectors (which is sufficient for this purpose), we can relate the rhs.~of Eq.~\eqref{eq:correspondencerestriction} to the zero-mode~$\iy(\cdot)_{\tau,0}$ of $\iY(a,z)=\sum_{m\in\mathbb{Z}}\iy(a)_{\tau,0}z^{-m-\tau-n}$: we have 
with~\eqref{eq:ponezerowqvarphi}
\begin{align*}
P_1^{[0]} \W_q(\vphi_3,z) P_2^{[0]}\vphi_2 &=P_1^{[0]} \W^{[0]}(P_3^{[0]} \vphi_3,z) P_2^{[0]}\vphi_2\\
&= q^{(h_{\lambda_1} + h_{\lambda_2}+h_{\lambda_3})/2} z^{-\tau} \iy(\vphi_3)_{\tau,0} \vphi_2\, .
\end{align*}
Here we used that elements of the top level~$\modulekflambda{\lambda_j}(0)$ have weight $h_{\lambda_j}$, and $\iy(\varphi_3)_{\tau,0}\varphi_2$ also belongs to the top level for $\varphi_2\in \modulekflambda{\lambda_2}(0)$. 
 Statement~\eqref{eq:correspondencerestriction} then expresses the fact that
a $G$-intertwiner~$\cW:V_3\otimes V_2\rightarrow V_1$ is uniquely determined by
\begin{align}
P_1^{[0]}\iy(P_3^{[0]}\cdot)_{\tau,0}P_2^{[0]}:
\modulekflambda{\lambda_3}(0)\rightarrow \mathsf{End}(\modulekflambda{\lambda_2}(0),\modulekflambda{\lambda_1}(0))\ ,\label{eq:restrictiontozeromodeunscaled}
\end{align}
and vice versa. We call~\eqref{eq:restrictiontozeromodeunscaled} the {\em restriction of the zero-mode of $\iY$ to the top levels}. In this language, Eq.~\eqref{eq:correspondencerestriction} takes the form
\begin{align}
\cW(\varphi_3\otimes\varphi_2)&=P_1^{[0]}\iy(P_3^{[0]}\varphi_3)_{\tau,0}P_2^{[0]}\varphi_2=\iy(\varphi_3)_{\tau,0}\varphi_2\ ,\label{eq:correspondencerestrictionnew}
\end{align}
for  vectors~$\varphi_j\in \modulekflambda{\lambda_j}(0)$, $j=2,3$ belonging to the top level. This is because the projections~$P^{[0]}_j$, $j=2,3$  leave these vectors invariant and application of the zero-mode does not change the weight, resulting in a vector again belonging to the top level.

Given an intertwiner $\iY(a,z)=\sum_{m\in\mathbb{Z}}\iy(a)_{\tau,0}z^{-m-\tau-n}$ of type $\itype{\modulekflambda{\lambda_1}}{\modulekflambda{\lambda_3}}{\modulekflambda{\lambda_2}}$, define a linear map~$\cW:V_3\otimes V_2\rightarrow V_1$  by~\eqref{eq:correspondencerestrictionnew}. We claim that $\cW$ is a $G$-intertwining map. (The converse direction will be shown in Appendix~\ref{app:algorithmwzw}.)

It follows from Eq.~\eqref{eq:wzwintertwinerprop} that for $\fa \in \g $ we have
\begin{align}
  \cW(\fa\vphi_3 \otimes \vphi_2) = \fa \cW(\vphi_3 \otimes \vphi_2) -  \cW(\vphi_3 \otimes \fa\vphi_2) \,,\qquad\textrm{ for all }\varphi_j\in V_j=\modulekflambda{\lambda_j}(0)\ ,
\end{align}
where we denote the action of $\fa \in \g\subset\ga $ on the irreducible module $\modulekflambda{\lambda_i}$, $i=1,2,3$ by the same letter as the corresponding Lie algebra element (Note that these operators also do not change the weight). Exponentiating, we conclude that the operator~$\cW$ satisfies
\begin{align}
  \cW\, (U_3(g) \otimes U_2(g)) = U_1(g) \,\cW\,, \qquad\textrm{ for all }g\in G\ ,
\end{align}
where $U_i\,:\,G \to \mathsf{GL}(V_i)$, $i=1,2,3$, are the irreducible unitary representations of the compact Lie group $G$. Hence the operator $\cW$ intertwines irreducible group representations of $G$, as claimed. 
\end{proof}

\section{Finitely correlated states for full CFTs\label{sec:fullcft}}

Using VOAs, we have established a faithful MPS representation of chiral CFTs. These describe fields depending holomorphically on the coordinates; antiholomorphic fields are not incorporated into this framework. In the following, we describe how to extend these results to a full CFT containing both holomorphic and antiholomorphic fields (also referred to left- and right-movers).

\subsection{Background: VOAs and conformal full field algebras\label{sec:voafullfieldalgebras}}

We again restrict our attention to the complex plane as well as the torus, where a rigorous construction of CFTs has been given by  Huang and Kong~\cite{HuangKong,Huang:2010fv} in terms an algebraic object called a {\em conformal full field algebra}. In contrast, the algebraic construction of CFTs on higher-genus surfaces remains a topic of ongoing research, but see the work of Fuchs, Runkel and Schweigert~\cite{fuchs2005tft}.

We only briefly sketch the pertinent ingredients in this construction and refer
the interested reader to~\cite{HuangKong,Huang:2010fv}. Following the seminal work of Belavin, Polyakov and Zamolodchikov~\cite{BPZ84}, one starts with a
symmetry algebra being a product of two chiral factors (associated with left- and right-movers) and studies its representations. The factors
each contain a copy of the Virasoro algebra, whose generators $\{L^L_n\}_n$ respectively $\{L^R_n\}_n$ are the Laurent modes
of a chiral respectively antichiral field (i.e., the energy-momentum tensor). The whole set of fields consists of conformal families that are obtained as descendants of some number of primary fields. As discussed in the introduction
 (see Section~\ref{sec:cftintro}), correlation functions  (or `insertions') depend on two a priori independent variables~$z,\bar{z}\in\mathbb{C}$. In the statistical mechanics (Euclidean) case we will choose~$\bar{z}$ to be the complex conjugate of~$z$, whereas for relativistic theories,  both parameters stay independent, but real and imaginary part are equal to the lightcone variables before compactifying the space-time.

\subsubsection{Conformal full field algebras}
In the terminology of VOAs, a CFT on the plane (or more precisely, a conformal full field algebra as defined in~\cite{HuangKong}) can be constructed starting from two VOAs~$\cV^L=(\cV^L,\Y^L,\1^L,\omega^L)$ and $\cV^R=(\cV^R,\Y^R,\1^R,\omega^R)$ satisfying our main technical assumptions: they are rational and $C_2$-cofinite, and we assume that they are unitary. Given two such VOAs, one obtains a VOA~$\cV^L\otimes \cV^R$ by setting
\begin{align*}
\1=\1_{\cV^L}\otimes\1_{\cV^R}\ ,\qquad \omega=\omega_{\cV^L}\otimes\1_{\cV^R}+\1_{\cV^L}\otimes\omega_{\cV^R}
\end{align*}
and
\begin{align*}
\Y(v^L\otimes v^R,z)&=\Y^L(v^L,z)\otimes\Y^R(v^R,z)\ .
\end{align*}
Observe also that if $A^L$ and $B^R$ are modules of $\cV^L$ and $\cV^R$, respectively, then $A^L\otimes B^L$ defines a module for the VOA~$\cV^L\otimes\cV^R$.

\subsubsection{The full CFT space}
A CFT based on~$\cV^L\otimes\cV^R$ is  determined by
a module~$\cH$ of the VOA~$\cV^L\otimes\cV^R$ and an intertwiner~$\iY$ of type~$\itype{\cH}{\cH}{\cH}$. Let us first discuss properties of the module~$\cH$. Under the given technical assumptions,~$\cH$ decomposes (see~\cite[Corollary 2.2]{HuangKong}) as
\begin{align}
\cH\cong \bigoplus_{j}(A[j]\otimes B[j])\ \label{eq:fullcfthilbertspace}
\end{align}
into tensor products of irreducible modules $A[j]$ of $\cV^L$ and $B[j]$ of $\cV^R$ (in contrast to~\cite{HuangKong}, we include multiplicities in this sum). Importantly, this is a finite sum for the VOAs of interest, since rational VOAs only have finitely many non-isomorphic irreducible modules~\cite{Frenkel:1993fv}. As for the VOAs $\cV^L$ and $\cV^R$, we assume that the modules~$A[j]$ of~$\cV^L$ and~$B[j]$ of~$\cV^R$ are unitary modules. Hence they are equipped with a positive-definite sesquilinear form, as well as with an anti-unitary involution. Upon completion with respect to the positive-definite form, they turn into Hilbert spaces with inner product defined by the extension of the positive definite form to the completion. As in the chiral case, in the following we do not differentiate between the module $A$ and its Hilbert space completion.

 We denote the images of 
  $\bigoplus_j (L_{A[j],n}\otimes \mathsf{id}_{B[j]})$ and $\bigoplus_j (\mathsf{id}_{A[j]}\otimes L_{B[j],n})$
under the isomorphism~\eqref{eq:fullcfthilbertspace}
by~$L^L_n:\cH\rightarrow\cH $ and $L^R_n:\cH\rightarrow\cH$, respectively.
The operators $L^L_0$ and $L^R_0$ give an~$\mathbb{N}_0\times\mathbb{N}_0$-grading of the Hilbert space
\begin{align}
\cH=\bigoplus_{(n^L,n^R)\in \mathbb{N}_0\times \mathbb{N}_0}\cH_{(n^L,n^R)}\qquad\textrm{ where }\qquad 
\cH_{(n^L,n^R)}=\bigoplus_{j}\cH_{(n^L,n^R),j}\ . \label{eq:hgradingfullcft}
\end{align}
A vector $\psi\in \cH_{(n^L,n^R),j}$ satisfies 
\begin{align*}
L^L_0 \psi =(h_{A[j]}+n^L)\psi\qquad L^R_0 \psi =(h_{B[j]}+n^R)\psi\ ,
\end{align*}
where $h_{A[j]}$ and $h_{B[j]}$ denote the highest weights of $A[j]$ and $B[j]$, respectively. The spaces $\cH_{(n^L,n^R),j}$ are given by
\begin{align}
\cH_{(n^L,n^R),j}=A[j]_{n^L}\otimes B[j]_{n^R}
\end{align}
   We will set
\begin{align}
\mathbb{L}_n\equiv L^L_n+L^R_n\qquad\textrm{ for } n\in\mathbb{N}_0\ .\label{eq:fullcftlmodeoperators}
\end{align}
As before, a vector~$\psi\in\cH$  is called homogeneous if it is an eigenvector of the operator~$\mathbb{L}_0$. The eigenvalue is called the weight of $\psi$ and denoted~$\wt \psi$, i.e.,  
\begin{align*}
\mathbb{L}_0\psi &= (\wt \psi)\psi\ .
\end{align*}
It follows that  weights are of the form $h_{A[j]}+h_{B[j]}+n$ for some $j$ and some integer $n\in\mathbb{N}_0$. Finally, we mention that 
primary vectors~$\psi\in\cH$ are defined (as before) by the property that 
\begin{align*}
\mathbb{L}_n \psi=0\qquad\textrm{ for all }n>0\ . 
\end{align*}

\paragraph{Diagonal CFTs and uniqueness of vacuum.}
In fact, the decomposition~\eqref{eq:fullcfthilbertspace} of the full CFT module~$\cH$ can be refined. Denoting by $\{A[\alpha]\}_{\alpha\in\widehat{\cV}^L}$ the set of equivalence classes of irreducible modules of the VOA~$\cV^L$,
and by $\{B[\beta]\}_{\beta\in\widehat{\cV}^R}$ those of the VOA~$\cV^R$,  we  have
\begin{align}
\cH\cong \bigoplus_{\alpha\in\widehat{\cV^L},\beta\in\widehat{\cV^R}} m_{\alpha,\beta} A[\alpha]\otimes B[\beta]\ \label{eq:multiplicitydecompositiondirectsum}
\end{align}
for some  integer multiplicities $m_{\alpha,\beta}\in\mathbb{N}_0$.  Physical considerations~\cite{Moore:1989cm} then further demand that there exists an injective map $\sigma:\widehat{\cV^L}\rightarrow\widehat{\cV^R}$  such that
only pairs of the form~$(\alpha,\beta)=(\alpha,\sigma(\alpha))$ appear in~\eqref{eq:multiplicitydecompositiondirectsum}, that is, we have
\begin{align}\label{eq:directsumdecompfullcft}
  \cH \cong  \bigoplus_{\alpha\in\widehat{\cV^L}} m_{\alpha} A[\alpha]\otimes B[\sigma(\alpha)]\
\end{align}
for some integers~$m_\alpha$. In the following, we will focus (following~\cite[Section 3]{HuangKong}) on CFTs of this form, where $\cV^L=\cV^R=\cV$, and $B[\sigma(\alpha)]=A[\alpha]^\prime$ is the contragredient module of the module~$A[\alpha]$. The decomposition~\eqref{eq:fullcfthilbertspace} then takes the form
\begin{align}\label{eq:fullcftdecomp}
\cH\cong \bigoplus_{j}(A[j]\otimes A[j]^\prime)\,,
\end{align}
where the index $j$ now runs over the set of irreducible modules as well as counting multiplicities. 
 Such CFTs are often  referred to as diagonal CFTs, and they have the feature that the corresponding partition function on the torus is modular invariant~\cite{Huang:2010fv}.
 
Finally, we also add the following assumption on our full CFT: We require that the direct sum decomposition of Eq.~\eqref{eq:directsumdecompfullcft} contains one and only one summand equal to the tensor product of the adjoint module $\cV$ with its corresponding contragredient module $\cV^\prime$, and all other appearing modules are not equal to the adjoint module $\cV$ or $\cV^\prime$. This amounts to saying that there is a unique vacuum state~$\1\otimes \1'\in\cH$. This vector plays a particular role, see~\eqref{eq:vacuumfullcftidentity}. 
  
\subsubsection{Full CFT intertwiner}
Given the intertwiner $\iY$ of ~$\itype{\cH}{\cH}{\cH}$, where $\cH$ is a module of $\cV^L\otimes\cV^R$, the  full (CFT-)intertwiner is given by a formal Laurent-like series in two indeterminates~$z$ and $\bar{z}$. It is given by the expression 
\begin{align}
\cftY(\psi,(z,\bar{z}))&=z^{L^{L}_0}\bar{z}^{L^{R}_0}\cY(\psi,1)z^{-L^{L}_0}\bar{z}^{-L^{R}_0}\qquad\textrm{ for }\psi\in\cH\ .\label{eq:fullcftintertwiner}
\end{align}
This object obeys similar properties as usual intertwiners. Most importantly for our purposes, the following analog of~\eqref{eq:dilationintegrated} holds for dilations:
\begin{align}
q^{\mathbb{L}_0}\mathbb{Y}(\psi,(z,\overline{z}))q^{-\mathbb{L}_0}&=\mathbb{Y}(q^{\mathbb{L}_0}\psi,(qz,q\overline{z}))\ .\label{eq:dilationequationfullcft}
\end{align}
In addition, the full intertwiner~$\cftY$ has the property that on the vacuum vector~$\1\otimes \1'$, it evaluates to the identity (irrespective of the formal parameters $(z,\bar{z})$), that is (see~\cite{HuangKong})
  \begin{align}
    \cftY(\1 \otimes \1',(z,\bar{z})) = \idty_\cH \, .\label{eq:vacuumfullcftidentity}
  \end{align}
One implication of~\eqref{eq:vacuumfullcftidentity} is that insertions of the vector~$\1\otimes\1'$ in a correlation function (see Sec.~\ref{sec:fullcftcorr}) do not affect the value of the correlation function and can be removed. 

The full CFT intertwiner~$\cftY$ can be decomposed into constituent intertwiners, as follows.  Denoting the isomorphism~\eqref{eq:fullcfthilbertspace} by $\gamma:\bigoplus_j (A[j]\otimes B[j])\rightarrow\cH$ and omitting superscripts $\{L,R\}$ for notational convenience, it can be shown (see~\cite[Proposition~2.3]{HuangKong}) that the intertwiner~\eqref{eq:fullcftintertwiner} has the form
\begin{align*}
\cftY(\gamma(\tilde{a}_j\otimes\tilde{b}_j),(z,\bar{z}))\gamma(a_k\otimes b_k)&=\sum_\ell \gamma\left(\itw{A[\ell]}{A[j]}{A[k]}(\tilde{a}_j,z)a_k\otimes \itw{B[\ell]}{B[j]}{B[k]}(\tilde{b}_j,\bar{z})b_k\right)\ ,
\end{align*}
for $\tilde{a}_j\in A[j]$, $\tilde{b}_j\in B[j]$  and
$a_k\in A[k]$, $b_k\in B[k]$, for some intertwiners $\itw{A[\ell]}{A[j]}{A[k]}$ of type $\itype{A[\ell]}{A[j]}{A[k]}$,
and intertwiners $\itw{B[\ell]}{B[j]}{B[k]}$ of type $\itype{B[\ell]}{B[j]}{B[k]}$.
In the following, we also omit explicity writing the isomorphism, i.e., we write this decomposition as
\begin{align}
\cftY(\tilde{a}_j\otimes\tilde{b}_j,(z,\bar{z}))a_k\otimes b_k&=\sum_\ell\itw{A[\ell]}{A[j]}{A[k]}(\tilde{a}_j,z)a_k\otimes \itw{B[\ell]}{B[j]}{B[k]}(\tilde{b}_j,\bar{z})b_k\ .\label{eq:generalintertwinerdecomposed}
\end{align} 
In the special case of a diagonal CFT,  intertwiners have the form
\begin{align}\label{eq:decompfullcftintdiagoal}
  \cftY(\cdot,(z,\bar{z})): \cH  &\to \mathsf{End}(\cH)\{\{z,z^{-1},\bar{z},\bar{z}^{-1}\}\} 
  \end{align}
  where 
  \begin{align}  
  \cftY(a_j\otimes a'_j,(z,\bar{z}))b_k\otimes b'_k &=\sum_\ell\itw{A[\ell]}{A[j]}{A[k]}(a_j,z)b_k\otimes \itw{A[\ell]'}{A[j]'}{A[k]'}(a'_j,\bar{z})b'_k\  \label{eq:fullintertwinercft}
\end{align}
for $a_j\in A[j]$, $a'_j\in A[j]'$, and $b_k\in A[k], b_k'\in A[k]'$. We call the VOA intertwiners appearing in this decomposition the \emph{constituents} of the CFT intertwiner $\cftY$.

\subsubsection{Correlation functions of full CFTs\label{sec:fullcftcorr}}

As in the chiral case, correlation functions of full CFTs are obtained by evaluating matrix elements of formal Laurent series with coefficients in the endomorphisms on $\cH$, and then setting the indeterminates to values given by complex numbers. The question of convergence has to be addressed again, but due to the decomposition~\eqref{eq:decompfullcftintdiagoal}, this can be reduced to convergence properties of the chiral and anti-chiral part (see~\cite{HuangKong} as well as~\cite{Huang:2010fv} for a detailed discussion). As in the chiral case, we will use the same letters for the indeterminates and their complex valued counterparts, with the understanding that whenever matrix elements or operators on Hilbert spaces are considered, then complex numbers are used. 

Correlation functions of the full CFT are specified by a sequence of the full intertwiner, evaluated at different values of complex numbers $(z_1,\bar{z}_1),\dots,(z_n,\bar{z}_n) $ and at different elements $\psi_1, \dots, \psi_n \in \cH$. As for the chiral case, we consider  genus-0 or genus-1 correlation functions. For the genus-0 case, the vacuum-to-vacuum correlation function is
  \begin{align}
    F^{(0)}_{\cH}((\psi_1,z_1,\bar{z}_1),\ldots,(\psi_n,z_n,\bar{z}_n)) = \Scp{\1\otimes\1^\prime}{\cftY(\psi_1,z_1,\bar{z}_1) \cdots \cftY(\psi_n,z_n,\bar{z}_n) \1\otimes\1^\prime}\,.
  \end{align}
  Correspondingly, the genus-1 case is given as the trace of such a sequence of intertwiners, again scaled by a factor $\torusq$ using the dilation operator defined by~\eqref{eq:fullcftlmodeoperators}, i.e., 
  \begin{align}
  F^{(1)}_{\torusq,\cH}((\psi_1,z_1,\bar{z}_1),\ldots,(\psi_n,z_n,\bar{z}_n))&= \tr_{\cH}\left(\cftY(z_1^{L^L_0}\bar{z}_1^{L^R_0}\psi_1,z_1,\bar{z}_1) \cdots \cftY(z_n^{L^L_0}\bar{z}_n^{L^R_0}\psi_n,z_n,\bar{z}_n) \torusq^{\mathbb{L}_0- 2c/24}  \right)\,.
\end{align}
We again assume that $\psi_1, \ldots \psi_n \in \cS$, where $\cS \subset \cH$ is a finite-dimensional subspace of the full module $\cH$, consisting of primary vectors.

As in the chiral case, we assume that our insertion points are separated by a minimal distance. More precisely, we consider the setup  illustrated in Fig.~\ref{fig:plane} and~\ref{fig:torusvm}, respectively. That is, the variables $\zeta_i$ lie on a line of fixed imaginary part~$\theta$ --- either in the plane or on the periodic strip --- and are separated by a minimal distance~$\mathsf{d}$. As explained before, the variable $\bar{z}$ is set to be equal to the complex conjugate of~$z$. Applying again either the conformal mapping $z \mapsto e^{-z}$ (in the case of the plane) or the principal branch of the complex logarithm (for the periodic strip) to both the variables $\zeta$ and $\bar{\zeta}$ leads to the following analogue of Observations~\ref{obs:conformalmapcorrfuncplane} and~\ref{obs:conformalmapcorrfunctorus}.

\begin{observation}\label{obs:fullcft}
  Let $\cS \subset \cH$ be a finite-dimensional linear subspace of primary vectors of the Hilbert space of the full CFT, and $\psi_1,\ldots,\psi_n \in \cS$ homogeneous elements. Then the vacuum-to-vacuum correlation functions of $n$ equally spaced points $\zeta_1,\ldots,\zeta_n$ on a line  with offset $\mathsf{d}_0$ and minimal distance $\mathsf{d}$ can be expressed as
  \begin{align}
    &F^{(0)}_{\cH}((\psi_1,\zeta_1,\zeta_1^*),\ldots,(\psi_n,\zeta_n,\zeta_n^*)) = \\
    &\qquad\qquad=|z|^{2\sum_j \wt\psi_j} q^{\sum_j j \wt\psi_j}\,\Scp{\1 \otimes \1^\prime}{\cftY(\psi_1,\zetap_1,\bar{\zetap}_1) \cdots \cftY(\psi_n,\zetap_n,\bar{\zetap}_n)\1 \otimes \1^\prime}
  \end{align}
  with the identification $0< q = e^{-\mathsf{d}} < 1$, $z = e^{-\mathsf{d}_0} e^{i \theta }$, $\zetap_j = zq^j$, $\bar{\zetap}_j = (\zetap_j)^*$. Similarily, the correlation function of $n$ equispaced points $\zeta_1,\ldots,\zeta_n$ within the periodic strip of height $2\pi$ and length $\log(1/\torusq)$ for these elements $\psi_1,\ldots,\psi_n$ is given by
  \begin{align}
    &F^{(1)}_{\torusq,\cH}((\psi_1,zq,\bar{z}q),\ldots,(\psi_n,zq^n,\bar{z}q^n)) = \\
    &\qquad\qquad=|z|^{2\sum_j \wt\psi_j} q^{\sum_j j \wt\psi_j}\,\tr_{\cH}\left(\cftY(\psi_1,\zetap_1,\bar{\zetap}_1)\cdots \cftY(\psi_n,\zetap_n,\bar{\zetap}_n)\torusq^{\mathbb{L}_0-c/12}\right)
  \end{align}
  under the identifications $z=e^{\mathsf{d}_0 + i \theta}$, $\zetap_j = zq^j$, $\bar{\zetap}_j = (\zetap_j)^*$ and the assumption $|z|>1$, $q=e^{-\mathsf{d}} < 1$.
\end{observation}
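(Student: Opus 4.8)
The plan is to reduce the claim to the already-established chiral Observations~\ref{obs:conformalmapcorrfuncplane} and~\ref{obs:conformalmapcorrfunctorus} by exploiting the factorization~\eqref{eq:fullintertwinercft} of the full intertwiner $\cftY$ into chiral and anti-chiral VOA constituents, with the dilation law~\eqref{eq:dilationequationfullcft} playing the role that~\eqref{eq:dilationintegrated} played in the chiral case. First I would record the conformal covariance of full correlation functions of primary vectors. Since $\psi_j$ is primary, $\mathbb{L}_n\psi_j=0$ for $n>0$; because $\mathbb{L}_n=L^L_n+L^R_n$ acts separately on the two tensor factors of each summand $A[j]\otimes A[j]'$ in~\eqref{eq:fullcftdecomp}, the left modes $L^L_n$ and right modes $L^R_n$ individually annihilate $\psi_j$. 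Applying the chiral covariance relation $[L_n,\iY(a,z)]=z^{n+1}\partial_z\iY(a,z)+(\wt a)(n+1)z^n\iY(a,z)$ separately to each constituent $\itw{A[\ell]}{A[j]}{A[k]}$ (in the holomorphic variable $z$, with left weight $h^L_j$) and $\itw{A[\ell]'}{A[j]'}{A[k]'}$ (in the anti-holomorphic variable $\bar z$, with right weight $h^R_j$) in~\eqref{eq:fullintertwinercft}, one finds that under a holomorphic change of coordinates $w$ each insertion contributes a factor $(dw/dz|_{\zeta_j})^{h^L_j}\,(d\bar w/d\bar z|_{\bar\zeta_j})^{h^R_j}$, with $h^L_j+h^R_j=\wt\psi_j$.

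For the genus-$0$ case I would then specialize $w(z)=e^{-z}$, exactly as in Observation~\ref{obs:conformalmapcorrfuncplane}, so that the equispaced points map to $\zetap_j=zq^j$ and $\bar\zetap_j=\bar z q^j$. The derivative factors are $dw/dz|_{\zeta_j}=-\zetap_j=-zq^j$ and $d\bar w/d\bar z|_{\bar\zeta_j}=-\bar\zetap_j=-z^{*}q^j$, so each insertion carries
\begin{align*}
\bigl(dw/dz|_{\zeta_j}\bigr)^{h^L_j}\bigl(d\bar w/d\bar z|_{\bar\zeta_j}\bigr)^{h^R_j}=(-zq^j)^{h^L_j}\,(-z^{*}q^j)^{h^R_j}\,.
\end{align*}
Collecting these over all $j$, using $\bar z=z^{*}$, the reality and positivity of $q$, and the diagonal condition $h^L_j=h^R_j$ (valid because $A[j]'$ has the same highest weight as $A[j]$), assembles into the global prefactor of the statement, while the vacuum insertions $\1\otimes\1'$ are neutral by~\eqref{eq:vacuumfullcftidentity}. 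Absolute convergence of the resulting series is inherited from the chiral and anti-chiral blocks through the tensor decomposition, reducing to the results of Huang and Kong already invoked in Section~\ref{sec:voafullfieldalgebras}.

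The genus-$1$ case proceeds identically but with the principal branch of $z\mapsto\log z$ mapping the torus to the periodic strip, as in Observation~\ref{obs:conformalmapcorrfunctorus}, and with the trace over $\cH$ in place of the vacuum-to-vacuum expectation. Here the dilation operator $\torusq^{\mathbb{L}_0-2c/24}$ (the two copies of the Virasoro algebra accounting for the $2c/24=c/12$ shift) is carried along under the change of variables via~\eqref{eq:dilationequationfullcft}, and the same per-site combination of holomorphic and anti-holomorphic factors yields the claimed prefactor. The constraints $|z|>1$ and $\torusq=q^{n}$, which identify the $\zetap_j$ as equidistant along one fundamental cycle, are exactly those of the chiral torus observation.

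The main obstacle is to transfer the conformal-covariance postulate for primary fields — stated in the chiral setting and used in Observations~\ref{obs:conformalmapcorrfuncplane} and~\ref{obs:conformalmapcorrfunctorus} — to the full intertwiner, and to control the convergence of the resulting doubly-infinite series. Both are resolved by the constituent decomposition~\eqref{eq:fullintertwinercft}: every full correlation function is a finite sum of products of a chiral block in $z$ with an anti-chiral block in $\bar z$, so the holomorphic covariance (with weights $h^L_j$) and the anti-holomorphic covariance (with weights $h^R_j$) apply block by block, and absolute convergence on the relevant domain follows factor-by-factor from the chiral results invoked in Section~\ref{sec:voafullfieldalgebras}, once $\bar z$ is set to $z^{*}$. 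With covariance in hand, the remaining work — collecting the $(-1)$-phases and recombining the separate $z$- and $\bar z$-powers into powers of $|z|$ and $q$ using $q\in(0,1)$ — is routine bookkeeping.
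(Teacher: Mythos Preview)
Your approach is essentially the paper's: it simply applies the conformal maps $w(z)=e^{-z}$ (genus~$0$) and $z\mapsto\log z$ (genus~$1$) simultaneously to the holomorphic and anti-holomorphic variables and invokes the chiral Observations~\ref{obs:conformalmapcorrfuncplane} and~\ref{obs:conformalmapcorrfunctorus} without further argument; your route through the constituent decomposition~\eqref{eq:fullintertwinercft} makes this reduction explicit and supplies the convergence justification the paper only cites.

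One point to tighten: the inference from $\mathbb{L}_n\psi_j=0$ to $L^L_n\psi_j=0$ and $L^R_n\psi_j=0$ separately is not automatic from the definition of ``primary'' in the full CFT. It does follow once $\psi_j$ is bi-homogeneous (a joint eigenvector of $L^L_0$ and $L^R_0$), since then $L^L_n\psi_j$ and $L^R_n\psi_j$ lie in distinct bi-weight spaces and their sum can vanish only if each does. Likewise, the equality $h^L_j=h^R_j$ you invoke to collapse the $(-z)^{h^L_j}(-z^*)^{h^R_j}$ factors into a power of $|z|$ does not follow from diagonality of the CFT alone---the contragredient module $A[j]'$ has the same \emph{highest} weight as $A[j]$, but a general bi-homogeneous vector at levels $(n^L,n^R)$ has $h^L_j-h^R_j=n^L-n^R$. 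Both issues disappear for the intended subspace~\eqref{eq:Sdirectsumdecomp} with each $S[j]$ consisting of top-level chiral primaries, which is the case the paper actually uses; but the claim as literally stated (arbitrary $\mathbb{L}_0$-homogeneous primaries) would in general produce an extra phase depending on the spin $h^L_j-h^R_j$.
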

The convergence of the formal Laurent series with indeterminates replaced by complex numbers as above follows from the results on chiral VOAs, see~\cite{HuangKong} as well as~\cite{Huang:2010fv}. 

As in the chiral case,  the transformed correlation functions motivate the definition of scaled intertwiners. Correlations functions can then again be expressed as sequences of these scaled intertwiners, which are combined to the definition of appropriate transfer operators. Our analysis then continues as before, by first approximating the scaled intertwiner and the corresponding transfer operator, and finally truncating the Hilbert space.

\subsection{Matrix product tensor networks for full CFTs}
We proceed to construct a matrix product tensor network approximating the correlation functions of a full CFT. In Section~\ref{sec:fullcftsasfinitely} we then show how this tensor network can be interpreted as a finitely correlated state. Once again, the expression ``state'' should not be taken literally, as the normalization depends on the correlation function. 

The definitions and results for the chiral case can be generalized with minimal adaptations to full CFTs.  A main difference is that all expressions depend both on~$(z,\bar{z})$, and truncations have to be defined with respect to the $\mathbb{N}_0\times\mathbb{N}_0$-grading of~$\cH$. Nevertheless, most proofs directly generalize to this setting. For conciseness, we only present
the arguments that deviate from the chiral case and omit derivations that are completely analogous. 

In analogy to the definitions for the chiral case, we will first introduce the concept of a scaled intertwiner, as well as the corresponding transfer operator. As before, we will usually assume that the subspace~$\cS$ in the following definitions is spanned by primary vectors and contains the vacuum vector $\1 \otimes \1^\prime$. Moreover, we assume that $\cS$ has the form
\begin{align}
  \cS \cong \bigoplus_j S[j] \otimes S[j]^\prime\ ,\label{eq:Sdirectsumdecomp}
\end{align}
where $S[j]\subset A[j]$ is a finite-dimensional subspace of primary vectors,
and $S[j]'\subset A[j]'$ is the image of $S[j]$ under the isomorphism~$\tilde{\eta}:A[j]\rightarrow A[j]'$ introduced in~\eqref{eq:isodualcomplconj}. Using the definitions of Section~\ref{sec:unitarity}, it is straightforward to check that $S[j]'$ also consists of primary vectors.
  
\subsubsection{Scaled intertwiners and correlation functions}

The scaled intertwiner as well as the associated transfer operator are introduced as formal Laurent series, which turn into well-defined operators on $\cH$ once the formal indeterminates are replaced by complex numbers.
   
\begin{definition}\label{def:cftscaledintertwiner}
  Consider a full CFT of diagonal type.  For $0<q<1$ and a linear subspace~$\cS\subset\cH$ of the form~\eqref{eq:Sdirectsumdecomp}, define the scaled intertwiner~$\cftW_q$ as  the linear map from $\cS$ to formal Laurent-like series with coefficients in the endomorphism of $\cH$,
   \begin{align*}
      \begin{matrix}
        \cftW_q(\cdot,(z,\bar{z})):  &\cS& \rightarrow &\Endint{\cH}{z,z^{-1},\bar{z},\bar{z}^{-1}}\\
        &\psi& \mapsto  &   \cftW_q(\psi, (z,\bar{z}))  = q^{\mathbb{L}_0/2} \mathbb{Y}(q^{\mathbb{L}_0/2}\psi,(z,\bar{z})) q^{\mathbb{L}_0/2}\ .
      \end{matrix}
   \end{align*}
\end{definition}

Using this scaled intertwiner, we can then proceed to introduce transfer operators as in Definition~\ref{def:transferoperators}. In contrast to Definition~\ref{def:transferoperators}, this is based only on one type of intertwiner, which is the full CFT intertwiner $\mathbb{Y}$. 

\begin{definition}[Transfer operators for the full CFT]\label{def:cfttransferoperators}
Let $\cS\subset\cH$ be a subspace spanned by primary vectors. Let $0<q<1$ and let $\cftW_q$ be the scaled intertwiner of a CFT on a Hilbert space~$\cH$. For any $\psi_i\in \cS, i=1,\ldots,n$, define the {\em transfer operator} $\mathbb{T}$  with insertions $\{\psi_i\}_{i=1}^n$ as the element 
\begin{align}
  \mathbb{T}\in \Endint{\cH}{z_1,z_1^{-1},\bar{z}_1,\bar{z}_1^{-1},\ldots,z_n,z_n^{-1},\bar{z}_n,\bar{z}^{-1}_n}
\end{align}
given by 
\begin{align*}
  \mathbb{T} =\cftW_q(\psi_1,(z_1,\bar{z}_1))\circ\cdots\circ\cftW_q(\psi_n,(z_n,\bar{z}_n))\ .
\end{align*}
\end{definition}

Then, as before, the genus-$0$ and genus-$1$ correlation functions are encoded in the transfer operator in the following sense (compare to Corollary~\ref{cor:corrfctexactzero} and Corollary~\ref{cor:corrfctexactrepr}). 

\begin{corollary}[Exact reproduction of CFT correlation functions]
  Let $z,\bar{z}\in\mathbb{C}\backslash\{0\}$ two complex numbers and $0<q<1$ such that 
 $0<q<\min \{1/|z|^2, 1/|\bar{z}|^2\}$. Let $\mathbb{T}$ be the full CFT transfer operator with insertions $\{\psi_i\}_{i=1}^n$ as in Definition~\ref{def:cfttransferoperators}, with the indeterminates replaced by $z$ and $\bar{z}$ respectively. Then we have for the genus-0 case
  \begin{align}\label{eq:corrfunctransferopfullcftgenus0}
  \langle \1\otimes \1', \mathbb{T}\1\otimes \1'\rangle = q^{\sum_j j \wt \psi_j} F^{(0)}_{\cH}((\tilde{\psi}_1,\zetap_1,\bar{\zetap}_1),\ldots,(\tilde{\psi}_n,\zetap_n,\bar{\zetap}_n)) \,,
\end{align}
and for the genus-1 case with $0<r\leq 1$
\begin{align}\label{eq:corrfunctransferopfullcftgenus1}
  \tr_\cH \mathbb{T}r^{\mathbb{L}_0}&=\torusq^{c/12} F^{(1)}_{\torusq,\cH}\left((\tilde{\psi}_1,\zetap_1,\bar{\zetap}_1),\ldots,(\tilde{\psi}_n,\zetap_n,\bar{\zetap}_n)\right)
\end{align}
with the identifications
\begin{align}\label{eq:corrfunctransferopfullcftvar}
  \tilde{\psi}_i = q^{\mathbb{L}_0/2} \psi\,\qquad\textrm{ and }\qquad \zetap_j=z q^j\,,\; \bar{\zetap}_j = \bar{z} q^j\qquad\textrm{ for }j=1,\ldots,n\ ,\torusq&=rq^{n} \,.
\end{align}
\end{corollary}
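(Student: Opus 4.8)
The plan is to mirror exactly the proofs of Corollary~\ref{cor:corrfctexactzero} and Corollary~\ref{cor:corrfctexactrepr} from the chiral case, replacing the single-variable dilation property~\eqref{eq:dilationintegrated} with its full-CFT analog~\eqref{eq:dilationequationfullcft} and the grading operator $L_0$ with $\mathbb{L}_0 = L^L_0 + L^R_0$. First I would establish the full-CFT analog of Lemma~\ref{lem:trsfopexplicit}, namely the factored form
\begin{align*}
  \mathbb{T} = q^{-\mathbb{L}_0/2}\left(\cftY(\hat{\psi}_1,(z_1 q,\bar{z}_1 q)) \cdots \cftY(\hat{\psi}_n,(z_n q^n,\bar{z}_n q^n))\right)q^{(n+1/2)\mathbb{L}_0}\,,
\end{align*}
where $\hat{\psi}_j = q^{(j+1/2)\mathbb{L}_0}\psi_j$. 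This follows by the identical recursive argument: one defines $\mathbb{T} = \mathbb{T}_1$ with $\mathbb{T}_k = \cftW_q(\psi_k,(z_k,\bar{z}_k))\circ\mathbb{T}_{k+1}$, and uses~\eqref{eq:dilationequationfullcft} to commute powers of $q^{\mathbb{L}_0}$ past the full intertwiner, picking up the shift $(z,\bar{z})\mapsto (q^k z, q^k \bar{z})$ at the $k$-th step. The only new bookkeeping is that both the holomorphic and antiholomorphic arguments are rescaled by the same power $q^k$, which is exactly what~\eqref{eq:dilationequationfullcft} provides.

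With this factorization in hand, the genus-0 identity~\eqref{eq:corrfunctransferopfullcftgenus0} follows by taking the matrix element $\langle \1\otimes\1',\mathbb{T}\,\1\otimes\1'\rangle$ and using that $\mathbb{L}_0(\1\otimes\1')=0$, so the flanking factors $q^{-\mathbb{L}_0/2}$ and $q^{(n+1/2)\mathbb{L}_0}$ act trivially on the vacuum. Absorbing the weight factors $q^{(j+1/2)\wt\psi_j}$ coming from $\hat\psi_j = q^{(j+1/2)\mathbb{L}_0}\psi_j$ against the definition $\tilde\psi_j = q^{\mathbb{L}_0/2}\psi_j$ and the insertion points $\zetap_j = zq^j$, $\bar\zetap_j = \bar{z}q^j$, one recovers the prefactor $q^{\sum_j j\,\wt\psi_j}$ together with the correlation function $F^{(0)}_{\cH}$ as defined in Section~\ref{sec:fullcftcorr}. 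For the genus-1 identity~\eqref{eq:corrfunctransferopfullcftgenus1}, I would take $\tr_\cH(\mathbb{T}\,r^{\mathbb{L}_0})$; using cyclicity of the trace to combine $q^{-\mathbb{L}_0/2}$, $q^{(n+1/2)\mathbb{L}_0}$ and $r^{\mathbb{L}_0}$ into a single factor $q^{n\mathbb{L}_0}r^{\mathbb{L}_0} = (rq^n)^{\mathbb{L}_0} = \torusq^{\mathbb{L}_0}$ up to the weight shifts, and comparing with the definition of $F^{(1)}_{\torusq,\cH}$ (which carries the regularizing factor $\torusq^{\mathbb{L}_0 - 2c/24}$), yields the stated relation with the prefactor $\torusq^{c/12}$.

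The convergence statements required to interpret these formal identities as genuine complex-valued correlation functions reduce, via the decomposition~\eqref{eq:decompfullcftintdiagoal} of the full intertwiner into chiral and anti-chiral constituents, to the chiral convergence results of Huang cited in Section~\ref{sec:fullcftcorr}; the conditions $0<q<\min\{1/|z|^2,1/|\bar{z}|^2\}$ and $0<r\leq 1$ guarantee that the points $(\zetap_j,\bar\zetap_j)$ and the diameter $\torusq$ lie in the respective domains of convergence for both factors simultaneously. The main obstacle I anticipate is purely notational rather than conceptual: one must track the two independent gradings $L^L_0$ and $L^R_0$ carefully when verifying that~\eqref{eq:dilationequationfullcft} applies with a \emph{common} scaling parameter $q$ to both $z$ and $\bar{z}$, and confirm that the central-charge bookkeeping produces $\torusq^{c/12} = \torusq^{2c/24}$ (reflecting the doubled central charge $c^L + c^R = 2c$ of the tensor product VOA) rather than the chiral $\torusq^{c/24}$. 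Since every algebraic manipulation is a verbatim copy of the chiral proofs with $L_0\rightsquigarrow\mathbb{L}_0$ and $\iY\rightsquigarrow\cftY$, I would present the full-CFT factorization lemma explicitly and then remark that the two corollaries follow by the same computations as in Corollaries~\ref{cor:corrfctexactzero} and~\ref{cor:corrfctexactrepr}.
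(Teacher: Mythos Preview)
Your proposal is correct and follows essentially the same approach as the paper: the paper's own proof is a one-line remark that the statement ``is analogous to the proof of Lemma~\ref{lem:trsfopexplicit} and Corollaries~\ref{cor:corrfctexactzero} and~\ref{cor:corrfctexactrepr}; it is a direct consequence of~\eqref{eq:dilationequationfullcft} and will be omitted here.'' You have simply spelled out what that analogy entails, including the factorization lemma, the vacuum/trace computations, and the central-charge bookkeeping.
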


\begin{proof}
The proof of this statement is analogous to the proof of  Lemma~\ref{lem:trsfopexplicit} and Corollaries~\ref{cor:corrfctexactzero} and~\ref{cor:corrfctexactrepr}; it is a direct consequence of~\eqref{eq:dilationequationfullcft} and will be omitted here. 
\end{proof}

We proceed to argue that the concept of  bounded intertwiners also generalizes to full CFTs.

\subsubsection{Bounded intertwiners for full CFTs}

As in the chiral case, in order to establish bounds for the approximation of full CFT correlation functions by finite-dimensional MPS, we need bounds on the norms of  operators appearing in the definition of the transfer operator. Accordingly, we first introduce an analogous notion of a bounded intertwiner.
\begin{definition}\label{def:sboundedcft}
Let $\mathbb{Y}$ be the full CFT intertwiner of a CFT on a Hilbert space~$\cH$, and let $\cS\subset\cH$ be a linear subspace. Let $\cftW_q$ be the associated scaled intertwiner (Definition~\ref{def:cftscaledintertwiner}). We call $\mathbb{Y}$ $\cS$-bounded if 
for all $z,\bar{z}\in\mathbb{C}\backslash\{0\}$ and $0<q<\min \{|z|^2,1/|z|^2,|\bar{z}|^2,1/|\bar{z}|^2\}$, there is a constant $\cftbnd(q,(z,\bar{z}))$  such that 
\begin{align}
  \norm{\cftW_q(\psi,(z,\bar{z}))} \leq \cftbnd(q,(z,\bar{z}))\cdot \norm{\psi}_{\cH}\qquad\textrm{ for all }\psi\in \cS\ .
\end{align}   
\end{definition}

The following result implies that $\cS$-boundedness can be established by considering only intertwiners associated with a chiral part of the CFT (cf.~\eqref{eq:fullintertwinercft}).

\begin{lemma}\label{lem:cftsboundednesstochiral}
  Let $\cftY$ be the full CFT intertwiner of a diagonal CFT on a Hilbert space~$\cH$, and let $\cS\subset\cH$ be a linear subspace of the form~\eqref{eq:Sdirectsumdecomp}. Then $\cftY$ is $\cS$-bounded if and only if each constituent is either $S[j]$- or $S[j]^\prime$-bounded, depending on its type.
\end{lemma}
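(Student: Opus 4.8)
The plan is to reduce the boundedness of the full CFT intertwiner $\cftY$ to that of its chiral and anti-chiral constituents, exploiting the product structure of the scaled intertwiner under the factorization~\eqref{eq:fullintertwinercft}. The key observation is that on a vector $\psi = a_j \otimes a_j' \in \cS$ with $a_j \in S[j]$ and $a_j' \in S[j]'$, the operators $L_0^L$ and $L_0^R$ act independently on the two tensor factors, so that $q^{\mathbb{L}_0/2} = q^{L_0^L/2} \otimes q^{L_0^R/2}$ (using $\mathbb{L}_0 = L_0^L + L_0^R$ from~\eqref{eq:fullcftlmodeoperators}). Applying Definition~\ref{def:cftscaledintertwiner} together with the decomposition~\eqref{eq:fullintertwinercft}, the scaled full intertwiner $\cftW_q(a_j \otimes a_j',(z,\bar z))$ evaluated on $b_k \otimes b_k'$ becomes a finite sum over $\ell$ of tensor products
\begin{align*}
\left(q^{L_0/2}\itw{A[\ell]}{A[j]}{A[k]}(q^{L_0/2}a_j,z)q^{L_0/2}b_k\right)\otimes\left(q^{L_0/2}\itw{A[\ell]'}{A[j]'}{A[k]'}(q^{L_0/2}a_j',\bar z)q^{L_0/2}b_k'\right)\,,
\end{align*}
that is, a sum of products of chiral scaled intertwiners $\W_q$ (of the constituents) acting on the two factors. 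First I would make this factorization precise by writing $\cftW_q$ as a sum over the intermediate index $\ell$ of tensor products $\W_q^{\mathrm{chiral}} \otimes \W_q^{\mathrm{antichiral}}$ acting on the components of $\cH$ in the decomposition~\eqref{eq:fullcftdecomp}.

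\textbf{The two directions.}
For the ``if'' direction, assuming each constituent is $S[j]$- or $S[j]'$-bounded, I would use submultiplicativity of the operator norm on tensor products: $\norm{X \otimes Y} = \norm{X}\,\norm{Y}$. The boundedness parameter of each chiral constituent (via Corollary~\ref{cor:boundednessscaledintertwiner} and Definition~\ref{def:boundedintertwiner}) bounds each factor by $\bnd(q,z)\norm{a_j}$ and $\bnd(q,\bar z)\norm{a_j'}$ respectively. Summing over the finitely many values of $\ell$ (finiteness follows from rationality of $\cV$), applying the triangle inequality, and combining with the orthogonality of the distinct irreducible components $A[\ell]\otimes A[\ell]'$ in $\cH$, yields a finite bound $\cftbnd(q,(z,\bar z))$ of the form required by Definition~\ref{def:sboundedcft}. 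Here I would set $\cftbnd(q,(z,\bar z))$ to be an appropriate sum/maximum of products of the chiral boundedness parameters over the relevant constituent types. For the ``only if'' direction, I would argue contrapositively: if some constituent fails to be bounded, then by restricting $\psi$ to lie in the corresponding single summand $S[j]\otimes S[j]'$ of $\cS$ (and $b_k \otimes b_k'$ to the matching component), the full scaled intertwiner $\cftW_q$ restricted to that component reproduces (up to the bounded tensor factor from the other chirality) the unbounded chiral constituent, forcing $\cftW_q$ to be unbounded as well.

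\textbf{Main obstacle.}
The principal subtlety I anticipate is the ``only if'' direction, specifically isolating a single chiral constituent from the full intertwiner cleanly. The factorization~\eqref{eq:fullintertwinercft} involves a sum over the intermediate module index $\ell$, so a given chiral constituent $\itw{A[\ell]}{A[j]}{A[k]}$ appears coupled to its anti-chiral partner $\itw{A[\ell]'}{A[j]'}{A[k]'}$; to extract a norm lower bound on one chiral factor I must ensure the anti-chiral factor contributes a nonzero, controlled quantity and that different $\ell$-terms do not cancel. Since the target modules $A[\ell]\otimes A[\ell]'$ for distinct $\ell$ land in orthogonal summands of $\cH$ by~\eqref{eq:fullcftdecomp}, the terms are in fact orthogonal, so the norm-squared of $\cftW_q(\psi)$ decomposes as a sum over $\ell$ of products of chiral norm-squares; this orthogonality is precisely what rescues the argument and lets me bound $\cftW_q$ below by any single $\ell$-term. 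I would therefore emphasize this orthogonality of the images as the crucial structural input, and verify that the anti-chiral constituent acting on the top-level (primary) vectors in $S[j]'$ is nonzero, which follows since $S[j]'$ is the image of $S[j]$ under the nondegenerate isomorphism $\tilde\eta$ of~\eqref{eq:isodualcomplconj}. Once these points are in place, the equivalence follows from the same norm estimates used in the chiral case, with the tensor-product structure handled termwise.
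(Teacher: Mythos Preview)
Your approach is essentially the same as the paper's: both directions exploit the factorization of $\cftW_q$ into a finite sum over $k,\ell$ of tensor products of chiral scaled intertwiners, and the ``only if'' direction isolates a single constituent via the orthogonality of the summands $A[\ell]\otimes A[\ell]'$ (the paper phrases this as sandwiching $\cftW_q$ between the projections $Q_\ell$ and $Q_k$, which do not increase the operator norm).

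One minor correction to your obstacle discussion: your proposed reason that the anti-chiral factor is nonzero on $S[j]'$ --- namely that $S[j]'$ is the image of $S[j]$ under the nondegenerate isomorphism $\tilde\eta$ --- does not actually establish this, since $\tilde\eta$ is an isomorphism of modules and says nothing about whether the particular intertwiner $\itw{A[\ell]'}{A[j]'}{A[k]'}$ vanishes on $S[j]'$. The paper does not attempt to rule this out either; it simply concludes from the projection argument the product inequality $\bnd_{jk\ell}(q,z)\,\bnd'_{jk\ell}(q,\bar z)\le\Theta(q,(z,\bar z))$, which is all that is invoked downstream (cf.\ the proof of Theorem~\ref{thm:truncationestimatefullcft}). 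In the degenerate case where one factor is identically zero on its subspace the partner may be unbounded, but then the tensor product vanishes and that constituent contributes nothing to $\cftW_q$, so the statement is harmless as applied.
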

 
Assume that the VOA $\cV=\cV^L=\cV^R$ satisfies all our technical assumptions.
Then Lemma~\eqref{lem:cftsboundednesstochiral}, when combined with  Corollary~\ref{cor:boundednessscaledintertwiner}, implies that $\cftY$ is $\cS$-bounded for  any finite-dimensional subspace of the form~\eqref{eq:Sdirectsumdecomp}.

\begin{proof}
  Inserting the definitions, we find the following expressions for the full scaled intertwiner $\cftW_q$,
  \begin{align}
    &\cftW_q(a_j\otimes a'_j,(z,\bar{z}))(b_k\otimes b'_k)=\\
    &\qquad=\sum_\ell q^{L_0/2}\itw{A[\ell]}{A[j]}{A[k]}(q^{L_0/2}a_j,z)q^{L_0/2}b_k\otimes q^{L'_0/2} \itw{A[\ell]'}{A[j]'}{A[k]'}(q^{L'_0/2}a'_j,\bar{z})q^{L'_0/2}b_k' \nonumber\\
&\qquad=\sum_\ell \W\itw{\ell}{j}{k}_q(a_j,z)b_k\otimes  \W'\itw{\ell}{j}{k}_q(a'_j,z)b_k'\,,
  \end{align}
  for $a_j \in S[j] \subset A[j]$, $a'_j \in S[j]^\prime \subset A[j]^\prime$, $b_k\in A[k]$ and $b'_k\in A[k]'$.  In this identity,
$\W\itw{\ell}{j}{k}_q$ is the $q$-scaled version of the intertwiner~$\itw{A[\ell]}{A[j]}{A[k]}$, and, analogously, $\W'\itw{\ell}{j}{k}_q$
is the scaled version of~$\itw{A[\ell]'}{A[j]'}{A[k]'}$.  We conclude that 
  \begin{align}
    \cftW_q(a_j\otimes a_j',(z,\bar{z})) = \sum_{k,\ell} \W\itw{\ell}{j}{k}_q(a_j,z) \otimes \W'\itw{\ell}{j}{k}_q(a_j',z)\,\label{eq:scaledintertwinerprojectedout}
  \end{align}
where each of the summands is an operator supported on $A[k]\otimes A[k]'$. By the triangle inequality and the assumption that the constituents are $S[j]$- or $S[j]'$-bounded, we get  for normalized vectors $a_j\in S[j]$ and $a'_j\in S[j]'$ the inequality
  \begin{align}
    \norm{\cftW_q(a_j\otimes a_j',(z,\bar{z}))} &\leq \sum_{k,\ell} \big\|\W\itw{\ell}{j}{k}_q(a_j,z)\big\| \cdot \big\|\W'\itw{\ell}{j}{k}_q(\tilde{a}_j',z)\big\| \\
    &\leq \sum_{k,\ell} \bnd_{jk\ell}(q,z) \,\bnd_{j k \ell}^\prime(q,\bar{z})\,,
  \end{align}
  where $\bnd_{j k \ell}(q,z)$ and $\bnd_{j k \ell}^\prime(q,\bar{z}) $ are the boundedness parameters of the constituents. 

Consider  a general element $\psi \in \cS$, which can be decomposed into 
  \begin{align}\label{eq:decomppsifullcft}
    \psi = \sum_{jrs} c_{jrs} a_{jr} \otimes a_{js}' \,,\quad \textrm{ with } a_{jr} \in S[j]\,,\; a_{js}'\in S[j]'\ ,
  \end{align}
  where we assume that for each fixed $j$, the vectors $a_{jr}$ (resp. $a'_{js}$), $r,s =1,\ldots, \dim S[j]$, form a orthonormal basis of the space $S[j]$ (resp. of $S[j]^\prime$) and thus 
  \begin{align}
    \norm{\psi}_{\cH}^2 = \sum_{jrs} |c_{jrs}|^2 \,.
  \end{align}
  Inserting the decomposition into the expression for the full scaled intertwiner, we find 
  \begin{align}\label{eq:boundenessparamfulcft}
    \norm{\cftW_q(\psi,(z,\bar{z}))}_{\cH} &\leq \sum_{jrs}  |c_{j rs}|\cdot \norm{\cftW_q(a_{jr}\otimes a'_{js},(z,\bar{z}))}\\ 
    &\leq \norm{\psi}_{\cH}\cdot \sqrt{\sum_{jrs}\norm{\cftW_q(a_{jr}\otimes a_{js}',(z,\bar{z}))}^2}\\
    &\leq \norm{\psi}_{\cH} \sqrt{\sum_j \left(\dim S[j]\,\sum_{k,\ell}\bnd_{j k \ell}(q,z)\bnd'_{j k \ell}(q,\bar{z})\right)^2  }=:\|\psi\|_{\cH}\cdot \Theta(q,z,\bar{z})
 \end{align}
 by the Cauchy-Schwarz inequality. Since the sum  under the square root only involves a finite number of terms by our assumptions on $S[j]$ and the rationality of the VOA~$\cV$, the expression $\Theta(q,z,\bar{z})$ is  finite. This proves that if the constitutents are $S[j]$- or $S[j]'$-bounded, respectively, then the full scaled intertwiner~$\cftW_q$ is $\cS$-bounded.

For the converse, assume that the full scaled intertwiner is bounded with boundedness parameter~$\Theta(q,(z,\bar{z}))$. Observe that according to~\eqref{eq:scaledintertwinerprojectedout}, constituents may be otained by applying suitable projections $Q_j$ onto the subspaces $A[j]\otimes A[j]'\subset \cH$, i.e., we have
\begin{align*} 
Q_\ell\cftW_q(a_j\otimes a_j',(z,\bar{z})) Q_k =  \W\itw{\ell}{j}{k}_q(a_j,z) \otimes \W'\itw{\ell}{j}{k}_q(a_j',\bar{z})\, 
\end{align*}
Since projections do not increase the norm, we obtain
\begin{align}
 \Big\| \W\itw{\ell}{j}{k}_q(a_j,z) \otimes \W'\itw{\ell}{j}{k}_q(a_j',\bar{z})\Big\|
\leq \Theta(q,(z,\bar{z})) \|a_j\otimes a_j'\|_{\cH} \ ,
\end{align}
that is,
\begin{align*}
\frac{\Big\| \W\itw{\ell}{j}{k}_q(a_j,z)\Big\|}{\|a_j\|_{A[j]}}\cdot 
\frac{\Big\| \W'\itw{\ell}{j}{k}_q(a_j',\bar{z})\Big\|}{\|a_j'\|_{A[j]'}}\leq \Theta(q,(z,\bar{z}))\qquad\textrm{ for all }a_j\in S[j]\textrm{ and }a_j'\in S[j]'\ .
\end{align*}
Taking the supremum over $a_j$ and $a_j'$, we conclude that the constituents~$\W\itw{\ell}{j}{k}_q$ 
and $\W'\itw{\ell}{j}{k}_q$
are $S[j]$-bounded and  $S[j]'$-bounded, respectively, with boundedness parameters satisfying 
\begin{align}
\vartheta_{jk\ell}(q,z)\vartheta'_{jk\ell}(q,\bar{z})\leq \Theta(q,(z,\bar{z}))\ .\label{eq:thetaproductupperboundp}
\end{align}
\end{proof}

The preceding proof exemplifies how statements about the full CFT case can be established by expressing the scaled intertwiner as a sum of tensor products of scaled intertwiners of the original VOA, and then applying the results for the chiral case. Since most statements in this section are obtained following this straightforward strategy, we will merely sketch the arguments.

As in the chiral case (see~Eq.~\eqref{eq:zonenormboundv}), the boundedness of scaled intertwiners (Definition~\ref{def:sboundedcft}) immediately implies the boundedness of the associated transfer operator. More precisely, we have for the operator norm of~$\mathbb{T}$ (cf. Definition~\ref{def:cfttransferoperators}),  with insertions $\psi_i\in \cS, i=1,\ldots,n$ and the formal variables replaced by two complex numbers $z$, $\bar{z}$ such that $z,\bar{z}\in\mathbb{C}\backslash\{0\}$ and $0<q<\min \{|z|^2,1/|z|^2,|\bar{z}|^2,1/|\bar{z}|^2\}$ the bound 
\begin{align}
  \|\mathbb{T}\|\leq \left(\cftbnd(q,z,\bar{z})\right)^n \prod_{j=1}^n\|\psi_j\|_{\cH}\ . \label{eq:cftzonenormboundv}
\end{align}
We will use~\eqref{eq:cftzonenormboundv} below to estimate the errors when truncating operators.

\subsubsection{Approximation results for truncated intertwiners}
We next introduce a truncated version of the scaled intertwiner, now for the full CFT. It is again defined in such a way that it does not change the grading by more than the truncation parameter~$N$. However, the latter notion has to be defined with respect to the $\mathbb{N}_0\times\mathbb{N}_0$-grading of the Hilbert space~$\cH$ given in~\eqref{eq:hgradingfullcft}. Recall (cf.~\eqref{eq:fullintertwinercft}) that the full CFT intertwiner~$\mathbb{Y}$ is given by a linear combination of tensor products of intertwiners~$\itw{A[\ell]}{A[j]}{A[k]}$ and $\itw{A[\ell]'}{A[j]'}{A[k]'}$. To define a truncated version, we  simply truncate each of these constituent intertwiners. The definition of a truncated scaled intertwiner follows accordingly. As in the chiral case, we have to ensure that the subspace $\cS$ is spanned by homogeneous elements, which requires the introduction of a version of $\Shom$ for the full CFT case. In complete analogy to Eq.~\eqref{eq:defShom}, we define
\begin{align}
  \cShom:=\bigoplus_{j}\bigoplus_{n^L,n^R\in\mathbb{N}_0} \left(\cS\cap (A[j]_{n^L}\otimes A[j]'_{n^R})\right)\,. \label{eq:defShomfullcft}
\end{align}

\begin{definition}[Truncated intertwiner for full CFT]
  Let $\mathbb{Y}$ be the full intertwiner of a diagonal CFT with associated Hilbert space $\cH$, with action given as in~\eqref{eq:fullintertwinercft}. For a linear subspace~$\cS\subset \cH$, let $\cShom$ be defined by Eq.~\eqref{eq:defShomfullcft} and suppose that  $\mathbb{Y}$ is~$\cS$-bounded. Then for~$N>0$, the truncated intertwiner
  \begin{align*}
    \mathbb{Y}^{[N]}(\cdot, (z,\bar{z})):\cH\rightarrow\Endint{\cH}{z,z^{-1},\bar{z},\bar{z}^{-1}}    
  \end{align*}
  is defined by
  \begin{align}  
    \cftY^{[N]}(a_j\otimes a'_j,(z,\bar{z}))b_k\otimes b'_k &=\sum_\ell\itw{A[\ell]}{A[j]}{A[k]}^{[N]}(a_j,z)b_k\otimes \itw{A[\ell]'}{A[j]'}{A[k]'}^{[N]}(a'_j,\bar{z})b'_k\  \label{eq:truncatedfullintertwinercft}
\end{align}
  for $a_j\otimes a'_j\in A[j]\otimes A[j]'$ and  $b_k\otimes b'_k\in A[k]\otimes A[k]'$. The associated scaled truncated intertwiner (cf.~Definition~\ref{def:scaledintertwiner}) is defined by 
   \begin{align*}
     \cftW_q^{[N]}:\cShom & \rightarrow  \Endint{\cH}{z,z^{-1},\bar{z},\bar{z}^{-1}}\\
     \psi  & \mapsto  \cftW_q^{[N]}(\psi,(z,\bar{z})):=q^{\mathbb{L}_0/2} \mathbb{Y}^{[N]}(q^{\mathbb{L}_0/2}\psi,(z,\bar{z}))q^{\mathbb{L}_0/2}\ ,
    \end{align*}
  for all homogeneous $\psi\in \cS$, and linearly extended to $\cShom$. We call the family $\{\cftW_q^{[N]}\}_{0<q<1}$  the {\em $N$-th level truncation of~$\cftW$}, or simply a {\em truncated scaled intertwiner}.
\end{definition}

As  with intertwiners, we can decompose truncated scaled intertwiners into their constituents; those happen to be made up of truncated scaled intertwiners in the chiral sense. More precisely, by inserting the definition of~$\mathbb{Y}^{[N]}$, we have (using~\eqref{eq:fullcftlmodeoperators})
\begin{align}\label{eq:truncatedscaledvb} 
\cftW_q^{[N]}(a_j\otimes a_j',(z,\bar{z}))(b_k\otimes b_k')&=
\sum_\ell \W\itw{\ell}{j}{k}_q^{[N]}(a_j,z)b_k\otimes \W'\itw{\ell}{j}{k}_q^{[N]}(a_j',z)b_k'\ ,
\end{align}
where each $\W\itw{\ell}{j}{k}_q^{[N]}$ is a truncated $S[j]$-bounded scaled intertwiner of type $\itype{A[\ell]}{A[j]}{A[k]}$, and similarly, each $\W'\itw{\ell}{j}{k}_q^{[N]}$ is a truncated $S[j]'$-bounded scaled intertwiner of type $\itype{A[\ell]'}{A[j]'}{A[k]'}$. Again by reduction to the chiral case (cf.~Lemma~\ref{lem:boundednessparametertruncated}), we find that the truncated scaled intertwiner is itself bounded.

\begin{corollary}
  Let $\cftW_q^{[N]}$ be the truncated scaled intertwiner of a full CFT, with indeterminates replaced by complex numbers $z,\bar{z}\in\mathbb{C}\backslash\{0\}$ such that $0<q<\min \{|z|^2,1/|z|^2,|\bar{z}|^2,1/|\bar{z}|^2\}$. Then we have for $\psi\in \cShom$
  \begin{align}
    \norm{\cftW^{[N]}_q(\psi,(z,\bar{z}))} \leq \norm{\psi}_\cH \frac{\cftbnd(\sqrt{q},(z,\bar{z}))}{1-\sqrt{q}} \,,
  \end{align}
  where the bound holds for all $N \in \Nl \cup \{\infty\}$, so in particular also for the non-truncated scaled intertwiner.
\end{corollary}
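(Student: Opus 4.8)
The plan is to prove the bound by reducing to the chiral estimate of Lemma~\ref{lem:boundednessparametertruncated}, mirroring verbatim the reduction already carried out for the untruncated full scaled intertwiner in Lemma~\ref{lem:cftsboundednesstochiral}. The starting point is the constituent decomposition~\eqref{eq:truncatedscaledvb}, which expresses $\cftW_q^{[N]}(a_j\otimes a_j',(z,\bar{z}))$ as a finite sum $\sum_{k,\ell}\W\itw{\ell}{j}{k}_q^{[N]}(a_j,z)\otimes \W'\itw{\ell}{j}{k}_q^{[N]}(a_j',\bar{z})$ of tensor products of truncated chiral scaled intertwiners, each supported on the orthogonal summand $A[k]\otimes A[k]'$ of $\cH$.

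First I would invoke Lemma~\ref{lem:boundednessparametertruncated} for each chiral constituent. Since the modules $A[j], A[k], A[\ell]$ are irreducible, the relevant index set is a singleton by~\eqref{eq:iabcirreducible}, so $|I_{A[k]}| = 1$, and the lemma with $q_1=q_2=\sqrt{q}$ gives the $N$-independent bounds
\begin{align*}
\norm{\W\itw{\ell}{j}{k}_q^{[N]}(a_j,z)} \leq \frac{\bnd_{jk\ell}(\sqrt{q},z)}{\sqrt{1-\sqrt{q}}}\,\norm{a_j}_{A[j]}\,,\qquad \norm{\W'\itw{\ell}{j}{k}_q^{[N]}(a_j',\bar{z})} \leq \frac{\bnd'_{jk\ell}(\sqrt{q},\bar{z})}{\sqrt{1-\sqrt{q}}}\,\norm{a_j'}_{A[j]'}\,,
\end{align*}
where $\bnd_{jk\ell}, \bnd'_{jk\ell}$ are the chiral boundedness parameters of the constituents. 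The crucial feature, supplied by the last assertion of Lemma~\ref{lem:boundednessparametertruncated}, is that these bounds are uniform in $N$ and also cover the case $N=\infty$.

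Next I would assemble these pieces following the Cauchy--Schwarz computation~\eqref{eq:boundenessparamfulcft} from the proof of Lemma~\ref{lem:cftsboundednesstochiral}: decompose a general $\psi\in\cShom$ over an orthonormal basis as $\psi=\sum_{jrs}c_{jrs}\,a_{jr}\otimes a_{js}'$, apply the triangle inequality and Cauchy--Schwarz, and bound $\norm{\cftW_q^{[N]}(a_{jr}\otimes a_{js}',(z,\bar{z}))}\leq \sum_{k,\ell}\norm{\W\itw{\ell}{j}{k}_q^{[N]}}\,\norm{\W'\itw{\ell}{j}{k}_q^{[N]}}$. Inserting the two constituent bounds, the two factors of $1/\sqrt{1-\sqrt{q}}$ combine into a single $1/(1-\sqrt{q})$ that factors out of the entire sum, leaving exactly $\Theta(\sqrt{q},(z,\bar{z}))=\cftbnd(\sqrt{q},(z,\bar{z}))$, the full-CFT boundedness parameter of Definition~\ref{def:sboundedcft} evaluated at $\sqrt{q}$. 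This produces the claimed inequality, valid for all $N\in\Nl\cup\{\infty\}$ since every constituent estimate is.

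I do not expect a genuine obstacle, as the analytic content is entirely carried by the chiral Lemma~\ref{lem:boundednessparametertruncated}; the only point requiring care is bookkeeping. Specifically, one must verify that the two denominators $\sqrt{1-\sqrt{q}}$ from the chiral and anti-chiral factors multiply to exactly $1-\sqrt{q}$, and that the finitely many summands over $j,k,\ell$ (finite by rationality of $\cV$) reassemble into precisely the expression $\Theta=\cftbnd$ of~\eqref{eq:boundenessparamfulcft}, now with argument $\sqrt{q}$ rather than $q$. An alternative, more self-contained route would mirror the chiral proof directly at the level of $\cH$, using the factorization $\cftW_q(\psi,(z,\bar{z}))=q^{\mathbb{L}_0/4}\cftW_{\sqrt{q}}(q^{\mathbb{L}_0/4}\psi,(z,\bar{z}))q^{\mathbb{L}_0/4}$ (the analog of~\eqref{eq:scaledintertwinerfactorization}, following from the dilation identity~\eqref{eq:dilationequationfullcft}) together with the fact that truncation drops mutually orthogonal terms from the $\mathbb{N}_0\times\mathbb{N}_0$-graded mode expansion; this reproduces the same bound up to an absorbed constant and the trivial estimate $\sqrt{1-\sqrt{q}}\geq 1-\sqrt{q}$.
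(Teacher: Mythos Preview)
Your proposal is correct and follows precisely the approach the paper indicates: the paper does not give an explicit proof for this corollary but merely writes ``Again by reduction to the chiral case (cf.~Lemma~\ref{lem:boundednessparametertruncated})'', and your argument is exactly that reduction spelled out---decompose via~\eqref{eq:truncatedscaledvb}, apply Lemma~\ref{lem:boundednessparametertruncated} to each chiral constituent (with $|I_{A[k]}|=1$ since the modules are irreducible), and reassemble using the Cauchy--Schwarz computation of Lemma~\ref{lem:cftsboundednesstochiral} so that the two factors $1/\sqrt{1-\sqrt{q}}$ combine and the remaining sum is identified with $\cftbnd(\sqrt{q},(z,\bar{z}))$.
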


Starting from the truncated scaled intertwiner, we immediately get the truncated version of the transfer operator. That is, the {\em truncated transfer operator} $\mathbb{T}^{[N]}:\cH\rightarrow\cH$ with insertions $\psi_i\in \cS, i=1,\ldots,n$, is defined by the composition
\begin{align}\label{eq:deftruncatedtransferfullcft}
  \mathbb{T}^{[N]} &=\cftW_q^{[N]}(\psi_1,(z,\bar{z})) \circ \cftW^{[N]}(\psi_{2},(z,\bar{z})) \circ \cdots  \circ \cftW^{[N]}(\psi_n,(z,\bar{z})) \,,
\end{align} where the formal variables have been replaced by complex numbers $z,\bar{z}\in\mathbb{C}\backslash\{0\}$ satisfying $0<q<\min \{|z|^2,1/|z|^2,|\bar{z}|^2,1/|\bar{z}|^2\}$.

In order to establish error bounds on the approximation of CFT correlation functions, we need an analogue of Theorem~\ref{thm:truncationestimate}. In fact, we again can obtain a norm estimate on the difference between $\cftW_q$ and $\cftW_q^{[N]}$ by a reduction to that theorem. Due to its importance for our main result, we include a proof of this fact.
\begin{theorem}\label{thm:truncationestimatefullcft}
Let $z,\overline{z}\in\mathbb{C}\backslash\{0\}$ be such that $0<q<\min\{|z|^2,1/|z|^2,|\overline{z}|^2,1/|\overline{z}|^2\}$ and let~$\cftW_q^{[N]}$ be the $N$-th level truncation of a scaled intertwiner~$\cftW_q$
with boundedness parameter $\Theta$. 
 Then the norm of the difference between the scaled intertwiner and its truncated version is bounded by
\begin{align}
  \big\|\cftW_{q}(\psi,(z,\overline{z})) - \cftW_q^{[N]}(\psi,(z,\overline{z}))\big\| &\leq q^{N/4}\cdot\Gamma\cdot\frac{\cftbnd(\sqrt{q},(z,\bar{z}))}{(1-\sqrt{q})^2} \cdot \|\psi\|_\cH
\end{align}
for all $\psi\in\cShom$, where $\Gamma$ is a constant only depending on the dimension of $\cS$ and the  number of irreducible modules of the VOA~$\cV$ appearing in the decomposition~\eqref{eq:fullcftdecomp}  of $\cH$.
\end{theorem}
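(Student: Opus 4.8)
The plan is to reduce the full-CFT truncation estimate to the chiral Theorem~\ref{thm:truncationestimate} by using the decomposition~\eqref{eq:truncatedscaledvb} of the scaled intertwiner (and its truncated version) into constituents. Writing $\psi = \sum_{jrs} c_{jrs}\, a_{jr}\otimes a'_{js}$ as in~\eqref{eq:decomppsifullcft} with orthonormal bases of $S[j]$ and $S[j]'$, I would first treat a single tensor-product constituent $a_j\otimes a'_j$ and bound the difference operator
\begin{align*}
\cftW_q(a_j\otimes a'_j,(z,\bar z)) - \cftW_q^{[N]}(a_j\otimes a'_j,(z,\bar z))
= \sum_{k,\ell}\Big(\W{\textstyle\itw{\ell}{j}{k}}_q\otimes \W'{\textstyle\itw{\ell}{j}{k}}_q - \W{\textstyle\itw{\ell}{j}{k}}_q^{[N]}\otimes \W'{\textstyle\itw{\ell}{j}{k}}_q^{[N]}\Big)\,.
\end{align*}

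The algebraic heart of the argument is the standard telescoping identity for a difference of tensor products,
\begin{align*}
X\otimes Y - X^{[N]}\otimes Y^{[N]} = (X-X^{[N]})\otimes Y + X^{[N]}\otimes (Y-Y^{[N]})\,,
\end{align*}
applied with $X=\W\itw{\ell}{j}{k}_q(a_j,z)$, $Y=\W'\itw{\ell}{j}{k}_q(a'_j,\bar z)$. The two resulting terms are each a product of one chiral truncation error (bounded by Theorem~\ref{thm:truncationestimate}, giving a factor $\kappa\, q^{N/4}\,\bnd(\sqrt q,\cdot)/(1-\sqrt q)$ times the relevant vector norm) and one bounded or truncated-bounded chiral intertwiner (bounded by Lemma~\ref{lem:boundednessparametertruncated}, giving a factor $\sqrt{|I|}\,\bnd(\sqrt q,\cdot)/\sqrt{1-\sqrt q}$). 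Crucially, both the chiral boundedness and the chiral truncation estimates hold under precisely the hypotheses $0<q<\min\{|z|^2,1/|z|^2\}$ and $0<q<\min\{|\bar z|^2,1/|\bar z|^2\}$ that are packaged into the full-CFT condition on $q$, so both factors are finite. Using the triangle inequality over the finite index set $(k,\ell)$ — finite because $\cV$ is rational and $\cH$ decomposes as~\eqref{eq:fullcftdecomp} into finitely many irreducibles — yields a bound of the form $q^{N/4}$ times a product of two chiral-type factors divided by $(1-\sqrt q)^2$, times $\|a_j\|_{A[j]}\|a'_j\|_{A[j]'}$, matching the claimed $\cftbnd(\sqrt q,(z,\bar z))/(1-\sqrt q)^2$ up to a combinatorial constant.

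Finally I would pass from a single constituent to a general $\psi\in\cShom$ exactly as in the boundedness proof of Lemma~\ref{lem:cftsboundednesstochiral}: apply the triangle inequality over the expansion coefficients $c_{jrs}$, then Cauchy--Schwarz to pull out $\|\psi\|_\cH = (\sum_{jrs}|c_{jrs}|^2)^{1/2}$, leaving a finite sum of squared constituent-error norms under a square root. Because the number of summands (over $j$, $r$, $s$, $k$, $\ell$) is finite and controlled by $\dim\cS$ and the number of irreducible modules appearing in~\eqref{eq:fullcftdecomp}, collecting these combinatorial factors into a single constant $\Gamma$ gives the stated estimate. The constant $\cftbnd(\sqrt q,(z,\bar z))$ absorbs the worst-case product of chiral boundedness parameters $\bnd_{jk\ell}\,\bnd'_{jk\ell}$ exactly as in~\eqref{eq:boundenessparamfulcft}.

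I expect the only genuine subtlety — rather than an obstacle — to be bookkeeping: making sure the chiral truncation parameter $N$ has the same meaning on the holomorphic and antiholomorphic factors (it does, since $\mathbb{Y}^{[N]}$ truncates each constituent at the same level $N$ in~\eqref{eq:truncatedfullintertwinercft}), and verifying that the $\mathbb{N}_0\times\mathbb{N}_0$-grading of $\cH$ is respected so that the tensor-product telescoping is compatible with the weight projections. Since every ingredient is already available from the chiral development, no new analytic input is required, and the proof is essentially a careful reduction combined with the elementary tensor-product telescoping inequality.
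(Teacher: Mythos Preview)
Your proposal is correct and follows essentially the same approach as the paper's own proof: decompose into chiral constituents via~\eqref{eq:truncatedscaledvb}, apply the tensor-product telescoping identity together with Theorem~\ref{thm:truncationestimate} and Lemma~\ref{lem:boundednessparametertruncated} on each factor, sum over the finitely many $(k,\ell)$, and then pass to general $\psi\in\cShom$ by Cauchy--Schwarz exactly as in Lemma~\ref{lem:cftsboundednesstochiral}, absorbing the products $\bnd_{jk\ell}\bnd'_{jk\ell}$ into $\cftbnd(\sqrt q,(z,\bar z))$ via~\eqref{eq:thetaproductupperboundp}. The only cosmetic difference is the grouping in the telescoping step (you pair $X-X^{[N]}$ with the untruncated $Y$, the paper pairs it with $Y^{[N]}$), which is immaterial.
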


\begin{proof}
  Fix some index~$j$ in  the decomposition~\eqref{eq:fullcftdecomp} of $\cH$ and let $a_j\otimes a'_j\in \Shom \cap (A[j]\otimes A[j]')$
  be arbitrary but normalized. By the triangle inequality, we find
  \begin{align}\label{eq:sumwqnwq}
   &\big\| \cftW_q^{[N]}(a_j\otimes a'_j,(z,\bar{z})) -\cftW_q(a_j\otimes a'_j,(z,\bar{z}))\big\| \leq\nonumber\\
   &\qquad\leq\sum_{k,\ell} \Big\| \W\itw{\ell}{j}{k}_q^{[N]}(a_j,z)\otimes \W'\itw{\ell}{j}{k}_q^{[N]}(a'_j,\bar{z})-\W\itw{\ell}{j}{k}_q(a_j,z)\otimes \W'\itw{\ell}{j}{k}_q(a'_j,\bar{z})\Big\|  
  \end{align}
  Every term in this sum has the form
  \begin{align*}
    \big\| \W^{[N]}_q(a_j,z)\otimes (\W')^{[N]}_q(a_j',\bar{z})- \W_q(a_j,z)\otimes \W'_q(a_j',\bar{z})\big\|\ ,
  \end{align*}
  where $\W_q$ and $\W'_q$ are $S[j]$ and $S[j]'$-bounded scaled intertwiners with parameters $\bnd_{jk\ell}(q,z)$ and $\bnd'_{jk\ell}(q,\bar{z})$, respectively (cf.~Lemma~\ref{lem:cftsboundednesstochiral}), and $\W_q^{[N]}$ and $(\W')_q^{[N]}$ are their truncated versions (cf.~\eqref{eq:truncatedscaledvb}). In particular,
  \begin{align*}
    &\big\| \W^{[N]}_q(a_j,z)\otimes (\W')^{[N]}_q(a_j',\bar{z})- \W_q(a_j,z)\otimes \W'_q(a_j',\bar{z})\big\|\\
    &\qquad\leq \big\|\W^{[N]}_q(a_j,z)-\W_q(a_j,z)\big\|\cdot\big\|(\W')^{[N]}_q(a_j',\bar{z})\big\|+\|\W_q(a_j,z)\|\cdot \big\|(\W')^{[N]}_q(a_j',\bar{z})-\W'_q(a_j',\bar{z})\big\| \\
    &\qquad\leq \frac{q^{N/4}}{\sqrt{1-\sqrt{q}}} \,\left(\apperr_{jk\ell}(q,z) \bnd'_{jk\ell}(\sqrt{q},\bar{z}) +  \apperr'_{jk\ell}(q,\bar{z})\bnd_{jk\ell}(\sqrt{q},z) \right)
  \end{align*}
  where we used Theorem~\ref{thm:truncationestimate} and Lemma~\ref{lem:boundednessparametertruncated}. Taking the sum over $k,\ell$ and inserting the expressions for the error bounds $\apperr_{jk\ell}(q,z)$, $\apperr'_{jk\ell}(q,\bar{z})$ from Theorem~\ref{thm:truncationestimate} then leads to the bound
  \begin{align}
    \big\| \cftW_q^{[N]}(a_j\otimes a'_j,(z,\bar{z})) -\cftW_q(a_j\otimes a'_j,(z,\bar{z}))\big\|  &\leq 2\kappa\, \frac{q^{N/4}}{(1-\sqrt{q})^2} \,\sum_{k,\ell} \,\left(\bnd'_{jk\ell}(\sqrt{q},\bar{z})\bnd_{jk\ell}(\sqrt{q},z)\right)
  \end{align}
  The case of a general element $\psi\in \cS$ then follows as in Lemma~\ref{lem:cftsboundednesstochiral}, following Eq.~\eqref{eq:decomppsifullcft} and the remainder of the proof of Lemma~\ref{lem:cftsboundednesstochiral}. Analogous to Eq.~\eqref{eq:boundenessparamfulcft}, we obtain the estimate
\begin{align}
    \big\| \cftW_q^{[N]}(\psi,(z,\bar{z})) -\cftW_q(\psi,(z,\bar{z}))\big\|  &\leq 2\kappa\, \frac{q^{N/4}}{(1-\sqrt{q})^2} \,
    \sqrt{\sum_j  \left(\dim S[j] \,\sum_{k,\ell}\bnd_{j k \ell}(\sqrt{q},z)\bnd'_{j k \ell}(\sqrt{q},\bar{z})\right)^2  }\ 
  \end{align}
 for normalized $\psi\in\cS$. 
But using~\eqref{eq:thetaproductupperboundp}, we have for all $j,k,\ell$
\begin{align*}
\bnd_{j k \ell}(\sqrt{q},z)\bnd'_{j k \ell}(\sqrt{q},\bar{z})\leq \Theta(\sqrt{q},(z,\bar{z}))\ .
\end{align*}
The claim follows, since
\begin{align*}
  \sum_j \left(\dim S[j]\,\sum_{k,\ell}\Theta(\sqrt{q},(z,\bar{z}))\right)^2&= \Theta(\sqrt{q},(z,\bar{z}))^2\sum_j \left(\dim S[j]\, \sum_{k,\ell}1\right)^2\\
&=\Theta(\sqrt{q},(z,\bar{z}))^2\cdot \dim \cS \cdot |J|(|J|^2)^2\ ,
\end{align*}
where $|J|$ is the number of summands in the decomposition~\eqref{eq:fullcftdecomp} of~$\cH$.
\end{proof}

By again exploiting the telescoping sum technique, we obtain the following bound on the norm difference between the truncated transfer operator and its original.  This statement is similar to Lemma~\ref{lem:normboundtransferop} with one important improvement: instead of the number~$n$ of insertions, the bound depends on the number~$m$ of {\em non-trivial} insertions, that is, insertions that are not equal to the vacuum~$\1\otimes\1'$. 
This stems from the additional property~\eqref{eq:vacuumfullcftidentity} of
full CFT intertwiners: the intertwining maps associated with~$\1\otimes\1'$ are the identity.

\begin{corollary}
  Let $\mathbb{T}$ be the transfer operator of the full CFT (cf. Definition~\ref{def:cfttransferoperators}), with 
each insertion $\psi_i$, $i=1,\ldots,n$ satisfying either
\begin{enumerate}[(i)]
\item
$\psi_i\in\cShom\backslash\{1\otimes 1'\}$, $\|\psi_i\|_{\cH}=1$ or 
\item
$\psi_i=\1\otimes \1'$,
\end{enumerate}
and the formal variables replaced by two complex numbers $z,\bar{z}\in\mathbb{C}\backslash\{0\}$ such that $0<q<\min \{|z|^2,1/|z|^2,|\bar{z}|^2,1/|\bar{z}|^2\}$. Let $\mathbb{T}^{[N]}$ be its truncated version (cf Eq.\eqref{eq:deftruncatedtransferfullcft}). Then the norm of their difference is bounded by 
  \begin{align}
    \norm{\mathbb{T} - \mathbb{T}^{[N]}} \leq q^{N/4} \cdot m\cdot \Gamma \cdot\left(\frac{\cftbnd(\sqrt{q},(z,\bar{z}))}{(1-\sqrt{q})^2}\right)^m \,,
  \end{align}
   where $m$ is the number of insertions which are not identical to the vacuum vector, $m = \#\{i\in\{1,\ldots,n\}\,:\,\psi_i \neq \1 \otimes \1'\}$.
\end{corollary}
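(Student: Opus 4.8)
The plan is to follow the telescoping-sum argument of Lemma~\ref{lem:normboundtransferop}, now exploiting the distinguishing feature of full CFT intertwiners recorded in~\eqref{eq:vacuumfullcftidentity}: insertions of the vacuum act trivially. First I would compute the scaled intertwiner on the vacuum vector. Since $\1\otimes\1'$ has weight zero we have $q^{\mathbb{L}_0/2}(\1\otimes\1')=\1\otimes\1'$, and~\eqref{eq:vacuumfullcftidentity} gives $\cftY(\1\otimes\1',(z,\bar z))=\idty_\cH$, so that $\cftW_q(\1\otimes\1',(z,\bar z))=q^{\mathbb{L}_0/2}\,\idty_\cH\,q^{\mathbb{L}_0/2}=q^{\mathbb{L}_0}$. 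Because $\idty_\cH$ preserves the $\mathbb{N}_0\times\mathbb{N}_0$-grading of $\cH$, it changes the weight by zero and is therefore retained under the level-$N$ truncation for any $N\ge 0$; hence $\cftW_q^{[N]}(\1\otimes\1',(z,\bar z))=q^{\mathbb{L}_0}$ as well. The two key consequences are that a vacuum factor has operator norm at most one (since $0<q<1$ and $\mathbb{L}_0\ge 0$), and that the truncated and untruncated scaled intertwiners coincide on the vacuum, so their difference vanishes.

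Next I would set up the telescoping sum exactly as in Lemma~\ref{lem:normboundtransferop}. For $\ell=0,\dots,n$ let $\Omega^{(\ell)}$ denote the composition in which the first $\ell$ scaled intertwiners are untruncated and the remaining $n-\ell$ are truncated, so that $\Omega^{(0)}=\mathbb{T}^{[N]}$ and $\Omega^{(n)}=\mathbb{T}$. The triangle inequality gives $\norm{\mathbb{T}-\mathbb{T}^{[N]}}\le\sum_{\ell=1}^n\norm{\Omega^{(\ell)}-\Omega^{(\ell-1)}}$, and each summand factors as a product of untruncated scaled intertwiners for $i<\ell$, the single difference $\cftW_q(\psi_\ell,(z,\bar z))-\cftW_q^{[N]}(\psi_\ell,(z,\bar z))$ in position $\ell$, and truncated scaled intertwiners for $i>\ell$. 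By the vacuum observation, the term with index $\ell$ vanishes whenever $\psi_\ell=\1\otimes\1'$; hence only the $m$ indices carrying a non-trivial insertion contribute.

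For each such $\ell$ I would bound the three pieces by submultiplicativity of the operator norm. The middle factor is controlled by Theorem~\ref{thm:truncationestimatefullcft}, which (using $\norm{\psi_\ell}_\cH=1$) yields $\norm{\cftW_q(\psi_\ell,(z,\bar z))-\cftW_q^{[N]}(\psi_\ell,(z,\bar z))}\le q^{N/4}\Gamma\,\cftbnd(\sqrt q,(z,\bar z))/(1-\sqrt q)^2$. The surrounding factors are bounded by the corollary preceding Theorem~\ref{thm:truncationestimatefullcft}, which applies to both the truncated intertwiners and (in the case $N=\infty$) the untruncated ones: a non-vacuum factor has norm at most $\cftbnd(\sqrt q,(z,\bar z))/(1-\sqrt q)$, while a vacuum factor has norm at most one by the first paragraph. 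Of the $n-1$ surrounding factors, exactly $m-1$ are non-vacuum, so their product of norms is at most $\bigl(\cftbnd(\sqrt q,(z,\bar z))/(1-\sqrt q)\bigr)^{m-1}$. Writing $\beta=\cftbnd(\sqrt q,(z,\bar z))$ and $s=1-\sqrt q$, each surviving telescoping term is therefore bounded by $q^{N/4}\Gamma\,\beta^m/s^{m+1}$.

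Summing the $m$ surviving terms gives $\norm{\mathbb{T}-\mathbb{T}^{[N]}}\le q^{N/4}\,m\,\Gamma\,\beta^m/s^{m+1}$. Since $0<s<1$ and $m\ge 1$ imply $m+1\le 2m$ and hence $s^{-(m+1)}\le s^{-2m}$, this is at most $q^{N/4}\,m\,\Gamma\,(\beta/s^2)^m$, which is the claimed estimate; the case $m=0$ is immediate because every telescoping term vanishes. I do not anticipate a genuine obstacle here: the argument is the chiral telescoping bound of Lemma~\ref{lem:normboundtransferop} transported essentially verbatim, and the only new point requiring care is the bookkeeping that distinguishes vacuum from non-vacuum factors, which hinges entirely on verifying that the level-$N$ truncation leaves the vacuum intertwiner $q^{\mathbb{L}_0}$ unchanged.
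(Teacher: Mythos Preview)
Your proposal is correct and follows essentially the same approach as the paper: the telescoping-sum argument of Lemma~\ref{lem:normboundtransferop} together with the observation that $\cftW_q^{[N]}(\1\otimes\1',(z,\bar z))=\cftW_q(\1\otimes\1',(z,\bar z))$, so that vacuum insertions contribute no error term. Your write-up is in fact more explicit than the paper's sketch---you compute the vacuum scaled intertwiner as $q^{\mathbb{L}_0}$, use $\|q^{\mathbb{L}_0}\|\le 1$ to bound the vacuum factors by one, and carry out the power-counting $s^{-(m+1)}\le s^{-2m}$ to reach the stated form---whereas the paper simply declares the general case ``completely analogous'' to the chiral lemma and only spells out why the vacuum terms drop.
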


\begin{proof}
  In the case where all insertions $\psi_i$, $i=1,\ldots,n$ are not equal to the vacuum, we have $n=m$ and the proof is completely analogous to Lemma~\ref{lem:normboundtransferop} and thus omitted. 

Now consider the case where~$\psi_j = \1 \otimes \1'$ for some~$j \in \{1,\ldots,n\}$.  We employ the fact that intertwiners of full CFTs map the vacuum vector to the identity map on $\cH$ independently of the formal variables $z,\bar{z}$, 
cf.~\eqref{eq:vacuumfullcftidentity}.  This implies immediately that the truncated intertwiner satisfies
  \begin{align}
    \cftY^{[N]}(\1 \otimes \1',(z,\bar{z})) = \cftY(\1 \otimes \1',(z,\bar{z}))\,,
  \end{align}
  independently of the truncation parameter $N$, and hence also
  \begin{align}
    \cftW_q^{[N]}(\1 \otimes \1',(z,\bar{z})) = \cftW_q(\1 \otimes \1',(z,\bar{z}))\,.
  \end{align}
  Hence within the telescoping sum argument of Lemma~\ref{lem:normboundtransferop} (or rather its straightforward adaption to the full CFT case), the replacement of $\cftW_q(\1 \otimes \1',(z,\bar{z}))$ with its truncated version $\cftW_q^{[N]}(\1 \otimes \1',(z,\bar{z}))$ can be done without picking up an error term resulting from the approximation. This proves the assertion.
\end{proof}

Finally, we need to project the Hilbert space~$\cH$ onto a finite-dimensional subspace. For this purpose, we use the $\mathbb{N}_0\times\mathbb{N}_0$-grading of the Hilbert space~$\cH$ given in~\eqref{eq:hgradingfullcft}. For $M>0$, we define the projection~$P^{[M]}$ on $\cH$ as the orthogonal projection onto the subspace
\begin{align}
  \mathbb{P}^{[M]}\cH=\bigoplus_{n^L\leq M, n^R\leq M}\cH_{(n^L,n^R)}\ .\label{eq:cutoffspacefullcft}
\end{align}
Recall that the Hilbert space~$\cH$ of a diagonal CFT decomposes as in~\eqref{eq:fullcftdecomp} into a direct sum of tensor products~$A[j]\otimes A[j]'$, where each  $A[j]$ is  irreducible  and $A[j]'$ is the associated contragredient module. Using the~$\mathbb{N}$-grading on these modules, the space~\eqref{eq:cutoffspacefullcft} can be written as
\begin{align*}
  \mathbb{P}^{[M]}\cH=\bigoplus_{j}  P_{A[j]}^{[M]}A[j]\otimes P_{A[j]'}^{[M]}A[j]'\ ,
\end{align*} 
where $P^{[M]}_A$ for a module $A$ is the projection introduced in Section~\ref{sec:finitebonddimensionerror}. That is, the projection associated with the full CFT has the form
\begin{align*}
  \mathbb{P}^{[M]}=\bigoplus_j P_{A[j]}^{[M]}\otimes P_{A[j]'}^{[M]}\ ,  
\end{align*} 
i.e., it cuts off each module independently. 

Combining the bound between the transfer operator and its truncated version with the expressions for the correlation functions leads directly to the analogue of Corollary~\ref{cor:approxcorfuncplane} for the vacuum-to-vacuum full CFT correlation function. Incorporating the projector $\mathbb{P}^{[M]}$ and repeating the steps in the proof of Lemma~\ref{lem:projectedcorrelationfct} leads to the corresponding analogue in the genus-1 case. We collect these approximation statements into a Corollary, but omit the proof (which as stated is exactly as in the chiral case).

\begin{corollary}[Matrix product tensor network for full CFT]\label{cor:approxfulfcftmps}
  Let $M,N$ be positive natural numbers, the truncation parameters, and let $\mathbb{T}^{[N]}$ be the truncated transfer operator (cf. Eq.~\eqref{eq:deftruncatedtransferfullcft}) of a full CFT with Hilbert space $\cH$, with 
each insertion $\psi_i$, $i=1,\ldots,n$ satisfying either
\begin{enumerate}[(i)]
\item
$\psi_i\in\cShom\backslash\{1\otimes 1'\}$, $\|\psi_i\|_{\cH}=1$ or 
\item
$\psi_i=1\otimes 1'$.
\end{enumerate}
Let $F^{(0)}_\cH$ and $F^{(1)}_{\torusq,\cH}$ be the genus-0 and genus-1 correlation functions defined by the right hand sides of Eqs.~\eqref{eq:corrfunctransferopfullcftgenus0} and~\eqref{eq:corrfunctransferopfullcftgenus1} with the identification of variables as in~\eqref{eq:corrfunctransferopfullcftvar}. Then we have the following approximation statement for the vacuum-to-vacuum correlation function,
  \begin{align}
    \left|\Scp{\1 \otimes \1'}{\mathbb{T}^{[N]}\,\1 \otimes \1'} - q^{\sum_j j \wt \psi_j} F^{(0)}_\cH \right| \leq q^{N/4} \cdot m \cdot \Gamma \cdot \left(\frac{\cftbnd(\sqrt{q},(z,\bar{z}))}{(1-\sqrt{q})^2}\right)^m \,,
  \end{align}
  where $m$ is the number of insertions which are not identical to the vacuum vector, $m = \#\{i\in\{1,\ldots,n\}\,:\,\psi_i \neq \1 \otimes \1'\}$. Similarly, the following approximation statement for the genus-1 correlation function holds for $0<r<1$,
  \begin{align}
    \left|\tr_\cH \mathbb{P}^{[M]} \mathbb{T}^{[N]} r^{\mathbb{L}_0} - \torusq^{c/12} F_{\torusq,\cH}^{(1)} \right| \leq \left(q^{N/4} \,m\, \Gamma\, r^{c/12} \mathbb{Z}(r) + r^M r^{c/6} \mathbb{Z}(\sqrt{r})\right) \cdot\left(\frac{\cftbnd(\sqrt{q},(z,\bar{z}))}{(1-\sqrt{q})^2}\right)^m\,,
  \end{align}
  where $\mathbb{Z}(r)$ is the partition function of the full CFT,
  \begin{align}
    \mathbb{Z}(r) = \sum_j Z_{A[j]}(r)^2 \,.
  \end{align}
\end{corollary}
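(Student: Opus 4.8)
The plan is to assemble this final corollary entirely from previously established pieces, exactly as the authors signal ("we omit the proof (which as stated is exactly as in the chiral case)"). The strategy has two independent parts, genus-$0$ and genus-$1$, and each reduces to inserting the exact correlation function reproduction formula and then bounding the error by the norm difference between the full transfer operator and its truncation.

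Let me think about what we have available. We have the exact reproduction formulas: Eq.~\eqref{eq:corrfunctransferopfullcftgenus0} gives $\langle\1\otimes\1',\mathbb{T}\1\otimes\1'\rangle = q^{\sum_j j\wt\psi_j}F^{(0)}_\cH$, and Eq.~\eqref{eq:corrfunctransferopfullcftgenus1} gives $\tr_\cH\mathbb{T}r^{\mathbb{L}_0}=\torusq^{c/12}F^{(1)}_{\torusq,\cH}$. Crucially, just before this corollary, the previous corollary established the operator-norm bound $\norm{\mathbb{T}-\mathbb{T}^{[N]}}\leq q^{N/4}\cdot m\cdot\Gamma\cdot(\cftbnd(\sqrt{q},(z,\bar z))/(1-\sqrt{q})^2)^m$, where $m$ counts non-trivial insertions. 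So the genus-$0$ case is immediate: write the difference as $|\langle\1\otimes\1',(\mathbb{T}^{[N]}-\mathbb{T})\1\otimes\1'\rangle|$, apply Cauchy–Schwarz (the vacuum has unit norm), and the result follows directly.

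First I would handle genus-$0$. Using the exact formula to replace $q^{\sum_j j\wt\psi_j}F^{(0)}_\cH$ by $\langle\1\otimes\1',\mathbb{T}\1\otimes\1'\rangle$, the quantity to bound becomes $|\langle\1\otimes\1',(\mathbb{T}^{[N]}-\mathbb{T})\1\otimes\1'\rangle|\leq\norm{\mathbb{T}^{[N]}-\mathbb{T}}$, since $\norm{\1\otimes\1'}_\cH=1$. Inserting the preceding corollary's norm bound gives the claimed estimate verbatim. Then I would do genus-$1$, which requires the extra cutoff projector $\mathbb{P}^{[M]}$ and follows the structure of Lemma~\ref{lem:projectedcorrelationfct}. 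I would split via the triangle inequality into $|\tr_\cH\mathbb{P}^{[M]}\mathbb{T}^{[N]}r^{\mathbb{L}_0}-\tr_\cH\mathbb{T}^{[N]}r^{\mathbb{L}_0}|+|\tr_\cH\mathbb{T}^{[N]}r^{\mathbb{L}_0}-\torusq^{c/12}F^{(1)}_{\torusq,\cH}|$. The second term uses Hölder's inequality for Schatten norms, $|\tr_\cH(\mathbb{T}^{[N]}-\mathbb{T})r^{\mathbb{L}_0}|\leq\norm{\mathbb{T}^{[N]}-\mathbb{T}}\tr_\cH r^{\mathbb{L}_0}$ (legitimate since $r^{\mathbb{L}_0}$ is positive), combined with the norm bound and the identity $\tr_\cH r^{\mathbb{L}_0}=r^{c/12}\mathbb{Z}(r)$ for the full CFT partition function. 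For the first term, $\idty-\mathbb{P}^{[M]}$ projects onto weight subspaces with $n^L>M$ or $n^R>M$, and one bounds $\tr_\cH(\idty-\mathbb{P}^{[M]})r^{\mathbb{L}_0}\leq r^M\tr_\cH r^{\mathbb{L}_0/2}=r^M r^{c/6}\mathbb{Z}(\sqrt{r})$ using $\mathbb{L}_0=L^L_0+L^R_0$ and the factorization $\tr_\cH r^{\mathbb{L}_0}=\sum_j Z_{A[j]}(r)^2$; multiplying by $\norm{\mathbb{T}^{[N]}}\leq(\cftbnd(\sqrt{q},(z,\bar z))/(1-\sqrt{q}))^m$ (from the truncated intertwiner bound applied $m$ times, trivial insertions contributing factor one) gives the second error term.

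The only genuine subtlety, and the step I would treat most carefully, is the bookkeeping of the $\mathbb{N}_0\times\mathbb{N}_0$-grading in the cutoff estimate: unlike the chiral case where the weight is a single integer, here $\mathbb{P}^{[M]}$ independently truncates the left- and right-moving levels, so the factorization $\mathbb{P}^{[M]}=\bigoplus_j P_{A[j]}^{[M]}\otimes P_{A[j]'}^{[M]}$ must be used to convert the trace over $\cH$ into a square of chiral characters. Verifying that the two-sided truncation yields exactly $r^M r^{c/6}\mathbb{Z}(\sqrt r)$ rather than a more complicated expression requires checking that dropping the condition on one of the two level indices only enlarges the bound, which follows because all summands are positive; this is where I expect the argument to demand the most attention, though it remains routine given the partition-function factorization and the positivity of $r^{\mathbb{L}_0}$.
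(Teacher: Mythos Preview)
Your proposal is correct and follows exactly the approach the paper indicates: the genus-$0$ case is the full-CFT analogue of Corollary~\ref{cor:approxcorfuncplane} (insert the exact reproduction formula and bound by $\|\mathbb{T}^{[N]}-\mathbb{T}\|$), and the genus-$1$ case reproduces the structure of Lemma~\ref{lem:projectedcorrelationfct} (triangle inequality, H\"older for the Schatten norms, then the cutoff estimate on $\tr_\cH(\idty-\mathbb{P}^{[M]})r^{\mathbb{L}_0}$). Your identification of the $\mathbb{N}_0\times\mathbb{N}_0$-grading bookkeeping as the only place requiring care is also accurate; the factorization $\mathbb{P}^{[M]}=\bigoplus_j P_{A[j]}^{[M]}\otimes P_{A[j]'}^{[M]}$ together with positivity of all summands handles it as you describe.
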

We have thus found  approximating expressions for the correlation functions of full CFTs in the genus-0 and the genus-1 case. It remains to show that these expressions can be represented using finitely correlated functionals on matrix algebras. This is the topic of Section~\ref{sec:fullcftsasfinitely}. 
     
\subsection{Full CFTs as finitely correlated functionals\label{sec:fullcftsasfinitely}}
To relate the expressions involving the truncated transfer operator~$\mathbb{T}^{[N]}$ in Corollary~\ref{cor:approxfulfcftmps} to finitely correlated functionals, we proceed in two steps. We first show that elements in the full CFT Hilbert space $\cH$ can be interpreted as Hilbert-Schmidt operators. This identification lifts to a norm-preserving isomorphism. We then show that the truncated scaled intertwiner is equivalent to a linear map on this space of Hilbert-Schmidt operators. Using this observation, we can construct certain completely positive maps defining an FCS.

\subsubsection{Elements of the full CFT Hilbert space as Hilbert-Schmidt operators}
In order to argue that $\cH$ can be seen as a space of Hilbert-Schmidt operators, consider first the case where the direct sum in~\eqref{eq:fullcftdecomp} only consists of a single term. This implies that~$\cH$ is the algebraic tensor product $\cH = A \otimes A^\prime$, with $A'$ being the restricted dual space\footnote{If $A=\bigoplus_{n \in \Nl} A_n$ is the decomposition of $A$ into levels, then  $A^\prime = \bigoplus_{n \in \Nl} A_n^\prime$.}. The space~$\cH$ can alternatively be represented as a space of operators  on $A$. In fact, the equation
\begin{align}
  \left(\sum_i a_i \otimes a_i^\prime \right)(b) = \sum_i a_i \,a_i^\prime(b) \,,\quad a_i \in A\,,\; a_i^\prime \in A^\prime\,,
\end{align}
defines a linear operator of finite rank on the Hilbert space $A$. Conversely, a
finite rank operator $\cO$ on $A$ can be decomposed as
\begin{align}
  \cO(\cdot ) = \sum_i a_i \,\Scp{\tilde{a}_i}{\cdot}\,,\quad a_i,\,\tilde{a}_i \,\in A\,.
\end{align}
In fact, setting $a_i^\prime = \tilde{\eta}(\tilde{a}_i)$ (see~\eqref{eq:isodualcomplconj})  gives rise to the well-known linear isomorphism~$\upsilon$ between the algebraic tensor product space $A \otimes A^\prime$ and the space of linear finite-rank operators on $A$. If we complete $A \otimes A^\prime$ as a tensor product Hilbert space (there is a unique scalar product), $\upsilon$ extends to a norm-continuous Banach space isomorphism with image being the Hilbert-Schmidt operators on $A$. This isomorphism also naturally extends to direct sums. We summarize the discussion into the following lemma.

\begin{lemma}\label{lem:identifycftops}
  Let $\cH$ be a rational diagonal CFT with unique vacuum associated with a VOA~$\cV$. Then $\cH$ is linearly isomorphic to the direct sum of Hilbert spaces of Hilbert-Schmidt operators $\mathbb{H}(A[j])$ acting on the Hilbert space $A[j]$,
  \begin{align}
    \cH \cong \mathbb{H} =  \bigoplus_j \mathbb{H}(A[j]) \cong \bigoplus_{\alpha\in\widehat{\cV}} \mathbb{H}(A_\alpha) \otimes \Complex^{m_\alpha}\,.
  \end{align}
  We denote this isomorphism of Hilbert spaces by $\upsilon:\cH\rightarrow\mathbb{H}$.
\end{lemma}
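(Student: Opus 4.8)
The plan is to establish the isomorphism level-by-level on the finite-dimensional weight spaces, then extend by direct sum and completion. First I would reduce to the single-summand case $\cH = A \otimes A'$, since the general decomposition~\eqref{eq:fullcftdecomp} is a finite direct sum and both the Hilbert space structure and the map $\upsilon$ respect this decomposition; the finiteness follows from rationality of $\cV$. For a single summand, I would define $\upsilon$ on the algebraic tensor product by the formula
\begin{align*}
  \upsilon\left(\sum_i a_i \otimes a_i'\right)(b) = \sum_i a_i\, a_i'(b)\,,\quad b \in A\,,
\end{align*}
which maps $A \otimes A'$ into the finite-rank operators on $A$. The key observation is that using the Frechet-Riesz isomorphism $\tilde{\eta}:A\to A'$ from~\eqref{eq:isodualcomplconj}, namely $\tilde{\eta}(\tilde{a})(b) = \Scp{\eta(\tilde{a})}{b}$, any finite-rank operator $\cO = \sum_i a_i \Scp{\tilde{a}_i}{\cdot}$ is the image under $\upsilon$ of $\sum_i a_i \otimes \tilde{\eta}(\tilde{a}_i)$, giving surjectivity onto finite-rank operators and injectivity on the algebraic tensor product.

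Next I would verify that $\upsilon$ is norm-preserving (up to the standard identification), i.e. that it maps the tensor-product Hilbert space scalar product on $A \otimes A'$ to the Hilbert-Schmidt inner product. Here I would use the fact that on an orthonormal basis $\{a_m\}$ of $A$ with dual basis, the elementary tensors $a_m \otimes \tilde{\eta}(a_{m'})$ form an orthonormal system mapping to the rank-one operators $a_m \Scp{a_{m'}}{\cdot}$, which are precisely an orthonormal basis of the Hilbert-Schmidt class $\mathbb{H}(A)$. Since the weight spaces $A_n$ are finite-dimensional by~\eqref{eq:dimweightspaces} and the grading is orthogonal, this is a routine bookkeeping check on each graded piece. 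Once $\upsilon$ is an isometry on the dense subspace of finite-rank operators (equivalently, the algebraic tensor product), it extends uniquely to a norm-continuous Banach space isomorphism from the completion of $A \otimes A'$ onto $\mathbb{H}(A)$, the closure of the finite-rank operators in Hilbert-Schmidt norm.

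Finally I would reassemble the summands: applying the single-summand isomorphism to each term $A[j] \otimes A[j]'$ and taking the orthogonal direct sum yields the claimed isomorphism $\cH \cong \bigoplus_j \mathbb{H}(A[j])$, and grouping isomorphic irreducible modules with their multiplicities $m_\alpha$ gives the second form $\bigoplus_{\alpha \in \widehat{\cV}} \mathbb{H}(A_\alpha) \otimes \Complex^{m_\alpha}$. The main obstacle I anticipate is not the algebra of the isomorphism—which is essentially the standard identification of a tensor product with a space of operators—but rather being careful that the anti-linear involution $\eta$ (and hence $\tilde{\eta}$) is used consistently so that $\upsilon$ is genuinely \emph{linear} rather than anti-linear, and that the completion step correctly lands in the Hilbert-Schmidt class rather than some larger operator space. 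Since the statement only asserts a \emph{linear} isomorphism of Hilbert spaces, I would emphasize the isometry property on graded pieces and invoke the uniqueness of the tensor-product scalar product, deferring any finer operator-algebraic structure.
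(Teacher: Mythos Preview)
Your proposal is correct and follows essentially the same approach as the paper: the paper's argument (given in the discussion immediately preceding the lemma) likewise reduces to a single summand $A\otimes A'$, defines $\upsilon$ by the same formula $\bigl(\sum_i a_i\otimes a_i'\bigr)(b)=\sum_i a_i\,a_i'(b)$, uses $\tilde{\eta}$ to identify finite-rank operators with the algebraic tensor product, and then extends by completion to the Hilbert-Schmidt class before reassembling the direct sum. Your version is slightly more explicit about the isometry check on graded pieces and the role of $\eta$ in ensuring linearity, but the structure and ingredients are the same.
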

By definition, the isomorphism~$\upsilon$ is
compatible with the $\mathbb{N}_0\times\mathbb{N}_0$-grading of the Hilbert space~$\cH$: the images of $\cH_{(n^L,n^R)}$ under $\upsilon$ give an $\mathbb{N}_0\times\mathbb{N}_0$-grading on $\mathbb{H}$. In particular,  the image of
 $\mathbb{P}^{[M]}\cH$ (cf.~\eqref{eq:cutoffspacefullcft}) 
under $\upsilon$ is the space of Hilbert-Schmidt operators  with levels $(n^L,n^R)$ satisfying  $n^L,n^R\leq M$. We will write $\mathbb{P}^{[M]}\mathbb{H}$ for this space. By definition, the space $\mathbb{P}^{[M]}\mathbb{H}$ is equal to the direct sum
\begin{align}
  \mathbb{B}^{[M]} = \bigoplus_j \mathsf{Mat}(\C^{d_{A[j]}(M)})\,\label{eq:bmmatrixalgebra}
\end{align}
of matrix algebras equipped with the Hilbert-Schmidt norm.

Similarly, let us denote the images of $\cS$ and $\cShom$ under $\upsilon$ by $\mathbb{S}$ and $\mathbb{S}_{\mathsf{hom}}$, respectively. Observe that the isomorphism~$\upsilon$ is also compatible with $\cS$ if the latter has the form~\eqref{eq:defShomfullcft}: in this case, its restriction to $S[j]\otimes S[j]'$ has image $\mathbb{H}(S[j])\subset\mathbb{H}(A[j])$.
Note also that since the spaces $S[j]$ are assumed to be finite-dimensional, it follows from Lemma~\ref{lem:identifycftops} that $\mathbb{S}$ is equal to a direct sum of matrix algebras, equipped with the Hilbert-Schmidt norm,
\begin{align}
  \mathbb{S} = \bigoplus_j \mathsf{Mat}(\C^{\dim S[j]}) \,.\label{eq:smatrixalgebradef}
\end{align}
\subsubsection{Intertwiners as maps on Hilbert-Schmidt operators}
 Because
of the compatibility of $\upsilon$ with $\cS$, we can compose the isomorphism $\upsilon$ with an instance of a scaled intertwiner $\cftW_q :\cShom \mapsto \End(\cH)$ to obtain a bounded map on Hilbert-Schmidt operators. We denote this map by 
\begin{align}\label{eq:deftransferfullcft}
\begin{matrix}
  \cftE_q: &\mathbb{S}_{\mathsf{hom}} \times \C \times \C &\rightarrow &\End(\mathbb{H})\\
&(\Psi,z,\bar{z})&\mapsto & \cftE_q(\Psi,(z,\bar{z})) = \upsilon \circ \cftW_q(\upsilon^{-1}(\Psi),(z,\bar{z})) \circ \upsilon^{-1}\ .
\end{matrix}
\end{align}
Again, the formal variables are replaced by complex numbers $z$, $\bar{z}$ such that $z,\bar{z}\in\mathbb{C}\backslash\{0\}$ and $0<q<\min \{|z|^2,1/|z|^2,|\bar{z}|^2,1/|\bar{z}|^2\}$. Since the map $\upsilon$ is an isomorphism, any sequence of scaled intertwiners $\cftW_q$ on $\cH$ can be reexpressed as a sequence of maps $\cftE_q$ on Hilbert-Schmidt operators. 

We can apply the same arguments to truncated scaled intertwiners, obtaining linear maps on the space of Hilbert-Schmidt operators. This map will change  the $\Nl_0 \times \Nl_0$-grading of  $\mathbb{H}$ by at most $N$ (in each argument). Furthermore, this procedures also applies if we additionally project onto weight spaces. 
That is, for truncation- and cutoff-parameters $M,N\in\mathbb{N}_0$, we define a map
\begin{align}
\begin{matrix}
  \cftE^{[M,N]}_q: &\mathbb{S}_{\mathsf{hom}} \times \C \times \C &\rightarrow &\End(\mathbb{P}^{[M]}\mathbb{H})\,,\\
&(\Psi,z,\bar{z}) & \mapsto &  \cftE^{[M,N]}_q(\Psi,(z,\bar{z})) = \upsilon \circ \left(\mathbb{P}^{[M]}\, \cftW^{[N]}_q(\upsilon^{-1}(\Psi),(z,\bar{z})) \mathbb{P}^{[M]} \right) \circ \upsilon^{-1}\ .
\end{matrix}
\end{align}
(Contrary to our treatment of the chiral case, we choose to incorporate the cutoff, i.e., projection using $\mathbb{P}^{[M]}$,  into the definition of the truncated scaled intertwiner.)

\subsubsection{Finitely correlated states for full CFTs\label{sec:finitelycorrfull}}
Using the isomorphism $\upsilon:\cH\rightarrow\mathbb{H}$ and following the arguments in the proofs of Lemma~\ref{lem:mpsformapproxgenuszero} and Lemma~\ref{lem:mpsformapproxtorus}, we conclude that expressions
such as $\Scp{\1 \otimes \1^\prime}{\mathbb{T}^{[N]}\,\1 \otimes \1^\prime}$ in Corollary~\ref{cor:approxfulfcftmps} can be expressed in terms of a composition of the maps~$\cftE^{[M+nN,N]}_q$ (see the proof of Lemma~\ref{lem:fcsformtrunctransfer} below for explicit expressions). We will show that  the maps 
 $\cftE^{[M+nN,N]}_q$ can be replaced by  completely positive maps on matrix algebras. This  turns the expressions of interest into  finitely correlated functionals. The involved matrix algebras are built from the images~$\mathbb{B}^{[M+nN]}$
of weight spaces (cf.~\eqref{eq:bmmatrixalgebra}), as well as the matrix algebra~$\mathbb{S}$ associated with the subspace~$\cS$ (cf.~\eqref{eq:smatrixalgebradef}).

\begin{lemma}[FCS for full CFTs]\label{lem:fcsformtrunctransfer}
  Let $M,N\in\mathbb{N}_0$ be the cutoff and truncation parameters. 
There exist a bounded linear embedding 
\begin{align}
\iota: \mathbb{B}^{[M+nN]} \to \mathsf{Mat}(\C^2) \otimes \mathbb{B}^{[M+nN]}
\end{align}
 as well as a completely positive map
  \begin{align}
\begin{matrix}
    \cftEcp_q : \mathbb{S} \otimes \mathsf{Mat}(\C^2) \otimes \mathbb{B}^{[M+nN]} \to \mathsf{Mat}(\C^2) \otimes \mathbb{B}^{[M+nN]}\,, 
\end{matrix}
  \end{align}
  such that the following holds for
the maps ($\Psi\in\mathbb{S}$)
\begin{align*}
\cftEcp_{q,\Psi}(Y)=\cftEcp_q(\Psi\otimes Y)\qquad Y\in\mathsf{Mat}(\C^2) \otimes \mathbb{B}^{[M+nN]}\ .
\end{align*}
For all $\Psi_i\in \cShom$, $i=1,\ldots,n$, parameters
 $z,\bar{z} \in \C \setminus \{0\}$, $0<q<\min\{|z|^2,|z|^{-2},|\bar{z}|^2,|\bar{z}|^{-2}\}$, $0<r\leq 1$ and $\phi_1,\phi_2 \in \mathbb{P}^{[M+nN]}\mathbb{H}$,
the (projected) matrix elements of the truncated transfer operator~$\mathbb{T}^{[N]}$  (cf. Eq.~\eqref{eq:deftruncatedtransferfullcft}) with insertions $\Psi_i \in \cShom$, $i=1,\ldots,n$
are given by a composition of the maps~$\cftEcp_{q,\Psi_j}$ as  
  \begin{align}
    \Scp{\phi_1}{\mathbb{P}^{[M+nN]} \mathbb{T}^{[N]} r^{\mathbb{L}_0} \mathbb{P}^{[M+nN]} \phi_2}  = \tr_{\C^D}\left[\iota(\Phi_1)^*\,\cftEcp_{q,\Psi_1} \circ  \cdots \circ \cftEcp_{q,\Psi_n} \circ \cftEcp_{r,\ket{\1}\bra{\1}}(\iota(\Phi_2))\right] \,. 
  \end{align}
 Furthermore, 
the map 
 $\cftEcp_{q=1,\ket{\1}\bra{\1}}$ satisfies
 \begin{align}\label{eq:fullcfttransferopidentiycond}
\cftEcp_{1,\ket{\1}\bra{\1}}(\iota(\Phi_2)) = \iota(\Phi_2)\qquad\textrm{ for all }\qquad\Phi_2\in  \mathsf{Mat}(\C^2) \otimes \mathbb{B}^{[M+nN]}\ .
\end{align}
\end{lemma}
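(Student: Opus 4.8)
The plan is to construct the completely positive map $\cftEcp_q$ explicitly in Kraus form from the scaled intertwiner $\cftW_q$, and then verify the stated trace identity by unwinding the isomorphism $\upsilon:\cH\to\mathbb{H}$ together with the decomposition~\eqref{eq:scaledintertwinerprojectedout} of the full scaled intertwiner into chiral and anti-chiral constituents. First I would recall that under $\upsilon$, an element $\psi=\sum_i a_i\otimes a_i'\in\cH$ corresponds to a Hilbert-Schmidt operator, and that the action of $\cftW_q(\Psi,(z,\bar z))=q^{\mathbb{L}_0/2}\mathbb{Y}(q^{\mathbb{L}_0/2}\cdot,(z,\bar z))q^{\mathbb{L}_0/2}$ translates, via~\eqref{eq:truncatedscaledvb}, into a \emph{sandwiching} action $X\mapsto \sum_{k,\ell}\W\itw{\ell}{j}{k}_q(a_j,z)\,X\,\bigl(\W'\itw{\ell}{j}{k}_q(a_j',\bar z)\bigr)^{\mathsf{t}}$ on the Hilbert-Schmidt operators, where the anti-chiral factor acts by right-multiplication because the contragredient module $A[j]'$ is the dual of $A[j]$. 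This is precisely the structure $Y\mapsto \sum_k L_k\,Y\,R_k$ that defines a completely positive map once one arranges $R_k$ to be the adjoint of $L_k$ in the diagonal case, using Lemma~\ref{lem:adjointoperatorintertw} and the identification $\eta_A$ of~\eqref{eq:isodualcomplconj}.

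The role of the extra tensor factor $\mathsf{Mat}(\C^2)$ is to promote the (a priori not manifestly completely positive) sandwich map into a genuine completely positive map, a standard dilation trick: one writes the sandwich $Y\mapsto LYR$ with $R\neq L^*$ as a corner of a completely positive map on the doubled algebra $\mathsf{Mat}(\C^2)\otimes\mathbb{B}^{[M+nN]}$, by placing $L$ and $R^*$ into the off-diagonal Kraus slots so that the $2\times 2$ ancilla bookkeeps the bra and ket strands of the correlation function separately. Concretely, I would define $\cftEcp_q(\Psi\otimes Y)$ by its Kraus operators built from the constituents $\W\itw{\ell}{j}{k}_q(a_j,z)$ acting on the ket strand and $\W\itw{\ell}{j}{k}_q(\eta_A(a_j'),\bar z)$ on the bra strand, with the $\mathsf{Mat}(\C^2)$ indices ensuring the two strands are contracted correctly only at the endpoints. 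The embedding $\iota$ then simply diagonally includes a Hilbert-Schmidt operator into the ket-bra structure so that the two copies match the vacuum boundary conditions. Checking complete positivity is then immediate from the Kraus form, and boundedness of $\iota$ and $\cftEcp_q$ follows from Corollary~\ref{cor:boundednessscaledintertwiner} (boundedness of the constituent scaled intertwiners) together with the finiteness of the sums over $j,k,\ell$ guaranteed by rationality.

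For the trace identity, I would proceed exactly as in the proofs of Lemma~\ref{lem:mpsformapproxgenuszero} and Lemma~\ref{lem:mpsformapproxtorus}: unfold the composition $\mathbb{T}^{[N]}=\cftW^{[N]}_q(\Psi_1)\circ\cdots\circ\cftW^{[N]}_q(\Psi_n)$, transport each factor through $\upsilon$ to obtain $\cftE^{[M+nN,N]}_q$, and then recognize the resulting trace over $\mathbb{H}$ as a trace over $\C^D$ of the composed completely positive maps acting on $\iota(\Phi_2)$, with $\iota(\Phi_1)^*$ supplying the left boundary vector. The factor $r^{\mathbb{L}_0}$ and the final dilation parameter are absorbed by inserting one more scaled-intertwiner map $\cftEcp_{r,\ket{\1}\bra{\1}}$ at the end, corresponding to the vacuum insertion with scaling $r$; here the key input is the vacuum property~\eqref{eq:vacuumfullcftidentity}, $\cftY(\1\otimes\1',(z,\bar z))=\idty_\cH$, which shows that $\cftW_r(\ket{\1}\bra{\1})=r^{\mathbb{L}_0/2}(\cdot)r^{\mathbb{L}_0/2}$ implements precisely multiplication by $r^{\mathbb{L}_0}$ and, at $r=1$, the identity, yielding~\eqref{eq:fullcfttransferopidentiycond} directly. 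The main obstacle I anticipate is bookkeeping the transpose/adjoint correctly on the anti-chiral strand: one must be careful that right-multiplication by $\W'$ under the Hilbert-Schmidt identification $\upsilon$ corresponds to $\W$ evaluated at $\eta_A$ of the argument (via Lemma~\ref{lem:adjointoperatorintertw} and~\eqref{eq:bilinearaformdefinition}), and that the $\mathsf{Mat}(\C^2)$ dilation is arranged so the off-diagonal corner reproduces the sandwich without spurious extra terms; getting the Kraus operators and the ancilla indices consistent is the delicate part, while positivity and the trace identity are then formal consequences.
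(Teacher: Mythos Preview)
Your approach and the paper's both arrive at the same $\mathsf{Mat}(\C^2)$ off-diagonal dilation, but the routes differ substantially. The paper does \emph{not} attempt to build Kraus operators explicitly from the chiral/anti-chiral constituents. Instead it observes that the induced map $\mathbb{F}_q:\mathbb{S}\otimes\mathbb{B}^{[M+nN]}\to\mathbb{B}^{[M+nN]}$ is a bounded linear map between finite-dimensional matrix algebras, hence automatically completely bounded, and then invokes the Wittstock--Paulsen extension of Stinespring's theorem (Paulsen, Theorem~8.4) to obtain an abstract factorization $\mathbb{F}_q(\Psi\otimes\Phi)=V_{1,q}^*(\Psi\otimes\Phi\otimes\idty)V_{2,q}$. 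The $\C^2$ dilation is then the single Kraus operator $V_q=V_{1,q}\otimes\ket{0}\bra{0}+V_{2,q}\otimes\ket{1}\bra{1}$ with embedding $\iota(\Phi)=\ket{0}\bra{1}\otimes\Phi$, so that $\cftEcp_q(\cdot)=V_q^*(\cdot\otimes\idty)V_q$ is manifestly CP and has $\mathbb{F}_q$ sitting in its off-diagonal block. The trace identity and the vacuum property then follow formally from the block structure, exactly as you outline.

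Your route tries to read off the sandwich structure $X\mapsto\sum L_k X R_k$ directly from the decomposition~\eqref{eq:truncatedscaledvb}. A word of caution: your suggestion that one can ``arrange $R_k$ to be the adjoint of $L_k$ in the diagonal case, using Lemma~\ref{lem:adjointoperatorintertw}'' is a red herring---the anti-chiral constituent evaluated at $\bar z$ is generically \emph{not} the adjoint of the chiral one at $z$, so $\mathbb{F}_q$ is only completely bounded, not CP, and the $\C^2$ dilation is genuinely needed rather than optional. You recognize this and your off-diagonal corner idea is exactly the right fix, but carrying it out explicitly amounts to reconstructing the Wittstock--Paulsen dilation by hand, including all the transpose/adjoint bookkeeping you correctly flag as the delicate part. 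The paper's abstract route sidesteps this entirely: it never needs to identify which operator multiplies on the left and which on the right, nor to track $\eta_A$ through the contragredient identification. What your approach would buy is concrete Kraus operators expressed in terms of the constituent intertwiners (potentially useful for actual computation), whereas the paper's argument is purely existential but much shorter.
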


\begin{proof}
By repeating the arguments of Lemma~\ref{lem:mpsformapproxgenuszero} and exploiting the isomorphism $\upsilon$, we get the following representation of matrix elements of the projected and truncated scaled intertwiner,
  \begin{align}
    &\Scp{\phi_1}{\mathbb{P}^{[M+nN]} \mathbb{T}^{[N]} r^{\mathbb{L}_0} \mathbb{P}^{[M+nN]} \phi_2} = \\
    &\qquad=\;\tr_{\C^{D/2}}\left[\upsilon(\phi_1)^* \cdot\mathbb{D}^{(1)} \circ \cdots \circ \mathbb{D}^{(n)} \circ \mathbb{D}^{(n+1)}(\upsilon(\phi_2))\right]\,,
  \end{align}
  where we abbreviated
  \begin{align}
    \mathbb{D}^{(i)} = \cftE^{[M+nN,N]}_{q}(\Psi_i,(z,\bar{z}))\,,\quad \mathbb{D}^{(n+1)} = \cftE^{[M+nN,N]}_{r}(\ket{\1} \bra{\1},(z,\bar{z}))\,.
  \end{align}
  Here, we also used that the $\upsilon$ is a Hilbert space isomorphism and that the scalar product for Hilbert-Schmidt operators is given by the trace. The map $\cftE_q$ is a bounded mapping from a finite-dimensional space of Hilbert-Schmidt operators $\mathbb{S}$ into the endomorphism of a space of Hilbert-Schmidt operators $\mathbb{P}^{[M+nN]}\mathbb{H}$. It can therefore be linearily extended to a bounded map defined on the tensor product,
  \begin{align}
    \mathbb{F}_q\,:\, \mathbb{S} \otimes \mathbb{B}^{[M+nN]} \to \mathbb{B}^{[M+nN]}\,,\quad \textrm{by setting } \;\mathbb{F}_q(\Psi \otimes \Phi) = \cftE^{[M+nN,N]}_q(\Psi,(z,\bar{z}))(\Phi)\,.
  \end{align}
  for $\Psi \in \mathbb{S}$ and $\Phi \in \mathbb{B}^{[M+nN]}$. Here, we suppressed the dependence on the parameters $z,\bar{z}$ in $\mathbb{F}_q$ for notational convenience. 

Since $\mathbb{F}_q$ is a bounded operator between two spaces of Hilbert-Schmidt operators, it is also bounded as a map from matrices to matrices equipped with the operator norm. As the matrix algebras are finite-dimensional, it is completely bounded (explicit bounds will of course include various factors depending on the dimensions). By the extension of Stinespring's theorem to completely bounded mappings~\cite[Theorem 8.4]{paulsen2002completely}, there exists linear embeddings $V_{1,q}$, $V_{2,q}$ of $\C^{D/2} $ into $\C^{d} \otimes \C^{D/2} \otimes \C^{D^\prime}$, $D^\prime \leq (dD)^2$ such that
  \begin{align}
    \mathbb{F}_q(\Psi \otimes \Phi) = V_{1,q}^* \,\left(\Psi \otimes \Phi \otimes \idty_{\C^{D^\prime}} \right)\, V_{2,q} \,.
  \end{align}
  This statement is immediate for each direct summand of $\mathbb{B}^{[M+nN]}$ and can be extended to the whole space since the sum is finite. We then define the bounded embedding
  \begin{align}
    \iota\,:\,\mathbb{B}^{[M+nN]} \to \mathsf{Mat}(\C^2) \otimes \mathbb{B}^{[M+nN]}\,,\quad\iota\,:\,\Phi \mapsto \ket{0}\bra{1} \otimes \Phi \,,
  \end{align}
  where $\ket{0}$,$\ket{1}$ is the standard basis of $\C^2$. We now define 
  \begin{align}
    V_q \,:\,\C^2 \otimes \C^{D/2} \rightarrow \C^2 \otimes \C^{d} \otimes \C^{D/2} \otimes \C^{D^\prime}\,,\qquad V_q = V_{1,q} \otimes \ket{0}\bra{0} + V_{2,q} \otimes \ket{1}\bra{1}\,.
  \end{align}
This   gives rise to a completely positive map
  \begin{align}
    &\cftEcp_q\,:\,\mathbb{S} \otimes \mathsf{Mat}(\C^2) \otimes \mathbb{B}^{[M+nN]}  \to \mathsf{Mat}(\C^2) \otimes \mathbb{B}^{[M+nN]}\\
    &\cftEcp_q\,:\,X_1 \otimes X_2 \mapsto V_q^* \,\left(X_1 \otimes X_2 \otimes \idty_{\C^{D^\prime}}\right) \,V_q \,.
  \end{align}
  This map is of the form
  \begin{align}
    \cftEcp_q = \begin{pmatrix*}[l] \mathbb{F}_{1,q} & \mathbb{F}_q \\ \mathbb{F}_q^* & \mathbb{F}_{2,q} \end{pmatrix*} \,,
  \end{align}
  where $\mathbb{F}_{q,i}$, $i=1,2$ are completely positive maps, and $\mathbb{F}_q^*$ is the Hermitian conjugate of the map $\mathbb{F}$. It follows from this structure that
  \begin{align}
    \cftEcp_q(\Psi \otimes \iota(\Phi)) = \iota(\mathbb{F}_q(\Psi \otimes \Phi))\,,
  \end{align}
  and hence we find by composing this identity 
  \begin{align}\label{eq:seqcftEcpintermsF}
    \cftEcp_{q,\Psi_1}\circ\cftEcp_{q,\Psi_2}\circ \cdots \circ \cftEcp_{q,\Psi_n}\circ\iota(\Phi) = \iota\circ\mathbb{F}_{q,\Psi_1}\circ\mathbb{F}_{q,\Psi_2}\circ \cdots \circ \mathbb{F}_{q,\Psi_n}(\Phi)
  \end{align}
  where we abbreviated $\mathbb{F}_{q,\Psi}(\Phi) := \mathbb{F}_q(\Psi \otimes \Phi)$. Furthermore, we have
  \begin{align}\label{eq:iotaiotaadjointaction}
    \tr_{\C^D}\left[\iota(\Phi)^* \iota(\Phi^\prime)\right] = \tr_{\C^D}\left[\kettbra{1} \otimes (\Phi^* \Phi^\prime)\right] = \tr_{\C^{D/2}}\left[\Phi^* \Phi^\prime\right]
  \end{align}
  and thus combining Eqs.~\eqref{eq:seqcftEcpintermsF} and~\eqref{eq:iotaiotaadjointaction} gives
  \begin{align}
    \Scp{\phi_1}{\mathbb{P}^{[M+nN]} \mathbb{T}^{[N]} r^{\mathbb{L}_0} \mathbb{P}^{[M+nN]} \phi_2} = \tr_{\C^D}\left[\iota(\Phi_1)^*\,\cftEcp_{q,\Psi_1} \circ  \cdots \circ \cftEcp_{q,\Psi_n} \circ \cftEcp_{r,\ket{\1}\bra{\1}}(\iota(\Phi_2))\right] \,.
  \end{align}
  In order to verify the statement of Eq.~\eqref{eq:fullcfttransferopidentiycond}, note that the pre-image of $\kettbra{\1}$ under the isomorphism $\upsilon$ equals the vacuum vector $\1 \otimes \1^\prime \in \cV \otimes \cV \subset \cS \subset \cH$. And since intertwiners of a full CFT are required to map the vacuum vector $\1 \otimes \1^\prime \in \cV \otimes \cV^\prime$ to the identity operator (cf. Eq.~\eqref{eq:vacuumfullcftidentity}), we have
\begin{align}
  \cftE_{q}(\ket{\1}\bra{\1},(z,\bar{z}))(X) = q^{\mathfrak{L}_0}(X)\,,
\end{align}
where we denoted by $\mathfrak{L}_0 = \upsilon \circ \mathbb{L}_0 \circ \upsilon^{-1}$ the corresponding grading map on the space of Hilbert-Schmidt operators. This implies that $\cftE_{1}(\ket{\1}\bra{\1},(z,\bar{z}))(X) = X$.

\end{proof}

With this Lemma at hand, we can also complete the proof of our main theorem.

\begin{proof}[Proof of Theorem~\ref{thm:main1}]
  The chiral case was already proven in Section~\ref{sec:recoveringthemps}. The full CFT case follows by combining Observation~\ref{obs:fullcft} with Corollary~\ref{cor:approxfulfcftmps} and  Lemma~\ref{lem:fcsformtrunctransfer}.
\end{proof}
  
We remark on one further feature of this approximation. If we are interested in $n$-point correlation functions with non-equispaced insertion points, but all distances are a multiple of a smallest distance (which we may call the UV cutoff), then we do not need to increase the bound dimension to achieve a desired level of approximation. Instead, we only have to apply the map~$\cftEcp_{q,\ket{\1}\bra{\1}}$ a certain number of times. Let us illustrate this with the example of a three-point correlation function. 

Consider the correlation function of the full CFT, with three insertion points on the real line, $\zeta_1 = \zeta^*_1$, $\zeta_2 = \zeta^*_2$, $\zeta_3 = \zeta^*_3$,  separated by distances which are multiples of $a>0$ and evaluated for three primary vectors $\psi_1$, $\psi_2$, $\psi_3$. Without loss of generality, due to translational invariance, we can assume that $\zeta_1=0$, $\zeta_2 = t\cdot a$, $\zeta_3 = (t+s+1)\cdot a$ with $t,s \in \Nl$. 
Since the image of the vacuum element under the action of $\cftY$ is required to be the identity operator, we find for the corresponding correlation function
\begin{align}
  &F^{(0)}((\psi_1,\zeta_1,\zeta_1),(\psi_2,\zeta_2,\zeta_2),(\psi_3,\zeta_3,\zeta_3))\\
  &\qquad=F^{(0)}((\psi_1,\zeta_1,\zeta_1),(\1\otimes \1^\prime,a,a),\ldots,(\psi_2,\zeta_2,\zeta_2),\ldots,(\psi_3,\zeta_3,\zeta_3))\,,
\end{align}
with dots $\ldots$ representing insertions of the vacuum. Using Corollary~\ref{cor:approxfulfcftmps} as well as Lemma~\ref{lem:fcsformtrunctransfer}, we find
\begin{align}\label{eq:twopointfunctiontransfermap}
  &q^{\wt \psi_1+t\cdot\wt \psi_2 + (t+s+1)\wt \psi_3}F^{(0)}((\psi_1,\zeta_1,\zeta_1),(\psi_2,\zeta_2,\zeta_2),(\psi_3,\zeta_3,\zeta_3)\\
  &\qquad\approx \tr_{\C^D}\left[\iota(\ket{\1}\bra{\1})\cdot \cftEcp_{q,\Psi_1} \circ \left(\cftEcp_{q,\ket{\1}\bra{\1}}\right)^t \circ \cftEcp_{q,\Psi_2} \circ \left(\cftEcp_{q,\ket{\1}\bra{\1}}\right)^{s} \circ \cftEcp_{q,\Psi_3}(\iota(\ket{\1}\bra{\1}))\right]\,.
\end{align}
This may motivate the term \emph{transfer map} for the mapping $\cftEcp_{q,\ket{\1}\bra{\1}}$, since it plays the role of the transfer operator in the usual setting of MPS and FCS.

\section{Summary and Outlook\label{sec:outlook}}

We have obtained rigorous error bounds for approximating  correlation functions of conformal field theories in terms of finite-dimensional matrix product tensor networks. These bounds apply both to chiral as well as full conformal field theories.  Prior to our work, various numerical and theoretical studies (see Section~\ref{sec:intro} for  references) suggested that such approximations are possible, yet explicit error bounds were missing. 

Our construction, which is phrased in terms of vertex operator algebras, their modules and intertwiners, exploits and elucidates the underlying representation-theoretic structures of a CFT. The power of this approach becomes especially apparent in the case of WZW models, where we exhibit a close connection between our  construction of transfer operators and group-covariant MPS. 

An intermediate step in our arguments shows that correlation functions are {\em exactly} represented by certain (infinite-dimensional) MPS or finitely correlated states. This property is ultimately a consequence of the ``gluing axiom'' (see e.g.,~\cite{gawedzki1997lectures}) of CFT correlation functions, a feature which  has a counterpart in the axiomatization 
of topological quantum field theories (see e.g., \cite{Witten89,Atiyah89,Walker91,MooreSeiberg98}). For the latter setting, various tensor network techniques have been developed~\cite{KoeBil10,Pfeiferetal10,singh14}, which in some cases~\cite{AguadoVidal08,KoeReichardtVidal09} also turn out to give exact descriptions. However, contrary to the setting of topological quantum field theories, CFTs are inherently infinite-dimensional. From the point of view of computational and variational physics, it is therefore imperative to quantitatively understand the approximability by  finite-dimensional tensor networks. Here our analysis establishes relationships between the accuracy of approximation, MPS parameters (in particular, the bond dimension), and parameters of the CFT.

We expect that the tools developed here can also be applied to obtain similar quantitative statements for other classes of tensor network functionals approximating quantum field theories. An example of such a class of finite-dimensional approximations are continuous matrix product states. Another class of tensor network Ansatz states,  the so-called MERA (for \emph{multi-scale renormalization Ansatz}), has empirically been  shown to provide accurate numerical descriptions of quantum critical systems~\cite{PhysRevLett.99.220405,PhysRevLett.102.180406,PhysRevB.79.144108}. Its application is also supported by entanglement entropy considerations~\cite{PhysRevLett.99.220405}. This method has been extremely successful at numerically identifying the CFT resulting from the continuum limit of a critical spin system. Our work is taking a complementary approach:  given a CFT, we seek suitable finite-dimensional tensor networks encoding the correlation functions. Having addressed this problem for MPS, we expect that  our methods constitute the first step in a similar program for MERA. Ultimately, this should  provide a convincing theoretical explanation for the numerical success of MERA.

Beyond alternative classes of tensor networks, several fundamental open problems remain to be addressed. Our work is restricted to genus-$0$ and genus-$1$ correlation functions; analogous results for higher genus surfaces would be desirable. A potential strategy could be to follow the work~\cite{fuchs2005tft} and try to find approximate versions of their arguments. 
Finally, whether such rigorous statements about the existence of \emph{useful} finite-dimensional approximations can also be made for more general quantum field theories, that is, not necessarily conformal ones, remains completely open.

\section*{Acknowledgements}

We thank Tobias Osborne, Patrick Hayden, John Preskill and Michael Walter for valuable discussions. 
 VBS expresses his gratitude to the Institute of Quantum Computing at the University of Waterloo as well as to the Institute for Advanced Study and the Department of Mathematics at the Technische Universit\"at M\"unchen for several visits, during which this project was put forward. 
RK is supported by the Technische Universit\"at M\"unchen -- Institute for Advanced Study, funded by the German Excellence Initiative and the European Union Seventh Framework Programme under grant agreement no.~291763. He gratefully acknowledges partial support by NSERC during the initial stages of this project.  
VBS acknowledges partial support by  the NCCR QSIT. 

\appendix

\section{Bounds on partition functions and polylogarithms\label{app:bounds}}

In this appendix we summarize various estimates used in the main paper on the asymptotic behavior and upper bounds for partition functions. We fix some notation before we start. Let $P(n,k)$ be the number of (multi-)partitions of $n$ into integers of \(k\) colors. Its generating function can be found to be~\cite{murty},
\begin{align}
  \sum_{n=0}^\infty q^n  P(n,k) = \prod_{n\geq1}(1-q^n)^{-k}\,,
\end{align}
and we denote the right hand side by \(g_k(q)\). Throughout this appendix, we assume that \(q\) is a real number with \(0<q<1\). 

The next Lemma gives a very rough upper bound on the asymptotic growth of \(P(n,k)\). Its proof is analogues to a well-known argument by Siegel, see~\cite[pp. 316-318]{apostol2013introduction} for an exposition. 

\begin{lemma}\label{lem:app:boundpartitionfunction}
 We have
  \begin{align}
    P(n,k) < e^{2\pi \sqrt{\frac{k n}{6}}} \,.
  \end{align}
\end{lemma}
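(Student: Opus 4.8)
The plan is to bound the generating function $g_k(q) = \prod_{n\geq 1}(1-q^n)^{-k}$ and then extract a coefficient bound via the elementary observation that each coefficient of a power series with non-negative coefficients is dominated by the whole series. Concretely, since all coefficients $P(n,k)$ are non-negative, for any $0<q<1$ we have
\begin{align}
P(n,k) \, q^n \leq \sum_{m=0}^\infty P(m,k)\, q^m = g_k(q)\,,
\end{align}
so $P(n,k) \leq q^{-n} g_k(q)$. This holds for every $q \in (0,1)$, so I can optimize over the choice of $q$. The whole proof then reduces to (a) getting a clean upper bound on $\log g_k(q)$ as $q \to 1^-$, and (b) choosing $q$ as a function of $n$ to make $q^{-n} g_k(q)$ as small as possible.

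For step (a), first I would take logarithms: $\log g_k(q) = -k\sum_{n\geq 1}\log(1-q^n)$. Expanding $-\log(1-x) = \sum_{j\geq 1} x^j/j$ gives
\begin{align}
\log g_k(q) = k\sum_{n\geq 1}\sum_{j\geq 1}\frac{q^{nj}}{j} = k\sum_{j\geq 1}\frac{1}{j}\,\frac{q^j}{1-q^j}\,,
\end{align}
after summing the geometric series in $n$. The key estimate (this is the Siegel-type argument referenced in the statement) is to bound $\frac{q^j}{1-q^j} \leq \frac{q^j}{j(1-q)}$, using that $1-q^j = (1-q)(1+q+\cdots+q^{j-1}) \geq j(1-q) q^{j-1}$, which rearranges to $\frac{q^j}{1-q^j} \leq \frac{q}{j(1-q)}$. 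Substituting yields
\begin{align}
\log g_k(q) \leq \frac{k}{1-q}\sum_{j\geq 1}\frac{q}{j^2} = \frac{k\,q}{1-q}\cdot\frac{\pi^2}{6} \leq \frac{k}{1-q}\cdot\frac{\pi^2}{6}\,,
\end{align}
since $q<1$ and $\sum_{j\geq 1} j^{-2} = \pi^2/6$.

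Combining with $P(n,k) \leq q^{-n}g_k(q)$, I get $\log P(n,k) \leq n\log(1/q) + \frac{\pi^2 k}{6(1-q)}$. For step (b), I would substitute $q = e^{-t}$ with $t>0$, so that $\log(1/q) = t$ and $1-q = 1-e^{-t} \geq t - t^2/2$; more simply, using the cruder but sufficient bound $1-e^{-t}\geq t\,e^{-t}$ or just treating the leading behavior, the bound becomes roughly $nt + \frac{\pi^2 k}{6 t}$ (modulo controlling the denominator). Minimizing $nt + \frac{\pi^2 k}{6t}$ over $t$ by AM-GM gives the optimal $t = \pi\sqrt{k/(6n)}$ and minimum value $2\pi\sqrt{kn/6}$, which is exactly the target exponent. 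The main obstacle is technical rather than conceptual: I must carefully handle the gap between $1-q$ and $\log(1/q)$ (i.e. between $1-e^{-t}$ and $t$) so that the optimization genuinely lands at $2\pi\sqrt{kn/6}$ and the inequality comes out \emph{strict} as claimed, rather than with an extra lower-order additive term. I expect this to require either a slightly sharper bound on $1-e^{-t}$ near $t=0$ or a small explicit numerical slack, exploiting that the strict inequality has room because the $\pi^2/6$ bound on the polylogarithm and the geometric-series estimate are both lossy. I would finish by verifying that the chosen $t$ indeed lies in the valid range $t>0$ (equivalently $q\in(0,1)$) for all $n,k\geq 1$, which is immediate.
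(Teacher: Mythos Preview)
Your approach is exactly the paper's (Siegel's argument): bound $\log g_k(q) \le \frac{kq}{1-q}\cdot\frac{\pi^2}{6}$ via the polylogarithm expansion, combine with $P(n,k)\,q^n < g_k(q)$, and optimize over $q\in(0,1)$. The ``obstacle'' you worry about is not one: simply do \emph{not} discard the factor $q$ in the numerator. With $q=e^{-t}$ you have $\frac{q}{1-q}=\frac{1}{e^t-1}<\frac{1}{t}$ strictly for $t>0$, hence $\log P(n,k) < nt + \frac{k\pi^2}{6t}$ on the nose, and the AM--GM minimization at $t=\pi\sqrt{k/(6n)}$ yields the strict bound $2\pi\sqrt{kn/6}$ with no slack needed. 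The paper performs the equivalent step by using $\log(1/q)<(1-q)/q$ and optimizing directly in the variable $s=(1-q)/q$, which is the same substitution in disguise.
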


\begin{proof}
  Let $g_k(q)=\sum_{n=0}^\infty q^n  P(n,k) = \prod_{n\geq1}(1-q^n)^{-k}$. It follows that \(P(n,k) q^n < g_k(q)\) and thus
  \begin{align}
    \log P(n,k) < \log g_k(q) + n \log \frac{1}{q} \,.
  \end{align}
  We estimate both terms on the rhs.~separately, and begin with the first.
  \begin{align}
    \log g_k(q) &= - \log \prod_{n\geq1}(1-q^n)^{-k} = -k \sum_{n\geq 1} \log(1-q^n) = k \sum_{n\geq 1} \sum_{m \geq 1} \frac{q^{mn}}{m}\\
    &=k \sum_{m \geq 1} \frac{1}{m} \sum_{n \geq 1} (q^m)^n = k \sum_{m \geq 1} \frac{1}{m} \frac{q^m}{1-q^m} \,.
  \end{align}
  However, an easy argument shows that
  \begin{align}
    \frac{1}{m}\frac{q^m}{1-q^m} \leq \frac{1}{m^2} \frac{q}{1-q} \,,
  \end{align}
  from which we get that
  \begin{align}
    \log g_k(q) \leq k \frac{q}{1-q} \,\sum_{m\geq 1} \frac{1}{m^2} = k \frac{q}{1-q} \frac{\pi^2}{6} \,.
  \end{align}
  Moreover, we have that
  \begin{align}
    \log \frac{1}{q} < \frac{1-q}{q} \,,
  \end{align}
  from which it follows that
  \begin{align}
    \log P(n,k) < \log g_k(q) + n \log \frac{1}{q} \leq k \frac{q}{1-q} \frac{\pi^2}{6} + n \frac{1-q}{q} \,.
  \end{align}
  Choosing the value of \(q\) as
  \begin{align}
    q \mapsto \frac{1}{1+\sqrt{\frac{k \pi^2}{6 n}}}
  \end{align}
  then leads to the bound
  \begin{align}
    \log P(n,k) < 2 n \sqrt{\frac{k \pi^2}{6 n}} = 2 \pi \sqrt{\frac{k n}{6}} \,.
  \end{align}
\end{proof}

Apart from partition functions, in the discussion of WZW models, certain sums appeared in the norm bounds of module and intertwiner operators. In general, these sums had the form
\begin{align}
  \sum_{n\geq 0} q^{a n} (n+1)^{b}
\end{align}
for positive integers \(a,b\) and \(0<q<1\). However, slight rewriting leads to
\begin{align}
  \sum_{n\geq 0} q^{a n} (n+1)^{b} = q^{-a} \sum_{n\geq 1} \frac{(q^a)^n}{n^{-b}}
\end{align}
which is a multiple of the defining equation for the poly-logarithm. At negative integers, it can be expressed as
\begin{align}
  \sum_{n\geq 1} \frac{(q^a)^n}{n^{-b}} = \left( x \frac{\partial}{\partial x} \right)^b \frac{x}{1-x} \,\left|_{x=q^a} \right.
\end{align}
For the convenience of the reader, we provide a more explicit bound on the dependence in \(q\).

\begin{lemma}\label{lem:estimategammasum}
  Let \(a,b\) be positive integers, and \(0<q<1\). Then we have the estimate
  \begin{align}
    \sum_{n\geq 0} q^{a n} (n+1)^{b} \,\leq\, a^{-b-1} b!\, q^{-a} \log\left(\frac{1}{q}\right)^{-b-1} \,.
  \end{align}
\end{lemma}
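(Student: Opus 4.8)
The goal is to bound the sum $\sum_{n\geq 0} q^{an}(n+1)^b$ from above. The plan is to compare the discrete sum to an integral, exploiting the monotonicity structure of the summand, and then evaluate the resulting integral in closed form using the Gamma function.

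First I would rewrite the sum by shifting the index: setting $k = n+1$ gives $\sum_{n\geq 0} q^{an}(n+1)^b = q^{-a}\sum_{k\geq 1} q^{ak} k^b$. The key observation is that the function $f(x) = q^{ax} x^b = e^{-a\log(1/q)\,x} x^b$ is, for $x$ large enough, decreasing, so one can try to dominate the sum by an integral. However, $f$ increases on a small initial interval before decreasing, so a naive integral comparison on $[0,\infty)$ is slightly delicate. A cleaner route I would take is to avoid monotonicity entirely and instead compare $\sum_{k\geq 1} q^{ak}k^b$ directly against the integral $\int_0^\infty q^{ax} x^b\,dx$, which can be justified because the terms $q^{ak}k^b$ are the values of a function whose integral dominates the sum once one accounts for the standard estimate relating $\sum_{k\geq 1} g(k)$ to $\int_0^\infty g(x)\,dx$ for this type of exponentially-damped polynomial.

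The main computational step is then the evaluation
\begin{align}
  \int_0^\infty q^{ax} x^b\,dx = \int_0^\infty e^{-a\log(1/q)\,x} x^b\,dx = \frac{\Gamma(b+1)}{\bigl(a\log(1/q)\bigr)^{b+1}} = \frac{b!}{a^{b+1}\,\log(1/q)^{b+1}}\,,
\end{align}
using the substitution $u = a\log(1/q)\,x$ together with $\Gamma(b+1) = b!$ for integer $b$. Multiplying back by the prefactor $q^{-a}$ coming from the index shift yields exactly the claimed bound $a^{-b-1} b!\, q^{-a}\log(1/q)^{-b-1}$.

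The part requiring the most care will be justifying the inequality between the discrete sum $\sum_{k\geq 1} q^{ak}k^b$ and the integral $\int_0^\infty q^{ax}x^b\,dx$, since the integrand is not globally monotone. I expect the cleanest argument is to note that on each interval $[k, k+1]$ one has a comparison after a one-step shift, or alternatively to split off finitely many initial terms where the summand increases and bound the decreasing tail by its integral; because the stated bound is deliberately a \emph{rough} overestimate (consistent with its use as a crude majorant in the norm estimates of Lemma~\ref{lem:estimategammasum}'s applications), there is ample slack to absorb the initial terms, so no sharp constant tracking is needed. Once that comparison is in place, the remainder is the routine Gamma-integral evaluation above.
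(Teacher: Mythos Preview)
Your index shift and the Gamma-integral evaluation are correct, but the step you flag as delicate is a genuine gap that cannot be closed. After the shift, the claimed bound reads exactly $\sum_{k\ge1}q^{ak}k^b\le\int_0^\infty q^{ax}x^b\,dx$, so there is \emph{no} slack: the target is the bare integral itself, and your proposal to ``absorb the initial terms'' into a rough overestimate has nothing to absorb them into. The integrand is unimodal with peak at $b/(a\log(1/q))$, which drifts to infinity as $q\to1$, so neither a one-step shift nor a split into increasing/decreasing parts yields the bare integral. In fact the inequality is false: for $a=1$, $b=3$ one has $\sum_{k\ge1}k^3q^k=q(1+4q+q^2)/(1-q)^4$, and at $q=e^{-1}$ this equals $e(e^2+4e+1)/(e-1)^4\approx 6.009$, whereas $3!/(\log(1/q))^4=6$. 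So the stated constant is already slightly too sharp, and no monotonicity-based comparison can reach it.

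The paper takes a different route: it does not shift the index but applies first-order Euler--Maclaurin to $g(t)=q^{at}(t+1)^b$, writing $\sum_{n\ge0}g(n)$ as $\int_0^\infty g(t)\,dt+\int_0^\infty\{t\}\,g'(t)\,dt$. The point absent from your sketch is that $g'(t)=b\,q^{at}(t+1)^{b-1}+a\log(q)\,q^{at}(t+1)^b$ with $\log q<0$, so the second piece of the remainder is negative and may be dropped, giving $\sum\le\int_0^\infty q^{at}(t+1)^b\,dt+b\int_0^\infty q^{at}(t+1)^{b-1}\,dt$. Each integral is then bounded by your Gamma substitution. Honestly carried through, this produces the stated bound times a factor $(1+a\log(1/q))$; the paper's final displayed line keeps only the first integral, which is consistent with the counterexample above but harmless for the crude norm estimates where the lemma is invoked.
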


\begin{proof}
  We use the first order expression of the integral given by the Euler-MacLaurin formula, see~\cite{apostolEuler}, and find
  \begin{align}
    \sum_{n\geq 0} q^{a n} (n+1)^{b} &= \int_{0}^{\infty} dt q^{a t} (t+1)^{b} + \int_{0}^{\infty} dt (t-\floor{t}) \left[b q^{a t} (t+1)^{b-1} + a \log(q) q^{a t} (t+1)^{b}\right] \\
    &\leq \int_{0}^{\infty} dt q^{a t} (t+1)^{b} + b \int_{0}^{\infty} dt q^{a t} (t+1)^{b-1} \,,
  \end{align}
  since \(\log(q)<0\). By subsequent variable substitution, we can rewrite the integral
  \begin{align}
    \int_{0}^{\infty} dt q^{a t} (t+1)^{b} = q^{-a} a^{-b-1} \log(1/q)^{-b-1} \int_{a \log(\frac{1}{q})}^\infty dt e^{-t} \,t^{b} \leq q^{-a} a^{-b-1} \log(1/q)^{-b-1} b! \,,
  \end{align}
  where we used that the gamma function at integral values reduces to the factorial function. 
\end{proof}

\section{Algorithm for WZW models\label{app:algorithmwzw}}

In this appendix, we present an algorithm for computing matrix elements of an intertwiner between three irreducible modules (for WZW models). 

\subsection{Computing normal forms of vectors}
For this purpose, we need to introduce suitable bases, both of the
Lie algebra~$\g$ as well as of the Hilbert spaces of the corresponding $\voalkzero$-modules.
The relationship~\eqref{eq:involutionvbgmodule}  between the involution~$\hat{\eta}:\g\rightarrow\g$ and adjoints suggests that it is especially convenient to work with elements $\fe_{1},\ldots,\fe_{\ell}\in \g$ with the property that 
 \begin{align}\hat{\eta}(\fe_j) = \fe_j \label{eq:involutioninvariancev}
\end{align}
 for $j=1,\ldots,\ell$.
 By polarization, we can assume that  elements~$\{\fe_j\}_{j}$ with the property~\eqref{eq:involutioninvariancev} span~$\g $. We will choose a basis $\{\fe_i\}_{i=1}^{\dim\g}$ of the Lie algebra~$\g$ of this form. In the following,  we will often pick families
$\{\fa_k\}_k\subset \{\fe_i\}_{i=1}^{\dim\g}$ or 
$\{\fb_k\}_k\subset \{\fe_i\}_{i=1}^{\dim\g}$ of basis elements of~$\g$, i.e., $\fa_k=\fe_{i_k}$ for some function $i:k\mapsto i_k\in \{1,\ldots,\dim\g\}$ and similiarly for $\fb_k$ (but it will be convenient not to use the latter notation). 

Let us next consider an irreducible module $\modulekflambda{\lambda}$ of the VOA~$\voalkzero$
corresponding to an integral  dominant weight~$\lambda$ such that $\lambda(\theta) \leq k$.  The Hilbert space~$\modulekflambda{\lambda}$ is the space spanned by elements of the form
\begin{align}
  \fe_{i_1}(-n_1) \fe_{i_2}(-n_2) \cdots \fe_{i_l}(-n_l) \vphi^{i}\label{eq:elementsinWZWtrunc}
\end{align}
such that the following conditions hold:
\begin{enumerate}[(i)]
\item
The vectors 
 $\{\vphi^i\}_{i=1}^{\dim \modulekflambda{\lambda}(0)}$ are a basis of the irreducible $\g$-module~$\modulekflambda{\lambda}(0)$ (which coincides with the top level of the $\voalkzero$-module, see Section~\ref{sec:introVOAmodules}).
  \item The $n_i$ are positive integers. A vector of the form~\eqref{eq:elementsinWZWtrunc} belongs to the  level $\modulekflambda{\lambda}(N)$ of the module, where $N=\sum_i n_i$. This also implies that $l$ is bounded by the level, $l \leq N$.
\end{enumerate}
That fact that the space~$\modulekflambda{\lambda}$ has a basis of the form~\eqref{eq:elementsinWZWtrunc} can be seen formally from Claim~\ref{eq:firstclaimv}, which provides an algorithm for expanding arbitrary vectors; an additional property of this expansion is that the integers $\{n_i\}_i$ are ordered.

We remark that these vectors are not orthonormal, but this is not necessary for our purposes (if an orthonormal basis is required, this can be obtained by applying Gram-Schmidt within each level, see e.g.,~\cite{Estienne:2013vf}). 
The representation of vectors as linear combinations of
vectors of the form~\eqref{eq:elementsinWZWtrunc} provides a natural way to index elements of the (truncated) Hilbert space: indeed,~\eqref{eq:elementsinWZWtrunc} is homogeneous and of weight $h_\lambda+\sum_{j=1}^\ell n_j$ (cf.~\eqref{eq:vkgradingdef}).

Recall that the top level $\modulekflambda{\lambda}(0)$ is an irreducible $\g$-module of highest weight~$\lambda$.  As explained in Section~\ref{sec:unitarity}, property~\eqref{eq:involutioninvariancev}  implies that 
\begin{align}\label{eq:adjointnegativecond}
  \Scp{\fe_j(m) \vphi}{\vphi'}_{\modulekflambda{\lambda}} &=\Scp{\vphi}{\fe_j(-m)\vphi'}_{\modulekflambda{\lambda}} 
\end{align} 
for any two elements 
 $\vphi,\vphi' \in \modulekflambda{\lambda}(0)$ belonging to the top level of~$\modulekflambda{\lambda}$.  This is a key property we will use extensively in the following.

Elements $\fa(n) = \fa \otimes t^n$ of the affine Lie algebra~$\ga$ act on the vectors~\eqref{eq:elementsinWZWtrunc} in a natural way.  Here and below, we can identify the zero modes $\fa(0)$ with the action of $\fa \in \g $ on the irreducible $\g $-module $\modulekflambda{\lambda_j}(0)$, and thus we henceforth write $\fa(0)\vphi_j = \fa \vphi_j$, for $\vphi_j \in \modulekflambda{\lambda_j}(0)$. 
If~$n<0$, then, after decomposing $\fa$ into a linear combination of the $\fe_i$s, the vector is again a linear combination of elements of the form~\eqref{eq:elementsinWZWtrunc}. However, if $n\geq0$, then we have to recursively apply the commutation relation (cf.~\eqref{eq:wzwcommutator})
\begin{align}
  [\fa(n),\fb(m)] = [\fa,\fb](n+m) + n \delta_{n+m,0} (\fa,\fb) k \idty\label{eq:wzwcommutatornew}
\end{align}
in order to obtain a linear combination of our basis elements~\eqref{eq:elementsinWZWtrunc}. Let us denote this linear transformation by $\mathsf{L}$. We note that it only depends on the structure of the affine Lie algebra~$\ga$ in question. Pseudo-code for this linear transformation is given in Algorithm~\ref{algo:WZWL}. It is straightforward to show the following.
\begin{claim}\label{eq:firstclaimv}
For a given input specifying a product $\fb_r(n_r)\cdots \fb_1(n_1)\in \ga$, where $\fb_j\in \{\fe_i\}_{i=1}^{\dim\g}$,
the Algorithm~\ref{algo:WZWL} produces
a list 
$\mathcal{L}=\bigl\{(m,\alpha_m,\substack{i_{s_m},\ldots,i_1\\
k_{s_m},\ldots,k_{1}})\}_{m}$
with the property that 
\begin{align}
\fb_r(n_r)\cdots \fb_1(n_1)\varphi&=\sum_{m\in \cL}\alpha_m \fe_{i_{s_m}}(k_{s_m})\cdots\fe_{i_1}(k_1)\varphi \label{eq:linearcombinationbproductv}
\end{align}
for all $\varphi\in \modulekflambda{\lambda}(0)$ belonging to the top level of (any) irreducible $\voalkzero$-module $\modulekflambda{\lambda}$. (In the sum on the right, an additional index~$m$ associated with the tuple~$\substack{i_{s_m},\ldots,i_1\\
k_{s_m},\ldots,k_{1}})$ is left implicit.
 Furthermore, for every entry in the list, the integers $k_j$ satisfy
\begin{align}
k_{s_{m}}<k_{s_{m}-1}<\cdots <k_{1}\leq 0\ .\label{eq:normalorderingcondition}
\end{align} 
\end{claim}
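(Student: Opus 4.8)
\textbf{Proof approach for Claim~\ref{eq:firstclaimv}.}

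The plan is to proceed by induction on the number of factors $r$ in the product $\fb_r(n_r)\cdots\fb_1(n_1)$, with the inductive hypothesis being that the algorithm correctly normal-orders any product of fewer factors. The base case $r=0$ (the empty product acting as the identity on $\vphi$) is trivial, and the case $r=1$ splits into two sub-cases: if $n_1<0$, then after decomposing $\fb_1$ into the basis $\{\fe_i\}$, the element $\fb_1(n_1)\vphi$ is already a linear combination of basis vectors of the form~\eqref{eq:elementsinWZWtrunc} satisfying the ordering~\eqref{eq:normalorderingcondition} (a single negative-index generator); if $n_1\geq 0$, we use that $\fb_1(n_1)\vphi\in\modulekflambda{\lambda}(0)$ lies in the top level, which $\ga_{+}$ annihilates for $n_1>0$ and on which $\fb_1(0)=\fb_1$ acts as a Lie-algebra element, so $\fb_1(n_1)\vphi$ is a top-level vector and hence already in normal form.

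First I would record the two structural facts the algorithm exploits. The first is the commutation relation~\eqref{eq:wzwcommutatornew}: whenever two adjacent modes $\fe_i(k)\fe_{i'}(k')$ appear with $k>k'$ or are otherwise out of order, swapping them produces the reordered product plus a correction term $[\fe_i,\fe_{i'}](k+k')$ of strictly fewer factors (and a scalar central term when $k+k'=0$). The second fact is that any mode $\fe_i(k)$ with $k>0$ acting on a top-level vector $\vphi$ vanishes or, for $k=0$, reduces the mode to the $\g$-action on $\modulekflambda{\lambda}(0)$; this is precisely where the condition $k_1\leq 0$ in~\eqref{eq:normalorderingcondition} comes from. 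For the inductive step, I would commute the leftmost factor $\fb_r(n_r)$ rightward through the already-normal-ordered expression for $\fb_{r-1}(n_{r-1})\cdots\fb_1(n_1)\vphi$ (valid by the inductive hypothesis), using~\eqref{eq:wzwcommutatornew} at each transposition. Each commutation either moves the mode toward its correctly ordered position or spawns a correction term with one fewer Lie-algebra factor, to which the inductive hypothesis applies directly. Once a mode reaches the rightmost slot with non-negative index, the top-level annihilation/reduction property disposes of it, guaranteeing the terminal condition $k_1\leq 0$.

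The main obstacle, and the crux of the argument, is \emph{termination}: one must show that this rewriting process halts and produces a finite list $\cL$. The difficulty is that a single commutation can generate correction terms, so a naive count of transpositions does not obviously decrease. I would resolve this by introducing a well-founded complexity measure on monomials $\fe_{i_s}(k_s)\cdots\fe_{i_1}(k_1)$ — for instance, the lexicographically ordered pair consisting of (the number of factors, the number of ordering inversions among the indices $k_j$) — and verifying that every rewrite step strictly decreases this measure: transpositions that merely reorder decrease the inversion count, while the correction terms from~\eqref{eq:wzwcommutatornew} strictly decrease the factor count. Since both quantities are non-negative integers bounded by the total level $N=\sum_j n_j$ (weights are preserved by~\eqref{eq:weightchangehomogeneous}, confining all intermediate monomials to the finitely many levels $\leq N$), the process terminates. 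Finally, I would note that correctness of the output identity~\eqref{eq:linearcombinationbproductv} follows immediately, since every rewrite step preserves the value of the expression applied to $\vphi$ by construction, being either an application of the exact relation~\eqref{eq:wzwcommutatornew} or the exact top-level action.
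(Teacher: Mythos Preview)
Your approach is essentially the paper's: both verify that each rewrite step --- commuting adjacent modes via~\eqref{eq:wzwcommutatornew}, and discarding terms whose rightmost mode carries a positive index --- preserves the action on top-level vectors~$\varphi$, and then argue termination. The paper follows the algorithm literally and simply asserts that out-of-order modes are pushed rightward until they vanish at step~4; you package the same content as an induction on the number of factors together with an explicit well-founded measure $(\#\text{factors},\#\text{inversions})$. That added rigour about termination is the only substantive difference, and it is a welcome one since the paper's proof is terse on this point.

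One place your measure does not quite close: when the algorithm swaps two adjacent modes with \emph{equal} index $k_r=k_{r-1}$ (the stated loop condition is $k_r\geq k_{r-1}$, and the claimed output~\eqref{eq:normalorderingcondition} has strict inequalities), the swapped monomial has the same factor count and the same inversion count, so your lexicographic pair does not decrease. This is not a misunderstanding on your part --- it reflects an imprecision in the algorithm/claim as written --- but to make your termination proof airtight you should either refine the measure to break ties (e.g.\ additionally ordering on the basis indices~$i_r$, which amounts to a PBW-type argument), or remark that the intended loop condition is presumably strict with the output ordering non-strict. A minor side point: the number of factors is bounded by the initial~$r$, not by the total level~$N$, since commutation only ever reduces the factor count.
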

In other words, $\mathsf{L}$ performs a kind of normal ordering: it converts any product of elements in a~$\ga$-module to a linear combination of ``normally ordered'' terms  as in  Eq.~\eqref{eq:elementsinWZWtrunc} 
(observe that $\fe_{i_1}(0)$ preserves the top level).

\begin{algorithm}[h!]
\KwIn{
$(\substack{
\fb_r,\ldots,\fb_1\\
n_r,\ldots,n_1
})$, where\\
\qquad\qquad\begin{tabular}{l}
elements $\fb_1,\ldots,\fb_r\in\{\fe_{i}\}_{i=1}^{\dim\g}\subset \g$\\
integers $n_1,\ldots,n_r$
\end{tabular}}
\KwOut{ 
Finite list $\mathcal{L}=\bigl\{(m,\alpha_m,\substack{i_{s_m},\ldots,i_1\\
k_{s_m},\ldots,k_{1}})\bigr\}_{m}$, where for each $m$, we have $\alpha_m\in\mathbb{C}$ and $-k_j\in\mathbb{N}_0, 1\leq i_j\leq \dim\g$ for $j=1,\ldots,s_m$. 
This list satisfies
$\fb_r(n_r)\cdots \fb_1(n_1)\varphi=\sum_{m} \alpha_m \fe_{i_{s_m}}(k_{s_m})\cdots \fe_{i_1}(k_1)\varphi$ for all $\varphi\in \modulekflambda{\lambda}(0)$, and $k_{s_m}<\cdots <k_2<k_1\leq 0$.}
Set $\cL$ equal to the one-element list with entry representing the product~$\fb_r(n_r)\cdots \fb_1(n_1)$, i.e., 
$\cL=\{(1,\alpha_1=1,\substack{i_{r},\ldots,i_1\\
n_{r},\ldots,n_{1}})\}$ where $\fb_{j}=\fe_{i_j}$. \\
$\mathsf{changed}\longleftarrow \mathsf{true}$\\
\While{$\mathsf{changed}=\mathsf{true}$}{
  remove all entries  $(m,\alpha_m,\substack{i_{s_m},\ldots,i_1\\
k_{s_m},\ldots,k_{1}})\in\cL$ with $k_1>0$\\
  $\mathsf{changed}\longleftarrow\mathsf{false}$\\
  try to find an entry $(m,\alpha_m,\substack{i_{s_m},\ldots,i_1\\
k_{s_m},\ldots,k_{1}})\in\cL$ with $k_{r}\geq k_{r-1}$ for some $r\in\{2,\ldots,s_m\}$. \\
  \If{found}{
      $\mathsf{changed}\longleftarrow\mathsf{true}$\\
      Assume that $r\in \{2,\ldots,s_{m}\}$ is the maximal integer such that $k_r\geq k_{r-1}$\\
      remove the entry $(m,\alpha_m,\substack{i_{s_m},\ldots,i_1\\
k_{s_m},\ldots,k_{1}})$ from $\cL$\\
      create  new entries in $\cL$ by applying the commutation relations~\eqref{eq:wzwcommutatornew}. That is, compute
the commutator $[\fe_{i_r},\fe_{i_{r-1}}]=\sum_{\ell=1}^{\dim\g} \beta_{\ell}\fe_{\ell}$ and create
entries of the form $(\alpha_m \beta_\ell,\substack{i_{s_m},\ldots,i_{r+1},\ell,i_{r-2}\ldots,i_1\\
k_{s_m},\ldots,k_{r+1},k_{r}+k_{r-1},k_{r-2},\ldots,k_{1}})$ for $\ell=1,\ldots,\dim\g$, \\
as well as an additional entry of the form
$(\alpha_m k_r\delta_{k_r+k_{r-1},0}(e_{i_r},e_{i_{r-1}})k,\substack{i_{s_m},\ldots,i_{r+1},i_{r-2}\ldots,i_1\\
k_{s_m},\ldots,k_{r+1},k_{r-2},\ldots,k_{1}})$ \\
and an entry of the form
$(\alpha_m ,\substack{i_{s_m},\ldots,i_{r+1},i_{r-1},i_{r},i_{r-2}\ldots,i_1\\
k_{s_m},\ldots,k_{r+1},k_{r-1},k_{r},k_{r-2},\ldots,k_{1}})$ in $\cL$.
   }
} 
return $\cL$
\caption{The routine $\mathsf{L}$
expands a product $\fb_r(n_r)\cdots \fb_1(n_1)\in \ga$, where $\fb_j\in \{\fe_i\}_{i=1}^{\dim\g}$, into a linear combination of terms
$\fe_{i_{s_m}}(k_{s_m})\cdots \fe_{i_1}(k_1)$ such that $k_{s_m}<\cdots <k_2<k_1\leq 0$ and the linear combination has the same action on elements~$\varphi\in \modulekflambda{\lambda}(0)$ in the top level of a $\voalkzero$-module $\modulekflambda{\lambda}$. It simply applies commutators for achieving this. There is  no dependence on the module. \label{algo:WZWL}}
\end{algorithm}

\begin{proof}
 It is easy to check
that
Algorithm~\ref{algo:WZWL} maintains property~\eqref{eq:linearcombinationbproductv}  throughout. Indeed, any replacement described in lines~$7-12$ simply constitutes an application of the commutation relations~\eqref{eq:wzwcommutatornew} in the form
\begin{align}
\fe_{i_{s_m}}(k_{s_m})\cdots \fe_{i_1}(k_1)&=R_1 (\fe_{i_r}(k_r)\fe_{i_{r-1}}(k_{r-1}))R_2\\
&=R_1 \left([\fe_{r}(k_{r}),\fe_{r-1}(k_{r-1})]+\fe_{r-1}(k_{r-1})\fe_{i_r}(k_{r})\right)R_2\\
&=R_1\left([\fe_{r},\fe_{r-1}](k_{r}+k_{r-1})+k_{r}\delta_{k_r+k_{r-1},0}(\fe_{i_r},\fe_{i_{r-1}})k\mathbf{I})\right)R_2
\end{align}
to a single term in the sum, where
\begin{align*}
R_1&=\fe_{i_{s_m}}(k_{s_m})\cdots \fe_{i_{r+1}}(k_{r+1})\\
R_2&=\fe_{i_{r-2}}(k_{r-2})\cdots \fe_{i_1}(k_1)\ .
\end{align*}
Also, step~$4$ does not change the action on vectors $\varphi\in \modulekflambda{\lambda}(0)$ belonging to a top level since for such vectors, we have
\begin{align*}
  \fe_{i}(k)\varphi=0\qquad\textrm{ for all }i=1,\ldots,\dim\g\qquad \textrm{ and }\qquad k>0\ .
\end{align*}
The claim follows since `incorrectly ordered' terms are moved to the right by successive application of the commutation relations and 
eventually disappear at step~$4$ (thus ensuring property~\eqref{eq:normalorderingcondition}). 
\end{proof}

The algorithm~\ref{algo:WZWL} is a key ingredient in the following algorithm for computing matrix elements of intertwiners.  

\subsection{An algorithm for extending $G$-invariant maps to intertwiners}
Having introduced the basis~\eqref{eq:elementsinWZWtrunc} for modules and discussed the action of module mode operators, we proceed to consider intertwiners. For $j=1,2,3$, consider  irreducible modules~$\modulekflambda{\lambda_j}$ of the VOA~$\voalkzero$. To fully specify the intertwiner~$\iY$, we need to provide all matrix elements between vectors of  the form~\eqref{eq:elementsinWZWtrunc}, i.e., expressions of the form
\begin{align}
\Scp{\fa_\ell(m_\ell)\cdots \fa_1(m_1)\vphi^p_1}{\iY(\fc_s(k_s)\cdots \fc_1(k_1)\vphi^r_3,z)\fb_r(n_r)\cdots \fb_1(n_1)\vphi^q_2}_{\modulekflambda{\lambda_1}}\ .
\end{align}
where $\{\varphi_j^p\}_{p}$ is a basis of the top level $\modulekflambda{\lambda_j}(0)$ of module $\modulekflambda{\lambda_j}$ for $j=1,2,3$, and $\fa_p,\fc_r,\fb_q\in \{\fe_i\}_{i=1}^{\dim\g}$. In principle, this can be done following the techniques of~\cite{Huang:1992kw}. However, our setting is somewhat simpler  because we are interested in $S$-bounded intertwiners with the subspace $S=\modulekflambda{\lambda_3}(0)$ equal to the top level of the module~$\modulekflambda{\lambda_3}$. Here it suffices to consider the case where the first argument of $\iY$ is of the form $\varphi^r_3$ instead of~$\fc_s(k_s)\cdots \fc_1(k_1)\vphi^r_3$.

 In the following, we give an algorithm which  computes the matrix elements of~ $\iY(\varphi_3,z)$  with respect to vectors of the form~\eqref{eq:elementsinWZWtrunc}.
The algorithm requires that an intertwining map 
\begin{align}
\cW: V_3 \otimes V_2 \to V_1
\end{align} between $\g $-modules $V_i$, $i=1,2,3$ is given (equivalently, this is a $G$-intertwining map between the corresponding unitary representations of the group $G$), and extends this to an intertwiner of the VOA modules (or more precisely, to operators $\iY(\varphi_3,z)$ for $\varphi_3\in \modulekflambda{\lambda_3}(0)$ in the top level): The resulting interwiner~$\iY$ has the property that
its restriction to the top levels is the intertwining map~$\cW$. This establishes the converse direction in Proposition~\ref{prop:wzwintertwiner}, i.e., the correspondence between group-covariant MPS and the intertwiner of the VOA. 

We call the algorithm~$F$ (Algorithm~\ref{algo:IntWZW} in the following pseudocode). It takes as an argument two natural numbers $\ell,r$, as well as three vectors $\vphi_i \in \modulekflambda{\lambda_i}(0)=V_i$ together with elements $\fa_1,\ldots,\fa_\ell,\fb_1,\ldots,\fb_1 \in \{\fe_i\}_{i=1}^{\dim\g}\subset \g$ and integers~$m_1,\ldots,m_\ell,n_1,\ldots,n_r$.
It outputs the matrix element
\begin{align*}
\Scp{\fa_\ell(m_\ell)\cdots \fa_1(m_1)\vphi_1}{\iY(\vphi_3,z)\fb_r(n_r)\cdots \fb_1(n_1)\vphi_2}_{\modulekflambda{\lambda_1}}\ .
\end{align*}
The evaluation proceeds recursively, and invokes an auxiliary function~$\algoG$ (Algorithm~\ref{algo:WZWG}). The latter function evaluates matrix elements of a specific form by reduction to the zero mode; it depends itself on the subroutine $\mathsf{L}$ (Algorithm~\ref{algo:WZWL}) performing the normal ordering.

\begin{algorithm}[t]
\KwIn{
$(z,\ell,r,\vphi_3,\substack{\vphi_1\\\vphi_2}|\substack{\fa_\ell,\ldots,\fa_1\\ m_\ell,\ldots, m_1}|\substack{\fb_r,\ldots,\fb_1\\ n_r,\ldots, n_1})$, where
\qquad\qquad\begin{tabular}{l}
$z\in\mathbb{C}$\\
vectors $\varphi_j\in V_j$ belonging to irreducible $\g$-modules $V_j$, $j=1,2,3.$\\
$\ell,r\in\mathbb{N}_0$\\
 elements  $\fa_1,\ldots,\fa_\ell,\fb_1,\ldots,\fb_r \in \{\fe_i\}_{i=1}^{\dim \g}\subset \g$\\
integers $m_1,\ldots,m_\ell$, $n_1,\ldots,n_r$
\end{tabular}}
\KwOut{  matrix element $\Scp{\fa_\ell(m_\ell)\cdots \fa_1(m_1)\vphi_1}{\iY(\vphi_3,z)\fb_r(n_r)\cdots \fb_1(n_1)\vphi_2}_{\modulekflambda{\lambda_1}}$ }
\eIf{$\ell > 0$}{\Return{
$\begin{matrix}
F(z,\ell-1,r+1,\vphi_3,\substack{\vphi_1\\\vphi_2}|\substack{\fa_{\ell-1},\ldots,\fa_1\\ m_{\ell-1},\ldots, m_1}|\substack{\fa_\ell,\fb_r,\ldots,\fb_1\\ -m_\ell,n_r,\ldots, n_1})\\
+z^{m_\ell}\,F(z,\ell-1,r,\fa_\ell\vphi_3,\substack{\vphi_1\\\vphi_2}|\substack{\fa_{\ell-1},\ldots,\fa_1\\ m_{\ell-1},\ldots, m_1}|\substack{\fb_r,\ldots,\fb_1\\ n_r,\ldots, n_1})
\end{matrix}$}}{\Return{$\algoG(z,r,\vphi_3,\substack{\vphi_1\\\vphi_2}|\substack{\fb_r,\ldots,\fb_1\\n_r,\ldots,n_1})$}}
\caption{{\sc $F$} computes the matrix element of an intertwiner, using the subroutine~$\algoG$}
\label{algo:IntWZW}
\end{algorithm}

\begin{algorithm}[t]
\KwIn{
$(z,r,\vphi_3,\substack{\vphi_1\\\vphi_2}|\substack{\fb_r,\ldots,\fb_1\\n_r,\ldots,n_1})$, where
\qquad\qquad\begin{tabular}{l}
$z\in\mathbb{C}$\\
$r\in\mathbb{N}_0$\\
vectors $\varphi_j\in V_j$ belonging to irreducible $\g$-modules $V_j$, $j=1,2,3.$\\
elements $\fb_1,\ldots,\fb_r\in \{\fe_i\}_{i=1}^{\dim \g}\subset \g$\\
integers $n_1,\ldots,n_r$ satisfying $n_r<n_{r-1}<\cdots<n_1\leq 0$
\end{tabular}}
\KwOut{  matrix element $\Scp{\vphi_1}{\iY(\vphi_3,z)\fb_r(n_r)\cdots \fb_1(n_1)\vphi_2}_{\modulekflambda{\lambda_1}}$ }
\eIf{$r=0$}{
\Return{$z^{-\tau}\langle\varphi_1,\cW(\vphi_3\otimes\varphi_2)\rangle_{V_1}$}
}
{\eIf{$n_1=0$}{
compute $\varphi'_2\leftarrow \fb_1\varphi_2\in V_2$\\
\Return{$\algoG(z,r-1,\varphi_3,\substack{\vphi_1\\\vphi'_2}|\substack{\fb_r,\ldots,\fb_2\\n_r,\ldots,n_2})$}
}{
$\mathcal{L}=\bigl\{(m,\alpha_m,\substack{i_{s_m},\ldots,i_1\\
k_{s_m},\ldots,k_{1}})\bigr\}_{m}\longleftarrow \mathsf{L}(\substack{\fb_r,\ldots,\fb_1\\
n_r,\ldots,n_1},\varphi_2) $\\
\Return{$\sum_{m}\alpha_m\cdot
(-1)^{s_m} z^{\sum_{j=1}^{s_m} k_j} z^{-\tau}\Scp{\vphi_1}{\cW(\fe_{i_1} \cdots\fe_{i_{s_m}} \vphi_3\otimes \vphi_2)}_{V_1}$}
}
}
\caption{$\algoG$ computes certain matrix elements of an intertwiner 
from the restriction of the zero mode to the top levels. The latter is given by an $G$-intertwining map~$\cW$. The implementation of $\algoG$ relies on a subroutine~$\mathsf{L}$. \label{algo:WZWG}}
\end{algorithm}

\begin{claim}
Suppose we are given
an intertwining map
$\cW: V_3 \otimes V_2 \to V_1$
between the tensor product $\g$-module $V_3\otimes V_2$ and the $\g$-module $V_1$, where 
 $V_j$, $j=1,\ldots,3$ are irreducible $\g$-modules of highest weight~$\lambda_j$. We assume that $\cW$ is  specified in terms of  (an algorithm for computing) the matrix elements 
\begin{align*}
\Scp{\vphi_1}{\cW(\fe_{i_1} \cdots\fe_{i_{s_m}} \vphi_3 \otimes \vphi_2)}_{V_1}\ .
\end{align*}
for $\varphi_j\in V_j$. 
Set $\tau=h_{\lambda_3}+h_{\lambda_2}-h_{\lambda_1}$ and let $z\in\mathbb{C}$. 
Then algorithm~\ref{algo:IntWZW}  computes matrix elements of  the form
\begin{align}
\Scp{\fa_\ell(m_\ell)\cdots \fa_1(m_1)\vphi_1}{\iY(\vphi_3,z)\fb_r(n_r)\cdots \fb_1(n_1)\vphi_2}_{\modulekflambda{\lambda_1}}\qquad\textrm{ where }\varphi_j\in V_j=\modulekflambda{\lambda_j}(0)\ ,
\end{align} and where $\iY$ is an intertwiner 
of type $\itype{\modulekflambda{\lambda_1}}{\modulekflambda{\lambda_3}}{\modulekflambda{\lambda_2}}$, with the property that the restriction of its zero mode to the top levels (see Proposition~\ref{prop:wzwintertwiner}) coincides with~$\cW$.
\end{claim}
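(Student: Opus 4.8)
The plan is to verify that Algorithm~\ref{algo:IntWZW} correctly computes the stated matrix elements by induction on the number~$\ell$ of creation operators acting on~$\vphi_1$ in the left argument of the inner product. The key structural fact is that an intertwiner of type $\itype{\modulekflambda{\lambda_1}}{\modulekflambda{\lambda_3}}{\modulekflambda{\lambda_2}}$ between irreducible modules is uniquely determined by the restriction of its zero-mode to the top levels (as established in Proposition~\ref{prop:wzwintertwiner}), so it suffices to show that the recursion faithfully reduces a general matrix element to expressions involving only this zero-mode data, encoded in~$\cW$. The two moves in the recursion of $F$ correspond precisely to the commutation relation~\eqref{eq:commutationmoduleintertwiner} specialized to the WZW setting: pushing a mode operator $\fa_\ell(m_\ell)$ from the left argument either past the intertwiner (the first recursive call, where $\fa_\ell(m_\ell)$ becomes $\fa_\ell(-m_\ell)$ acting on the right by the adjoint relation~\eqref{eq:adjointnegativecond}) or into the intertwiner's first argument (the second call, producing $\fa_\ell \vphi_3$ with the weight-tracking factor $z^{m_\ell}$).

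First I would establish the correctness of the base routine $\algoG$ (Algorithm~\ref{algo:WZWG}), which handles the case $\ell=0$, i.e.\ matrix elements of the form $\Scp{\vphi_1}{\iY(\vphi_3,z)\fb_r(n_r)\cdots \fb_1(n_1)\vphi_2}_{\modulekflambda{\lambda_1}}$ with $\vphi_1$ in the top level. The recursion here strips creation operators $\fb_j(n_j)$ from the right. When $n_1=0$, the operator $\fb_1(0)=\fb_1$ acts within the top level $V_2$, so it may be absorbed into $\vphi_2$ directly. When $n_1<0$, one uses the subroutine $\mathsf{L}$ (Algorithm~\ref{algo:WZWL}, whose correctness is Claim~\ref{eq:firstclaimv}) to normal-order the product $\fb_r(n_r)\cdots\fb_1(n_1)\vphi_2$ into a linear combination of terms $\fe_{i_{s_m}}(k_{s_m})\cdots\fe_{i_1}(k_1)\vphi_2$ with $k_{s_m}<\cdots<k_1\leq 0$. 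Each such normally-ordered creation string is then moved across the intertwiner using the adjoint relation, converting the lowering operators into raising operators on the top-level vector $\vphi_3$; since $\vphi_1\in V_1$ is top-level and the intertwiner's zero-mode action is given by $\cW$, this produces the closed-form expression returned by $\algoG$, including the factors $(-1)^{s_m}$, $z^{\sum_j k_j}$, and $z^{-\tau}$ that track the sign from adjoints, the weight-shift from commuting with $z^{L_0}$, and the overall conformal weight offset $\tau=h_{\lambda_3}+h_{\lambda_2}-h_{\lambda_1}$ respectively.

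The main obstacle is bookkeeping the powers of $z$ and the signs consistently through both recursions, and verifying that the adjoint relation~\eqref{eq:adjointnegativecond} combined with the dilation/mode-shift identities for intertwiners (the analog of~\eqref{eq:dilationintegrated} and the consequence~\eqref{eq:weightchangehomogeneousintertwiner} for intertwiners) produces exactly the factor $z^{m_\ell}$ in the second branch of $F$ and the factor $z^{\sum_j k_j}z^{-\tau}$ in $\algoG$. Concretely, moving $\fa(m)$ from the bra side to the intertwiner argument uses $\langle \fa(m)\psi, \iY(\vphi_3,z)\chi\rangle = \langle\psi, \fa(-m)\iY(\vphi_3,z)\chi\rangle$, and then the commutator formula~\eqref{eq:commutationmoduleintertwiner} splits $\fa(-m)\iY(\vphi_3,z)$ into $\iY(\vphi_3,z)\fa(-m)$ plus $z^{m}\,\iY(\fa\vphi_3,z)$; I would check that the single index-shift in $\y(\cdot)$-notation reproduces exactly the power $z^{m_\ell}$ appearing in the pseudocode. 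Termination is clear because each recursive call strictly decreases $\ell$ (in $F$) or $r$ (in $\algoG$), so the induction is well-founded; the remaining verification is the routine but delicate matching of these scalar prefactors, which I would carry out by a direct comparison of one recursion step against the explicit mode-commutation identities rather than by any deeper structural argument.
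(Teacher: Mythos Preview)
Your proposal is correct and follows essentially the same approach as the paper: induction on~$\ell$ using the adjoint relation~\eqref{eq:adjointnegativecond} together with the commutator identity $\fa(m)\iY(\vphi_3,z)=\iY(\vphi_3,z)\fa(m)+z^{m}\iY(\fa\vphi_3,z)$ (derived from~\eqref{eq:commutationmoduleintertwiner} specialized to top-level $\vphi_3$), with the base case handled by $\algoG$ via the normal-ordering subroutine~$\mathsf{L}$ and a second application of the same adjoint-plus-commutator move.

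One small clarification to your description of $\algoG$: the mechanism is not that lowering operators become raising operators acting on $\vphi_3$. Rather, commuting $\fb_r(n_r)$ (with $n_r<0$) leftward past $\iY$ yields a term $\langle \fb_r(-n_r)\vphi_1,\,\iY(\vphi_3,z)\cdots\rangle$ that vanishes because $\fb_r(-n_r)$ is a raising operator on the top-level vector $\vphi_1$, leaving only the commutator contribution $-z^{n_r}\langle\vphi_1,\iY(\fb_r\vphi_3,z)\cdots\rangle$ in which $\fb_r$ acts on $\vphi_3$ as a zero-mode (Lie algebra element). Iterating $s_m$ times gives $(-1)^{s_m}z^{\sum_j k_j}$; the sign thus arises from the commutator at each step, not from the adjoint relation itself (which carries no sign).
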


\begin{proof}
We show how to construct the matrix element of the intertwiner~$\iY$  from the map~$\cW$. Evaluating Eq.~\eqref{eq:commutationmoduleintertwiner} for modes of the form $\fa(m)$, $\fa \in \fg$, and intertwiners evaluated for $\vphi_3\in \modulekflambda{\lambda_3}(0)$, we find
\begin{align}\label{eq:basecasesforrecursive}
  \fa(m) \iY(\vphi_3,z)&=\iY(\vphi_3,z)\fa(m)+ \iY(\fa(0) \vphi_3,z)z^{m}\,,
\end{align}
since the action of positive modes on top level vectors vanishes, $\fa(m)\vphi_3 = 0$, for $m>0$.  Applying property~\eqref{eq:adjointnegativecond} and Eq.~\eqref{eq:basecasesforrecursive} once, we find
\begin{align}
  &\Scp{\fa_\ell(m_\ell)\cdots \fa_1(m_1)\vphi_1}{\iY(\vphi_3,z)\fb_r(n_r)\cdots \fb_1(n_1)\vphi_2}_{\modulekflambda{\lambda_1}}=\\
  &\qquad=\Scp{\fa_{\ell-1}(m_{\ell-1})\cdots \fa_1(m_1)\vphi_1}{\fa_\ell(-m_\ell)\iY(\vphi_3,z)\fb_r(n_r)\cdots \fb_1(n_1)\vphi_2}_{\modulekflambda{\lambda_1}}\\
  &\qquad=\Scp{\fa_{\ell-1}(m_{\ell-1})\cdots \fa_1(m_1)\vphi_1}{\iY(\vphi_3,z)\fa_\ell(-m_\ell) \fb_r(n_r)\cdots \fb_1(n_1)\vphi_2}_{\modulekflambda{\lambda_1}}\\
  &\qquad\qquad+\Scp{\fa_{\ell-1}(m_{\ell-1})\cdots \fa_1(m_1)\vphi_1}{\iY(\fa_\ell\vphi_3,z) \fb_r(n_r)\cdots \fb_1(n_1)\vphi_2}_{\modulekflambda{\lambda_1}}\,z^{m_\ell}\,,
\end{align}
for $\vphi_i \in \modulekflambda{\lambda_i}(0)$, $i=1,2,3$. Let us define the function
\begin{align*}
  f_{\ell,r}(z,\vphi_3,\substack{\vphi_1\\\vphi_2}|\substack{\fa_\ell,\ldots,\fa_1\\ m_\ell,\ldots, m_1}|\substack{\fb_r,\ldots,\fb_1\\ n_r,\ldots, n_1}):=\Scp{\fa_\ell(m_\ell)\cdots \fa_1(m_1)\vphi_1}{\iY(\vphi_3,z)\fb_r(n_r)\cdots \fb_1(n_1)\vphi_2}_{\modulekflambda{\lambda_1}}\,,
\end{align*}
for $\vphi_i \in \modulekflambda{\lambda_i}(0)$, elements $\fa_1,\ldots,\fa_\ell$ and $\fb_1,\ldots,\fb_r$ in the Lie algebra $\g $, and integer numbers $m_1,\ldots,m_\ell$, $n_1,\ldots,n_r$. Then we have established the recursion relation
\begin{align}
  f_{\ell,r}(z,\vphi_3,\substack{\vphi_1\\\vphi_2}|\substack{\fa_\ell,\ldots,\fa_1\\ m_\ell,\ldots, m_1}|\substack{\fb_r,\ldots,\fb_1\\ n_r,\ldots, n_1}) &= f_{\ell-1,r+1}(z,\vphi_3,\substack{\vphi_1\\\vphi_2}|\substack{\fa_{\ell-1},\ldots,\fa_1\\ m_{\ell-1},\ldots, m_1}|\substack{\fa_\ell,\fb_r,\ldots,\fb_1\\ -m_\ell,n_r,\ldots, n_1}) \\
  &\quad+ z^{m_\ell}\,f_{\ell-1,r}(z,\fa_\ell\vphi_3,\substack{\vphi_1\\\vphi_2}|\substack{\fa_{\ell-1},\ldots,\fa_1\\ m_{\ell-1},\ldots, m_1}|\substack{\fb_r,\ldots,\fb_1\\ n_r,\ldots, n_1})\,.
\end{align}
Applying this function recursively, we can decrease the index $\ell$ step by step to zero. Then we have to consider the function
\begin{align}
  \anag_r(z,\vphi_3,\substack{\vphi_1\\\vphi_2}|\substack{\fb_r,\ldots,\fb_1\\n_r,\ldots,n_1}):=f_{0,r}(z,\vphi_3,\substack{\vphi_1\\\vphi_2}|\ |  \substack{\fb_r,\cdots,\fb_1\\ n_r,\cdots, n_1})\,,
\end{align}
which is more explicitly given by
\begin{align}
  \anag_r(z,\vphi_3,\substack{\vphi_1\\\vphi_2}|\substack{\fb_r,\ldots,\fb_1\\n_r,\ldots,n_1}) = \Scp{\vphi_1}{\iY(\vphi_3,z)\fb_r(n_r)\cdots \fb_1(n_1)\vphi_2}_{\modulekflambda{\lambda_1}}\,.
\end{align}
The function $\anag_r$ can easily be computed if $\vphi_j\in \modulekflambda{\lambda_j}(0)$, $j=1,2,3$ belong to the top level, and all $n_j<0$  are negative for $j=1,\ldots,r$. Indeed, in this case we can again apply~\eqref{eq:basecasesforrecursive}
and the adjoint condition~\eqref{eq:adjointnegativecond} to get
\begin{align}
  \anag_r(z,\vphi_3,\substack{\vphi_1\\\vphi_2}|\substack{\fb_r,\ldots,\fb_1\\n_r,\ldots,n_1}) &= \Scp{\fb_r(-n_r)\vphi_1}{\iY(\vphi_3,z)\fb_{r-1}(n_{r-1})\cdots \fb_1(n_1)\vphi_2}_{\modulekflambda{\lambda_1}}\\
  &\quad - z^{n_r}\, \Scp{\vphi_1}{\iY(\fb_r \vphi_3,z)\fb_{r-1}(n_{r-1})\cdots \fb_1(n_1)\vphi_2}_{\modulekflambda{\lambda_1}}\,,
\end{align}
and the first term vanishes since $\fb_r(-n_r)\vphi_1 = 0$ for $n_r<0$. Applying this reasoning recursively and inserting the mode expansion of the intertwiner we arrive at the explicit expression
\begin{align}
  \anag_r(z,\vphi_3,\substack{\vphi_1\\\vphi_2}|\substack{\fb_r,\ldots,\fb_1\\n_r,\ldots,n_1}) = (-1)^r z^{\sum_{j=1}^r n_j} \Scp{\vphi_1}{\iY(\fb_1\cdots\fb_r \vphi_3,z)\vphi_2}_{\modulekflambda{\lambda_1}}\,.
\end{align}
This expression only depends on the zero mode of the intertwiner
(since all vectors belong to their respective top level, and only the zero mode does not change the weight). We conclude that in order to 
obtain an extension of the intertwining map~$\cW$, we must set
\begin{align}
  \anag_r(z,\vphi_3,\substack{\vphi_1\\\vphi_2}|\substack{\fb_r,\ldots,\fb_1\\n_r,\ldots,n_1}) = (-1)^r z^{\sum_{j=1}^r n_j} z^{-\tau}\Scp{\vphi_1}{\cW(\fb_1,\ldots,\fb_r \vphi_3\otimes \vphi_2)}_{V_1}\,.\label{eq:grzvphiextension}
\end{align}
 for $\vphi_j\in \modulekflambda{\lambda_j}(0)$, $j=1,2,3$ belonging to the top level (or equivalently the irreducible $\g$-modules~$V_j$), and all $n_j<0$  negative for $j=1,\ldots,r$. 

Similarly, we can treat the case where $n_1=0$ and $n_j<0$ for $j=2,\ldots,r$: here we observe that $\varphi'_2=\fb_1\varphi_2$ is also an element of the top level $\modulekflambda{\lambda_2}(0)$, and we obtain
\begin{align*}
\anag_r(z,\vphi_3,\substack{\vphi_1\\\vphi_2}|\substack{\fb_r,\ldots,\fb_1\\n_r,\ldots,n_1})&=\anag_r(z,\vphi_3,\substack{\vphi_1\\\vphi'_2}|\substack{\fb_r,\ldots,\fb_2\\n_r,\ldots,n_2})\ .
\end{align*}
Also, if $r=0$, 
then  we must set
\begin{align*}
\anag_r(z,\varphi_3,\substack{\varphi_1\\\vphi_2}| )&=\langle\varphi_1,\iY(\vphi_3,z)\vphi_2\rangle_{\modulekflambda{\lambda_1}}\\
&=z^{-\tau}\langle\varphi_1,\cW(\varphi_3\otimes\varphi_2)\rangle_{V_1}\ .
\end{align*}

Consider now the case where some of the integers~$n_j$, $j=1,\ldots,r$ are positive and again $\varphi_2\in \modulekflambda{\lambda_2}(0)$. In this case, we can apply the subroutine $\mathsf{L}$ to get a linear combination of vectors of the form~\eqref{eq:elementsinWZWtrunc}, i.e., 
\begin{align*}
\fb_r(n_r)\cdots\fb_1(n_1)\varphi_2&=\sum_{m} \alpha_m \fe_{i_{s_m}}(k_{s_{m}})\cdots \fe_{i_1}(k_1)\varphi_2 ,
\end{align*}
where $k_{s_m}<\cdots <k_1\leq 0$ is non-positive (we again
abuse notation and suppress an additional index~$m$ for $i$ and $k$). By linearity of the inner product, we conclude that 
\begin{align*}
\anag_r(z,\vphi_3,\substack{\vphi_1\\\vphi_2}|\substack{\fb_r,\ldots,\fb_1\\n_r,\ldots ,n_r}) = \sum_{m}\alpha_m \anag_r(z,\vphi_3,\substack{\vphi_1\\\vphi_2}|\substack{\fe_r,\ldots,\fe_1\\k_r,\ldots ,k_1})\ .
\end{align*}
At this stage, we use the explicit form~\eqref{eq:grzvphiextension} of the function $g$ for every element of this linear combination. 
\end{proof}
Algorithms~\ref{algo:WZWL},~\ref{algo:IntWZW} and~\ref{algo:WZWG}  summarize these arguments as pseudocode. In this code, we use the notation
\begin{align*}
F(z,\ell,r,\vphi_3,\substack{\vphi_1\\\vphi_2}|\substack{\fa_\ell,\ldots,\fa_1\\ m_\ell,\ldots, m_1}|\substack{\fb_r,\ldots,\fb_1\\ n_r,\ldots, n_1})&=
  f_{\ell,r}(z,\vphi_3,\substack{\vphi_1\\\vphi_2}|\substack{\fa_\ell,\ldots,\fa_1\\ m_\ell,\ldots, m_1}|\substack{\fb_r,\ldots,\fb_1\\ n_r,\ldots, n_1})\\
\algoG(z,r,\vphi_3,\substack{\vphi_1\\\vphi_2}|\substack{\fb_r,\ldots,\fb_1\\n_r,\ldots,n_1})&=\anag_r(z,\vphi_3,\substack{\vphi_1\\\vphi_2}|\substack{\fb_r,\ldots,\fb_1\\n_r,\ldots,n_1})\ .
\end{align*}

\newpage
\providecommand{\href}[2]{#2}\begingroup\raggedright\endgroup


\begin{thebibliography}{10}

\bibitem{feynman1987difficulties}
R.~P. Feynman,
  \href{http://dx.doi.org/10.1142/9789814390187_0003}{``Difficulties in
  applying the variational principle to quantum field theories,''} in {\em
  Proc. Int. Workshop on Variational Calculus in Quantum Field Theory,
  Wangerooge, West Germany}, pp.~28--40, World Scientific.
\newblock 1987.

\bibitem{francesco2012conformal}
P.~Francesco, P.~Mathieu, and D.~S{\'e}n{\'e}chal, {\em Conformal field
  theory}.
\newblock Springer Science \& Business Media, 2012.

\bibitem{RevModPhys.77.259}
U.~Schollw\"ock, ``The density-matrix renormalization group,''
  \href{http://dx.doi.org/10.1103/RevModPhys.77.259}{{\em Revies in Modern
  Physics} {\bfseries 77} (2005) 259--315}.

\bibitem{Wolf:2007wt}
M.~M. Wolf, D.~Perez-Garcia, F.~Verstraete, and J.~I. Cirac, ``{Matrix product
  state representations},'' {\em Quantum Information {\&} Computation}
  {\bfseries 7} (2007) 401--430.
  \url{http://www.rintonpress.com/journals/qiconline.html#v7n56}.

\bibitem{FannesNachtergaeleWerner1994}
M.~Fannes, B.~Nachtergaele, and R.~Werner, ``Finitely correlated pure states,''
  \href{http://dx.doi.org/10.1006/jfan.1994.1041}{{\em Journal of Functional
  Analysis} {\bfseries 120} no.~2, (1994) 511 -- 534}.

\bibitem{fannesnachtergaelewerner1992}
M.~Fannes, B.~Nachtergaele, and R.~F. Werner, ``Finitely correlated states on
  quantum spin chains,'' \href{http://dx.doi.org/10.1007/bf02099178}{{\em
  Communications in Mathematical Physics} {\bfseries 144} no.~3, (1992)
  443--490}.

\bibitem{Hastings:2007vw}
M.~B. Hastings, ``{An area law for one-dimensional quantum systems},''
  \href{http://dx.doi.org/10.1088/1742-5468/2007/08/p08024}{{\em Journal of
  Statistical Mechanics: Theory and Experiment} no.~8, (2007) P08024--14}.

\bibitem{landau2015polynomial}
Z.~Landau, U.~Vazirani, and T.~Vidick, ``{A polynomial time algorithm for the
  ground state of one-dimensional gapped local {H}amiltonians},''
  \href{http://dx.doi.org/10.1038/nphys3345}{{\em Nature Physics} (2015) }.

\bibitem{wess1971consequences}
J.~Wess and B.~Zumino, ``Consequences of anomalous ward identities,''
  \href{http://dx.doi.org/10.1016/0370-2693(71)90582-x}{{\em Physics Letters B}
  {\bfseries 37} no.~1, (1971) 95--97}.

\bibitem{witten1984non}
E.~Witten, ``Non-abelian bosonization in two dimensions,''
  \href{http://dx.doi.org/10.1007/bf01215276}{{\em Communications in
  Mathematical Physics} {\bfseries 92} no.~4, (1984) 455--472}.

\bibitem{novikov1981multivalued}
S.~P. Novikov, ``Multivalued functions and functionals. an analogue of the
  morse theory,'' in {\em Soviet Math. Dokl}, vol.~24, pp.~222--226.
\newblock 1981.

\bibitem{2010PhRvL.104s0405V}
F.~Verstraete and J.~I. Cirac, ``{Continuous Matrix Product States for Quantum
  Fields},'' \href{http://dx.doi.org/10.1103/physrevlett.104.190405}{{\em
  Physical Review Letters} {\bfseries 104} no.~1, (2010) 190405}.

\bibitem{2010PhRvL.105y1601H}
J.~Haegeman, J.~I. Cirac, T.~J. Osborne, H.~Verschelde, and F.~Verstraete,
  ``{Applying the Variational Principle to (1+1)-Dimensional Quantum Field
  Theories},'' \href{http://dx.doi.org/10.1103/physrevlett.105.251601}{{\em
  Physical Review Letters} {\bfseries 105} no.~2, (2010) 251601}.

\bibitem{2010PhRvL.105z0401O}
T.~J. Osborne, J.~Eisert, and F.~Verstraete, ``{Holographic Quantum States},''
  \href{http://dx.doi.org/10.1103/physrevlett.105.260401}{{\em Physical Review
  Letters} {\bfseries 105} no.~2, (2010) 260401}.

\bibitem{2013PhRvB..88h5118H}
J.~Haegeman, J.~I. Cirac, T.~J. Osborne, and F.~Verstraete, ``{Calculus of
  continuous matrix product states},''
  \href{http://dx.doi.org/10.1103/physrevb.88.085118}{{\em Physical Review B}
  {\bfseries 88} no.~8, (2013) 85118}.

\bibitem{2013PhRvL.110j0402H}
J.~Haegeman, T.~J. Osborne, H.~Verschelde, and F.~Verstraete, ``{Entanglement
  Renormalization for Quantum Fields in Real Space},''
  \href{http://dx.doi.org/10.1103/physrevlett.110.100402}{{\em Physical Review
  Letters} {\bfseries 110} no.~1, (2013) 100402}.

\bibitem{jennings2015continuum}
D.~Jennings, C.~Brockt, J.~Haegeman, T.~J. Osborne, and F.~Verstraete,
  ``Continuum tensor network field states, path integral representations and
  spatial symmetries,''
  \href{http://dx.doi.org/10.1088/1367-2630/17/6/063039}{{\em New Journal of
  Physics} {\bfseries 17} no.~6, (2015) 063039}.

\bibitem{2013PhRvB..87p1112E}
B.~Estienne, Z.~Papi{\'c}, N.~Regnault, and B.~A. Bernevig, ``{Matrix product
  states for trial quantum {H}all states},''
  \href{http://dx.doi.org/10.1103/physrevb.87.161112}{{\em Physical Review B}
  {\bfseries 87} no.~1, (2013) 161112}.

\bibitem{nielsen2014bosonic}
A.~E. Nielsen and G.~Sierra, ``Bosonic fractional quantum {H}all states on the
  torus from conformal field theory,''
  \href{http://dx.doi.org/10.1088/1742-5468/2014/04/p04007}{{\em Journal of
  Statistical Mechanics: Theory and Experiment} {\bfseries 2014} no.~4, (2014)
  P04007}.

\bibitem{nielsen2012laughlin}
A.~E. Nielsen, J.~I. Cirac, and G.~Sierra, ``Laughlin spin-liquid states on
  lattices obtained from conformal field theory,''
  \href{http://dx.doi.org/10.1103/physrevlett.108.257206}{{\em Physical review
  letters} {\bfseries 108} no.~25, (2012) 257206}.

\bibitem{Estienne:2013vf}
B.~Estienne, N.~Regnault, and B.~A. Bernevig, ``{Fractional Quantum {H}all
  Matrix Product States For Interacting Conformal Field Theories},'' {\em
  arXiv.org} (2013) , \href{http://arxiv.org/abs/1311.2936}{{\ttfamily
  arXiv:1311.2936 [cond-mat.str-el]}}.

\bibitem{zaletel2012exact}
M.~P. Zaletel and R.~S. Mong, ``Exact matrix product states for quantum {H}all
  wave functions,'' \href{http://dx.doi.org/10.1103/physrevb.86.245305}{{\em
  Physical Review B} {\bfseries 86} no.~24, (2012) 245305}.

\bibitem{2009PhRvL.102y5701P}
F.~Pollmann, S.~Mukerjee, A.~M. Turner, and J.~E. Moore, ``{Theory of
  Finite-Entanglement Scaling at One-Dimensional Quantum Critical Points},''
  \href{http://dx.doi.org/10.1103/physrevlett.102.255701}{{\em Physical Review
  Letters} {\bfseries 102} no.~2, (2009) 255701}.

\bibitem{2012PhRvB..86g5117P}
B.~Pirvu, G.~Vidal, F.~Verstraete, and L.~Tagliacozzo, ``{Matrix product states
  for critical spin chains: Finite-size versus finite-entanglement scaling},''
  \href{http://dx.doi.org/10.1103/physrevb.86.075117}{{\em Physical Review B}
  {\bfseries 86} no.~7, (2012) 75117}.

\bibitem{Stojevic:2014ul}
V.~Stojevic, J.~Haegeman, I.~P. McCulloch, L.~Tagliacozzo, and F.~Verstraete,
  ``{Conformal Data from Finite Entanglement Scaling},'' {\em arXiv.org} (2014)
  , \href{http://arxiv.org/abs/1401.7654}{{\ttfamily arXiv:1401.7654
  [quant-ph]}}.

\bibitem{PhysRevLett.99.220405}
G.~Vidal, ``Entanglement renormalization,''
  \href{http://dx.doi.org/10.1103/PhysRevLett.99.220405}{{\em Physical Review
  Letters} {\bfseries 99} (2007) 220405}.

\bibitem{PhysRevLett.102.180406}
G.~Evenbly and G.~Vidal, ``Entanglement renormalization in two spatial
  dimensions,'' \href{http://dx.doi.org/10.1103/PhysRevLett.102.180406}{{\em
  Physical Review Letters} {\bfseries 102} (2009) 180406}.

\bibitem{PhysRevB.79.144108}
G.~Evenbly and G.~Vidal, ``Algorithms for entanglement renormalization,''
  \href{http://dx.doi.org/10.1103/PhysRevB.79.144108}{{\em Physical Review B}
  {\bfseries 79} (2009) 144108}.

\bibitem{jordan2012quantum}
S.~P. Jordan, K.~S. Lee, and J.~Preskill, ``Quantum algorithms for quantum
  field theories,'' \href{http://dx.doi.org/10.1126/science.1217069}{{\em
  Science} {\bfseries 336} no.~6085, (2012) 1130--1133}.

\bibitem{Borcherds:1992bi}
R.~E. Borcherds, ``{Monstrous {M}oonshine and monstrous Lie superalgebras},''
  \href{http://dx.doi.org/10.1007/bf01232032}{{\em Inventiones mathematicae}
  {\bfseries 109} no.~2, (1992) 405--444}.

\bibitem{frenkel1989vertex}
I.~Frenkel, J.~Lepowsky, and A.~Meurman, {\em Vertex Operator Algebras and the
  Monster}.
\newblock Pure and Applied Mathematics. Elsevier Science, 1989.

\bibitem{Frenkel:1993fv}
I.~B. Frenkel, Y.~Z. Huang, and J.~Lepowsky, ``{On axiomatic approaches to
  vertex operator algebras and modules},''
  \href{http://dx.doi.org/10.1090/memo/0494}{{\em Memoirs of the American
  Mathematical Society} {\bfseries 104} no.~494, (1993) viii--64}.

\bibitem{Huang:1995kz}
Y.~Z. Huang, ``{A theory of tensor products for module categories for a vertex
  operator algebra. IV},''
  \href{http://dx.doi.org/10.1016/0022-4049(95)00050-7}{{\em Journal of Pure
  and Applied Algebra} {\bfseries 100} no.~1-3, (1995) 173--216}.

\bibitem{Huang:2005gs}
Y.~Z. Huang, ``{Differential equations and intertwining operators},''
  \href{http://dx.doi.org/10.1142/s0219199705001799}{{\em Communications in
  Contemporary Mathematics} {\bfseries 7} no.~3, (2005) 375--400}.

\bibitem{HuangKong}
Y.-Z. Huang and L.~Kong, ``Full field algebras,''
  \href{http://dx.doi.org/10.1007/s00220-007-0224-4}{{\em Communications in
  Mathematical Physics} {\bfseries 272} no.~2, (2007) 345--396}.

\bibitem{Zhu:1996cp}
Y.~Zhu, ``Modular invariance of characters of vertex operator algebras,''
  \href{http://dx.doi.org/10.1090/S0894-0347-96-00182-8}{{\em Journal of the
  American Mathematical Society} {\bfseries 9} no.~1, (1996) 237--302}.

\bibitem{Dong:2014gx}
C.~Dong and X.~Lin, ``{Unitary vertex operator algebras},''
  \href{http://dx.doi.org/10.1016/j.jalgebra.2013.09.007}{{\em Journal of
  Algebra} {\bfseries 397} (2014) 252--277}.

\bibitem{Kac:1996}
V.~G. Kac, \href{http://dx.doi.org/10.1090/ulect/010}{{\em {Vertex Algebras for
  Beginners}}}, vol.~10 of {\em University Lecture Series}.
\newblock American Mathematical Society, Providence, RI, 1998.

\bibitem{LepowskyLi03}
J.~Lepowsky and H.~Li, {\em Introduction to Vertex Operator Algebras and Their
  Representations}.
\newblock Birkh{\"a}user Boston, 2003.

\bibitem{2010JMP51a5210F}
J.~Fuchs, I.~Runkel, and C.~Schweigert, ``{Twenty five years of two-dimensional
  rational conformal field theory},''
  \href{http://dx.doi.org/10.1063/1.3277118}{{\em Journal of Mathematical
  Physics} {\bfseries 51} no.~1, (2010) 5210}.

\bibitem{BPZ84}
A.~Belavin, A.~M. Polyakov, and A.~Zamolodchikov, ``{Infinite Conformal
  Symmetry in Two-Dimensional Quantum Field Theory},''
  \href{http://dx.doi.org/10.1016/0550-3213(84)90052-X}{{\em Nuclear Physics}
  {\bfseries B241} (1984) 333--380}.

\bibitem{FRIEDAN:1987ix}
D.~Friedan and S.~Shenker, ``{The Analytic-Geometry of Two-Dimensional
  Conformal Field-Theory},''
  \href{http://dx.doi.org/10.1016/0550-3213(87)90418-4}{{\em Nuclear Physics.
  B} {\bfseries 281} no.~3-4, (1987) 509--545}.

\bibitem{segal1988definition}
G.~B. Segal, \href{http://dx.doi.org/10.1007/978-94-015-7809-7_9}{``The
  definition of conformal field theory,''} in {\em Differential geometrical
  methods in theoretical physics}, pp.~165--171.
\newblock Springer, 1988.

\bibitem{MooreSeiberg88}
G.~Moore and N.~Seiberg, ``Polynomial equations for rational conformal field
  theories,'' \href{http://dx.doi.org/10.1016/0370-2693(88)91796-0}{{\em
  Physics Letters B} {\bfseries 212} no.~4, (1988) 451 -- 460}.

\bibitem{Moore:1989cm}
G.~Moore and N.~Seiberg, ``{Naturality in Conformal Field-Theory},''
  \href{http://dx.doi.org/10.1016/0550-3213(89)90511-7}{{\em Nuclear Physics.
  B} {\bfseries 313} no.~1, (1989) 16--40}.

\bibitem{MS89}
G.~Moore and N.~Seiberg, ``Classical and quantum conformal field theory,''
  \href{http://dx.doi.org/10.1007/bf01238857}{{\em Communications in
  Mathematical Physics} {\bfseries 123} no.~2, (1989) 177--254}.

\bibitem{Felder:1989tk}
G.~Felder, J.~Fr{\"o}hlich, and G.~Keller, ``{On the structure of unitary
  conformal field theory. I. Existence of conformal blocks},''
  \href{http://dx.doi.org/10.1007/bf01219658}{{\em Communications in
  Mathematical Physics} {\bfseries 124} no.~3, (1989) 417--463}.

\bibitem{Felder:1990tw}
G.~Felder, J.~Fr{\"o}hlich, and G.~Keller, ``{On the structure of unitary
  conformal field theory. II.\ Representation-theoretic approach},''
  \href{http://dx.doi.org/10.1007/bf02099872}{{\em Communications in
  Mathematical Physics} {\bfseries 130} no.~1, (1990) 1--49}.

\bibitem{Gabbiani:1993wb}
F.~Gabbiani and J.~Fr{\"o}hlich, ``{Operator algebras and conformal field
  theory},'' \href{http://dx.doi.org/10.1007/bf02096729}{{\em Communications in
  Mathematical Physics} {\bfseries 155} no.~3, (1993) 569--640}.

\bibitem{Wassermann:1998cs}
A.~Wassermann, ``{Operator algebras and conformal field theory. III. Fusion of
  positive energy representations of LSU(N) using bounded operators},''
  \href{http://dx.doi.org/10.1007/s002220050253}{{\em Inventiones mathematicae}
  {\bfseries 133} no.~3, (1998) 467--538}.

\bibitem{Dong:1994um}
C.~Dong, G.~Mason, and Y.~Zhu,
  \href{http://dx.doi.org/10.1090/pspum/056.2/1278737}{``{Discrete series of
  the Virasoro algebra and the moonshine module},''} in {\em Algebraic groups
  and their generalizations: quantum and infinite-dimensional methods
  (University Park, PA, 1991)}, pp.~295--316.
\newblock American Mathematical Society, Providence, RI, 1994.

\bibitem{Frenkel:1992jt}
I.~B. Frenkel and Y.~Zhu, ``{Vertex operator algebras associated to
  representations of affine and Virasoro algebras},'' {\em Duke Mathematical
  Journal} {\bfseries 66} no.~1, (1992) 123--168.
  \url{http://projecteuclid.org/euclid.dmj/1077294666}.

\bibitem{Gaberdiel:2000br}
M.~R. Gaberdiel and P.~Goddard, ``{Axiomatic conformal field theory},''
  \href{http://dx.doi.org/10.1007/s002200050031}{{\em Communications in
  Mathematical Physics} {\bfseries 209} no.~3, (2000) 549--594}.

\bibitem{Gaberdiel:2000tb}
M.~R. Gaberdiel, ``{An introduction to conformal field theory},''
  \href{http://dx.doi.org/10.1088/0034-4885/63/4/203}{{\em Reports on Progress
  in Physics} {\bfseries 63} no.~4, (2000) 607--667}.

\bibitem{Carpi:2015fk}
S.~Carpi, Y.~Kawahigashi, R.~Longo, and M.~Weiner, ``From vertex operator
  algebras to conformal nets and back,'' {\em arXiv.org} (2015) ,
  \href{http://arxiv.org/abs/1503.01260}{{\ttfamily arXiv:1503.01260
  [math-ph]}}.

\bibitem{furlan1989two}
P.~Furlan, G.~M. Sotkov, and I.~T. Todorov, ``Two-dimensional conformal quantum
  field theory,'' \href{http://dx.doi.org/10.1007/bf02742979}{{\em La Rivista
  del Nuovo Cimento} {\bfseries 12} no.~6, (1989) 1--202}.

\bibitem{Nachtergaelelectures}
B.~Nachtergaele, ``Quantum spin systems and finitely correlated states.''
  Lectures given at {V}olterra-{C}{I}{R}{M} international school, 2001.
\newblock \url{https://www.math.ucdavis.edu/~bxn/qmc_trento.ps}.

\bibitem{AHK78}
S.~Albeverio and R.~H{\o}egh-Krohn, ``Frobenius theory for positive maps of von
  {N}eumann algebras,'' \href{http://dx.doi.org/10.1007/BF01940763}{{\em
  Communications in Mathematical Physics} {\bfseries 64} no.~1, (1978) 83--94}.

\bibitem{schuch2011classifying}
N.~Schuch, D.~P{\'e}rez-Garc{\'\i}a, and I.~Cirac, ``Classifying quantum phases
  using matrix product states and projected entangled pair states,''
  \href{http://dx.doi.org/10.1103/physrevb.84.165139}{{\em Physical Review B}
  {\bfseries 84} no.~16, (2011) 165139}.

\bibitem{tsuchiya1987vertex}
A.~Tsuchiya and Y.~Kanie, ``{Vertex operators in the conformal field theory on
  $P1$ and monodromy representations of the braid group},''
  \href{http://dx.doi.org/10.1016/b978-0-12-385340-0.50013-9}{{\em Letters in
  Mathematical Physics} {\bfseries 13} no.~4, (1987) 303--312}.

\bibitem{MasonTuite:2010}
G.~Mason and M.~Tuite, ``Vertex operators and modular forms,'' in {\em MSRI
  publications}, K.~Kirsten and F.~Williams, eds., vol.~57, pp.~193--278.
\newblock 2010.

\bibitem{GaberdielNeitzke}
M.~R. Gaberdiel and A.~Neitzke, ``{Rationality, Quasirationality and Finite
  W-Algebras},'' \href{http://dx.doi.org/10.1007/s00220-003-0845-1}{{\em
  Communications in Mathematical Physics} {\bfseries 238} no.~1-2, (2003)
  305--331}.

\bibitem{karel1999certain}
M.~Karel and H.~Li, ``Certain generating subspaces for vertex operator
  algebras,'' \href{http://dx.doi.org/10.1006/jabr.1998.7838}{{\em Journal of
  Algebra} {\bfseries 217} no.~2, (1999) 393--421}.

\bibitem{buhl2008ordered}
G.~Buhl, ``{Ordered spanning sets for quasimodules for M{\"o}bius vertex
  algebras},'' \href{http://dx.doi.org/10.1016/j.jalgebra.2008.05.018}{{\em
  Journal of Algebra} {\bfseries 320} no.~5, (2008) 2031--2052}.

\bibitem{Kacbook}
V.~G. Kac, {\em Infinite-dimensional Lie algebras}, vol.~44.
\newblock Cambridge University Press, 1994.

\bibitem{Huang:2010fv}
Y.~Z. Huang and L.~Kong, ``{Modular invariance for conformal full field
  algebras},'' \href{http://dx.doi.org/10.1090/s0002-9947-09-04933-2}{{\em
  Transactions of the American Mathematical Society} {\bfseries 362} no.~6,
  (2010) 3027--3067}.

\bibitem{Huang:2005gsb}
Y.~Z. Huang, ``{Differential equations, duality and modular invariance},''
  \href{http://dx.doi.org/10.1142/s021919970500191x}{{\em Communications in
  Contemporary Mathematics} {\bfseries 7} no.~3, (2005) 649--706}.

\bibitem{zbMATH00008298}
D.~{Buchholz} and H.~{Schulz-Mirbach}, ``{Haag duality in conformal quantum
  field theory.},'' \href{http://dx.doi.org/10.1142/S0129055X90000053}{{\em
  {Reviews in Mathematical Physics}} {\bfseries 2} no.~1, (1990) 105--125}.

\bibitem{Wassermann:1995fk}
A.~Wassermann, \href{http://dx.doi.org/10.1007/978-3-0348-9078-6_25}{``Operator
  algebras and conformal field theory,''} in {\em Proceedings of the
  International Congress of Mathematicians}, S.~Chatterji, ed., pp.~966--979.
\newblock Birkh{\"a}user Basel, 1995.

\bibitem{Laredo}
V.~Toledano-Laredo, ``Fusion of positive energy representations of lspin
  (2n),'' {\em arXiv.org} (2004) ,
  \href{http://arxiv.org/abs/math/0409044}{{\ttfamily math/0409044}}.

\bibitem{cardy1986operator}
J.~L. Cardy, ``Operator content of two-dimensional conformally invariant
  theories,'' \href{http://dx.doi.org/10.1016/0550-3213(86)90552-3}{{\em
  Nuclear Physics B} {\bfseries 270} (1986) 186--204}.

\bibitem{Holzhey1994443}
C.~Holzhey, F.~Larsen, and F.~Wilczek, ``Geometric and renormalized entropy in
  conformal field theory,''
  \href{http://dx.doi.org/10.1016/0550-3213(94)90402-2}{{\em Nuclear Physics B}
  {\bfseries 424} no.~3, (1994) 443 -- 467}.

\bibitem{CalabreseCardy04}
P.~Calabrese and J.~Cardy, ``Entanglement entropy and quantum field theory,''
  \href{http://dx.doi.org/10.1088/1742-5468/2004/06/P06002}{{\em Journal of
  Statistical Mechanics: Theory and Experiment} {\bfseries 2004} no.~06, (2004)
  P06002}.

\bibitem{Schuchetalentropymps2008}
N.~Schuch, M.~M. Wolf, F.~Verstraete, and J.~I. Cirac, ``Entropy scaling and
  simulability by matrix product states,''
  \href{http://dx.doi.org/10.1103/PhysRevLett.100.030504}{{\em Phys. Rev.
  Lett.} {\bfseries 100} (Jan, 2008) 030504}.

\bibitem{Huang:1992kw}
Y.~Z. Huang, ``{Vertex operator algebras and conformal field theory},''
  \href{http://dx.doi.org/10.1142/s0217751x92000946}{{\em International Journal
  of Modern Physics A} {\bfseries 7} no.~10, (1992) 2109--2151}.

\bibitem{fuchs2005tft}
J.~Fuchs, I.~Runkel, and C.~Schweigert, ``{TFT construction of RCFT correlators
  IV: Structure constants and correlation functions},''
  \href{http://dx.doi.org/10.1016/j.nuclphysb.2005.03.018}{{\em Nuclear Physics
  B} {\bfseries 715} no.~3, (2005) 539--638}.

\bibitem{paulsen2002completely}
V.~Paulsen, {\em Completely bounded maps and operator algebras}, vol.~78.
\newblock Cambridge University Press, 2002.

\bibitem{gawedzki1997lectures}
K.~Gawedzki, ``Lectures on conformal field theory,'' in {\em Quantum Fields and
  Strings: A Course For Mathematicians}, P.~Deligne, P.~Etingof, D.~Freed,
  L.~Jeffrey, D.~Kazhdan, J.~Morgan, D.~Morrison, and E.~Witten, eds., vol.~1.
\newblock American Mathematical Society, Providence, 1999.
\newblock \url{https://www.math.ias.edu/QFT/fall/index.html}.

\bibitem{Witten89}
E.~Witten, ``Quantum field theory and the {J}ones polynomial,''
  \href{http://dx.doi.org/10.1007/bf01217730}{{\em Comm. Math. Phys.}
  {\bfseries 121} no.~3, (1989) 351--399}.

\bibitem{Atiyah89}
M.~Atiyah, ``Topological quantum field theories,''
  \href{http://dx.doi.org/10.1007/bf02698547}{{\em Inst. Hautes {\'E}tudes Sci.
  Publ. Math.} {\bfseries 68} (1989) 175--186}.

\bibitem{Walker91}
K.~Walker, ``On {W}itten's 3-manifold invariants,'' 1991.
\newblock \url{http://canyon23.net/math/1991TQFTNotes.pdf}. Lecture notes.

\bibitem{MooreSeiberg98}
G.~Moore and N.~Seiberg, ``Polynomial equations for rational conformal field
  theories,'' {\em Physics Letters B} {\bfseries 212} no.~4, (October, 1998)
  451--460.

\bibitem{KoeBil10}
R.~K\"onig and E.~Bilgin, ``Anyonic entanglement renormalization,''
  \href{http://dx.doi.org/10.1103/PhysRevB.82.125118}{{\em Phys. Rev. B}
  {\bfseries 82} (Sep, 2010) 125118}.

\bibitem{Pfeiferetal10}
R.~N.~C. Pfeifer, P.~Corboz, O.~Buerschaper, M.~Aguado, M.~Troyer, and
  G.~Vidal, ``Simulation of anyons with tensor network algorithms,''
  \href{http://dx.doi.org/10.1103/PhysRevB.82.115126}{{\em Phys. Rev. B}
  {\bfseries 82} (Sep, 2010) 115126}.

\bibitem{singh14}
S.~Singh, R.~N.~C. Pfeifer, G.~Vidal, and G.~K. Brennen, ``Matrix product
  states for anyonic systems and efficient simulation of dynamics,''
  \href{http://dx.doi.org/10.1103/PhysRevB.89.075112}{{\em Phys. Rev. B}
  {\bfseries 89} (Feb, 2014) 075112}.

\bibitem{AguadoVidal08}
M.~Aguado and G.~Vidal, ``Entanglement renormalization and topological order,''
  \href{http://dx.doi.org/10.1103/PhysRevLett.100.070404}{{\em Phys. Rev.
  Lett.} {\bfseries 100} (Feb, 2008) 070404}.

\bibitem{KoeReichardtVidal09}
R.~K\"onig, B.~W. Reichardt, and G.~Vidal, ``Exact entanglement renormalization
  for string-net models,''
  \href{http://dx.doi.org/10.1103/PhysRevB.79.195123}{{\em Phys. Rev. B}
  {\bfseries 79} (May, 2009) 195123}.

\bibitem{murty}
M.~R. Murty, ``The partition function revisited,'' in {\em The Legacy of
  Srinivasa Ramanujan}, RMS-Lecture Notes Series, pp.~261---279.
\newblock Ramanujan Mathematical Society, 2013.

\bibitem{apostol2013introduction}
T.~Apostol, {\em Introduction to Analytic Number Theory}.
\newblock Undergraduate Texts in Mathematics. Springer New York, 2013.

\bibitem{apostolEuler}
T.~M. Apostol, ``An elementary view of {E}uler's summation formula,'' {\em The
  American Mathematical Monthly} {\bfseries 106} no.~5, (1999) pp. 409--418.
  \url{http://www.jstor.org/stable/2589145}.

\end{thebibliography}

\end{document}